\documentclass[numsec,webpdf,modern,medium,namedate]{oup-authoring-template}

\onecolumn

\graphicspath{{Fig/}}

\usepackage{soul}
\usepackage{float}

\theoremstyle{thmstyleone}%
\newtheorem{theorem}{Theorem}

\newtheorem{proposition}{Proposition}%

\makeatletter
\newtheorem*{rep@theorem}{\rep@title}
\newcommand{\newreptheorem}[2]{%
  \newenvironment{rep#1}[1]{%
    \def\rep@title{#2 \ref{##1}}%
    \begin{rep@theorem}}%
  {\end{rep@theorem}}}
\makeatother

\theoremstyle{thmstyletwo}%
\newtheorem{lemma}{Lemma}%
\newtheorem{assumption}{Assumption}%
\newreptheorem{theorem}{Theorem}
\newreptheorem{lemma}{Lemma}
\newreptheorem{corollary}{Corollary}
\newreptheorem{proposition}{Proposition}

\theoremstyle{thmstylethree}%

\usepackage{setspace}
\usepackage[fontsize=10pt]{fontsize}

\makeatletter
\def\secsize{%
  \normalfont\mathversion{bold}\fontsize{12}{14}\selectfont\bfseries}

\def\subsecsize{%
  \normalfont\mathversion{bold}\fontsize{12}{12}\selectfont\bfseries}

\def\subsubsecsize{%
  \normalfont\mathversion{bold}\fontsize{12}{11}\selectfont\bfseries}
\makeatother

\usepackage{amsmath,amsfonts, mathtools}
\usepackage{bm}

\usepackage{enumitem}

\newcommand\ci{\perp\!\!\!\perp}

\makeatletter
\def\ps@opening{%
  \let\@oddhead\@empty
  \let\@evenhead\@empty
  \let\@oddfoot\@empty
  \let\@evenfoot\@empty
}
\makeatother

\begin{document}

\journaltitle{Journals of the Royal Statistical Society}
\DOI{DOI HERE}
\copyrightyear{XXXX}
\pubyear{XXXX}
\access{Advance Access Publication Date: Day Month Year}
\appnotes{Original article}

\firstpage{1}


\title[Spectral Adjustment Scores]{COMPACT: Spectral Adjustment Scores from a Complete and Irreducible Causal Criterion}

\author[1,$\ast$]{Eric V. Strobl}

\authormark{Eric V. Strobl}

\address[1]{\orgdiv{Department of Biomedical Informatics}, \orgname{University of Pittsburgh}}




\abstract{Observational datasets frequently contain many baseline variables, yet investigators estimating causal effects may not know which variables to include in the adjustment set. Confounding information may also be distributed weakly across many variables. Propensity scores can simplify adjustment by reducing high-dimensional covariates to a scalar with binary treatment. Although the propensity score is the coarsest balancing score, this distributional optimality does not imply maximal specificity over causal graphs. We instead examine all causal graphs among a candidate score, treatment, and outcome while allowing latent variables. Under faithfulness, we identify the largest set of unconditional and conditional dependence relations whose truth is invariant to whether treatment causes the outcome, leaving treatment-effect estimation to the downstream analysis. This criterion defines the maximally specific graph class expressible through these relations. We then develop the proposed algorithm, which operationalizes the criterion through a generalized eigenvalue problem whose score space targets the span of a balancing coordinate and an outcome-guided coordinate. We show that sufficiently informative proxies can recover this span without direct observation of the adjustment variables, characterize the resulting estimation and causal errors, and establish bootstrap validity for the complete procedure. Simulations and a real-data application demonstrate superior performance over several alternatives.}
\keywords{causal inference, confounding, dimension reduction, high-dimensional adjustment, proxy variables, spectral methods}

\singlespacing
\maketitle

\singlespacing

\section{Introduction}

Confounding remains a central obstacle to causal inference from observational
data. In many applications, treatment assignment and potential outcomes depend
on an underlying baseline state observed only indirectly through many variables
in $\bm X$. Clinical treatment selection, for example, may reflect numerous
individually weak cues involving symptoms, severity, comorbidity, prior
response, tolerance, and perceived benefit \citep{Keepers20,Bauer13}, while
adjuvant cancer treatment may depend on a multivariate tumor phenotype reflected
by stage, nodal involvement, grade, receptor status, proliferation markers, and
genomic recurrence scores \citep{Cardoso19}. Consequently, investigators often
adjust for broad collections of baseline variables to capture distributed
confounding information \citep{Schneeweiss09,Zhang22}.

Direct adjustment on $\bm X$ becomes difficult when these variables are
high-dimensional, mixed-type, and noisy \citep{Stuart10,Lockwood16}.
For binary treatment, propensity scores address this problem by compressing
$\bm X$ into $\mathbb{P}(T=1\mid\bm X)$ \citep{Rosenbaum83}. However, this
requires estimation of the entire treatment law, including treatment-predictive
variation that may be irrelevant to outcome risk. Flexible treatment models can
therefore absorb treatment-only or sample-specific variation, reduce residual
treatment variation, and drive fitted propensities toward zero or one without a
commensurate improvement in confounding control
\citep{Brookhart06,DAmour21}.

We instead begin from the dependence relations specifically required for causal
adjustment. Treatment association alone provides little causal specificity. For
example, both
$S_{\bm{\beta}}\longrightarrow T$ with 
$Y\ \text{isolated}$
and
$S_{\bm{\beta}}\longrightarrow T\longrightarrow Y$
imply $S_{\bm{\beta}}\not\!\ci T$ under faithfulness, although $S_{\bm{\beta}}$ lies on no
backdoor path between $T$ and $Y$. A treatment-prediction method may nevertheless
devote substantial complexity to such variation. We therefore develop COMPACT
(Confounding-Oriented Multivariate Proxy Adjustment for Causal Targets), which
requires candidate score directions to satisfy
$S_{\bm{\beta}}\not\!\ci T$ and 
$S_{\bm{\beta}}\not\!\ci Y\mid T.$
We show that these relations define a more specific class of admissible causal
structures than treatment association alone, while remaining invariant to
whether $T$ causes $Y$. The outcome relation therefore prevents treatment
prediction alone from determining the learned representation.

COMPACT also exploits the fact that causal adjustment need not recover a
particular balancing score coordinate. It is sufficient to recover a
low-dimensional space containing the relevant adjustment information.
Specifically, COMPACT uses a generalized eigendecomposition to estimate
$\mathcal U
=
\operatorname{span}(U_T,U_R),$
the joint span of a treatment-balancing coordinate $U_T$ and an outcome-guided
coordinate $U_R$. Because the estimand is the span rather than any uniquely
labeled basis, the procedure avoids unnecessary coordinate-identification and
calibration requirements. In the scalar-treatment, scalar-outcome setting,
this space is two-dimensional, making the learned adjustment structure
directly visualizable. The same representation can be used to examine
adjustment-relevant gradients, treatment overlap, and variation in expected
treatment benefit. We provide a more detailed comparison of COMPACT with a broader range of related methods in Supplementary Materials~\ref{supp:related_work}.

Our main contributions are as follows.
\begin{enumerate}[leftmargin=1.25em,label=(\alph*),labelsep=0.4em,itemsep=0.25em]

\item We identify a complete and irreducible set of score-level dependence
restrictions for causal adjustment whose truth values remain invariant to
whether treatment causes the outcome (Section~\ref{sec:criterion}).

\item We introduce COMPACT, which operationalizes these restrictions through a
product-based ``AND'' criterion and solves the resulting problem with a single
generalized eigendecomposition (Section~\ref{sec:compact-algorithm:spectral}).

\item We show that the retained directions span regularized linear predictors of
treatment and treatment-residualized outcome, and connect this eigenspace to
latent treatment-balancing and outcome-guided variables
(Section~\ref{sec:compact-algorithm:latent_adj}).

\item We characterize sampling error, proxy-recovery error, and the resulting
causal error, establishing conditions for consistent estimation of both the average treatment effect (ATE)
and conditional average treatment effect (CATE) after adjustment for the learned scores (Sections~\ref{sec:theory:recovery}--\ref{sec:theory:downstream}).

\item We establish validity of a full-procedure nonparametric bootstrap that
relearns the COMPACT scores and refits the downstream analysis, thereby
propagating uncertainty through the complete estimation procedure
(Section~\ref{sec:theory:bootstrap}).

\item We evaluate COMPACT in fully synthetic and semi-synthetic experiments,
including treatment-effect estimation, latent-coordinate recovery, treatment
overlap, and bootstrap calibration. COMPACT achieves the best or competitive
performance across these criteria while remaining computationally fast
(Section~\ref{sec:empirical}).

\end{enumerate}
Together, the theoretical and empirical results provide a framework
for replacing unstable high-dimensional proxy adjustment with a compact
representation targeted specifically to the requirements of causal-effect
estimation.

\section{Motivating Example} \label{sec:motivating}

We first motivate COMPACT with an application to the Adolescent Brain Cognitive Development (ABCD) Study, a large longitudinal observational cohort of children in the United States followed from late childhood through adolescence \citep{Jernigan18}. Greater reading and print exposure has been associated with higher vocabulary, general knowledge, and other measures of crystallized verbal ability, with longitudinal studies additionally associating reading for pleasure with subsequent vocabulary growth \citep{Cunningham91,Stanovich92,Sullivan15,Sun24}. However, children with greater pre-existing cognitive ability and more favorable developmental environments may also be more likely to read for pleasure, making the causal relationship unclear. We therefore sought to estimate the effect of initiating reading for pleasure on subsequent cognitive performance.

This setting presents a high-dimensional adjustment problem because factors like demographics, baseline cognitive abilities, competing activities, family environment, sleep, screen use, physical characteristics, and mental health may all influence both reading behavior and cognition. We considered six NIH Toolbox Cognition outcomes \citep{Weintraub13}: crystallized cognition, Flanker Inhibitory Control and Attention, Picture Sequence Memory, Picture Vocabulary, Pattern Comparison Processing Speed, and Oral Reading Recognition. We restricted the sample to participants who did not read for pleasure at baseline, defined treatment by initiation of reading for pleasure at the subsequent wave, and evaluated cognition at the following wave. Outcome-specific sample sizes after excluding missing treatment or outcome data ranged from 1,785 to 2,637.

Because the appropriate adjustment set was not known a priori, we constructed a broad adjustment set containing 248 baseline variables: 15 demographic, socioeconomic, cognitive, and study-design variables, including the same six baseline NIH Toolbox measures; 9 Family Environment Scale conflict items; 120 Child Behavior Checklist variables; 14 screen-use variables; 26 sleep variables; 59 sports and activity variables; and 5 anthropometric measures. We fit COMPACT separately for each outcome and estimated both the ATE and COMPACT score-indexed CATE. We used 10,000 family-level cluster bootstrap replicates to account for sibling dependence and controlled the family-wise error rate separately across the six ATE and six CATE-heterogeneity tests using Holm's procedure.

For crystallized cognition, estimated propensity scores were almost uniformly non-extreme across the two-dimensional COMPACT space (Figure~\ref{fig:cognitive_analysis}A), indicating good treatment overlap. Initiating reading for pleasure was associated with higher subsequent crystallized cognition ($\mathrm{ATE}=0.589$, 95\% CI $[0.237,1.044]$, Holm-adjusted $p=0.0072$), whereas the global CATE-heterogeneity test was not significant. Consistent with this result, estimated CATEs varied little across the COMPACT adjustment space on a scale spanning $\pm0.25$ outcome standard deviations around the ATE (Figure~\ref{fig:cognitive_analysis}B).

\begin{figure}
    \centering
    \includegraphics[width=0.95\linewidth]{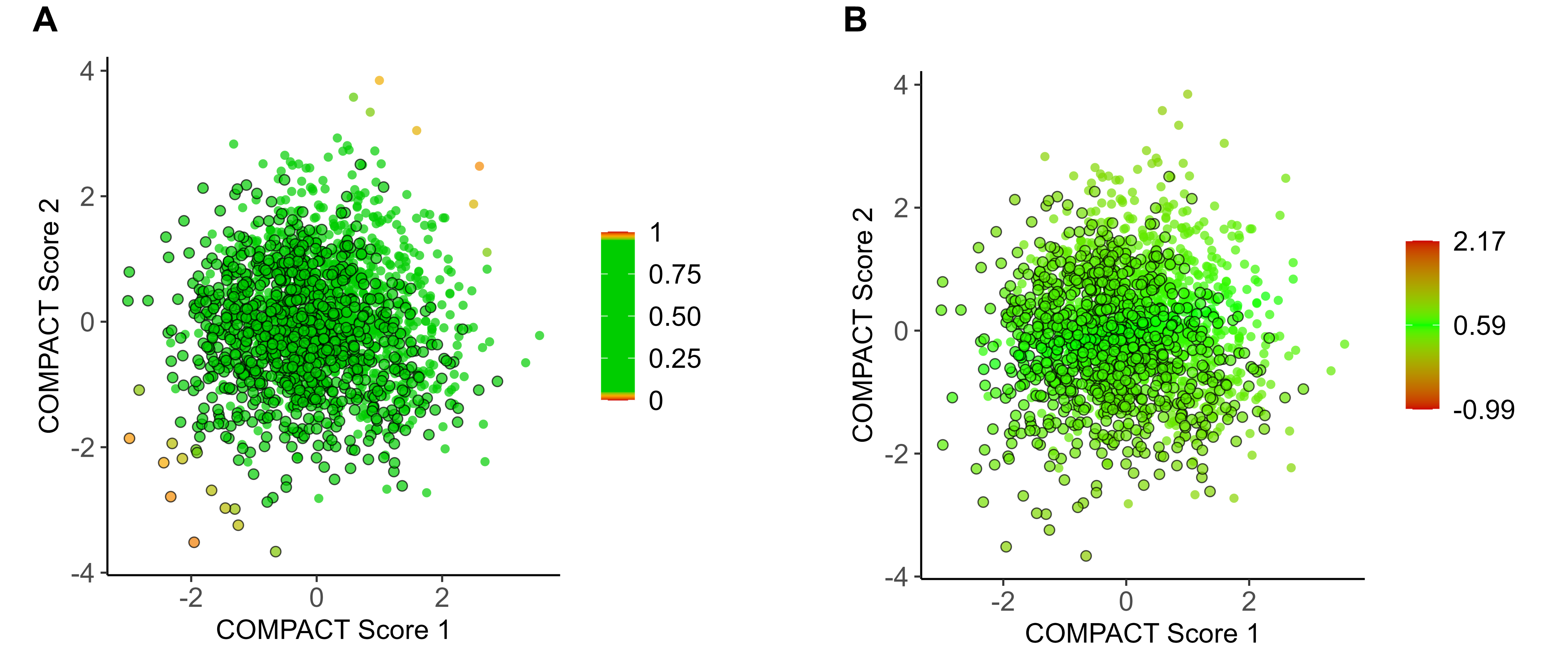}
    \caption{\textbf{COMPACT score plots for crystallized cognition.} COMPACT reduces the 248 baseline adjustment variables to a two-dimensional score space that can be visualized. Black outlines denote participants who initiated reading for pleasure. (A) COMPACT space colored by the estimated propensity score. (B) The same space colored by the estimated CATE; the color scale is centered at the ATE and spans $\pm0.25$ outcome standard deviations.}
    \label{fig:cognitive_analysis}
\end{figure}

These results provide evidence consistent with a causal effect of initiating reading for pleasure on subsequent crystallized cognition while illustrating how COMPACT simultaneously provides a low-dimensional adjustment representation, an overlap diagnostic, and a space for examining treatment-effect heterogeneity. Results for the remaining cognitive outcomes are provided in Supplementary Materials~\ref{supp:abcd-additional}.

\section{Setup and notation} \label{sec:setup}
\subsection{Observed Data, Causal Targets, and Identification} \label{sec:setup:observed_data}
We observe independent units
$O_i=(Y_i,T_i,\bm X_i)$ for $i=1,\ldots,n$,
where $T_i$ denotes a treatment or exposure, $Y_i$ denotes a post-treatment
outcome, and $\bm X_i\in\mathbb R^p$ denotes a vector of pre-treatment baseline
features. The variables in $\bm X_i$ may be continuous or discrete. Let
\begin{equation} \nonumber
\begin{aligned}
X=(\bm X_1^\top,\ldots,\bm X_n^\top)^\top & \in  \mathbb R^{n\times p},
\qquad
\bm T=(T_1,\ldots,T_n)^\top\in\mathbb R^n,\\
\bm Y&=(Y_1,\ldots,Y_n)^\top\in\mathbb R^n .
\end{aligned}
\end{equation}
Throughout the spectral calculations, $\bm T$, $\bm Y$, and the columns of $X$ are
centered empirically. Treatment levels in the potential-outcome notation remain
on their original scale.

We first state the causal targets. For each treatment level $t
\in\mathcal T$, let $Y_i(t)$ denote the outcome that would be observed
for unit $i$ under treatment $t$. We assume consistency and no interference, so
that $Y_i=Y_i(T_i)$ and one unit's outcome does not depend on another unit's
treatment. For treatment levels $t,t_0\in\mathcal T$, the marginal causal
contrast is
$\tau(t,t_0)
=
\mathbb E\{Y_i(t)-Y_i(t_0)\}.$
In the binary-treatment case, this becomes the average treatment effect
$\tau=\mathbb E\{Y_i(1)-Y_i(0)\}$. Once a low-dimensional score
$\bm S_i=s(\bm X_i)$ has been learned, we can also consider the score-indexed causal
contrast
\begin{equation} \label{eq:CATE_S}
\tau_{\bm S}(t,t_0;\bm s)
=
\mathbb E\{Y_i(t)-Y_i(t_0)\mid \bm S_i=\bm s\}.
\end{equation}
Evaluating this contrast at $\bm S_i$ provides a low-dimensional personalized CATE while averaging over distinctions in $\bm X_i$ not retained by the score, thereby avoiding high-dimensional treatment-effect strata that often suffer from weak overlap.

Let $\bm C_i\in\mathbb R^{d_C}$ denote a centered pre-treatment baseline state that
is sufficient for adjustment, so that
\begin{equation} \label{eq:latent_exchangeability}
Y_i(t)\ci T_i \mid \bm C_i
\end{equation}
for all $t\in\mathcal T$. We also assume positivity on the treatment values and
covariate strata relevant to $\tau(t,t_0)$. For a continuous treatment, this
requires
\( 0<f_{T\mid \bm C}(t\mid \bm c)<\infty \)
on the relevant support. In the binary-treatment case, it reduces to
$0<P(T_i=1\mid \bm C_i=\bm c)<1$.

The full covariate-indexed causal contrast is
$\tau_{\bm C}(t,t_0;\bm c)
=
\mathbb E\{Y_i(t)-Y_i(t_0)\mid\bm C_i=\bm c\}.$
We will not recover this entire function. Instead, we seek to recover the
coarser contrast in \eqref{eq:CATE_S}, indexed by the information retained in
the low-dimensional score.

\subsection{A causal working model with a single-index treatment law} \label{sec:setup:working_model}

We use the following model to make the population target explicit. Let
\(
\Sigma_{\bm C}=\mathbb E(\bm C_i\bm C_i^\top)
\)
be positive definite, and define the treatment-balancing variable
$U_{T,i}=\bm a^\top\bm C_i$
with
$\bm a^\top\Sigma_{\bm C}\bm a>0.$
We assume that the conditional treatment law depends on $\bm C_i$ only
through $U_{T,i}$:
\begin{equation}
\label{eq:linear-treatment-model}
P(T_i\in A\mid\bm C_i)
=
P(T_i\in A\mid U_{T,i})
\end{equation}
for every measurable set $A$. Let
$m_T(u)=\mathbb E(T_i\mid U_{T,i}=u)$ and 
$\varepsilon_{T,i}=T_i-m_T(U_{T,i}).$
Then $\mathbb E(\varepsilon_{T,i}\mid\bm C_i)=0$.

Equation~\eqref{eq:linear-treatment-model} includes both continuous and binary treatment types. For a continuous treatment, the additive-linear specialization
is
$T_i=U_{T,i}+\varepsilon_{T,i}$ with
$\varepsilon_{T,i}\ci\bm C_i,$
so that $m_T(u)=u$. For a binary treatment,
\[
T_i\mid\bm C_i
\sim
\operatorname{Bernoulli}\{g(U_{T,i})\},
\qquad
0<g(U_{T,i})<1,
\]
so that $m_T(u)=g(u)$. In either case, $U_{T,i}$ is a balancing variable.

When $m_T$ is nonlinear, including the binary-treatment case, we impose the
standard linearity condition
\begin{equation}
\label{eq:inverse-regression-linearity}
\mathbb E(\bm C_i\mid U_{T,i})
=
\frac{\Sigma_{\bm C}\bm a}
{\bm a^\top\Sigma_{\bm C}\bm a}
U_{T,i}.
\end{equation}
This condition holds when $\bm C_i$ follows a centered elliptically symmetric distribution with finite second moments, such as a multivariate Gaussian distribution or a multivariate $t$-distribution with more than two degrees of freedom \citep{Li91}.

Let the potential outcomes satisfy
\begin{equation}
\label{eq:linear-potential-outcome-model}
Y_i(t)
=
\alpha+\bm b^\top\bm C_i
+t\{\gamma+\bm q^\top\bm C_i\}
+\varepsilon_{Y,i}(t),
\end{equation}
where
\( \mathbb E\{\varepsilon_{Y,i}(t)\mid\bm C_i,T_i\}=0.\)
The potential-outcome errors are then conditionally mean independent
of the treatment residual $\varepsilon_{T,i}$ given $\bm C_i$. This condition
applies to both continuous and binary treatments and implies
\eqref{eq:latent_exchangeability} in mean:
\begin{equation}\label{eq:conditional-mean-restriction}
\mathbb E\{Y_i(t)\mid T_i,\bm C_i\}
=
\mathbb E\{Y_i(t)\mid\bm C_i\}.
\end{equation}
The model also implies the linear CATE
\begin{equation}
\label{eq:full-linear-cate}
\tau_{\bm C}(t,t_0;\bm C_i)
=
(t-t_0)\{\gamma+\bm q^\top\bm C_i\}.
\end{equation}

\subsection{High-dimensional proxy variables} \label{sec:setup:proxies}

We assume that $\bm C_i$ is not observed directly. Instead, the observed
pre-treatment proxy vector $\bm X_i$ satisfies
\begin{equation}
\label{eq:linear-proxy-model}
\bm X_i
\ci
\{T_i,Y_i(t):t\in\mathcal T\}
\mid
\bm C_i,
\qquad
\mathbb E(\bm X_i\mid\bm C_i)
=
\Lambda\bm C_i,
\end{equation}
with $\mathbb E(\|\bm X_i\|^2)<\infty$. The first condition requires the
proxy measurements to be nondifferential: conditional on the latent baseline
state $\bm C_i$, they contain no additional information about treatment or the
potential outcomes. The second requires only a linear conditional mean and
does not impose an additive-error representation, homoskedasticity, or a
continuous proxy distribution. The rows of $\Lambda$ may correspond to many weak or redundant measurements
of the same latent baseline state. Throughout the population spectral calculations, $\bm X_i$ denotes the
centered proxy vector; equivalently,
$\mathbb E(\bm X_i\mid\bm C_i)=\bm\mu_{\bm X}+\Lambda\bm C_i$ for uncentred proxies.

\section{A Complete and Irreducible Causal Criterion} \label{sec:criterion}

At the population level, we seek a coefficient matrix $B$ whose score space
$\mathcal S_B
=
\left\{
\bm X_i^\top{\bm{\beta}}:{\bm{\beta}}\in\operatorname{col}(B)
\right\}$
reconstructs the low-dimensional adjustment information carried by the latent
baseline state $\bm C_i$ when the observed proxy system is sufficiently
informative. In particular, when a nontrivial shared adjustment state is
present, the treatment-balancing score $U_{T,i}$ should lie arbitrarily close
to $\mathcal S_B$ as proxy information increases.

For a candidate coefficient vector ${\bm{\beta}}\neq0$,
define the scalar score
$S_{{\bm{\beta}},i}=\bm X_i^\top{\bm{\beta}}.$
Although the final representation may contain multiple scores, the
generalized eigendecomposition developed below has a sequential variational
interpretation: each direction optimizes the same criterion subject to
orthogonality to those already retained. We therefore formulate the graph
restrictions for one generic scalar score $S_{{\bm{\beta}},i}$ at a time.

As discussed in the Introduction, treatment association alone is not
sufficiently specific for confounding-oriented score construction. For example,
\begin{equation}
\label{eq:pure-treatment-chain}
S_{{\bm{\beta}},i}\longrightarrow T_i\longrightarrow Y_i
\end{equation}
implies $S_{{\bm{\beta}},i}\not\!\ci T_i$ under faithfulness even though
$S_{{\bm{\beta}},i}$ lies on no backdoor path from $T_i$ to $Y_i$. Moreover, adding
unconditional outcome relevance $S_{{\bm{\beta}},i}\not\!\ci Y_i$ does not resolve
this problem because the same chain also induces marginal dependence between
$S_{{\bm{\beta}},i}$ and $Y_i$.

We therefore derive the score criterion from the temporal role of treatment.
Consider the six elementary pairwise dependence relations involving
$(S_{{\bm{\beta}},i},T_i,Y_i)$, including the three marginal relations and the three
relations conditional on the remaining variable:
\begin{equation}
\label{eq:primitive-dependence-relations}
\begin{split}
\mathcal Q(S_{{\bm{\beta}},i},T_i,Y_i)=\{&
S_{{\bm{\beta}},i}\not\!\ci T_i,
\ S_{{\bm{\beta}},i}\not\!\ci Y_i,
\ T_i\not\!\ci Y_i,\\
&S_{{\bm{\beta}},i}\not\!\ci T_i\mid Y_i,
\ S_{{\bm{\beta}},i}\not\!\ci Y_i\mid T_i,
\ T_i\not\!\ci Y_i\mid S_{{\bm{\beta}},i}
\}.
\end{split}
\end{equation}

We compare these restrictions through the graph classes they represent. For
any $\mathcal A\subseteq\mathcal Q(S_{{\bm{\beta}},i},T_i,Y_i)$, let
$\mathfrak G(\mathcal A)$ denote the class of directed acyclic graphs (DAGs) that satisfy every relation
in $\mathcal A$ under d-connection and respect the temporal order
$S_{{\bm{\beta}},i}\prec T_i\prec Y_i$. These graphs may contain additional
latent variables, so we do not assume that $(S_{{\bm{\beta}},i},T_i,Y_i)$ are causally
sufficient. We assume that the observational distribution is faithful to the
corresponding graph and that sampling induces no selection bias. By
convention, $\mathfrak G(\varnothing)$ contains all graphs under
consideration.

The existence of the edge $T_i\rightarrow Y_i$ is unknown when the score is
constructed. We therefore retain only primitive relations whose truth values
do not depend on whether this edge is present. For any graph $\mathbb G$ under
consideration that does not contain $T_i\rightarrow Y_i$, define
$\mathbb G^+=\mathbb G\cup\{T_i\rightarrow Y_i\},$
leaving every other node and edge unchanged. We call a primitive relation
$q\in\mathcal Q(S_{{\bm{\beta}},i},T_i,Y_i)$ \emph{uniformly
treatment-effect-invariant} if
\begin{equation}
\label{eq:uniform-treatment-effect-invariance}
q\text{ holds in }\mathbb G
\quad\Longleftrightarrow\quad
q\text{ holds in }\mathbb G^+
\end{equation}
for every such $\mathbb G$. Because deleting $T_i\rightarrow Y_i$ reverses the same
comparison, \eqref{eq:uniform-treatment-effect-invariance} covers both the
addition and deletion of the treatment edge.

\begin{theorem}
\label{thm:complete-invariant-criterion}
Under the temporal ordering, faithfulness, and absence of selection bias
specified above, define
$\mathcal Q_{\mathrm{inv}}
=
\left\{
S_{{\bm{\beta}},i}\not\!\ci T_i,\;
S_{{\bm{\beta}},i}\not\!\ci Y_i\mid T_i
\right\}.$
Then $\mathcal Q_{\mathrm{inv}}$ has the following properties:

\begin{enumerate}
\item[\textnormal{(a)}]
\textbf{Completeness.}
The set $\mathcal Q_{\mathrm{inv}}$ contains all of the
uniformly treatment-effect-invariant primitive relations in
\eqref{eq:primitive-dependence-relations}:
\[
\mathcal Q_{\mathrm{inv}}
=
\left\{
q\in\mathcal Q(S_{{\bm{\beta}},i},T_i,Y_i):
q\text{ is uniformly treatment-effect-invariant}
\right\}.
\]

\item[\textnormal{(b)}]
\textbf{Irreducibility.}
For every proper subset
$\mathcal A\subsetneq\mathcal Q_{\mathrm{inv}}$, there exists a faithful
ordered DAG $\mathbb G_{\mathcal A}$ that satisfies every relation in $\mathcal A$
but does not satisfy every relation in $\mathcal Q_{\mathrm{inv}}$.
Equivalently, every proper subset represents a strictly larger graph class:
$\mathfrak G(\mathcal Q_{\mathrm{inv}})
\subsetneq
\mathfrak G(\mathcal A).$
\end{enumerate}
\end{theorem}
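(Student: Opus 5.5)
The plan is to argue entirely at the level of d-connection in ordered DAGs that may contain latent nodes. Faithfulness and the absence of selection bias let us read each primitive relation in \eqref{eq:primitive-dependence-relations} as a d-connection statement, so each relation holds in $\mathbb G$ exactly when the corresponding d-connection holds. Part (a) then splits into two tasks. The first is to show that the two relations in $\mathcal Q_{\mathrm{inv}}$ are uniformly treatment-effect-invariant. The second is to exhibit, for each of the remaining four relations, a single graph $\mathbb G$ without $T_i\rightarrow Y_i$ on which the relation changes truth value when passing to $\mathbb G^+$. Part (b) then reduces to checking the three proper subsets $\varnothing$, $\{S_{\bm\beta,i}\not\!\ci T_i\}$ and $\{S_{\bm\beta,i}\not\!\ci Y_i\mid T_i\}$.

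\textbf{Invariance of $S_{\bm\beta,i}\not\!\ci T_i$.} Paths not using the edge $T_i\rightarrow Y_i$ are present in both $\mathbb G$ and $\mathbb G^+$. Their blocking status with empty conditioning set is also the same in both graphs, since only colliders matter and none is opened. So it suffices to show that every path in $\mathbb G^+$ using the new edge is blocked marginally. Such a path ends $\cdots Y_i\leftarrow T_i$, so it reaches $Y_i$ through the new edge, which points into $Y_i$. We split on the other edge at $Y_i$:
\begin{itemize}
\item If that edge also points into $Y_i$, then $Y_i$ is an unconditioned collider and the path is blocked.
\item If it points out of $Y_i$, the segment from $Y_i$ back to $S_{\bm\beta,i}$ begins with a directed edge away from $Y_i$. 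By the ordering $S_{\bm\beta,i}\prec Y_i$, the node $S_{\bm\beta,i}$ is not a descendant of $Y_i$, so this segment must contain a collider, which again blocks the path marginally.
\end{itemize}

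\textbf{Invariance of $S_{\bm\beta,i}\not\!\ci Y_i\mid T_i$.} Any path in $\mathbb G^+$ using the new edge ends $\cdots T_i\rightarrow Y_i$. There $T_i$ is a conditioned non-collider, so the path is blocked. For paths not using the edge, blocking given $\{T_i\}$ depends only on whether colliders or their descendants lie in $\{T_i\}$. Adding $T_i\rightarrow Y_i$ only adds $Y_i$ to descendant sets, and $Y_i$ is not conditioned on, so these paths keep the same status. I expect this descendant bookkeeping, together with the collider argument in the previous paragraph, to be the main technical point. Everything else is explicit construction.

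\textbf{Counterexamples for the other four relations.} Each uses three nodes and no latents:
\begin{itemize}
\item For $T_i\not\!\ci Y_i$ and $T_i\not\!\ci Y_i\mid S_{\bm\beta,i}$, take the empty graph. Both relations fail in $\mathbb G$ and hold in $\mathbb G^+$.
\item For $S_{\bm\beta,i}\not\!\ci Y_i$, take $S_{\bm\beta,i}\rightarrow T_i$ with $Y_i$ isolated. The relation fails in $\mathbb G$ and holds along the chain in $\mathbb G^+$.
\item For $S_{\bm\beta,i}\not\!\ci T_i\mid Y_i$, take $S_{\bm\beta,i}\rightarrow Y_i$ with $T_i$ isolated. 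The relation fails in $\mathbb G$, while $\mathbb G^+$ makes $Y_i$ a conditioned collider $S_{\bm\beta,i}\rightarrow Y_i\leftarrow T_i$, so it holds there.
\end{itemize}
All of these graphs respect $S_{\bm\beta,i}\prec T_i\prec Y_i$. This establishes completeness.

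\textbf{Irreducibility.}
\begin{itemize}
\item For $\mathcal A=\varnothing$ or $\mathcal A=\{S_{\bm\beta,i}\not\!\ci Y_i\mid T_i\}$, use $S_{\bm\beta,i}\rightarrow Y_i$ with $T_i$ isolated. Here $S_{\bm\beta,i}\not\!\ci Y_i\mid T_i$ holds but $S_{\bm\beta,i}\ci T_i$.
\item For $\mathcal A=\{S_{\bm\beta,i}\not\!\ci T_i\}$, use the chain \eqref{eq:pure-treatment-chain}. Here $S_{\bm\beta,i}\not\!\ci T_i$ holds but $S_{\bm\beta,i}\ci Y_i\mid T_i$.
\end{itemize}
In each case the graph lies in $\mathfrak G(\mathcal A)\setminus\mathfrak G(\mathcal Q_{\mathrm{inv}})$, which gives the strict inclusion. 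To confirm that these are faithful ordered DAGs, I will invoke the standard fact that every DAG admits faithful distributions, for example linear-Gaussian models with generic coefficients.
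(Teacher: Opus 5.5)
Your proposal is correct and follows the paper's proof essentially step for step. Both show that every path through the new edge $T_i\rightarrow Y_i$ is blocked, and your case split at $Y_i$ is actually tighter than the paper's bare assertion that $Y_i$ itself must be the collider. Both also use the same three-node counterexamples for the four non-invariant relations and essentially the same witness graphs for irreducibility.

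One wording fix: adding $T_i\rightarrow Y_i$ adds $Y_i$ and all of its descendants to descendant sets, not only $Y_i$. What your argument really needs is that $T_i$ never becomes a new descendant of any node, which follows from $T_i\prec Y_i$.
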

\noindent \textbf{Remark.}
Proofs are located in Supplementary Materials \ref{supp:proofs}. The two conclusions of Theorem~\ref{thm:complete-invariant-criterion} explain
why score construction should enforce both
$S_{{\bm{\beta}},i}\not\!\ci T_i$ and
$S_{{\bm{\beta}},i}\not\!\ci Y_i\mid T_i$. Completeness shows that these are exactly the
primitive dependence relations whose truth values are invariant to the
unknown edge $T_i\rightarrow Y_i$. Their conjunction therefore represents the
most specific graph class expressible through the primitive relations without
making score construction depend on whether treatment causally affects the
outcome. Irreducibility shows that neither restriction is redundant because removing
either relation strictly enlarges the represented graph class.

\vspace{5mm}
We now seek score directions that satisfy both invariant relations in
$\mathcal Q_{\mathrm{inv}}$. For a candidate scalar score
\(
S_{{\bm{\beta}},i}=\bm X_i^\top{\bm{\beta}}
\),
we represent $S_{{\bm{\beta}},i}\not\!\ci T_i$ by the correlation between
$S_{{\bm{\beta}},i}$ and $T_i$, and $S_{{\bm{\beta}},i}\not\!\ci Y_i\mid T_i$ by the partial
correlation between $S_{{\bm{\beta}},i}$ and $Y_i$ after linear adjustment for
$T_i$. This operationalization imposes a linear-detectability requirement:
the underlying dependencies must induce nonzero marginal and partial linear
correlations rather than being masked by nonlinearity or exact cancellation.
We define the population criterion
\begin{equation}
\label{eq:population-product-objective}
\mathcal J({\bm{\beta}})
=
\left|
\operatorname{corr}(S_{{\bm{\beta}},i},T_i)
\right|
\left|
\operatorname{pcorr}(S_{{\bm{\beta}},i},Y_i\mid T_i)
\right|,
\qquad
\sup_{{\bm{\beta}}\neq0}\mathcal J({\bm{\beta}}).
\end{equation}
Under the linear-detectability requirement,
\begin{equation*}
\mathcal J({\bm{\beta}})>0
\quad\Longleftrightarrow\quad
\operatorname{corr}(S_{{\bm{\beta}},i},T_i)\neq0
\ \text{and}\
\operatorname{pcorr}(S_{{\bm{\beta}},i},Y_i\mid T_i)\neq0
\end{equation*}
Thus the product implements an ``AND'' criterion. Under faithfulness and
linear detectability, a direction receives positive population value exactly
when its score-level graph belongs to
$\mathfrak G(\mathcal Q_{\mathrm{inv}})$. Directions outside this class all
receive zero population value, so the optimization does not rank directions
that fail either invariant relation.

This restriction contrasts with score construction based only on treatment
prediction, including propensity-score modeling. Viewed through the same
graph-class lens, a treatment-only criterion searches over
\(
\mathfrak G\!\left(
\left\{S_{{\bm{\beta}},i}\not\!\ci T_i\right\}
\right).
\)
By Theorem~\ref{thm:complete-invariant-criterion},
\(
\mathfrak G(\mathcal Q_{\mathrm{inv}})
\subsetneq
\mathfrak G\!\left(
\left\{S_{{\bm{\beta}},i}\not\!\ci T_i\right\}
\right).
\)
Thus treatment-only score construction searches over a strictly larger graph
class, including directions that predict treatment but contain no
residual-outcome information. The product criterion assigns such directions
zero population value and concentrates the search on the more specific class
represented by $\mathcal Q_{\mathrm{inv}}$ to hopefully achieve better finite sample performance.

\section{COMPACT: A Spectral Approach}
\label{sec:compact-algorithm}

The empirical version of \eqref{eq:population-product-objective} is non-convex because it multiplies two normalized associations. The COMPACT algorithm obtains a tractable approximation by preserving the product numerator, which encodes the requirement that a score be both treatment-relevant and residual-outcome-relevant, while replacing the original normalization with a simpler arithmetic-mean denominator. This substitution leads to a spectral problem with a closed-form generalized eigenvalue characterization.

\subsection{Spectral formulation} \label{sec:compact-algorithm:spectral}

For a candidate direction ${\bm{\beta}}$, write \(\bm S=X{\bm{\beta}}\) for its sample
score. We use \(I_n\) for the \(n\times n\) identity matrix and define the empirical residual-maker with respect to treatment as
$M_T = I_n - \bm T(\bm T^\top \bm T)^{-1}\bm T^\top .$
The treatment-residualized outcome and score are
\(\bm Y_{\mathrm{res}}=M_T\bm Y\) and
\(\bm S_{\mathrm{res}}=M_T\bm S=M_T X{\bm{\beta}}\), respectively. Hence the empirical partial
covariance between \(\bm S\) and \(\bm Y\) given \(\bm T\) is
\[
\operatorname{cov}(\bm S,\bm Y\mid \bm T)
=
\frac{1}{n}\bm S_{\mathrm{res}}^\top \bm Y_{\mathrm{res}}
=
\frac{1}{n}(M_T X{\bm{\beta}})^\top M_T\bm Y .
\]
Since \(M_T\) is symmetric and idempotent, this equals
\(n^{-1}{\bm{\beta}}^\top X^\top M_T\bm Y\).

Define the treatment and residual-outcome cross-covariance directions as
\(\widehat{\bm{v}}_T=n^{-1}X^\top \bm T\) and
\(\widehat{\bm{v}}_R=n^{-1}X^\top M_T\bm Y\). Then
\(\operatorname{cov}(\bm S,\bm T)={\bm{\beta}}^\top\widehat{\bm{v}}_T\) and
\(\operatorname{cov}(\bm S,\bm Y\mid \bm T)={\bm{\beta}}^\top\widehat{\bm{v}}_R\). To put the criterion on the
correlation scale, define
\(\widehat\Sigma_{\bm X}=n^{-1}X^\top X\) and
\(\widehat\Sigma_{\bm X\mid T}=n^{-1}X^\top M_T X\), so that
\(\operatorname{var}(\bm S)={\bm{\beta}}^\top\widehat\Sigma_{\bm X}{\bm{\beta}}\) and
\(\operatorname{var}(\bm S_{\mathrm{res}})={\bm{\beta}}^\top\widehat\Sigma_{\bm X\mid T}{\bm{\beta}}\). The product of the treatment correlation and
partial outcome correlation becomes
\[
\frac{|{\bm{\beta}}^\top\widehat{\bm{v}}_T|}{\{\operatorname{var}(\bm T)\,{\bm{\beta}}^\top\widehat\Sigma_{\bm X}{\bm{\beta}}\}^{1/2}}
\cdot
\frac{|{\bm{\beta}}^\top\widehat{\bm{v}}_R|}{\{\operatorname{var}(\bm Y_{\mathrm{res}})\,{\bm{\beta}}^\top\widehat\Sigma_{\bm X\mid T}{\bm{\beta}}\}^{1/2}}.
\]
The terms \(\operatorname{var}(\bm T)\) and \(\operatorname{var}(\bm Y_{\mathrm{res}})\) do not depend on \({\bm{\beta}}\), so the objective is proportional, as a function of
\({\bm{\beta}}\), to
\[
\widehat J({\bm{\beta}})
=
\frac{
\left|
({\bm{\beta}}^\top\widehat{\bm{v}}_T)({\bm{\beta}}^\top\widehat{\bm{v}}_R)
\right|
}{
\{({\bm{\beta}}^\top\widehat\Sigma_{\bm X}{\bm{\beta}})
({\bm{\beta}}^\top\widehat\Sigma_{\bm X\mid T}{\bm{\beta}})\}^{1/2}
}.
\]
\noindent The numerator has a convenient low-rank quadratic representation. Let
\(
\widehat H
=
\left(\widehat{\bm{v}}_T\widehat{\bm{v}}_R^\top
+\widehat{\bm{v}}_R\widehat{\bm{v}}_T^\top\right)/2 .
\)
Then
\[
{\bm{\beta}}^\top\widehat H{\bm{\beta}}
=
\frac{1}{2}{\bm{\beta}}^\top\widehat{\bm{v}}_T\widehat{\bm{v}}_R^\top{\bm{\beta}}
+
\frac{1}{2}{\bm{\beta}}^\top\widehat{\bm{v}}_R\widehat{\bm{v}}_T^\top{\bm{\beta}}
=
({\bm{\beta}}^\top\widehat{\bm{v}}_T)({\bm{\beta}}^\top\widehat{\bm{v}}_R).
\]
Thus
\[
\widehat J({\bm{\beta}})
=
\frac{
\left|
{\bm{\beta}}^\top\widehat H{\bm{\beta}}
\right|
}{
\{({\bm{\beta}}^\top\widehat\Sigma_{\bm X}{\bm{\beta}})
({\bm{\beta}}^\top\widehat\Sigma_{\bm X\mid T}{\bm{\beta}})\}^{1/2}
}.
\]
In high-dimensional or collinear settings, \(\widehat\Sigma_{\bm X}\) and
\(\widehat\Sigma_{\bm X\mid T}\) may be singular or
ill-conditioned. We therefore introduce ridge-stabilized versions
\(\widehat\Sigma_{\bm X,\eta}=\widehat\Sigma_{\bm X}+\eta I_p\) and
\(\widehat\Sigma_{\bm X\mid T,\eta}=\widehat\Sigma_{\bm X\mid T}+\eta I_p\), where
\(\eta\geq 0\), and define
\[
\widehat J_\eta({\bm{\beta}})
=
\frac{
\left|
{\bm{\beta}}^\top\widehat H{\bm{\beta}}
\right|
}{
\{({\bm{\beta}}^\top\widehat\Sigma_{\bm X,\eta}{\bm{\beta}})
({\bm{\beta}}^\top\widehat\Sigma_{\bm X\mid T,\eta}{\bm{\beta}})\}^{1/2}
}.
\]
We describe the selection of $\eta$ by cross-validation in Supplementary Materials~\ref{supp:cross_valid}.

To obtain a spectral formulation, we replace the geometric-mean
normalization in \(\widehat J_\eta({\bm{\beta}})\) by its arithmetic-mean analog. Specifically,
define
\(
\widehat G_{\eta}
=
\left(
\widehat\Sigma_{\bm X,\eta}
+\widehat\Sigma_{\bm X\mid T,\eta}
\right)/2.
\)
We then consider the ridge-stabilized arithmetic surrogate
\[
\widetilde J_\eta({\bm{\beta}})
=
\frac{
\left|
{\bm{\beta}}^\top\widehat H{\bm{\beta}}
\right|
}{
{\bm{\beta}}^\top\widehat G_{\eta}{\bm{\beta}}
}
=
\frac{
\left|
{\bm{\beta}}^\top\widehat H{\bm{\beta}}
\right|
}{
\frac{1}{2}
\left(
{\bm{\beta}}^\top\widehat\Sigma_{\bm X,\eta}{\bm{\beta}}
+
{\bm{\beta}}^\top\widehat\Sigma_{\bm X\mid T,\eta}{\bm{\beta}}
\right)
}.
\]
This objective preserves the conjunctive numerator
\(
({\bm{\beta}}^\top\widehat{\bm{v}}_T)({\bm{\beta}}^\top\widehat{\bm{v}}_R),
\)
which rewards directions that are simultaneously treatment-relevant and
residual-outcome-relevant, while using an equal-weight quadratic normalization
for the raw score variance and the treatment-residualized score variance.

The stationary directions of the surrogate criterion satisfy the generalized
eigenvalue problem
$\widehat H{\bm{\beta}}
=
\widehat\rho(\eta)\widehat G_{\eta}{\bm{\beta}} .$
Indeed, the scale-invariant Rayleigh quotient
\(
|{\bm{\beta}}^\top\widehat H{\bm{\beta}}|/({\bm{\beta}}^\top\widehat G_\eta{\bm{\beta}})
\)
is stationary at the generalized eigenvectors of $(\widehat H,\widehat G_\eta)$.
If \(\widehat{\bm{\beta}}_j\) is normalized so that
\(\widehat{\bm{\beta}}_j^\top\widehat G_\eta\widehat{\bm{\beta}}_j=1\), then
multiplying the eigenvalue equation on the left by \(\widehat{\bm{\beta}}_j^\top\) gives
\(
\widehat{\bm{\beta}}_j^\top\widehat H\widehat{\bm{\beta}}_j=\widehat\rho_j(\eta).
\)
Thus the objective value along this eigendirection is
\(
|\widehat{\bm{\beta}}_j^\top\widehat H\widehat{\bm{\beta}}_j|=|\widehat\rho_j(\eta)|.
\)
The maximizer is the generalized eigenvector with the largest absolute
eigenvalue. Both
positive and negative eigenvalues can represent strong joint relevance because
their signs indicate, respectively, whether the treatment covariance and
residual-outcome covariance have the same or opposite signs.

\subsection{The retained two-dimensional eigenspace} \label{sec:compact-algorithm:2D}

Even though \(\widehat H\in\mathbb R^{p\times p}\), the computation is low-rank. In the
scalar-treatment, scalar-outcome case considered here,
$\widehat H$
has rank at most two. COMPACT therefore computes a generalized
eigendecomposition and retains the eigenspace associated with all nonzero
generalized eigenvalues. In the nondegenerate rank-two case, let
$\widehat{\bm{\beta}}_{\eta,1}$ and $\widehat{\bm{\beta}}_{\eta,2}$ be any $\widehat G_\eta$-orthonormal basis of this eigenspace and
define
\(
\widehat B_\eta=(\widehat{\bm{\beta}}_{\eta,1},\widehat{\bm{\beta}}_{\eta,2})\) and
\(
\widehat S_\eta^{\mathrm{raw}}=X\widehat B_\eta.
\)

The generalized eigenvectors are orthogonal in the $\widehat G_\eta$ metric, but their
fitted scores need not be empirically uncorrelated. We therefore report an
orthogonal basis of the same fitted score span. Let
\(
\widehat\Gamma_{S,\eta}
=
\frac{1}{n}(\widehat S_\eta^{\mathrm{raw}})^\top\widehat S_\eta^{\mathrm{raw}}
\)
and define
\begin{equation}
\label{eq:score-whitening}
\widehat S_\eta
=
\widehat S_\eta^{\mathrm{raw}}\widehat\Gamma_{S,\eta}^{-1/2}.
\end{equation}
When $\widehat\Gamma_{S,\eta}$ is nonsingular,
\(
\frac{1}{n}\widehat S_\eta^\top\widehat S_\eta=I_2.
\)
Thus the reported scores themselves are empirically orthogonal and have unit
variance. 

We show in Supplementary Materials~\ref{sec:computational-complexity} that COMPACT has overall time complexity \(O(np^2+p^3)\) and memory complexity \(O(np+p^2)\). Importantly, the \(O(p^3)\) term arises solely from a Cholesky factorization, a highly optimized operation in modern numerical linear algebra libraries, allowing COMPACT to remain computationally practical even when \(p\) is large. Consistent with this analysis, COMPACT was among the fastest methods evaluated in our experiments, including settings with hundreds of candidate adjustment variables.

\subsection{Connection to the latent adjustment representation} \label{sec:compact-algorithm:latent_adj}

We next characterize the population representation targeted by the COMPACT
eigenspace. Define the treatment-residualized outcome
\[
R_i
=
\{Y_i-\mathbb E(Y_i)\}
-
\delta_T\{T_i-\mathbb E(T_i)\},
\qquad
\delta_T
=
\frac{\operatorname{cov}(Y_i,T_i)}
{\operatorname{var}(T_i)},
\]
and let
$\bm{v}_T=\operatorname{cov}(\bm X_i,T_i)$ and 
$\bm{v}_R=\operatorname{cov}(\bm X_i,R_i).$
Define
\[
H
=
\frac{\bm{v}_T\bm{v}_R^\top+\bm{v}_R\bm{v}_T^\top}{2},
\qquad
K_\eta
=
\Sigma_{\bm X}+\eta I_p, \qquad G_\eta
=
K_\eta
-
\frac{\bm{v}_T\bm{v}_T^\top}{2\operatorname{var}(T_i)}.
\]
Let $\mathcal W_\eta$ denote the generalized eigenspace associated with the
nonzero generalized eigenvalues of $(H,G_\eta)$. Finally, define
\begin{equation}
\label{eq:hq-definition}
\bm h_q
=
\Sigma_{\bm C}^{-1}
\mathbb E\!\left[
\bm C_i m_T(U_{T,i})(\bm q^\top\bm C_i)
\right].
\end{equation}
The following proposition identifies the latent directions represented by
this population eigenspace.

\begin{proposition}
\label{prop:population-compact-span}
Suppose $K_\eta$ and $G_\eta$ are positive definite and $\bm{v}_T$ and $\bm{v}_R$ are
linearly independent. Under
\eqref{eq:linear-treatment-model}--\eqref{eq:linear-proxy-model}, with
\eqref{eq:inverse-regression-linearity} imposed when $m_T$ is nonlinear,
\begin{equation}
\label{eq:population-compact-span}
\mathcal W_\eta
=
\operatorname{span}
\left\{
K_\eta^{-1}\Lambda\Sigma_{\bm C}\bm a,\,
K_\eta^{-1}\Lambda\Sigma_{\bm C}(\bm b+\bm h_q)
\right\}.
\end{equation}
Consequently, the population generalized eigenscores satisfy
\begin{equation}
\label{eq:population-score-latent-span}
\begin{split}
\left\{
\bm X_i^\top \bm w:\bm w\in\mathcal W_\eta
\right\}
=
\operatorname{span}\bigl\{&
\bm a^\top\Sigma_{\bm C}\Lambda^\top
K_\eta^{-1}\bm X_i, (\bm b+\bm h_q)^\top\Sigma_{\bm C}\Lambda^\top
K_\eta^{-1}\bm X_i
\bigr\}.
\end{split}
\end{equation}
\end{proposition}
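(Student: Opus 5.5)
The proof splits into two parts. First, a linear-algebra reduction shows that $\mathcal W_\eta=\operatorname{span}\{K_\eta^{-1}\bm v_T,K_\eta^{-1}\bm v_R\}$. Second, a moment computation under the working model gives $\bm v_T$ and $\bm v_R$ as images under $\Lambda\Sigma_{\bm C}$ of the stated latent directions. After that, the score statement \eqref{eq:population-score-latent-span} follows by transposition.

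\emph{Step 1: the eigenspace.} Since $G_\eta$ is positive definite, the generalized problem $H\bm w=\rho G_\eta\bm w$ has nonzero-eigenvalue eigenvectors of the form $\bm w=\rho^{-1}G_\eta^{-1}H\bm w\in G_\eta^{-1}\operatorname{col}(H)$. Because $\bm v_T,\bm v_R$ are independent, $\operatorname{col}(H)=\operatorname{span}(\bm v_T,\bm v_R)$. Moreover, $G_\eta^{-1/2}HG_\eta^{-1/2}$ is symmetric of rank two, so its nonzero eigenspace is exactly two-dimensional. Hence $\mathcal W_\eta=G_\eta^{-1}\operatorname{span}(\bm v_T,\bm v_R)$.

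Next I replace $G_\eta$ by $K_\eta$. By Sherman--Morrison, $G_\eta^{-1}=K_\eta^{-1}+cK_\eta^{-1}\bm v_T\bm v_T^\top K_\eta^{-1}$ for a scalar $c$ that is finite because $G_\eta$ is positive definite. Thus $G_\eta^{-1}\bm v$ differs from $K_\eta^{-1}\bm v$ only by a multiple of $K_\eta^{-1}\bm v_T$. This gives
$$G_\eta^{-1}\operatorname{span}(\bm v_T,\bm v_R)=K_\eta^{-1}\operatorname{span}(\bm v_T,\bm v_R).$$

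\emph{Step 2: identifying $\bm v_T$.} By the nondifferential proxy condition \eqref{eq:linear-proxy-model}, $\mathbb E(\bm X_iT_i)=\mathbb E\{\mathbb E(\bm X_i\mid\bm C_i)T_i\}=\Lambda\mathbb E(\bm C_iT_i)$. I then condition on $U_{T,i}$ using \eqref{eq:linear-treatment-model}, which gives $\mathbb E(\bm C_iT_i)=\mathbb E\{\bm C_i m_T(U_{T,i})\}$.
\begin{itemize}
\item In the additive-linear case, $m_T(u)=u$, so this equals $\Sigma_{\bm C}\bm a$.
\item When $m_T$ is nonlinear, the linearity condition \eqref{eq:inverse-regression-linearity} gives $\mathbb E\{\bm C_im_T(U_{T,i})\}=\Sigma_{\bm C}\bm a\cdot\kappa$ with $\kappa=\mathbb E\{U_{T,i}m_T(U_{T,i})\}/\bm a^\top\Sigma_{\bm C}\bm a$.
\end{itemize}
So $\bm v_T\propto\Lambda\Sigma_{\bm C}\bm a$, and $\kappa\neq0$ is forced because $\bm v_T\neq0$.

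\emph{Step 3: identifying $\bm v_R$.} This is the main obstacle. We have $\bm v_R=\operatorname{cov}(\bm X_i,Y_i)-\delta_T\bm v_T$, so it suffices to show $\operatorname{cov}(\bm X_i,Y_i)\in\operatorname{span}\{\Lambda\Sigma_{\bm C}\bm a,\Lambda\Sigma_{\bm C}(\bm b+\bm h_q)\}$ with a nonzero coefficient on the second direction.

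By \eqref{eq:linear-potential-outcome-model} and consistency,
$$Y_i=\alpha+\bm b^\top\bm C_i+T_i(\gamma+\bm q^\top\bm C_i)+\varepsilon_{Y,i}(T_i).$$
The error term contributes nothing, because $\mathbb E\{\varepsilon_{Y,i}(T_i)\mid\bm C_i,T_i\}=0$ and $\bm X_i$ depends on the other variables only through $\bm C_i$. Applying the proxy condition again,
$$\operatorname{cov}(\bm X_i,Y_i)=\Lambda\bigl[\Sigma_{\bm C}\bm b+\gamma\,\mathbb E(\bm C_iT_i)+\mathbb E\{\bm C_iT_i\bm q^\top\bm C_i\}\bigr].$$
For the last term, I condition on $\bm C_i$ and use $\mathbb E(T_i\mid\bm C_i)=m_T(U_{T,i})$, which gives $\mathbb E\{\bm C_im_T(U_{T,i})\bm q^\top\bm C_i\}=\Sigma_{\bm C}\bm h_q$ by the definition \eqref{eq:hq-definition}. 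The middle term is a multiple of $\Sigma_{\bm C}\bm a$ by Step 2. Hence
$$\operatorname{cov}(\bm X_i,Y_i)=\Lambda\Sigma_{\bm C}(\bm b+\bm h_q)+c'\Lambda\Sigma_{\bm C}\bm a,$$
and $\bm v_R$ equals this minus a multiple of $\Lambda\Sigma_{\bm C}\bm a$. The coefficient of $\Lambda\Sigma_{\bm C}(\bm b+\bm h_q)$ is exactly one. Together with the assumed independence of $\bm v_T$ and $\bm v_R$, this shows the two spans coincide, and each is two-dimensional.

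The care needed in Step 3 is to check that centering and the $\delta_T$ correction only shift $\bm v_R$ along $\bm v_T$. I also need to confirm that no linearity assumption is needed for the quadratic term, since $\bm h_q$ is defined to absorb it.

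\emph{Step 4: the scores.} I apply $K_\eta^{-1}$ to the directions from Steps 2 and 3, combine with Step 1 to get \eqref{eq:population-compact-span}, and then write $\bm X_i^\top K_\eta^{-1}\Lambda\Sigma_{\bm C}\bm u=\bm u^\top\Sigma_{\bm C}\Lambda^\top K_\eta^{-1}\bm X_i$, using the symmetry of $K_\eta$ and $\Sigma_{\bm C}$. This gives \eqref{eq:population-score-latent-span}.
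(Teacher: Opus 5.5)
Your proof is correct and follows essentially the same route as the paper's: you use the nondifferential proxy condition to write $\bm v_T$ and $\bm v_R$ as $\Lambda$ times latent cross-moments, reduce the nonzero generalized eigenspace to $G_\eta^{-1}\operatorname{span}\{\bm v_T,\bm v_R\}$, replace $G_\eta^{-1}$ by $K_\eta^{-1}$ via Sherman--Morrison, and absorb the $\bm a$-components by rescaling. Two small refinements over the paper are worth keeping: you derive $\kappa_T\neq0$ from $\bm v_T\neq0$ (which linear independence implies) instead of assuming it separately, and your argument via the symmetric rank-two matrix $G_\eta^{-1/2}HG_\eta^{-1/2}$ makes explicit why the nonzero eigenspace is exactly two-dimensional.
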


The proposition has a direct prediction interpretation. Define
$D
=
\left(
\bm a,\,
\bm b+\bm h_q
\right)$
and the latent adjustment vector
\begin{equation}
\label{eq:latent-adjustment-vector}
\bm U_i
=
D^\top\bm C_i
=
\begin{pmatrix}
U_{T,i}\\
U_{R,i}
\end{pmatrix},
\end{equation}
where $U_{T,i}=\bm a^\top\bm C_i$ is the treatment-balancing variable and
$U_{R,i}=(\bm b+\bm h_q)^\top\bm C_i$ is an outcome-guided variable.
As detailed in Supplementary
Materials~\ref{supp:latent-representation}, Proposition~\ref{prop:population-compact-span} shows that the population
COMPACT eigenscores span
\begin{equation}
\label{eq:ridge-adjustment-predictions}
\widetilde{\bm U}_{\eta,i}
=
D^\top\Sigma_{\bm C}\Lambda^\top
K_\eta^{-1}\bm X_i.
\end{equation}
When $\eta=0$, $\widetilde{\bm U}_{0,i}$ is the best linear predictor of
$\bm U_i$ from the observed proxies $\bm X_i$. Indeed, since
$\operatorname{cov}(\bm U_i,\bm X_i)
=
D^\top\Sigma_{\bm C}\Lambda^\top$,
the population linear regression predictor is
$\operatorname{cov}(\bm U_i,\bm X_i)\Sigma_{\bm X}^{-1}\bm X_i
=
\widetilde{\bm U}_{0,i}$.
When $\eta>0$, replacing $\Sigma_{\bm X}^{-1}$ by
$(\Sigma_{\bm X}+\eta I_p)^{-1}$ gives the corresponding ridge-regularized
linear predictor. Thus COMPACT can be viewed
as recovering, up to a change of basis, a ridge-regularized linear prediction
of the latent treatment-balancing and outcome-guided variables from the proxy
system. Because $\bm U_i$ contains the treatment-balancing coordinate
$U_{T,i}$, accurate recovery of this span provides the information needed for
downstream causal adjustment.

\section{Theoretical Results}
\label{sec:theory}

We now establish three theoretical properties of COMPACT. First, we characterize recovery of the latent adjustment space by separating error due to imperfect proxy information from error due to finite-sample estimation of the spectral target. Second, we show how this recovery error propagates to downstream marginal and score-indexed causal estimation. Third, we establish validity of the nonparametric bootstrap when both COMPACT and the downstream regression are recomputed in each bootstrap sample. We provide an overview of the theoretical results here, while Supplementary Materials~\ref{supp:theoretical-details} contain the complete assumptions, intermediate results, and additional discussion.

\subsection{Recovery of the Latent Adjustment Space} \label{sec:theory:recovery}

Let
$\mathcal U
=
\operatorname{span}_{L_2}(U_{T,i},U_{R,i})$
denote the latent adjustment space and let
$\widehat{\mathcal S}_\eta
=
\operatorname{span}_{L_2}
\left\{
\widehat S_{\eta,i,1},
\widehat S_{\eta,i,2}
\right\}$
denote the learned COMPACT score space for a fixed ridge parameter $\eta>0$. We measure the distance between two equal-dimensional subspaces by
$\operatorname{dist}(\mathcal A,\mathcal B)
=
\|P_{\mathcal A}-P_{\mathcal B}\|_{\mathrm{op}},$
where $P_{\mathcal A}$ and $P_{\mathcal B}$ are their $L_2$-orthogonal projection operators.

Two quantities govern recovery. The effective proxy strength $p_{\mathrm{eff}}$ measures how well the population COMPACT scores linearly reconstruct the worst-recovered direction in $\mathcal U$, with larger values corresponding to more informative proxies. The regularized effective dimension
$d_{\mathrm{eff}}(\eta)
=
\operatorname{tr}
\left\{
\Sigma_{\bm X}
(\Sigma_{\bm X}+\eta I_p)^{-1}
\right\}$
controls the finite-sample complexity of estimating the spectral target, while $\Delta$ denotes the separation of the retained rank-two eigenspace from the zero eigenspace. The formal definitions and regularity conditions are given in Supplementary Materials~\ref{supp:theorem2-details1} and \ref{supp:theorem2-details2}, where Lemmas~\ref{lem:spectral-subspace-error} and~\ref{lem:proxy-recovery} separately establish the sampling and proxy-recovery components of the following result.

\begin{theorem}
\label{thm:proxy-sample-tradeoff}
Suppose the conditions of Lemmas~\ref{lem:spectral-subspace-error} and
\ref{lem:proxy-recovery} hold. Then
\[
\operatorname{dist}^2
\left(
\mathcal U,
\widehat{\mathcal S}_\eta
\right)
\in
O_p\left(
\frac{1}{p_{\mathrm{eff}}}
+
\frac{d_{\mathrm{eff}}(\eta)}{n\Delta^2}
\right).
\]
\end{theorem}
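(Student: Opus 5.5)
The plan is to split the error through the population COMPACT score space and use the triangle inequality for the projection-distance metric. Let $\mathcal S_\eta=\{\bm X_i^\top\bm w:\bm w\in\mathcal W_\eta\}$ be the population eigenscore space characterized in Proposition~\ref{prop:population-compact-span}. Because $\operatorname{dist}(\cdot,\cdot)=\|P_{\mathcal A}-P_{\mathcal B}\|_{\mathrm{op}}$ is a metric on equal-dimensional subspaces of $L_2$, we have
\[
\operatorname{dist}(\mathcal U,\widehat{\mathcal S}_\eta)\le\operatorname{dist}(\mathcal U,\mathcal S_\eta)+\operatorname{dist}(\mathcal S_\eta,\widehat{\mathcal S}_\eta).
\]
Squaring and using $(x+y)^2\le 2x^2+2y^2$ reduces the theorem to two separate rate statements. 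The first term is the deterministic proxy-recovery error, which I will bound by $O(1/p_{\mathrm{eff}})$ using Lemma~\ref{lem:proxy-recovery}. The second term is the stochastic sampling error, which I will bound by $O_p\{d_{\mathrm{eff}}(\eta)/(n\Delta^2)\}$ using Lemma~\ref{lem:spectral-subspace-error}. Adding the two bounds gives the stated result.

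For the proxy term, I will take Lemma~\ref{lem:proxy-recovery} to say that every unit-norm $U\in\mathcal U$ has squared $L_2$ distance to $\mathcal S_\eta$ of order $1/p_{\mathrm{eff}}$. This is how $p_{\mathrm{eff}}$ is defined, through the worst-recovered direction, and by Proposition~\ref{prop:population-compact-span} the space $\mathcal S_\eta$ is the span of the ridge predictors $\widetilde{\bm U}_{\eta,i}$. Both spaces are two-dimensional, so $\|P_{\mathcal U}-P_{\mathcal S_\eta}\|_{\mathrm{op}}=\|(I-P_{\mathcal S_\eta})P_{\mathcal U}\|_{\mathrm{op}}$. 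That quantity is exactly the supremum over unit $U\in\mathcal U$ of the residual norm after projecting onto $\mathcal S_\eta$. The proxy term therefore follows directly.

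For the sampling term, I will invoke Lemma~\ref{lem:spectral-subspace-error}. I expect it to deliver a Davis--Kahan/Wedin-type bound for the generalized eigenproblem, of the form $\|\widehat P-P\|\lesssim\|\widehat E\|/\Delta$. Here $\widehat E$ collects the perturbations $\widehat H-H$ and $\widehat G_\eta-G_\eta$, measured after whitening by $K_\eta^{-1/2}$. The whitened perturbation has squared norm $O_p\{d_{\mathrm{eff}}(\eta)/n\}$, since $\mathbb E\|K_\eta^{-1/2}(\widehat{\bm v}_T-\bm v_T)\|^2$ is of order $\operatorname{tr}(\Sigma_{\bm X}K_\eta^{-1})/n$, and similarly for $\widehat{\bm v}_R$.

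\textbf{Main obstacle.} The step I expect to be hardest is converting the coefficient-space bound into the score-space metric. Lemma~\ref{lem:spectral-subspace-error} may state its error in the $K_\eta$-metric on coefficients, whereas the theorem measures distance in $L_2$ between score spans, and $\widehat{\mathcal S}_\eta$ is moreover the whitened fitted span. The $L_2$ inner product on scores $\bm X_i^\top\bm w$ is induced by $\Sigma_{\bm X}\preceq K_\eta$, so I will argue as follows. Projection distance between score spans is controlled by the principal angles in the $\Sigma_{\bm X}$-metric. These angles are bounded by the $K_\eta$-metric angles up to a constant depending on the smallest $L_2$ norm of the population basis scores, which is bounded away from zero because $\Gamma_{S,\eta}$ is assumed nonsingular. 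The whitening step \eqref{eq:score-whitening} changes only the basis, not the span, so it does not affect the distance.
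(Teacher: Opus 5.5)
Your proposal is correct and is the paper's proof. The paper also passes through $\mathcal S_\eta$ with the triangle inequality, uses $(x+y)^2\le 2x^2+2y^2$, and bounds the two pieces by Lemmas~\ref{lem:proxy-recovery} and~\ref{lem:spectral-subspace-error}.

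The coefficient-to-score conversion you flag as the main obstacle needs no work at the level of this theorem. Lemma~\ref{lem:spectral-subspace-error} already states its conclusion in the $L_2$ score-space projection distance; it handles the conversion internally via $\Sigma_{\bm X}^{1/2}B_\eta$ and the nonsingularity of $\Gamma_{S,\eta}$, much as you anticipate.
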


The theorem separates two sources of error. The term $1/p_{\mathrm{eff}}$ quantifies proxy-reconstruction error, decreasing as the observed proxy system becomes more informative about the latent adjustment variables. In contrast, $d_{\mathrm{eff}}(\eta)/(n\Delta^2)$ reflects estimation of the COMPACT eigenspace from a finite sample. Thus the learned score space converges to $\mathcal U$ when proxy information becomes sufficiently strong and the regularized spectral problem remains estimable relative to sample size.

\subsection{Causal Consistency of Downstream Score Adjustment} \label{sec:theory:downstream}

We next consider a downstream causal estimator that adjusts for the learned COMPACT scores. For $t\in\mathcal T_0$, let
\(
\theta(t)=\mathbb E\{Y_i(t)\}
\)
denote the marginal causal response. Conditional on the fitted COMPACT map, let $\widehat\theta_{\widehat{\bm S}_\eta}^{\mathrm{marg}}(t)$ denote a downstream estimator of the corresponding score-adjusted marginal response and let $\widehat\theta_{\widehat{\bm S}_\eta}^{\mathrm{cond}}(t;\bm s)$ denote the corresponding score-indexed response estimator. The target score-indexed causal response is
$\theta_{\widehat{\bm S}_\eta}^{\mathrm{cond}}(t;\bm s)
=
\mathbb E\{Y_i(t)\mid \widehat{\bm S}_{\eta,i}=\bm s\}.$

Assumption~\ref{assump:causal}, stated in full in Supplementary Materials~\ref{supp:theorem3-details}, requires latent mean exchangeability, COMPACT reconstruction at the rate established above, smoothness of the latent response surface, bounded residual imbalance within score strata, and consistency of the downstream score-adjusted learner. These conditions translate approximation of the latent adjustment space into control of causal estimation error.

\begin{theorem}
\label{thm:causal-consistency-compact-plugin}
Suppose Assumption~\ref{assump:causal} holds. Then
\[
\sup_{t\in\mathcal T_0}
\left|
\widehat\theta_{\widehat{\bm S}_\eta}^{\mathrm{marg}}(t)-\theta(t)
\right|
\in
O_p\left[
\left\{
\frac{1}{p_{\mathrm{eff}}}
+
\frac{d_{\mathrm{eff}}(\eta)}{n\Delta^2}
\right\}^{1/2}
\right]
+
o_p(1),
\]
and
\[
\sup_{t\in\mathcal T_0,\;\bm s\in\mathcal S_0}
\left|
\widehat\theta_{\widehat{\bm S}_\eta}^{\mathrm{cond}}(t;\bm s)
-
\theta_{\widehat{\bm S}_\eta}^{\mathrm{cond}}(t;\bm s)
\right|
\in
O_p\left[
\left\{
\frac{1}{p_{\mathrm{eff}}}
+
\frac{d_{\mathrm{eff}}(\eta)}{n\Delta^2}
\right\}^{1/2}
\right]
+
o_p(1).
\]
\end{theorem}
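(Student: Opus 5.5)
The plan is to reduce both bounds to Theorem~\ref{thm:proxy-sample-tradeoff} through an error decomposition. In that decomposition, the only nonvanishing-rate term is the discrepancy between adjusting for the learned score space $\widehat{\mathcal S}_\eta$ and adjusting for the latent adjustment space $\mathcal U$. Write $r_n^2=1/p_{\mathrm{eff}}+d_{\mathrm{eff}}(\eta)/(n\Delta^2)$. Conditional on the fitted COMPACT map, let $\theta^{\mathrm{marg}}_{\widehat{\bm S}_\eta}(t)=\mathbb E[\mathbb E\{Y_i\mid T_i=t,\widehat{\bm S}_{\eta,i}\}]$ denote the population score-adjusted marginal response. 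The marginal error then splits as
\[
\bigl|\widehat\theta^{\mathrm{marg}}_{\widehat{\bm S}_\eta}(t)-\theta(t)\bigr|
\le
\bigl|\widehat\theta^{\mathrm{marg}}_{\widehat{\bm S}_\eta}(t)-\theta^{\mathrm{marg}}_{\widehat{\bm S}_\eta}(t)\bigr|
+
\bigl|\theta^{\mathrm{marg}}_{\widehat{\bm S}_\eta}(t)-\theta(t)\bigr|.
\]
The first term is $o_p(1)$ uniformly in $t\in\mathcal T_0$ by the downstream-learner consistency part of Assumption~\ref{assump:causal}. The conditional statement splits the same way, with $\theta^{\mathrm{cond}}_{\widehat{\bm S}_\eta}(t;\bm s)$ in the role of the target and the population regression $\mathbb E\{Y_i\mid T_i=t,\widehat{\bm S}_{\eta,i}=\bm s\}$ as the intermediate quantity. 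All the work is therefore in bounding the "confounding bias" term, which is the difference between adjusting for $\widehat{\bm S}_\eta$ and adjusting for the latent state.

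For the bias term I will use latent mean exchangeability and the working model. By \eqref{eq:conditional-mean-restriction}, $\mathbb E\{Y_i\mid T_i=t,\bm C_i\}=\mu_t(\bm C_i)$, and smoothness of the latent response surface gives $\mu_t$ as a function (Lipschitz, uniformly in $t$) of the adjustment-relevant coordinates in $\mathcal U$. Since $\bm U_i$ contains $U_{T,i}$, the single-index treatment law \eqref{eq:linear-treatment-model} makes $\bm U_i$ balancing. Then
\[
\mathbb E\{Y_i\mid T_i=t,\widehat{\bm S}_{\eta,i}\}
=
\mathbb E\{\mu_t(\bm U_i)\mid T_i=t,\widehat{\bm S}_{\eta,i}\},
\]
and the target equals $\mathbb E\{\mu_t(\bm U_i)\mid\widehat{\bm S}_{\eta,i}\}$, unconditional on $T$. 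Next decompose $\bm U_i=\Pi\bm U_i+(\bm U_i-\Pi\bm U_i)$, where $\Pi$ is $L_2$-projection onto $\widehat{\mathcal S}_\eta$; this is where Theorem~\ref{thm:proxy-sample-tradeoff} enters. The projected part is a linear function of $\widehat{\bm S}_{\eta,i}$, so conditioning on it is exact, and the Lipschitz bound reduces the discrepancy to conditional moments of the residual $\bm U_i-\Pi\bm U_i$. Its $L_2$ norm is bounded by $\operatorname{dist}(\mathcal U,\widehat{\mathcal S}_\eta)\,\|\bm U_i\|_{L_2}$, which is $O_p(r_n)$ by Theorem~\ref{thm:proxy-sample-tradeoff}. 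The bounded residual-imbalance assumption then controls how the conditional law of this residual differs between $\{T_i=t\}$ and the full population within score strata. That yields a bound of order $r_n$, uniformly in $t$ and in $\bm s\in\mathcal S_0$. For the marginal bound, integrate over $\widehat{\bm S}_{\eta,i}$ and use $\mathbb E\mu_t(\bm U_i)=\theta(t)$. For the conditional bound, the stratum-level bound applies directly.

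The main obstacle is converting an $L_2$ subspace distance into a pointwise-in-$\bm s$ control of conditional means given $T_i=t$. Conditioning on the treatment arm could amplify the residual where overlap is thin, so an $L_2$ bound alone does not give the needed stratum-level control. I would lean on the bounded residual-imbalance condition, read as a bounded density ratio between $\bm U_i\mid(T_i=t,\widehat{\bm S}_{\eta,i}=\bm s)$ and $\bm U_i\mid\widehat{\bm S}_{\eta,i}=\bm s$, uniformly on $\mathcal T_0\times\mathcal S_0$, together with Cauchy--Schwarz. If this only yields an $L_2$-in-$\bm s$ bound, I would fall back on the uniform stratum-level form of that assumption to upgrade it to a supremum. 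A secondary point is that $\widehat{\mathcal S}_\eta$ is data-dependent. This is handled by working conditionally on the fitted map, as the statement does, and then unconditioning via the $O_p$ rate.
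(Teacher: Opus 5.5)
Your architecture matches the paper's. You split off the learner error using Assumption~\ref{assump:causal}(e). Inside a stratum $\widehat{\bm S}_{\eta,i}=\bm s$ you add and subtract the response at the projected latent value, which is constant on the stratum. You then bound the two resulting terms with the Lipschitz constant and the density ratio $K_\omega$, and integrate over $\widehat{\bm S}_{\eta,i}$ for the marginal claim. Your marginal bound is sound and matches the paper's: Theorem~\ref{thm:proxy-sample-tradeoff} gives $\mathbb E\|\bm U_i^\star-\mathcal R_n(\widehat{\bm S}_{\eta,i})\|^2=O_p(r_n^2)$, and after applying the density ratio and Jensen nothing uniform in $\bm s$ is needed.

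Two steps fail, however. The main one is the supremum over $\bm s\in\mathcal S_0$ in the conditional claim. Theorem~\ref{thm:proxy-sample-tradeoff} controls the reconstruction error only on average over the law of $\widehat{\bm S}_{\eta,i}$. The density-ratio condition only transfers between the arm $\{T_i=t\}$ and the whole stratum at the \emph{same} $\bm s$. Neither that condition nor any ``uniform form'' of it prevents the error from concentrating in low-probability strata. So nothing you invoke controls $\sup_{\bm s\in\mathcal S_0}\mathbb E\{\|\bm U_i^\star-\mathcal R_n(\bm s)\|\mid\widehat{\bm S}_{\eta,i}=\bm s\}$. The paper does not derive this bound. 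Assumption~\ref{assump:causal}(b) separately posits $\sup_{\bm s\in\mathcal S_0}\mathbb E[\|\bm U_i^\star-\mathcal R_n(\widehat{\bm S}_{\eta,i})\|^2\mid\widehat{\bm S}_{\eta,i}=\bm s]=O_p(r_n^2)$, and the paper explicitly flags it as a requirement beyond Theorem~\ref{thm:proxy-sample-tradeoff}. With it, Jensen bounds the stratum-level bias by $L_\mu(1+K_\omega)\,O_p(r_n)$ uniformly in $(t,\bm s)$.

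The other flaw is the response surface. In the working model, $\mathbb E\{Y_i(t)\mid\bm C_i\}=\alpha+t\gamma+(\bm b+t\bm q)^\top\bm C_i$. For $\bm q\neq\bm 0$ this is in general not a function of $\bm U_i=(\bm a^\top\bm C_i,(\bm b+\bm h_q)^\top\bm C_i)^\top$. Moreover, $\mathbb E\{Y_i(t)\mid\bm U_i,\widehat{\bm S}_{\eta,i}\}$ still depends on $\widehat{\bm S}_{\eta,i}$, because the proxies carry information about the rest of $\bm C_i$. Your identity $\mathbb E\{Y_i\mid T_i=t,\widehat{\bm S}_{\eta,i}\}=\mathbb E\{\mu_t(\bm U_i)\mid T_i=t,\widehat{\bm S}_{\eta,i}\}$ is therefore unjustified.

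The repair is what Assumption~\ref{assump:causal}(c) actually posits. Take $\mu(t,\bm u,\bm s)=\mathbb E\{Y_i(t)\mid\bm U_i^\star=\bm u,\widehat{\bm S}_{\eta,i}=\bm s\}$, Lipschitz in $\bm u$ for each fixed $\bm s$. Lemma~\ref{lem:score-augmented-latent-exchangeability} uses the single-index treatment law, nondifferential proxies, and the conditional mean restriction. It then gives $m_{\widehat{\bm S}_\eta}(t,\bm s)=\mathbb E\{\mu(t,\bm U_i^\star,\bm s)\mid T_i=t,\widehat{\bm S}_{\eta,i}=\bm s\}$ and $\theta^{\mathrm{cond}}_{\widehat{\bm S}_\eta}(t;\bm s)=\mathbb E\{\mu(t,\bm U_i^\star,\bm s)\mid\widehat{\bm S}_{\eta,i}=\bm s\}$. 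Subtracting $\mu\{t,\mathcal R_n(\bm s),\bm s\}$ then proceeds as you intended. Only the balancing property of $\bm U_i$ is needed, not sufficiency of $\bm U_i$ for the outcome.
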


Thus the same proxy-recovery and sampling errors governing estimation of the adjustment space also control downstream causal error. In particular, sufficiently accurate reconstruction of $\mathcal U$ allows adjustment for the learned COMPACT scores to recover both marginal and score-indexed causal responses, provided that the downstream learner is itself consistent. Treatment-contrast bounds follow directly by applying the result at two treatment values.

\subsection{Bootstrap Inference} \label{sec:theory:bootstrap}

Finally, we consider inference for finite-dimensional functionals of a downstream ordinary least squares (OLS) regression fitted using the learned COMPACT scores. Let
\(
\bm{\psi}=(\psi_1,\ldots,\psi_q)^\top
\)
denote a fixed vector of population functionals that depends on the learned representation only through its score space. Let $\widehat{\bm{\psi}}$ denote the estimator from the observed sample and $\widehat{\bm{\psi}}^*$ the estimator obtained after recomputing both COMPACT and the downstream regression in a nonparametric bootstrap sample.

Assumption~\ref{assump:L-asymptotic-linearity}, detailed in Supplementary Materials~\ref{supp:theorem4-details}, requires local regularity of the COMPACT spectral map and suitable moment conditions. Together with the fixed-dimensional OLS conditions stated below, these assumptions ensure that perturbations of the empirical moments propagate smoothly through score learning and downstream estimation.

\begin{theorem}
\label{thm:compact-downstream-bootstrap}
Suppose \(O_1,\ldots,O_n\) are independent and identically distributed and
Assumption~\ref{assump:L-asymptotic-linearity} holds. Assume the downstream estimator applies ordinary least squares to the fixed-dimensional regression model in \eqref{eq:linear-potential-outcome-model}:
\begin{equation}\label{eq:regression_model}
Y_i
=
\alpha_0
+
\alpha_T T_i
+
\bm{\alpha}_S^\top \bm S_{\eta,i}
+
T_i\bm{\alpha}_{TS}^\top \bm S_{\eta,i}
+
\varepsilon_i .
\end{equation}
Assume the corresponding population Gram matrix over \(\left(
1,\,
T_i,\,
{\bm S_{\eta,i}}^{\top},\,
T_i{\bm S_{\eta,i}}^{\top}
\right)^\top \) is nonsingular, the regressors have finite fourth moments,
and \(\mathbb E(Y_i^4)<\infty\). Suppose also that \(\bm{\psi}\) is a continuously
differentiable, basis-invariant function of the fitted downstream regression.
Then, for every bounded Lipschitz function
\(h:\mathbb R^q\to\mathbb R\),
\[
\left|
\mathbb E^*
h
\left\{
\sqrt n(\widehat{\bm{\psi}}^*-\widehat{\bm{\psi}})
\right\}
-
\mathbb E
h
\left\{
\sqrt n(\widehat{\bm{\psi}}-\bm{\psi})
\right\}
\right|
\overset{p}{\longrightarrow}
0.
\]
\end{theorem}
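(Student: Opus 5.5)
The proof is a functional delta method argument combined with the consistency of the nonparametric bootstrap for smooth functionals of empirical moments. The key observation is that the whole pipeline, from COMPACT score learning through OLS to $\widehat{\bm\psi}$, is a smooth map of a finite collection of sample moments. We therefore write $\widehat{\bm\psi}=\Phi(\widehat{\bm m})$ and $\widehat{\bm\psi}^*=\Phi(\widehat{\bm m}^*)$, where $\widehat{\bm m}$ is the vector of empirical moments and $\Phi$ is differentiable at the population moment vector $\bm m_0$. Bootstrap validity then follows from the classical bootstrap CLT for sample means together with the delta method for the bootstrap.

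\textbf{First step: make the moment vector explicit.} The COMPACT map depends on the data only through $\widehat\Sigma_{\bm X}$, $\widehat{\bm v}_T$, $n^{-1}X^\top\bm Y$, $n^{-1}\bm T^\top\bm T$, $n^{-1}\bm T^\top\bm Y$ and the sample means used for centering. From these we obtain $\widehat\Sigma_{\bm X\mid T}$, $\widehat{\bm v}_R$, $\widehat H$ and $\widehat G_\eta$ as rational functions.

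The retained two-dimensional eigenspace is represented by its projector. Because the target is basis-invariant, we never need a particular eigenvector basis. Assumption~\ref{assump:L-asymptotic-linearity} supplies the spectral gap $\Delta>0$ and the positive definiteness of $G_\eta$. Under these, the Riesz projector $\frac{1}{2\pi i}\oint (zG_\eta-H)^{-1}G_\eta\,dz$, taken over a contour enclosing the nonzero eigenvalues, is analytic in $(H,G_\eta)$ near the population values. Hence the map from moments to score space is Fréchet differentiable.

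\textbf{Second step: the downstream regression.} Take any basis $B$ of the score space and form the regressors $(1,T_i,\bm S_i^\top,T_i\bm S_i^\top)^\top$ with $\bm S_i=B^\top\bm X_i$. The OLS Gram matrix and cross-moment vector are then polynomial in $B$ and in third- and fourth-order sample moments of $(\bm X_i,T_i,Y_i)$. Examples are $n^{-1}\sum T_i^2\bm X_i\bm X_i^\top$ and $n^{-1}\sum T_iY_i\bm X_i$, so these moments are appended to $\widehat{\bm m}$. The finite fourth-moment assumptions give them finite second moments, so the multivariate CLT applies.

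Nonsingularity of the population Gram matrix makes matrix inversion smooth there. Basis invariance of $\bm\psi$ lets us replace the $\widehat\Gamma$-whitened basis of \eqref{eq:score-whitening} by any smooth local parametrization of the projector. The whitening itself is also smooth whenever $\Gamma$ is nonsingular. Composing these maps with the $C^1$ map $\bm\psi$ gives a $\Phi$ that is differentiable at $\bm m_0$.

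\textbf{Conclusion.} Under i.i.d.\ sampling with finite second moments of the moment summands, $\sqrt n(\widehat{\bm m}-\bm m_0)\rightsquigarrow N(0,V)$. Conditionally in probability, $\sqrt n(\widehat{\bm m}^*-\widehat{\bm m})\rightsquigarrow N(0,V)$ by the bootstrap CLT for means (Bickel--Freedman / Giné--Zinn). The delta method for the bootstrap (van der Vaart, Thm.~23.5) then yields both limits equal to $N(0,\nabla\Phi V\nabla\Phi^\top)$. Convergence in bounded Lipschitz metric is exactly the stated display.

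\textbf{Main obstacle.} I expect the hard part to be differentiability of the spectral step. Two things must be handled.
\begin{enumerate}
\item The bootstrap eigenspace must stay separated from the zero eigenspace with probability tending to one. This follows from consistency of $\widehat H^*$ and $\widehat G_\eta^*$ together with Weyl-type perturbation for the generalized problem.
\item Sign and ordering ambiguities of the eigenvectors must be removed. This is handled by working with the projector and with basis-invariant $\bm\psi$.
\end{enumerate}
If $\Phi$ is only Hadamard rather than Fréchet differentiable, the same delta-method theorem still applies.
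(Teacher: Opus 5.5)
\textbf{How your route differs.} You run the whole pipeline through a single delta method on a stacked moment vector $\widehat{\bm m}$. Joint convergence of the score-learning and regression stages then comes for free from the bootstrap CLT for means together with the bootstrap delta method.

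The paper instead introduces the oracle estimator $\widetilde{\bm\psi}$, which is OLS on the population scores $\bm S_{\eta,i}$, and splits $\widehat{\bm\psi}-\bm\psi=(\widetilde{\bm\psi}-\bm\psi)+(\widehat{\bm\psi}-\widetilde{\bm\psi})$. It writes $\widehat{\bm\psi}=\mathcal G(P_{\widehat{\mathcal S}_\eta},\widehat{\bm\mu})$ and $\widetilde{\bm\psi}=\mathcal G(P_{\mathcal S_\eta},\widehat{\bm\mu})$, so the second piece becomes $\mathcal B[P_{\widehat{\mathcal S}_\eta}-P_{\mathcal S_\eta}]$ plus a remainder. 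It then has to prove a joint multinomial bootstrap CLT for the oracle and projector influence terms by hand (Lemma~\ref{lem:joint-oracle-score-bootstrap}).

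Your spectral step matches Lemma~\ref{lem:score-map-asymptotic-linearity}: the Riesz projector of $G_\eta^{-1}H$ combined with basis invariance. Your overall argument is shorter.

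\textbf{The gap.} The step that fails is your claim that ``the finite fourth-moment assumptions give them finite second moments.'' Because the basis $B$ moves, your $\widehat{\bm m}$ must contain the full matrices $n^{-1}\sum_i T_i^2\bm X_i\bm X_i^\top$ and $n^{-1}\sum_i T_i\bm X_i\bm X_i^\top$, and the vector $n^{-1}\sum_i T_iY_i\bm X_i$. A CLT for these requires conditions such as $\mathbb E(T_i^4\|\bm X_i\|^4)<\infty$ and $\mathbb E(T_i^2Y_i^2\|\bm X_i\|^2)<\infty$.

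The theorem does not give you these. Fourth moments of the regressors $(1,T_i,\bm S_{\eta,i}^\top,T_i\bm S_{\eta,i}^\top)$ control $T_i$ only along the two population score directions $B_\eta^\top\bm X_i$. Assumption~\ref{assump:L-asymptotic-linearity}(b) gives $\mathbb E\|\bm X_i\|^4$, $\mathbb E(T_i^2\|\bm X_i\|^2)$ and $\mathbb E(Y_i^2\|\bm X_i\|^2)$, but nothing like $\mathbb E(T_i^4\|\bm X_i\|^4)$. For bounded (e.g.\ binary) $T_i$ your claim does follow from Assumption~\ref{assump:L-asymptotic-linearity}(b). For a general continuous treatment it does not, so as written you prove the theorem only under extra moment conditions.

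\textbf{How the paper avoids this.} At $P=P_{\mathcal S_\eta}$, the map $\bm\mu\mapsto\mathcal G(P_{\mathcal S_\eta},\bm\mu)$ depends only on the score-direction contractions. Their CLT is covered by the regressor fourth moments and $\mathbb E(Y_i^4)<\infty$. The full-$\bm X$ moments enter only through $\mathcal B_n=\dot{\mathcal G}_P(P_{\mathcal S_\eta},\widehat{\bm\mu})$ acting on an $O_p(n^{-1/2})$ projector perturbation. For those, a law of large numbers with first moments suffices, using Lemma~\ref{lem:conditional-bootstrap-mean-consistency} on the bootstrap side.

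\textbf{How to fix it.} You have two options. Either add the stronger moment hypothesis explicitly. Or split $\Phi(\bm\vartheta,\bm\mu)=\mathcal G(P(\bm\vartheta),\bm\mu)$ as above, which essentially reproduces the paper's oracle argument.
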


The theorem therefore justifies recomputing the complete procedure in each bootstrap sample: the resulting bootstrap distribution accounts jointly for uncertainty in the learned COMPACT representation and in the downstream OLS fit. When the downstream regression is correctly specified and the learned scores provide sufficient causal adjustment, the corresponding OLS functionals inherit their causal interpretation.

\section{Empirical Results} \label{sec:empirical}

\subsection{Comparators}

We selected COMPACT's ridge parameter $\eta$ by five-fold cross-validation over 12 logarithmically spaced values from $10^{-5}$ to $10^{1}$. We compared COMPACT with several propensity-score, outcome-score, and dimension-reduction baselines using each method's default hyperparameter settings:
\begin{enumerate}[leftmargin=1.25em,label=(\alph*),labelsep=0.4em,itemsep=0.25em]

\item \textbf{SuperLearner Propensity Score (SL-PS)} \citep{vanDerLaan07}: estimates the propensity score using an ensemble of penalized logistic regression and random forest learners, with the logit propensity score used for adjustment. This tests whether flexible treatment prediction alone suffices for accurate ATE and CATE recovery.

\item \textbf{Covariate Balancing Propensity Score (CB-PS)} \citep{Imai14}: estimates a propensity score while directly targeting covariate balance between treated and untreated units. This provides a balancing-score comparator focused on treatment-assignment structure rather than outcome-relevant confounding structure.

\item \textbf{Outcome-Adaptive Lasso Propensity Score (OAL-PS)} \citep{Shortreed17}: first identifies outcome-relevant covariates and then penalizes them less strongly in a propensity-score model. This tests whether outcome-guided variable selection within a treatment model can recover an adequate adjustment score.

\item \textbf{Double-score adjustment (DOUBLE)} \citep{Hansen08,Leacy14}: combines a treatment score with a prognostic score. We used a SuperLearner logit propensity score and the predicted untreated outcome, testing whether separately estimated treatment and outcome scores can match a jointly learned adjustment representation.

\item \textbf{Multivariate Inverse-Regression Estimation (MIRE)} \citep{Li91,Cook05}: applies inverse-regression sufficient dimension reduction (SDR) to construct low-dimensional outcome-relevant projections of $\bm X$. We slice $Y$, estimate ridge-stabilized inverse-regression directions, and retain the leading SDR scores. This tests whether outcome-oriented dimension reduction alone suffices for treatment-effect recovery.

\end{enumerate}

COMPACT is the only comparator that learns a low-dimensional span from a joint treatment--residual-outcome spectral operator. We also evaluated two ablations: \textbf{COMPACT-noT}, which removes the treatment-correlation component and learns outcome-relevant scores only, and \textbf{COMPACT-noY}, which removes the residual-outcome-correlation component and learns treatment-relevant scores only.

\subsection{Metrics}

We evaluated accuracy and practical performance using six complementary metrics. First, we measured \textbf{root mean squared error} (RMSE) for the ATE and CATE across Monte Carlo replications. Because the ATE is scalar within each replication, its replication-level RMSE reduces to absolute error, whereas CATE RMSE averages squared errors across individuals and therefore captures unit-level errors that may cancel in the ATE. These metrics directly assess Theorem~\ref{thm:causal-consistency-compact-plugin}, which links recovery of the latent adjustment information to errors in marginal and score-indexed causal-effect estimation.

Second, we measured recovery of the latent treatment-balancing and outcome-guided coordinates using the \textbf{coefficient of determination}
\[
R^2(U\mid\bm S)
=
1-
\frac{
\sum_{i=1}^n
\left\{
U_i-\widehat{\mathbb E}_{\mathrm{lin}}(U_i\mid\bm S_i)
\right\}^2
}{
\sum_{i=1}^n
(U_i-\overline U)^2
}.
\]
Accordingly, $R^2(U_T\mid\bm S)$ measures preservation of the treatment-balancing coordinate and $R^2(U_R\mid\bm S)$ measures preservation of the outcome-guided coordinate. Values approaching one indicate that the corresponding latent coordinate can be reconstructed accurately from the learned scores. These metrics relate directly to Theorem~\ref{thm:proxy-sample-tradeoff}, which bounds the distance between the learned score space and
$\mathcal U=\operatorname{span}(U_T,U_R).$
If the learned score space recovers $\mathcal U$ exactly, then both $U_T$ and $U_R$ are linear combinations of $\bm S$, so both $R^2$ values equal one. Conversely, because both spaces are two-dimensional, perfect reconstruction of two linearly independent latent coordinates implies recovery of their span. Simultaneously large values therefore indicate accurate recovery of the latent adjustment plane and support interpretation of its two-dimensional visualization up to an invertible change of coordinates.

To assess overlap, we fit a logistic regression of treatment on each learned representation and recorded the \textbf{extreme-propensity proportion},
\[
\frac{1}{n}
\sum_{i=1}^n
\mathbf 1
\left\{
\widehat e(\bm S_i)<0.05
\ \text{or}\
\widehat e(\bm S_i)>0.95
\right\},
\]
reported as a percentage. Lower values indicate fewer observations near estimated positivity violations. We interpret this metric jointly with $R^2(U_T\mid\bm S)$, because a representation containing little treatment information may produce nonextreme propensities without adequately controlling confounding. Finally, we recorded \textbf{total runtime} as the elapsed wall-clock time required to construct the adjustment representation and fit the common downstream treatment-effect estimator.

For the separate COMPACT bootstrap experiments, we evaluated \textbf{empirical coverage}, \textbf{type-I error}, and \textbf{power}. Empirical coverage was the proportion of Monte Carlo confidence intervals containing the true ATE, evaluated across a range of nominal coverage levels. Type-I error and power were the empirical rejection proportions under true and false null hypotheses, respectively. We calculated these separately for the ATE and global CATE tests and report exact 95\% binomial confidence intervals.

\subsection{Simulations}

We evaluated COMPACT in Monte Carlo simulations designed to vary both the amount and quality of proxy information available for confounding adjustment. We generated a 20-dimensional latent baseline state and up to 500 proxy variables, with proxy systems nested across $p\in\{100,300,500\}$ so that larger systems contained all variables from the smaller systems. We varied the fraction of each proxy's variance attributable to the latent state over $\upsilon\in\{0.2,0.4,0.6,0.8\}$, yielding 12 proxy-information conditions. Treatment and outcomes depended on the latent state, with settings spanning constant and heterogeneous treatment effects. A complete description of the data-generating process, proxy construction, estimands, and bootstrap procedures is provided in the Supplementary Materials \ref{supp:setup}.

We summarize the ATE RMSE results in
Figure~\ref{fig:synth_error}A--C. COMPACT achieved among the lowest ATE RMSE across all conditions. At \(p=100\) and \(p=300\), COMPACT and CB-PS performed best, with substantial overlap of their \(95\%\) confidence intervals. At \(p=500\), however, COMPACT outperformed all other algorithms by a significant margin as CB-PS deteriorated sharply, with ATE RMSE increasing to approximately \(0.66\)--\(0.67\). This instability likely arose because CB-PS estimated a high-dimensional propensity model under covariate-balancing constraints without explicit coefficient shrinkage. MIRE also had substantially higher ATE RMSE because its inverse-regression directions explain outcome variation without using treatment assignment to construct a balancing representation. Overall, COMPACT recovered the ATE well across proxy dimensions and levels of $\upsilon$.

\begin{figure}[t]
    \centering
    \includegraphics[width=1\linewidth]{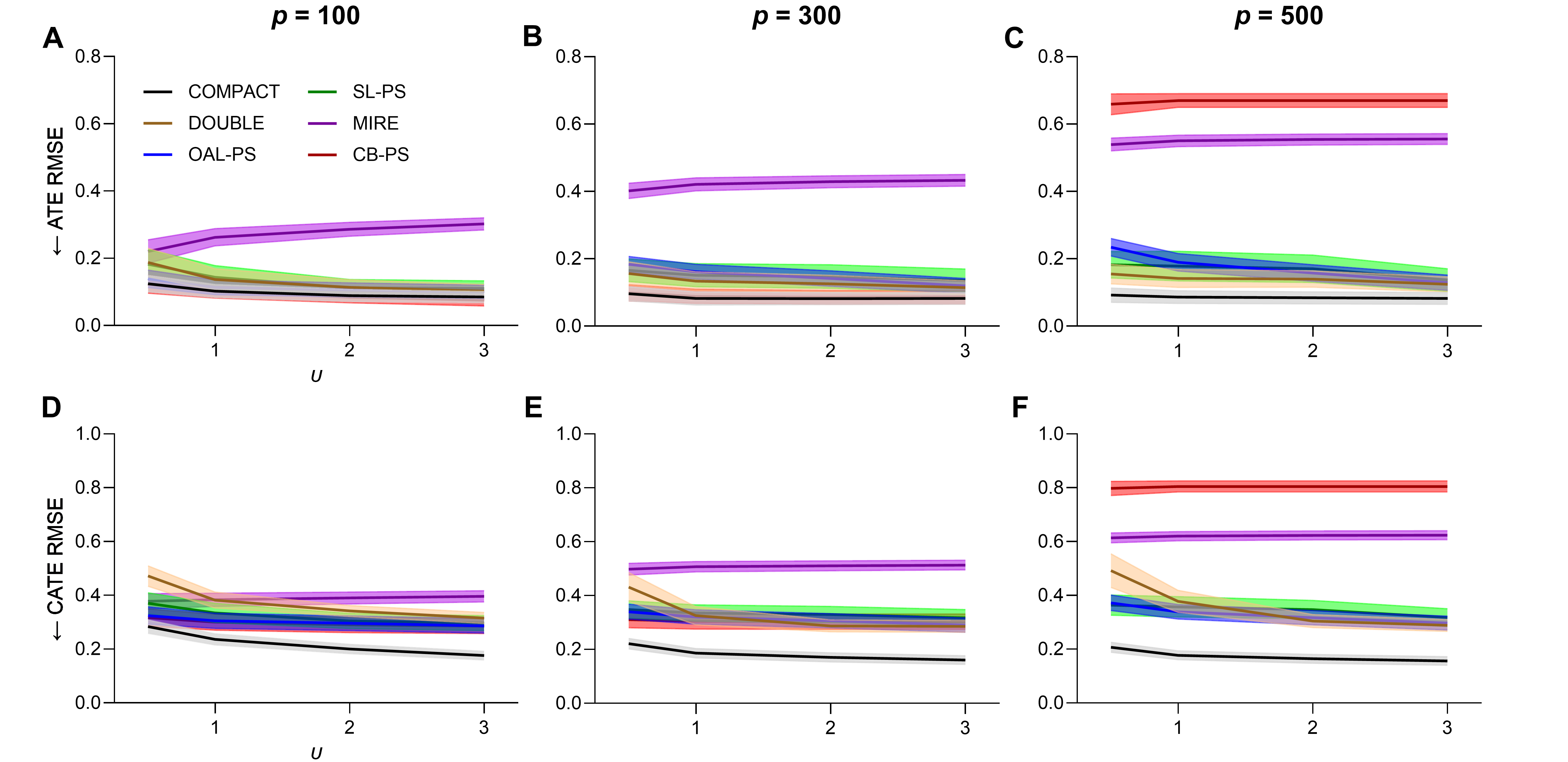}
    \caption{\textbf{Treatment-effect estimation in the fully synthetic
    experiments.} ATE RMSE (A--C) and CATE RMSE (D--F) across proxy
    reliability levels for \(p=100\), \(300\), and \(500\). Lines denote means
    over \(50\) paired Monte Carlo replications, and shaded bands denote
    \(95\%\) confidence intervals.}
    \label{fig:synth_error}
\end{figure}

Because the ATE averages over heterogeneous treatment effects, positive and negative CATE errors can cancel and conceal substantial unit-level error. The CATE results therefore separated COMPACT from the other algorithms more clearly (Figure~\ref{fig:synth_error}D--F). COMPACT achieved the lowest CATE RMSE in every condition, with its advantage increasing with the dimension of \(\bm X\). At \(p=100\), the next-best method had CATE RMSE approximately \(12\%\) to \(62\%\) larger than COMPACT across the four reliability levels; at \(p=500\), this difference increased to approximately \(76\%\) to \(91\%\). Thus, marginal ATE performance understated COMPACT's advantage for recovering treatment-effect heterogeneity. Ablation experiments reinforced this distinction: full COMPACT and both ablated variants achieved comparable ATE RMSE, whereas full COMPACT achieved the lowest CATE RMSE (Supplementary Figure~\ref{fig:supp_sim_ATE_CATE_RMSE}).

The latent-recovery diagnostics explained these differences. COMPACT attained the largest \(R^2(U_T\mid\bm S)\) and \(R^2(U_R\mid\bm S)\) in all \(12\) conditions (Figure~\ref{fig:synth_recovery}). Its \(R^2(U_R\mid\bm S)\) increased from approximately \(0.49\) in the least informative condition to \(0.97\) in the most informative condition, while \(R^2(U_T\mid\bm S)\) increased from approximately \(0.44\) to \(0.92\). COMPACT therefore recovered a two-dimensional span that simultaneously explained most variation in both latent adjustment coordinates rather than favoring one at the expense of the other. Consistently, COMPACT-noT preserved only \(U_R\), whereas COMPACT-noY preserved only \(U_T\) (Supplementary Figure~\ref{fig:supp_sim_R2}).

These results agree with the latent-space recovery analysis in Section~\ref{sec:theory:recovery}. Simultaneously large recovery values for both coordinates imply small worst-direction reconstruction error and therefore large effective proxy strength. Theorem~\ref{thm:proxy-sample-tradeoff} accordingly predicts more accurate recovery of the latent adjustment span as the proxy system becomes more informative, while Theorem~\ref{thm:causal-consistency-compact-plugin} translates this improved reconstruction into smaller marginal and score-indexed causal estimation errors. The observed \(R^2\), ATE, and CATE results follow this predicted relationship.

\begin{figure}[t]
    \centering
    \includegraphics[width=1\linewidth]{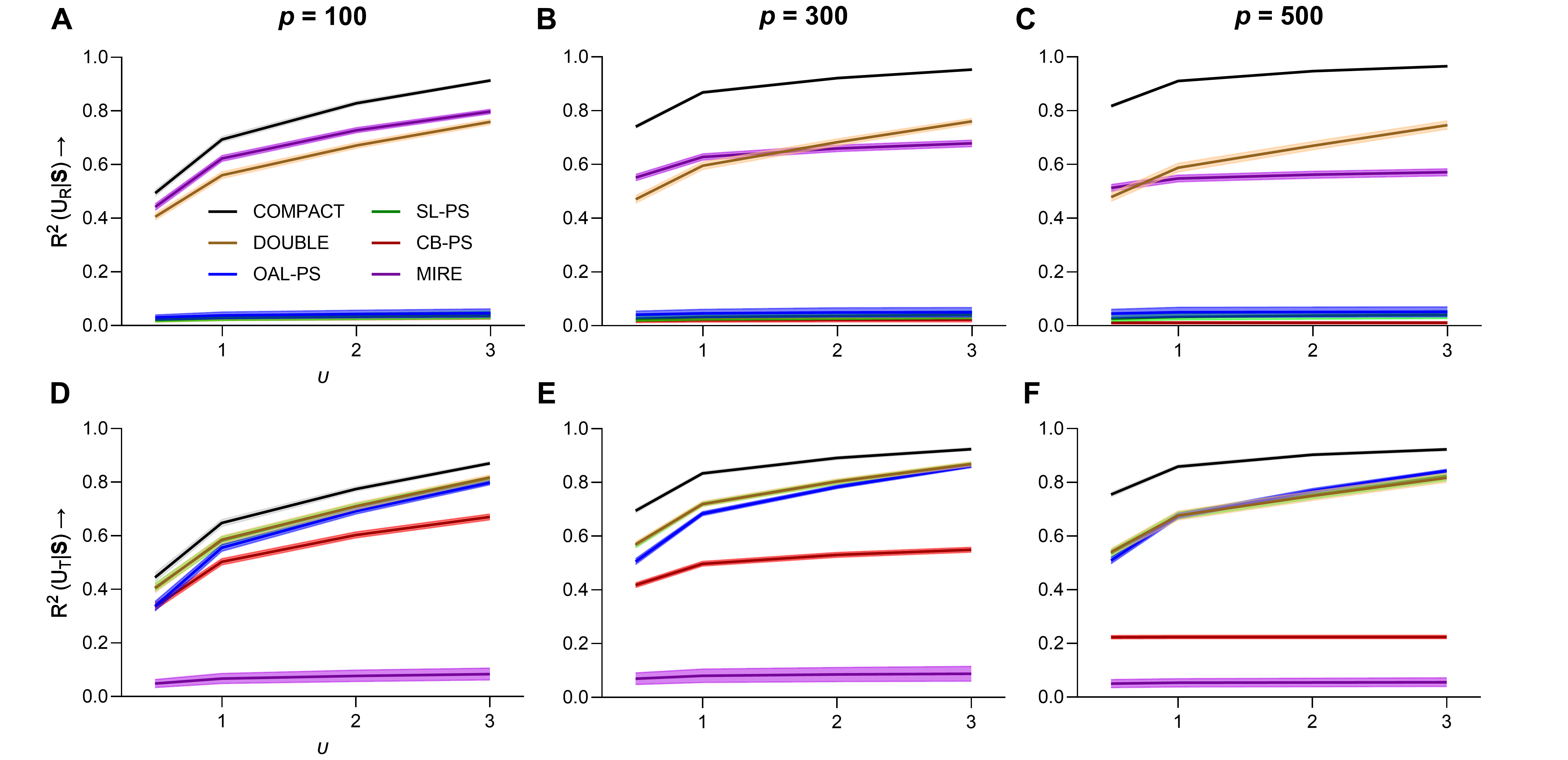}
    \caption{\textbf{Recovery of the latent adjustment coordinates.}
    Linear recovery of \(U_R\) (A--C) and \(U_T\) (D--F) from each learned
    representation across proxy reliability levels and proxy counts. Higher
    values indicate more complete recovery of the corresponding latent
    coordinate.}
    \label{fig:synth_recovery}
\end{figure}

COMPACT also maintained stable score-space overlap. The mean percentage of extreme fitted propensities ranged from approximately \(0.4\%\) to \(7.0\%\) across all conditions (Figure~\ref{fig:synth_overlap_runtime}A--C). COMPACT-noY achieved similar overlap, indicating that this stability did not depend on the residual-outcome component (Supplementary Figure~\ref{fig:supp_sim_propensity_time}). Instead, COMPACT's low-dimensional, ridge-regularized spectral representation prevented strongly treatment-predictive variation from producing extreme fitted propensities. OAL-PS showed similarly low percentages, ranging from approximately \(0.4\%\) to \(6.7\%\), whereas SL-PS and DOUBLE reached approximately \(57\%\) and \(58\%\), respectively. CB-PS again deteriorated with increasing proxy dimension, producing extreme propensities for approximately \(98\%\)--\(100\%\) of observations at \(p=500\). The MIRE result was less informative because MIRE constructs outcome-oriented directions without using treatment assignment; we nevertheless report it for completeness.

COMPACT remained computationally efficient as proxy dimension increased. Mean runtime remained below \(1.6\) seconds per replication even at \(p=500\) (Supplementary Figure~\ref{fig:synth_overlap_runtime}D--F), making repeated score learning and downstream estimation practical for full-procedure bootstrap inference. Among the competing causal methods, only OAL-PS had comparable runtime. Thus, COMPACT combined more accurate latent-space and CATE recovery with computational efficiency suitable for routine bootstrap inference.

Finally, we evaluated the nonparametric bootstrap by recomputing both the COMPACT score space and downstream OLS regression in each resample. In this experiment, $\tau_0\in\{0,0.5\}$ denotes the population ATE and $\tau_U\in\{0,0.75\}$ controls the magnitude of treatment-effect heterogeneity, with $\tau_U=0$ corresponding to a constant treatment effect. Across 200 Monte Carlo replications, empirical ATE coverage closely followed nominal coverage under both constant and heterogeneous treatment effects (Supplementary Figure~\ref{fig:bootstrap_inference}A), reaching \(95\%\)--\(96\%\) at the nominal \(95\%\) level. Under the ATE null (\(\tau_0=0\)), type-I error was \(4.0\%\) for both \(\tau_U=0\) and \(\tau_U=0.75\). The global CATE test of equality between the score-indexed CATE and ATE had \(3.0\%\) type-I error under \(\tau_U=0\), and the exact \(95\%\) binomial confidence intervals for all three error rates contained the nominal \(5\%\) level (Supplementary Figure~\ref{fig:bootstrap_inference}B). At \(\tau_0=0.5\), the ATE test achieved \(100\%\) power with and without treatment-effect heterogeneity, while the global CATE test achieved \(99.5\%\) power under \(\tau_U=0.75\) (Supplementary Figure~\ref{fig:bootstrap_inference}C). All 500 bootstrap fits succeeded in every replication. These results support accurate bootstrap inference that propagates uncertainty from both score learning and downstream effect estimation.

\subsection{Real Data}

We provide detailed semi-synthetic experimental results in Supplementary Materials~\ref{supp:real_data}, which largely replicate the findings of the synthetic experiments.

\section{Discussion} \label{sec:discussion}

We introduced COMPACT, a supervised dimension-reduction framework for causal adjustment with high-dimensional proxy variables. Rather than learning a representation from treatment prediction alone, COMPACT jointly targets treatment and treatment-residualized outcome information. At the population level, the resulting two-dimensional eigenspace corresponds to the ridge-regularized linear prediction of a treatment-balancing variable $U_T$ and an outcome-guided variable $U_R$. Our theory characterizes how imperfect proxy recovery and finite-sample estimation of this space propagate into downstream causal estimation.

Computational efficiency was an explicit design goal. COMPACT constructs a low-rank spectral operator from treatment and treatment-residualized outcome cross-covariances, so both adjustment scores are obtained through a single generalized eigendecomposition rather than repeated fitting or iterative optimization of complex prediction models. This smooth spectral formulation also permits full-procedure bootstrap inference that recomputes the learned score space in each resample. The resulting two-dimensional representation is additionally useful for visualization: treatment overlap and estimated treatment-effect heterogeneity can be examined directly in the same supervised space used for adjustment. Our experiments showed favorable computational performance and well-calibrated bootstrap inference.

The theory also highlights a potentially useful role for high-dimensional baseline information. Larger samples reduce estimation error, whereas additional informative proxies can reduce population reconstruction error by providing more information about the latent adjustment variables. Thus high dimensionality need not be purely detrimental when the observed variables are weak, noisy, and partially redundant measurements of an underlying adjustment state. This motivates retaining broad collections of plausibly informative baseline variables when the relevant adjustment variables are uncertain, although arbitrary irrelevant variables need not be beneficial.

Several limitations remain. First, causal validity depends on sufficiently accurate recovery of the treatment-balancing information; residual confounding can persist when the proxy system is weak. Second, the current linear formulation recovers only information expressible through linear combinations of the supplied features, although transformations or kernel extensions could enlarge this function class. Third, COMPACT does not currently distinguish prespecified variables $\bm Z$, such as established moderators, from the proxy system. One possible extension would residualize $\bm X$, $T$, and $Y$ with respect to $\bm Z$, apply COMPACT to the residualized variables, and then reintroduce $\bm Z$ and relevant treatment interactions in the downstream analysis; the formal properties of this procedure remain to be established. Finally, the graph-theoretic criterion is derived for a scalar treatment and outcome. Multiple treatments or outcomes may yield additional treatment-effect-invariant dependence restrictions and potentially more informative adjustment representations.

In summary, COMPACT provides a computationally efficient two-dimensional adjustment representation that jointly uses treatment and residual-outcome information, can benefit from increasingly informative proxy systems, and supports bootstrap inference that accounts for learning the score space. These properties make it a practical approach to causal adjustment when relevant baseline information is distributed across many observed proxy variables.

\section*{Data Availability}
The Adolescent Brain Cognitive Development (ABCD) Study data used in the real-data analysis are controlled-access data available to qualified investigators through the NIMH Data Archive (NDA), subject to NDA approval and applicable data-use requirements. The Clinical Antipsychotic Trials of Intervention Effectiveness (CATIE) data used to calibrate the semi-synthetic experiments are also available through the NIMH Data Archive (DOI: \url{https://doi.org/10.15154/amyn-xj28}).

R code implementing COMPACT and reproducing the numerical results reported
in this article is available at
\url{https://github.com/ericstrobl/COMPACT}. Derived score representations from the CATIE analysis are not publicly released because the derived representations retain information from the restricted data and could potentially permit partial reconstruction of participant-level information.

\section*{Funding}

This work received no specific grant from any funding agency in the public,
commercial, or not-for-profit sectors.

\bibliographystyle{abbrvnat}
\bibliography{reference}

\newpage
\section*{Supplementary Materials}
\setcounter{page}{1}

\setcounter{section}{0}
\setcounter{subsection}{0}
\setcounter{subsubsection}{0}
\setcounter{secnumdepth}{3}

\renewcommand{\thesection}{S\arabic{section}}
\renewcommand{\thesubsection}{\thesection.\arabic{subsection}}
\renewcommand{\thesubsubsection}{\thesubsection.\arabic{subsubsection}}

\setcounter{figure}{0}
\renewcommand{\figurename}{Supplementary Figure}

\section{Related Work} \label{supp:related_work}
COMPACT connects several lines of research that address how baseline information should enter causal adjustment and treatment-effect estimation. Propensity-score methods replace a potentially high-dimensional covariate vector with the conditional probability of treatment, thereby providing a balancing score for observed covariates \citep{Rosenbaum83}. Extensions such as high-dimensional propensity-score adjustment, covariate-balancing propensity scores, and balancing weights select or weight covariates to improve empirical treatment balance \citep{Schneeweiss09,Zhang22,Imai14,Li18}. Because these methods remain centered on treatment assignment, however, they may emphasize treatment predictors that do not improve control of outcome-relevant confounding, thereby increasing variance or reducing finite-sample stability without reducing bias \citep{Brookhart06}.

This limitation has motivated methods that incorporate outcome information into adjustment. Outcome-adaptive propensity-score methods use the outcome to guide covariate selection in the treatment model \citep{Shortreed17}, while prognostic scores summarize outcome risk and provide an outcome-side analog of the propensity score \citep{Hansen08}. Propensity and prognostic scores can then be combined during matching or adjustment \citep{Leacy14}. These scores are nevertheless typically estimated through separate treatment and outcome models and combined only during adjustment. Similarly, double machine-learning methods estimate treatment and outcome nuisance functions separately and combine them through a Neyman-orthogonal estimating equation \citep{Chernozhukov18}. COMPACT differs by using treatment and treatment-residualized outcome information jointly to define the learned adjustment representation itself.

COMPACT also relates to methods that use proxies to address unmeasured confounding. Graphical and proximal causal-inference approaches establish that causal effects can sometimes be identified when confounders are unobserved but suitable proxies or negative controls are available \citep{Kuroki14,Miao18,Tchetgen24}. These approaches typically assign variables specific causal roles and obtain identification through bridge functions, integral equations, or related proxy restrictions. COMPACT considers a different setting in which a high-dimensional baseline feature system $\bm X_i$ contains partial, noisy, and overlapping measurements of a latent baseline state. Rather than assigning each variable a distinct proxy or negative-control role, COMPACT requires the proxy system to contain sufficient linearly recoverable information for the learned score space to preserve the treatment-balancing coordinate needed for adjustment.

Because COMPACT must extract this information from an unpartitioned proxy system, it also connects naturally to sufficient dimension reduction. Classical approaches, including sliced inverse regression and related central-subspace methods, seek low-dimensional projections of covariates that preserve regression information about an outcome \citep{Li91,Cook05}. Their objectives are primarily outcome-oriented, however, and do not simultaneously preserve the treatment-assignment information required for confounding adjustment.

The downstream use of a low dimensional representation also connects COMPACT to methods for treatment-effect heterogeneity. Causal trees, causal forests, Bayesian nonparametric response-surface models, and meta-learners estimate conditional average treatment effects as functions of covariates \citep{Hill11,Athey16,Wager18,Kunzel19}. When these methods use the same broad covariate vector for both confounding control and heterogeneity estimation, they generally do not distinguish variables needed for adjustment from those that index effect modification. COMPACT instead learns a low-dimensional representation of the latent coordinates needed for confounding adjustment and does not seek to preserve effect modifiers unrelated to treatment assignment. Restricting the downstream search to these adjustment-relevant coordinates reduces the dimensionality of the proxy system and, consequently, the opportunity for overfitting. In this way, COMPACT combines the balancing objective of propensity scores, the outcome relevance of prognostic adjustment and sufficient dimension reduction, the latent-state motivation of proxy methods, and the efficiency gains of targeted dimension reduction.

\section{Additional ABCD Cognitive Outcomes}
\label{supp:abcd-additional}

We additionally applied the same analysis to Picture Vocabulary, Flanker Inhibitory Control and Attention, Picture Sequence Memory, Pattern Comparison Processing Speed, and Oral Reading Recognition. Picture Vocabulary showed a pattern similar to crystallized cognition. Estimated propensity scores remained predominantly non-extreme across the learned COMPACT space, and initiation of reading for pleasure was associated with higher subsequent Picture Vocabulary scores ($\mathrm{ATE}=0.575$, 95\% CI $[0.159,1.020]$, Holm-adjusted $p=0.0165$). The global CATE-heterogeneity test was not significant, and estimated CATEs showed little variation across the COMPACT adjustment space. None of the remaining four cognitive outcomes was significant after family-wise error correction (minimum Holm-adjusted $p=0.080$, for Oral Reading).

\begin{figure}[H]
    \centering
    \includegraphics[width=0.95\linewidth]{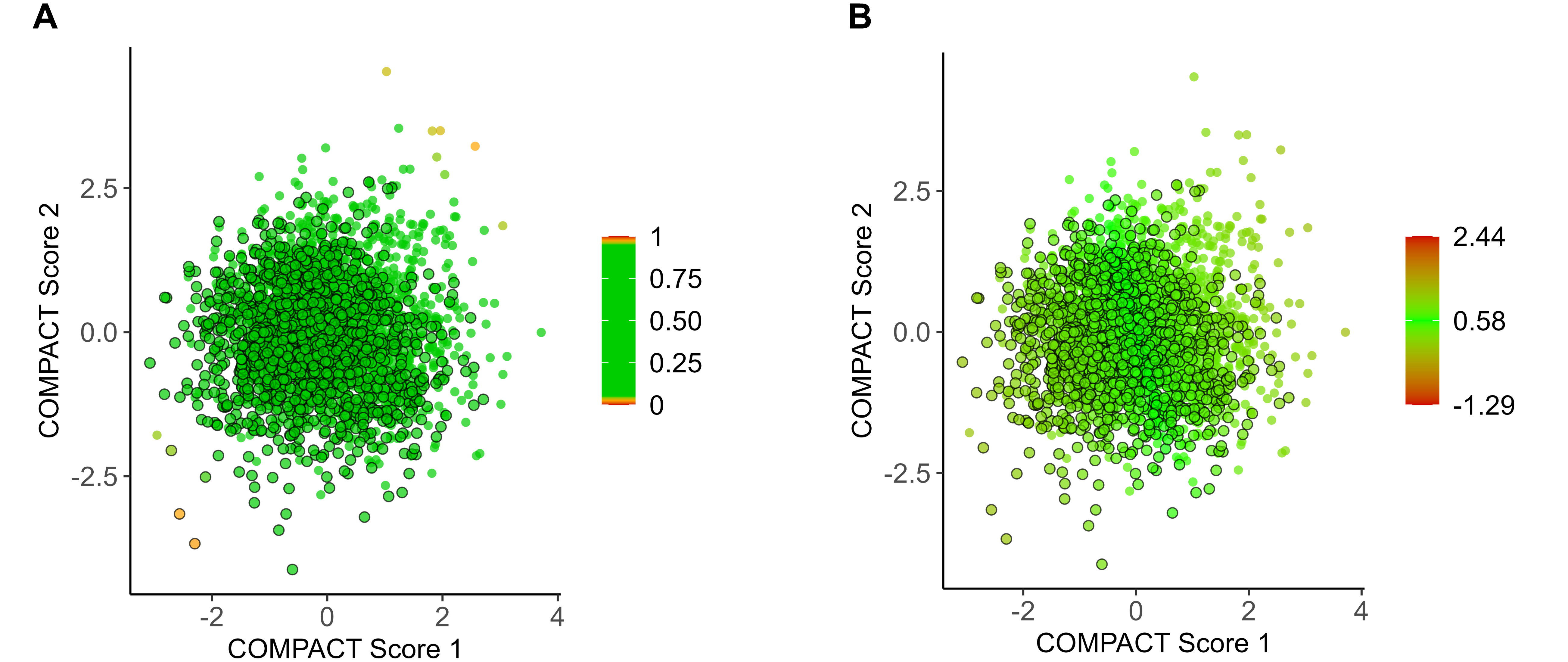}
    \caption{\textbf{COMPACT score plots for Picture Vocabulary.} Black outlines denote participants who initiated reading for pleasure. (A) COMPACT space colored by the estimated propensity score. (B) The same space colored by the estimated CATE using an ATE-centered scale spanning $\pm0.25$ outcome standard deviations.}
    \label{fig:cognitive_analysis_vocabulary}
\end{figure}

\section{Computational Complexity Analysis}
\label{sec:computational-complexity}

Let $n$ denote the number of observations and $p$ the number of candidate
adjustment variables. For a fixed regularization parameter $\eta$, COMPACT
requires $O(np)$ operations to center and residualize the data and
$O(np^2)$ operations to construct the empirical covariance matrices entering
$G_\eta$. The treatment and residual-outcome cross-covariance vectors require
only $O(np)$ additional operations.

The spectral step is particularly simple because $H$ has rank at most two in the scalar-treatment, scalar-outcome setting. COMPACT
therefore does not require a dense $p\times p$ eigendecomposition of $H$.
Instead, after a Cholesky factorization of $G_\eta$, which costs
$O(p^3)$ operations, the generalized eigenproblem can be restricted exactly
to a subspace of dimension at most two. The remaining triangular solves cost
$O(p^2)$, the decomposition of the resulting $p\times2$ matrix costs
$O(p)$, and the final eigendecomposition is at most $2\times2$ and therefore
has constant cost. Computing and whitening the two fitted score coordinates
adds only $O(np+p^2)$ operations. Consequently, a single COMPACT fit has
overall time complexity
$O(np^2+p^3)$
and memory complexity
$O(np+p^2).$

Importantly, although the worst-case time complexity contains an $O(p^3)$ term, this term arises only from a single Cholesky factorization of the positive-definite matrix $G_\eta$. The factorization is highly optimized in modern numerical libraries, so the method remains computationally fast in practice even for large numbers of candidate adjustment variables.

\section{Details of the Latent Adjustment Representation}
\label{supp:latent-representation}

This section provides the derivation and additional interpretation underlying
Proposition~\ref{prop:population-compact-span}.

Recall that
\(
U_{T,i}
=
\bm a^\top\bm C_i
\)
is the treatment-balancing variable from
\eqref{eq:linear-treatment-model}. Define
\begin{equation}
\label{eq:kappa-treatment}
\kappa_T
=
\frac{\mathbb E\{U_{T,i}m_T(U_{T,i})\}}
{\mathbb E(U_{T,i}^2)},
\end{equation}
and assume $\kappa_T\neq0$. When $m_T$ is nonlinear, impose
\eqref{eq:inverse-regression-linearity}.

By iterated expectation,
\[
\mathbb E(\bm C_iT_i)
=
\mathbb E\{\bm C_i m_T(U_{T,i})\}.
\]
When $m_T$ is nonlinear, conditioning on $U_{T,i}$ and applying
\eqref{eq:inverse-regression-linearity} gives
\[
\begin{aligned}
\mathbb E(\bm C_iT_i)
&=
\mathbb E\left[
\mathbb E(\bm C_i\mid U_{T,i})m_T(U_{T,i})
\right]\\
&=
\frac{\Sigma_{\bm C}\bm a}
{\bm a^\top\Sigma_{\bm C}\bm a}
\mathbb E\{U_{T,i}m_T(U_{T,i})\}\\
&=
\kappa_T\Sigma_{\bm C}\bm a.
\end{aligned}
\]
The same identity holds in the additive-linear treatment case with
$\kappa_T=1$.

By consistency, the observed outcome under
\eqref{eq:linear-potential-outcome-model} is
\[
Y_i
=
\alpha+\bm b^\top\bm C_i
+\gamma T_i
+T_i(\bm q^\top\bm C_i)
+\varepsilon_{Y,i}(T_i).
\]
Because $\bm C_i$ is centered, the definition of $R_i$ gives
\[
\begin{aligned}
\mathbb E(\bm C_iR_i)
&=
\operatorname{cov}(\bm C_i,Y_i)
-
\delta_T\operatorname{cov}(\bm C_i,T_i)\\
&=
\mathbb E(\bm C_iY_i)
-
\delta_T\mathbb E(\bm C_iT_i)\\
&=
\Sigma_{\bm C}\bm b
+
(\gamma-\delta_T)\mathbb E(\bm C_iT_i)
+
\mathbb E\{\bm C_iT_i(\bm q^\top\bm C_i)\}.
\end{aligned}
\]
Since
\(
\mathbb E(T_i\mid\bm C_i)=m_T(U_{T,i})
\),
iterated expectation gives
\[
\mathbb E\{\bm C_iT_i(\bm q^\top\bm C_i)\}
=
\mathbb E\!\left[
\bm C_i m_T(U_{T,i})(\bm q^\top\bm C_i)
\right].
\]
By the definition of $\bm h_q$ in
\eqref{eq:hq-definition},
\[
\mathbb E\{\bm C_iT_i(\bm q^\top\bm C_i)\}
=
\Sigma_{\bm C}\bm h_q.
\]
Consequently,
\[
\mathbb E(\bm C_iR_i)
=
\Sigma_{\bm C}
\left\{
\bm b+\bm h_q
+
(\gamma-\delta_T)\kappa_T\bm a
\right\}.
\]

The proxy condition in \eqref{eq:linear-proxy-model} implies that, conditional
on $\bm C_i$, the proxies contain no additional information about either
$T_i$ or $R_i$. Therefore the law of total covariance and
$\mathbb E(\bm X_i\mid\bm C_i)=\Lambda\bm C_i$ give
\[
\bm{v}_T
=
\operatorname{cov}(\bm X_i,T_i)
=
\Lambda\mathbb E(\bm C_iT_i)
=
\kappa_T\Lambda\Sigma_{\bm C}\bm a
\]
and
\[
\bm{v}_R
=
\operatorname{cov}(\bm X_i,R_i)
=
\Lambda\mathbb E(\bm C_iR_i)
=
\Lambda\Sigma_{\bm C}
\left\{
\bm b+\bm h_q
+
(\gamma-\delta_T)\kappa_T\bm a
\right\}.
\]
Thus $\bm{v}_T$ lies in the observed proxy direction corresponding to $\bm a$,
whereas $\bm{v}_R$ lies in the span of the proxy directions corresponding to
$\bm a$ and $\bm b+\bm h_q$. Since the $\bm a$ direction is already present
through $\bm{v}_T$, their joint span is
\[
\operatorname{span}
\left\{
\Lambda\Sigma_{\bm C}\bm a,\,
\Lambda\Sigma_{\bm C}(\bm b+\bm h_q)
\right\}.
\]
The generalized eigenspace therefore yields the regularized coefficient span
in Proposition~\ref{prop:population-compact-span}.

To interpret this span, collect the two latent coefficient directions into
\[
D
=
\left(
\bm a,\,
\bm b+\bm h_q
\right)
\in\mathbb R^{d_C\times2}.
\]
Applying these directions to the latent state gives
\[
\bm U_i
=
D^\top\bm C_i
=
\begin{pmatrix}
\bm a^\top\bm C_i\\
(\bm b+\bm h_q)^\top\bm C_i
\end{pmatrix}
=
\begin{pmatrix}
U_{T,i}\\
U_{R,i}
\end{pmatrix}.
\]
Thus $D$ contains coefficient directions in the latent space
$\mathbb R^{d_C}$, whereas $\bm U_i$ contains the two scalar latent variables
obtained by applying those directions to $\bm C_i$.

Using $D$, Equation~\eqref{eq:population-compact-span} can be written as
\[
\mathcal W_\eta
=
\operatorname{col}
\left(
K_\eta^{-1}\Lambda\Sigma_{\bm C}D
\right).
\]
Applying these coefficient directions to $\bm X_i$ gives
\[
\left(
K_\eta^{-1}\Lambda\Sigma_{\bm C}D
\right)^\top
\bm X_i
=
D^\top\Sigma_{\bm C}\Lambda^\top
K_\eta^{-1}\bm X_i
=
\widetilde{\bm U}_{\eta,i}.
\]
Hence the generalized eigenscores span the same two-dimensional score space as
the coordinates of $\widetilde{\bm U}_{\eta,i}$.

For $\eta=0$, this expression is the best linear predictor of $\bm U_i$ from
the centered proxy vector $\bm X_i$. Indeed, because
$\bm U_i=D^\top\bm C_i$ and
$\mathbb E(\bm X_i\mid\bm C_i)=\Lambda\bm C_i$,
\[
\begin{aligned}
\operatorname{cov}(\bm U_i,\bm X_i)
&=
D^\top\mathbb E(\bm C_i\bm X_i^\top)\\
&=
D^\top\mathbb E\left[
\bm C_i\mathbb E(\bm X_i^\top\mid\bm C_i)
\right]\\
&=
D^\top\Sigma_{\bm C}\Lambda^\top.
\end{aligned}
\]
The population coefficient matrix for the best linear predictor is therefore
\[
\operatorname{cov}(\bm U_i,\bm X_i)\Sigma_{\bm X}^{-1}
=
D^\top\Sigma_{\bm C}\Lambda^\top\Sigma_{\bm X}^{-1}.
\]
Since $K_0=\Sigma_{\bm X}$, the resulting fitted value is
\[
D^\top\Sigma_{\bm C}\Lambda^\top K_0^{-1}\bm X_i
=
\widetilde{\bm U}_{0,i}.
\]
For $\eta>0$,
\[
\widetilde{\bm U}_{\eta,i}
=
D^\top\Sigma_{\bm C}\Lambda^\top
(\Sigma_{\bm X}+\eta I_p)^{-1}\bm X_i
\]
is the corresponding ridge-regularized linear predictor.

The complete relationship is therefore
\[
\underbrace{
\bm U_i=D^\top\bm C_i
}_{\text{latent adjustment variables}}
\qquad\text{and}\qquad
\underbrace{
\widetilde{\bm U}_{\eta,i}
=
D^\top\Sigma_{\bm C}\Lambda^\top
K_\eta^{-1}\bm X_i
}_{\text{linear reconstruction from the proxies}}.
\]
Proposition~\ref{prop:population-compact-span} thus shows that COMPACT
targets the score space obtained by linearly reconstructing the latent
treatment-balancing and outcome-guided variables from the observed proxy
system, with $\eta$ supplying ridge regularization.

Finally, the causal role of the target span follows from its inclusion of the
treatment-balancing variable $U_{T,i}$. Equation
\eqref{eq:linear-treatment-model} implies
\(
T_i\ci\bm C_i\mid U_{T,i}.
\)
Because $U_{R,i}$ is a pre-treatment function of $\bm C_i$, augmenting
$U_{T,i}$ with $U_{R,i}$ preserves this balancing property:
\(
T_i\ci\bm C_i\mid\bm U_i.
\)
Combining this result with the conditional mean restriction in
\eqref{eq:conditional-mean-restriction} gives
\[
\mathbb E\{Y_i(t)\mid T_i,\bm U_i\}
=
\mathbb E\{Y_i(t)\mid\bm U_i\}.
\]
From \eqref{eq:full-linear-cate}, the corresponding $\bm U$-conditional causal
contrast is
\[
\tau_{\bm U}(t,t_0;\bm u)
=
(t-t_0)
\left[
\gamma+
\mathbb E\{\bm q^\top\bm C_i\mid\bm U_i=\bm u\}
\right].
\]
Under exact recovery of the latent adjustment space, any two-dimensional
basis of the population generalized eigenscores is related to $\bm U_i$ by a
nonsingular linear transformation. The eigenscores and $\bm U_i$ therefore
generate the same conditioning information, making adjustment for either
representation equivalent. Approximate recovery is considered in
Theorems~\ref{thm:proxy-sample-tradeoff}
and~\ref{thm:causal-consistency-compact-plugin}.

\section{Cross-Validation for the Shared Regularization Parameter} \label{supp:cross_valid}

We select the ridge parameter
\(\eta\) by
cross-validation. For each candidate value of \(\eta\), we fit the COMPACT
rank-two eigenspace procedure on
the training fold to obtain
$\widehat B_\eta=(\widehat{\bm{\beta}}_{\eta,1},\widehat{\bm{\beta}}_{\eta,2})$ and the corresponding training-fold
score covariance
\[
\widehat\Gamma_{S,\eta}
=
\frac{1}{n_{\mathrm{tr}}}
\widehat B_\eta^\top X_{\mathrm{tr}}^\top X_{\mathrm{tr}}\widehat B_\eta.
\]
The validation scores use the same
whitening transformation learned on the training fold:
\[
S_{\eta,\mathrm{val}}
=
X_{\mathrm{val}}\widehat B_\eta\widehat\Gamma_{S,\eta}^{-1/2}.
\]
The validation criterion then mirrors the original conjunctive objective in \eqref{eq:population-product-objective}. Its first
component measures treatment relevance through
\(R^2(\bm T_{\mathrm{val}}\sim S_{\eta,\mathrm{val}})\). Its second component
measures residual outcome relevance. Let
\(\bm Y_{\mathrm{res,val}}=M_{T_{\mathrm{val}}}\bm Y_{\mathrm{val}}\) and
\(S_{\mathrm{res,val}}
=M_{T_{\mathrm{val}}}S_{\eta,\mathrm{val}}\). We select \(\eta\) by
maximizing
\[
\operatorname{CV}_{\mathrm{span}}(\eta)
=
\left\{
R^2(\bm T_{\mathrm{val}}\sim S_{\eta,\mathrm{val}})
\right\}^{1/2}
\left\{
R^2(\bm Y_{\mathrm{res,val}}\sim S_{\mathrm{res,val}})
\right\}^{1/2}.
\]
This criterion preserves the treatment--outcome conjunction used to define the
scores, while assessing the learned representation as a joint adjustment span.

\section{Additional Details for the Theoretical Results}
\label{supp:theoretical-details}

This section provides the detailed notation, assumptions, and intermediate results underlying the theoretical results stated in the main text.

We first separate recovery of the latent adjustment space into sampling error relative to the population COMPACT target and proxy-recovery error between that population target and the latent adjustment space.

\subsection{Sampling Consistency for the COMPACT Score Space} \label{supp:theorem2-details1}

We fix a ridge parameter \(\eta>0\). Let
$B_\eta\in\mathbb R^{p\times2}$ and
$\widehat B_\eta\in\mathbb R^{p\times2}$ have columns spanning, respectively, the
population and empirical (nonzero) generalized eigenspaces. Define
\[
\mathcal S_\eta
=
\left\{\bm X_i^\top B_\eta{\bm{\xi}}:{\bm{\xi}}\in\mathbb R^2\right\},
\qquad
\widehat{\mathcal S}_\eta
=
\left\{\bm X_i^\top\widehat B_\eta{\bm{\xi}}:{\bm{\xi}}\in\mathbb R^2\right\}.
\]
These spaces do not depend on the particular bases chosen
within the two-dimensional generalized eigenspaces.

Let \(H\) and \(G_\eta\) denote the population numerator and denominator
matrices, respectively, and define
\[
\mathcal L
=
G_\eta^{-1/2}HG_\eta^{-1/2},
\qquad
\widehat{\mathcal L}
=
\widehat G_\eta^{-1/2}\widehat H\widehat G_\eta^{-1/2}.
\]
For a ridge parameter \(\eta>0\), define the
regularized effective dimension
\[
d_{\mathrm{eff}}(\eta)
=
\operatorname{tr}
\left\{
\Sigma_{\bm X}
(\Sigma_{\bm X}+\eta I_p)^{-1}
\right\}.
\]

We make the following assumptions:
\begin{assumption} \label{assump:regularity}
The following conditions hold:

\begin{enumerate}[leftmargin=1.25em,labelsep=0.4em,itemsep=0.25em]
\item[(a)] \textbf{Regularized operator concentration.}
\(
\|\widehat{\mathcal L}-\mathcal L\|_{\mathrm{op}}
\in
O_p\left[
\left\{
\frac{d_{\mathrm{eff}}(\eta)}{n}
\right\}^{1/2}
\right].
\)

\item[(b)] \textbf{Stable generalized denominators and denominator concentration.}
There exist constants $0<c_{\mathbb{G}}<K_{\mathbb{G}}<\infty$ such that
\[
c_{\mathbb{G}}\|{\bm{\zeta}}\|^2
\leq
{\bm{\zeta}}^\top G_\eta{\bm{\zeta}}
\leq
K_{\mathbb{G}}\|{\bm{\zeta}}\|^2
\]
for every ${\bm{\zeta}}\in\mathbb R^p$. Moreover, with probability tending to one, the
empirical denominator satisfies
\[
\frac{c_{\mathbb{G}}}{2}\|{\bm{\zeta}}\|^2
\leq
{\bm{\zeta}}^\top \widehat G_\eta{\bm{\zeta}}
\leq
2K_{\mathbb{G}}\|{\bm{\zeta}}\|^2
\]
for every ${\bm{\zeta}}\in\mathbb R^p$, and
\(
\|\widehat G_\eta-G_\eta\|_{\mathrm{op}}
\in
O_p\left[
\left\{
\frac{d_{\mathrm{eff}}(\eta)}{n}
\right\}^{1/2}
\right].
\)

\item[(c)] \textbf{Separated nonzero eigenspace.}
The operator $\mathcal L$ has rank two, and the absolute value of each of its
two nonzero eigenvalues is at least $\Delta$, where
\(
\{d_{\mathrm{eff}}(\eta)/n\}^{1/2}/\Delta\to0
\)
and $\Delta\in O(1)$. Thus the retained
two-dimensional cluster is separated from the zero eigenspace.

\item[(d)] \textbf{Nondegenerate score space.}
The population score Gram matrix
\(
\Gamma_{S,\eta}={B_\eta}^{\top}\Sigma_{\bm X}B_\eta
\)
has smallest eigenvalue
\(
\lambda_{\min}(\Gamma_{S,\eta})\geq c_S>0.
\)

\item[(e)] \textbf{Bounded covariance.}
The covariance matrix satisfies
\(
\|\Sigma_{\bm X}\|_{\mathrm{op}}\leq K_{\bm X} <\infty.
\)
\end{enumerate}
\end{assumption}

\noindent \textbf{Remark.}
Assumption~\ref{assump:regularity} collects regularity conditions for the single
generalized eigendecomposition. Part (a) requires the empirical regularized
spectral operator to concentrate around its population analog.
This is reasonable because the operators are built from sample covariance and
cross-covariance quantities, and the ridge parameter controls the effective
dimension of the problem. Part (b) controls the denominator of the generalized
Rayleigh quotient. It requires ${\bm{\zeta}}^\top G_\eta{\bm{\zeta}}$ to behave like a stable
squared norm, so that the solve does not amplify nearly null coefficient
directions. Part (c) separates the retained rank-two eigenspace from the zero
eigenspace. It does not impose an ordering or separation between the two
retained eigenvalues because only their joint span matters. Part (d) requires the population score Gram matrix to be nondegenerate, so
that the extracted scores span a genuine two-dimensional
\(L_2(\mathbb P_X)\) adjustment space. Part (e) imposes bounded variation of \(\bm X_i\). Together,
these conditions ensure that the population COMPACT target is well defined and
that the empirical score space is stable under sampling variation.

We are now ready to state our first result regarding estimation consistency to
the population analog:

\begin{lemma}
\label{lem:spectral-subspace-error}
Consider Assumption~\ref{assump:regularity}. Then
\[
\operatorname{dist}^2
\left(
\widehat{\mathcal S}_\eta,
\mathcal S_\eta
\right)
\in
O_p
\left(
\frac{d_{\mathrm{eff}}(\eta)}{n\Delta^2}
\right).
\]
\end{lemma}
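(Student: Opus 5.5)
The argument runs through the symmetric operators $\mathcal L$ and $\widehat{\mathcal L}$, then transfers the eigenspace bound back to the score spaces in $L_2(\mathbb P_X)$.

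\textbf{Step 1 (symmetrization).} The nonzero generalized eigenspace of $(H,G_\eta)$ equals $G_\eta^{-1/2}\mathcal V$, where $\mathcal V$ is the rank-two eigenspace of $\mathcal L$ for its nonzero eigenvalues. Likewise the empirical eigenspace is $\widehat G_\eta^{-1/2}\widehat{\mathcal V}$. On the event of Assumption~\ref{assump:regularity}(b), whose probability tends to one, $\widehat G_\eta$ is well conditioned, so these representations are valid. We may therefore take $B_\eta=G_\eta^{-1/2}V$ and $\widehat B_\eta=\widehat G_\eta^{-1/2}\widehat V$, with $V,\widehat V$ orthonormal $p\times2$ bases.

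\textbf{Step 2 (Davis--Kahan).} By Assumption~\ref{assump:regularity}(c), the nonzero eigenvalues of $\mathcal L$ have modulus at least $\Delta$, and the remaining spectrum is $\{0\}$. By Weyl's inequality and part (a), $\|\widehat{\mathcal L}-\mathcal L\|_{\mathrm{op}}=O_p[\{d_{\mathrm{eff}}(\eta)/n\}^{1/2}]=o_p(\Delta)$. Hence, with probability tending to one, exactly two eigenvalues of $\widehat{\mathcal L}$ have modulus above $\Delta/2$, and the rest lie below it. Since $\widehat H$ has rank at most two, these two eigenvalues are the retained ones. The Davis--Kahan $\sin\Theta$ theorem, applied to the two-sided spectral set $\{|\lambda|\ge\Delta\}$ (a gap of order $\Delta$ separates it from $0$), gives
\[
\|P_{\widehat{\mathcal V}}-P_{\mathcal V}\|_{\mathrm{op}}\le \frac{2\|\widehat{\mathcal L}-\mathcal L\|_{\mathrm{op}}}{\Delta}=O_p\left[\left\{\frac{d_{\mathrm{eff}}(\eta)}{n\Delta^2}\right\}^{1/2}\right].
\]
In particular, we may choose $\widehat V$ with $\|\widehat V-VO\|_{\mathrm{op}}$ of this order for some orthogonal $O$.

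\textbf{Step 3 (coefficient perturbation).} Write
\[
\widehat B_\eta-B_\eta O=(\widehat G_\eta^{-1/2}-G_\eta^{-1/2})\widehat V+G_\eta^{-1/2}(\widehat V-VO).
\]
The matrix square root is operator-Lipschitz on matrices with spectra bounded below. Together with the two-sided bounds in (b), this gives $\|\widehat G_\eta^{-1/2}-G_\eta^{-1/2}\|_{\mathrm{op}}\lesssim\|\widehat G_\eta-G_\eta\|_{\mathrm{op}}/c_{\mathbb G}^{3/2}=O_p[\{d_{\mathrm{eff}}/n\}^{1/2}]$. This is dominated by the Step~2 rate because $\Delta=O(1)$. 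Hence $\|\widehat B_\eta-B_\eta O\|_{\mathrm{op}}=O_p[\{d_{\mathrm{eff}}/(n\Delta^2)\}^{1/2}]$.

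\textbf{Step 4 (transfer to $L_2$ score spaces).} This is the main obstacle, because the distance is measured between projections in $L_2(\mathbb P_X)$ rather than in coefficient space. Both spaces are images of $\mathbb R^{p}\ni\bm w\mapsto \bm X_i^\top\bm w$. With the $L_2$ inner product $\langle \bm X^\top\bm w,\bm X^\top\bm w'\rangle=\bm w^\top\Sigma_{\bm X}\bm w'$, they correspond to the column spaces of $\Sigma_{\bm X}^{1/2}B_\eta$ and $\Sigma_{\bm X}^{1/2}\widehat B_\eta$ in Euclidean space. Their projection distance equals the distance between those projections.

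I would bound it using the standard inequality $\|P_{\operatorname{col}(A)}-P_{\operatorname{col}(\widehat A)}\|_{\mathrm{op}}\le 2\|\widehat A-A\|_{\mathrm{op}}/\sigma_{\min}(A)$, with $A=\Sigma_{\bm X}^{1/2}B_\eta O$. Here $\|\widehat A-A\|_{\mathrm{op}}\le K_{\bm X}^{1/2}\|\widehat B_\eta-B_\eta O\|_{\mathrm{op}}$ by (e). For the denominator, $\sigma_{\min}(A)^2=\lambda_{\min}(\Gamma_{S,\eta})\ge c_S$ by (d), using the basis $B_\eta=G_\eta^{-1/2}V$, which is normalized up to constants via (b).

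Squaring gives $\operatorname{dist}^2(\widehat{\mathcal S}_\eta,\mathcal S_\eta)=O_p\{d_{\mathrm{eff}}(\eta)/(n\Delta^2)\}$, as claimed. The whitening in \eqref{eq:score-whitening} does not change the span, so it plays no role.
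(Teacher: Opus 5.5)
Your proposal is correct and follows essentially the same route as the paper: symmetrize through $G_\eta^{-1/2}$, apply Davis--Kahan to $\mathcal L$, control $\widehat G_\eta^{-1/2}-G_\eta^{-1/2}$ by Lipschitz continuity of the inverse square root with $\Delta=O(1)$ absorbing that term, and finish with the column-space perturbation bound for $\Sigma_{\bm X}^{1/2}B_\eta$ (using $\sigma_{\min}^2\ge c_S$ from part (d)) together with the $L_2$--Euclidean isometry. The differences are minor. You align orthonormal bases through a Procrustes rotation $O$, whereas the paper uses the non-orthonormal basis $\widehat\Pi Q$ and checks directly that it has rank two. Your explicit Weyl argument, showing that the retained empirical eigenvalues are exactly those controlled by Davis--Kahan even when the two nonzero eigenvalues have opposite signs, is a useful detail the paper leaves implicit.
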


\noindent \textbf{Remark.} The lemma shows that the estimated COMPACT score space converges to
its population analog at the usual root-\(n\) scale. This establishes
sampling consistency of the empirical spectral procedure. However, convergence to the
population COMPACT solution is only an intermediate target. The next question
is whether the span of the population scores accurately reconstruct the span of the latent treatment
and residual-outcome variables $\bm U_i$ through the observed proxy system.

\subsection{Recovery of the Latent Adjustment Space} \label{supp:theorem2-details2}

Let
$\bm U_i^\star=\operatorname{var}(\bm U_i)^{-1/2}\bm U_i$ denote the
standardized latent adjustment vector. Because whitening is
invertible, the raw and standardized vectors span the same latent adjustment space:
\[
\mathcal U
=
\operatorname{span}_{L_2}(U_{T,i},U_{R,i})
=
\left\{
{\bm{\xi}}^\top \bm U_i^\star:{\bm{\xi}}\in\mathbb R^2
\right\}.
\]

Define the population score vector
\(
\bm S_{\eta,i}
=
B_\eta^\top\bm X_i,
\)
so that
\(
\mathcal S_\eta
=
\operatorname{span}_{L_2}(S_{\eta,i,1},S_{\eta,i,2}).
\)
We measure the information about \(\bm U^\star\) not retained by the
population score space through the linear-prediction residual covariance
\(
\Sigma_{\bm U^\star\mid \bm S_\eta}
=
\operatorname{var}(\bm U_i^\star\mid_{\mathrm{lin}}\bm S_{\eta,i}).
\)
Define the effective proxy strength, with the convention
$p_{\mathrm{eff}}=\infty$ under exact linear reconstruction, by
\begin{equation}\label{eq:proxy_strength}
p_{\mathrm{eff}}
=
\frac{1}{\lambda_{\max}(\Sigma_{\bm U^\star\mid \bm S_\eta})} - 1.
\end{equation}
Under the normalization $\operatorname{var}(\bm U_i^\star)=I_2$, every
unit direction in the latent adjustment space has total variance one. Therefore
$\lambda_{\max}(\Sigma_{\bm U^\star\mid \bm S_\eta})$ is the largest residual variance
fraction, over all unit directions in the latent adjustment space, after linearly
projecting $\bm U_i^\star$ onto $\bm S_{\eta,i}$. Thus
$p_{\mathrm{eff}}$ can be interpreted as a
worst-direction explained-to-unexplained variance ratio: it is small when at
least one relevant latent direction is poorly predicted by the scores, and large
only when all relevant latent directions are well recovered.

The following conclusion holds:

\begin{lemma}
\label{lem:proxy-recovery}
If $p_{\mathrm{eff}}>0$, then
\(
\operatorname{dist}^2
\left(
\mathcal U,
\mathcal S_\eta
\right)
\in
O\left(
\frac{1}{p_{\mathrm{eff}}}
\right).
\)
\end{lemma}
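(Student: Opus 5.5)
The argument works entirely in the Hilbert space $L_2$, where both $\mathcal U$ and $\mathcal S_\eta$ are two-dimensional subspaces. Write $\varepsilon=\lambda_{\max}(\Sigma_{\bm U^\star\mid\bm S_\eta})$, so that $p_{\mathrm{eff}}=1/\varepsilon-1$. The condition $p_{\mathrm{eff}}>0$ is the same as $\varepsilon<1$. The aim is the bound $\operatorname{dist}^2(\mathcal U,\mathcal S_\eta)\le \varepsilon$. From there, $\varepsilon=1/(1+p_{\mathrm{eff}})\le 1/p_{\mathrm{eff}}$ gives the stated $O(1/p_{\mathrm{eff}})$ rate with constant one.

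\textbf{Step 1: worst-direction residual.} I would express $\sup_{u\in\mathcal U,\|u\|=1}\|(I-P_{\mathcal S_\eta})u\|^2$ through $\Sigma_{\bm U^\star\mid\bm S_\eta}$. Every unit-norm element of $\mathcal U$ has the form $u={\bm{\xi}}^\top\bm U_i^\star$ with $\|{\bm{\xi}}\|=1$, because $\operatorname{var}(\bm U_i^\star)=I_2$ and all variables are centered. Linear projection is linear, so the residual of $u$ after projecting onto $\mathcal S_\eta$ is ${\bm{\xi}}^\top$ applied to the residual vector of $\bm U_i^\star$. Its squared norm is therefore ${\bm{\xi}}^\top\Sigma_{\bm U^\star\mid\bm S_\eta}{\bm{\xi}}$. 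Taking the supremum over ${\bm{\xi}}$ gives exactly $\varepsilon$. In operator terms, this says $\|(I-P_{\mathcal S_\eta})P_{\mathcal U}\|_{\mathrm{op}}^2=\varepsilon$.

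\textbf{Step 2: from one-sided residual to projector distance.} For two subspaces of equal finite dimension, $\|P_{\mathcal A}-P_{\mathcal B}\|_{\mathrm{op}}=\|(I-P_{\mathcal B})P_{\mathcal A}\|_{\mathrm{op}}=\|(I-P_{\mathcal A})P_{\mathcal B}\|_{\mathrm{op}}$. This is the sine of the largest principal angle. I would justify it with the principal-angle (Jordan/Halmos) decomposition of a pair of subspaces. On each two-dimensional block spanned by a pair of principal vectors, $P_{\mathcal A}-P_{\mathcal B}$ has eigenvalues $\pm\sin\theta_k$. Every remaining block lies in $\mathcal A\cap\mathcal B$, in $(\mathcal A+\mathcal B)^\perp$, or in one of the mixed pieces $\mathcal A\cap\mathcal B^\perp$ or $\mathcal A^\perp\cap\mathcal B$. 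A mixed piece contributes an angle of $\pi/2$. Because $\varepsilon<1$, no angle equals $\pi/2$: such an angle would require some unit $u\in\mathcal U$ orthogonal to $\mathcal S_\eta$, which would have residual variance $1$. So the mixed pieces are absent, which is where equal dimension and $p_{\mathrm{eff}}>0$ enter. Combining with Step 1 gives $\operatorname{dist}^2(\mathcal U,\mathcal S_\eta)=\varepsilon$.

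\textbf{Main obstacle.} The main obstacle is Step 2. It is a standard fact, but it must be stated carefully in infinite-dimensional $L_2$. It also relies on $\mathcal S_\eta$ being genuinely two-dimensional, which Assumption~\ref{assump:regularity}(d) guarantees, and on $\mathcal U$ being two-dimensional, which holds because $\operatorname{var}(\bm U_i)$ is invertible so that $\bm U_i^\star$ is well defined. If I wanted to avoid the general theory, I would restrict attention to the finite-dimensional space $\mathcal U+\mathcal S_\eta$, of dimension at most $4$. There the result is a matrix statement about two rank-two orthogonal projectors with no angle equal to $\pi/2$, and it follows from the CS decomposition.
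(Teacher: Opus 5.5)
Your proposal is correct and is essentially the paper's proof. Both arguments identify the worst-direction residual $\sup_{\|\bm\xi\|=1}\|(I-P_{\mathcal S_\eta})(\bm\xi^\top\bm U_i^\star)\|_{L_2}^2$ with $\lambda_{\max}(\Sigma_{\bm U^\star\mid\bm S_\eta})=1/(1+p_{\mathrm{eff}})$, and both then use the equal-dimension identity $\|P_{\mathcal U}-P_{\mathcal S_\eta}\|_{\mathrm{op}}=\|(I-P_{\mathcal S_\eta})P_{\mathcal U}\|_{\mathrm{op}}$, which the paper asserts without proof and you justify via principal angles.

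The one difference is that the paper does not invoke Assumption~\ref{assump:regularity}(d). It derives $\dim\mathcal S_\eta=2$ from $p_{\mathrm{eff}}>0$ alone: otherwise some unit latent direction has zero projection, which forces $\lambda_{\max}=1$. Your Step~2 observation already contains this argument, so the lemma holds under its stated hypothesis alone.
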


\noindent \textbf{Remark.}
The population COMPACT score space
$\mathcal S_\eta$ thus approximates the latent adjustment
space $\mathcal U$ whenever the retained proxy strength
$p_{\mathrm{eff}}$ is large. The lemma does not
assert that the empirical scores have been accurately estimated; that sampling
error is controlled separately by Lemma~\ref{lem:spectral-subspace-error}.
The following theorem combines these two errors: the proxy-recovery error from
the population score space and the estimation error from learning that score
space from a finite sample.

\begin{reptheorem}{thm:proxy-sample-tradeoff}
Suppose the conditions of Lemmas~\ref{lem:spectral-subspace-error} and
\ref{lem:proxy-recovery} hold. Then
\[
\operatorname{dist}^2
\left(
\mathcal U,
\widehat{\mathcal S}_\eta
\right)
\in
O_p\left(
\frac{1}{p_{\mathrm{eff}}}
+
\frac{d_{\mathrm{eff}}(\eta)}{n\Delta^2}
\right).
\]
\end{reptheorem}

\noindent \textbf{Remark.} The theorem separates two requirements for successful recovery. First, the
observed baseline feature system must contain sufficiently strong information
about both latent adjustment variables, as expressed by
\(p_{\mathrm{eff}}\to\infty\). Second, the effective
regularized dimension of the score-learning problem must be small relative to
sample size and the spectral gap, as expressed by
\(d_{\mathrm{eff}}(\eta)/(n\Delta^2)\to0\). Under these two conditions, the
estimated COMPACT score space converges to the latent adjustment space
\(\mathcal U\) in projection distance.

This recovery result also gives a formal interpretation to diagnostic
visualization of the learned score space. Since the COMPACT
directions lie in the treatment--residual-outcome span, plotting units by their
learned COMPACT scores displays the estimated low-dimensional adjustment space. Such plots can be
used as diagnostics for the downstream causal analysis: they show treatment
overlap, clustering, and influential regions in the same space used for
adjustment.

\subsection{Detailed Conditions for Causal Consistency}
\label{supp:theorem3-details}

Let $\mathcal T_0\subseteq\mathcal T$ denote the treatment values over which
uniform statements are made. For $t\in\mathcal T_0$, define the marginal
causal response function
\(
\theta(t)
=
\mathbb E\{Y_i(t)\}.
\)
Let
\(
\mathcal D_n=\{O_1,\ldots,O_n\}
\)
denote the observed sample used to learn the score map and fit the downstream
causal estimator. We allow both stages to use all observations in
\(\mathcal D_n\). Conditional on \(\mathcal D_n\), we treat the fitted map
\(\widehat s_\eta\) as fixed and evaluate all population quantities involving
\(\widehat{\bm S}_\eta=\widehat s_\eta(\bm X)\) using a generic population draw
\(O=(Y,T,\bm X)\) independent of \(\mathcal D_n\). Under this convention, define
\(
m_{\widehat{\bm S}_\eta}(t,\bm s)
=
\mathbb E(Y_i\mid T_i=t,\widehat{\bm S}_{\eta,i}=\bm s).
\)
The corresponding score-adjusted marginal response curve is
\[
\theta_{\widehat{\bm S}_\eta}^{\mathrm{marg}}(t)
=
\mathbb E\left[
m_{\widehat{\bm S}_\eta}(t,\widehat{\bm S}_{\eta,i})
\right],
\]
and the score-indexed causal response curve is
\[
\theta_{\widehat{\bm S}_\eta}^{\mathrm{cond}}(t;\bm s)
=
\mathbb E\{Y_i(t)\mid \widehat{\bm S}_{\eta,i}=\bm s\},
\qquad \bm s\in\mathcal S_0.
\]

We now make the following assumption:

\begin{assumption}
\label{assump:causal}
The following conditions hold:

\begin{enumerate}[leftmargin=1.25em,labelsep=0.4em,itemsep=0.25em]
\item[(a)] \textbf{Consistency and latent mean exchangeability under the evaluation law.}
The observed outcome satisfies $Y_i=Y_i(T_i)$. The treatment law and proxy
conditions satisfy \eqref{eq:linear-treatment-model} and
\eqref{eq:linear-proxy-model}, respectively, and
\eqref{eq:latent_exchangeability} holds in mean. Population expectations
involving $\widehat s_\eta$ follow the evaluation-law convention above and are
therefore taken over a generic population draw conditional on the full-sample
fitted map. Consequently, for each $t\in\mathcal T_0$,
\[
\mathbb E\{Y_i(t)\mid T_i=t,\bm U_i^\star,\widehat{\bm S}_{\eta,i}\}
=
\mathbb E\{Y_i(t)\mid \bm U_i^\star,\widehat{\bm S}_{\eta,i}\}.
\]

\item[(b)] \textbf{COMPACT latent-variable recovery.}
Let \(
\widehat{\mathcal S}_\eta
=
\operatorname{span}_{L_2}
\left\{
\widehat S_{\eta,i,1},
\widehat S_{\eta,i,2}
\right\}
\). The reconstruction map
\[
\mathcal R_n(\widehat{\bm S}_{\eta,i})
=
\begin{pmatrix}
P_{\widehat{\mathcal S}_\eta}U_{i,1}^\star\\
P_{\widehat{\mathcal S}_\eta}U_{i,2}^\star
\end{pmatrix}.
\]
satisfies
\[
\mathbb E\left[
\|\bm U_i^\star-\mathcal R_n(\widehat{\bm S}_{\eta,i})\|^2
\right]
\in
O_p\left(
\frac{1}{p_{\mathrm{eff}}}
+
\frac{d_{\mathrm{eff}}(\eta)}{n\Delta^2}
\right),
\]
and, for score-indexed estimation,
\[
\sup_{\bm s\in\mathcal S_0}
\mathbb E\left[
\|\bm U_i^\star-\mathcal R_n(\widehat{\bm S}_{\eta,i})\|^2
\mid \widehat{\bm S}_{\eta,i}=\bm s
\right]
\in
O_p\left(
\frac{1}{p_{\mathrm{eff}}}
+
\frac{d_{\mathrm{eff}}(\eta)}{n\Delta^2}
\right),
\]
where \(\mathcal S_0\subset\mathbb R^2\) denotes the region of score values over
which the score-indexed causal response is evaluated.

\item[(c)] \textbf{Smooth latent response surface.}
Let
\[
\mu(t,\bm u,\bm s)
=
\mathbb E\{Y_i(t)\mid \bm U_i^\star=\bm u,\widehat{\bm S}_{\eta,i}=\bm s\}.
\]
There exists $L_\mu<\infty$ such that, uniformly over
$t\in\mathcal T_0$ and $\bm s\in\mathcal S_0$,
\[
|\mu(t,\bm u,\bm s)-\mu(t,\bm u',\bm s)|
\leq
L_\mu\|\bm u-\bm u'\|
\]
for all $\bm u,\bm u'$. Also assume
\(
\sup_{t\in\mathcal T_0}\mathbb E\{Y_i(t)^2\}<\infty .
\)

\item[(d)] \textbf{Bounded residual latent-variable imbalance within score strata.}
For every $t\in\mathcal T_0$ and $\bm s\in\mathcal S_0$, the conditional
law of $\bm U_i^\star$ given $(T_i=t,\widehat{\bm S}_{\eta,i}=\bm s)$ is absolutely
continuous with respect to the conditional law of $\bm U_i^\star$ given
$\widehat{\bm S}_{\eta,i}=\bm s$, with uniformly bounded density ratio:
\[
\frac{
dP_{\bm U^\star\mid T_i=t,\widehat{\bm S}_{\eta,i}=\bm s}
}{
dP_{\bm U^\star\mid \widehat{\bm S}_{\eta,i}=\bm s}
}
\leq
K_\omega
\]
almost surely, for some constant $K_\omega<\infty$.

\item[(e)] \textbf{Consistent downstream score-adjusted causal learner.}
Let $\widehat\theta_{\widehat{\bm S}_\eta}^{\mathrm{marg}}(t)$ denote the finite-sample
output of a downstream causal estimator of the marginal response that adjusts
for the unit-level variables $\widehat{\bm S}_{\eta,i}$, and let
$\widehat\theta_{\widehat{\bm S}_\eta}^{\mathrm{cond}}(t;\bm s)$ denote the corresponding
score-indexed response estimator. Assume these estimators are consistent for
their score-adjusted population targets:
\[
\sup_{t\in\mathcal T_0}
\left|
\widehat\theta_{\widehat{\bm S}_\eta}^{\mathrm{marg}}(t)
-
\theta_{\widehat{\bm S}_\eta}^{\mathrm{marg}}(t)
\right|
\in
o_p(1),
\]
and
\[
\sup_{t\in\mathcal T_0,\;\bm s\in\mathcal S_0}
\left|
\widehat\theta_{\widehat{\bm S}_\eta}^{\mathrm{cond}}(t;\bm s)
-
m_{\widehat{\bm S}_\eta}(t,\bm s)
\right|
\in
o_p(1).
\]
\end{enumerate}
\end{assumption}

\noindent \textbf{Remark.}
This assumption collects the causal and downstream regularity
conditions needed to translate COMPACT score recovery into consistency of
score-adjusted causal estimators. Part (a) collects the model conditions under which latent mean exchangeability
\[
\mathbb E\{Y_i(t)\mid
T_i=t,\bm U_i^\star,\widehat{\bm S}_{\eta,i}\}
=
\mathbb E\{Y_i(t)\mid
\bm U_i^\star,\widehat{\bm S}_{\eta,i}\}
\]
holds (Lemma \ref{lem:score-augmented-latent-exchangeability}).
The unconditional reconstruction bound in Part (b) follows from
Theorem~\ref{thm:proxy-sample-tradeoff}; the uniform conditional bound is an
additional requirement that prevents reconstruction error from concentrating within
particular regions of the learned score space. Part (c) is a smoothness condition,
where small errors in reconstructing the latent adjustment variables should lead only to
small changes in the conditional mean of the potential outcome. Part (d) is a
residual overlap condition within score strata. It rules out settings in which,
even after conditioning on $\widehat{\bm S}_{\eta,i}$, treatment assignment creates
arbitrarily large imbalance in the remaining latent adjustment variables. Part (e) separates the quality of the COMPACT adjustment
representation from the quality of the final causal learner: once the learned
scores are treated as the adjustment variables, the downstream estimator is
assumed to consistently estimate both the marginal score-adjusted response and
the score-indexed response surface. Together, these conditions ensure that
approximation of the latent adjustment vector by the learned COMPACT scores is
sufficient to control the bias of both marginal and score-indexed causal
estimation.

\subsection{Detailed Conditions for Bootstrap Inference}
\label{supp:theorem4-details}

For inference, we bootstrap the score-learning and downstream OLS steps to
account for uncertainty from both stages of estimation. We require the following
assumption.

\begin{assumption}
\label{assump:L-asymptotic-linearity}
The following conditions hold.

\begin{enumerate}[leftmargin=1.25em,labelsep=0.4em,itemsep=0.25em]
\item[(a)] \textbf{Local regularity of the COMPACT map.}
The residualization denominators are bounded away from zero, $G_\eta$ is
positive definite, and $H$ and $G_\eta$, viewed as functions of the
finite-dimensional moments entering COMPACT, are continuously differentiable
in a neighborhood of their population values.
The two-dimensional eigenspace
of $\mathcal L$ is separated from the zero eigenspace by a positive gap, and
the population score Gram matrix \( \Gamma_{S,\eta}\) is nonsingular.

\item[(b)] \textbf{Moment condition.}
The empirical moments used to construct $\widehat H$ and
$\widehat G_\eta$ have finite second moments.
\end{enumerate}
\end{assumption}

\noindent \textbf{Remark.}
Assumption~\ref{assump:L-asymptotic-linearity}(a) collects the conditions
needed for the COMPACT score-space projection to vary continuously and
differentiably with the underlying moments
(Lemma~\ref{lem:score-map-asymptotic-linearity}).

\section{Proofs} \label{supp:proofs}

\subsection{Theorem \ref{thm:complete-invariant-criterion}}

\begin{reptheorem}{thm:complete-invariant-criterion}
Let $S_{{\bm{\beta}},i}$, $T_i$, and $Y_i$ satisfy the temporal order
$S_{{\bm{\beta}},i}\prec T_i\prec Y_i$. Interpret the six primitive relations in
\eqref{eq:primitive-dependence-relations} under d-separation and faithfulness, and assume no selection bias.
Define
\[
\mathcal Q_{\mathrm{inv}}
=
\left\{
S_{{\bm{\beta}},i}\not\!\ci T_i,\;
S_{{\bm{\beta}},i}\not\!\ci Y_i\mid T_i
\right\}.
\]
Then $\mathcal Q_{\mathrm{inv}}$ attains the following properties:

\begin{enumerate}
\item[\textnormal{(a)}]
\textbf{Completeness.}
The set $\mathcal Q_{\mathrm{inv}}$ contains all of the
treatment-effect-invariant primitive relations in \eqref{eq:primitive-dependence-relations}:
\[
\mathcal Q_{\mathrm{inv}}
=
\left\{
q\in\mathcal Q(S_{{\bm{\beta}},i},T_i,Y_i):
q\text{ is treatment-effect-invariant}
\right\}.
\]

\item[\textnormal{(b)}]
\textbf{Irreducibility.}
For every proper subset
$\mathcal A\subsetneq\mathcal Q_{\mathrm{inv}}$, there exists a faithful
ordered DAG $\mathbb G_{\mathcal A}$ that satisfies every relation in $\mathcal A$
but does not satisfy every relation in $\mathcal Q_{\mathrm{inv}}$. Equivalently, every proper subset
$\mathcal A\subsetneq\mathcal Q_{\mathrm{inv}}$ represents a strictly larger
graph class:
\[
\mathfrak G(\mathcal Q_{\mathrm{inv}})
\subsetneq
\mathfrak G(\mathcal A).
\]
\end{enumerate}
\end{reptheorem}

\begin{proof}
We prove the two statements in order.

\medskip
\noindent\textit{Completeness.}
First consider $S_{{\bm{\beta}},i}\not\!\ci T_i$, so that an active path exists between $S_{{\bm{\beta}},i}$ and $T_i$ in the DAG. Assume for a contradiction that the active path involved $T_i \rightarrow Y_i$ or $T_i \not \rightarrow Y_i$. It cannot involve $T_i \not \rightarrow Y_i$ by the definition of an active path. If the active path involved $ T_i \rightarrow Y_i$, then this implies that $Y_i$ must be a collider on the path because we must have $S_{{\bm{\beta}},i}\prec T_i\prec Y_i$. $Y_i$ is not active on the path because we do not allow selection bias. We have arrived at a contradiction in both cases. Hence, the active path between $S_{{\bm{\beta}},i}$ and $T_i$ does not involve $T_i \rightarrow Y_i$ or $T_i \not \rightarrow Y_i$. Thus, if either $T_i \rightarrow Y_i$ or $T_i \not \rightarrow Y_i$, then we still have an active path between 
$S_{{\bm{\beta}},i}$ and $T_i$, and thus $S_{{\bm{\beta}},i}\not\!\ci T_i$ under faithfulness.

Next consider $S_{{\bm{\beta}},i}\not\!\ci Y_i | T_i$. We follow a similar proof strategy as above. Since $S_{{\bm{\beta}},i}\not\!\ci_{\mathbb G} Y_i | T_i$, $T_i$ cannot be a non-collider on an active path between $S_{{\bm{\beta}},i}$ and $Y_i$. Suppose for a contradiction that the active path involved $T_i \rightarrow Y_i$ or $T_i \not \rightarrow Y_i$. Again, it cannot involve $T_i \not \rightarrow Y_i$ by the definition of an active path. If the active path involved $T_i \rightarrow Y_i$, then $T_i$ must be a non-collider on the path. We have arrived at a contradiction in either case. Thus, if either $T_i \rightarrow Y_i$ or $T_i \not \rightarrow Y_i$, then we still have an active path between 
$S_{{\bm{\beta}},i}$ and $Y_i$ given $T_i$, and thus $S_{{\bm{\beta}},i}\not\!\ci Y_i | T_i$ under faithfulness.

We conclude that both relations in
$\mathcal Q_{\mathrm{inv}}$ satisfy
\eqref{eq:uniform-treatment-effect-invariance}.

It remains to show that none of the other four primitive relations in \eqref{eq:primitive-dependence-relations} is uniformly treatment-effect-invariant.

\begin{enumerate}[label=(\alph*)]
\item
Let $\mathbb G$ contain only $S_{{\bm{\beta}},i}\rightarrow T_i$. Then
$S_{{\bm{\beta}},i}\ci_{\mathbb G}Y_i$. Adding $T_i\rightarrow Y_i$ creates the open path
\[
S_{{\bm{\beta}},i}\rightarrow T_i\rightarrow Y_i,
\]
so $S_{{\bm{\beta}},i}\not\!\ci_{\mathbb G^+}Y_i$. Therefore,
$S_{{\bm{\beta}},i}\not\!\ci Y_i$ is not uniformly treatment-effect-invariant.

\item
Let $\mathbb G$ contain only $S_{{\bm{\beta}},i}\rightarrow Y_i$, with $T_i$ isolated. Then
$S_{{\bm{\beta}},i}\ci_{\mathbb G}T_i\mid Y_i$. Adding $T_i\rightarrow Y_i$ creates the collider
\[
S_{{\bm{\beta}},i}\rightarrow Y_i\leftarrow T_i.
\]
Conditioning on $Y_i$ opens this path, giving
$S_{{\bm{\beta}},i}\not\!\ci_{\mathbb G^+}T_i\mid Y_i$. Therefore,
$S_{{\bm{\beta}},i}\not\!\ci T_i\mid Y_i$ is not uniformly treatment-effect-invariant.

\item
Let $\mathbb G$ be the edgeless graph. Then $T_i\ci_{\mathbb G}Y_i$. Adding
$T_i\rightarrow Y_i$ gives $T_i\not\!\ci_{\mathbb G^+}Y_i$. Therefore,
$T_i\not\!\ci Y_i$ is not uniformly treatment-effect-invariant.

\item
For the same edgeless graph, $T_i\ci_{\mathbb G}Y_i\mid S_{{\bm{\beta}},i}$. After adding
$T_i\rightarrow Y_i$, the direct path remains active conditional on $S_{{\bm{\beta}},i}$.
Hence $T_i\not\!\ci_{\mathbb G^+}Y_i\mid S_{{\bm{\beta}},i}$, so
$T_i\not\!\ci Y_i\mid S_{{\bm{\beta}},i}$ is not uniformly treatment-effect-invariant.
\end{enumerate}
These counterexamples exhaust the four primitive relations outside
$\mathcal Q_{\mathrm{inv}}$. Therefore,
$\mathcal Q_{\mathrm{inv}}$ contains exactly the uniformly
treatment-effect-invariant relations, proving (i).

\medskip
\noindent\textit{Irreducibility.}
Because $\mathcal Q_{\mathrm{inv}}$ contains two relations, it has three
proper subsets.

First, let
\(
\mathcal A
=
\left\{
S_{{\bm{\beta}},i}\not\!\ci T_i
\right\}.
\)
The faithful treatment-mediated graph
\(
S_{{\bm{\beta}},i}\rightarrow T_i\rightarrow Y_i
\)
satisfies the retained relation. However, conditioning on $T_i$ blocks its only
path from $S_{{\bm{\beta}},i}$ to $Y_i$, so
\(
S_{{\bm{\beta}},i}\ci Y_i\mid T_i.
\)
The graph therefore satisfies every relation in $\mathcal A$ but does not
satisfy every relation in $\mathcal Q_{\mathrm{inv}}$.

Second, let
\(
\mathcal A
=
\left\{
S_{{\bm{\beta}},i}\not\!\ci Y_i\mid T_i
\right\}.
\)
The faithful outcome-only graph
\(
S_{{\bm{\beta}},i}\rightarrow Y_i,\) with $T_i$ isolated,
satisfies the retained relation but has
\(
S_{{\bm{\beta}},i}\ci T_i.
\)
It therefore satisfies every relation in $\mathcal A$ but not every relation
in $\mathcal Q_{\mathrm{inv}}$.

Finally, let $\mathcal A=\varnothing$. The edgeless graph satisfies the empty
conjunction but satisfies neither relation in
$\mathcal Q_{\mathrm{inv}}$.

Thus, for every proper subset
$\mathcal A\subsetneq\mathcal Q_{\mathrm{inv}}$, there exists a faithful
ordered graph in
\(
\mathfrak G(\mathcal A)
\setminus
\mathfrak G(\mathcal Q_{\mathrm{inv}}).
\)
Consequently,
\(
\mathfrak G(\mathcal Q_{\mathrm{inv}})
\subsetneq
\mathfrak G(\mathcal A),
\)
which proves irreducibility.
\end{proof}

\subsection{Proposition \ref{prop:population-compact-span}}

\begin{repproposition}{prop:population-compact-span}
Suppose $K_\eta$ and $G_\eta$ are positive definite and $\bm{v}_T$ and $\bm{v}_R$ are
linearly independent. Under
\eqref{eq:linear-treatment-model}--\eqref{eq:linear-proxy-model}, with
\eqref{eq:inverse-regression-linearity} imposed when $m_T$ is nonlinear,
\begin{equation*}
\mathcal W_\eta
=
\operatorname{span}
\left\{
K_\eta^{-1}\Lambda\Sigma_{\bm C}\bm a,\,
K_\eta^{-1}\Lambda\Sigma_{\bm C}(\bm b+\bm h_q)
\right\}.
\end{equation*}
Consequently, the population generalized eigenscores satisfy
\begin{equation*}
\begin{split}
\left\{
\bm X_i^\top \bm w:\bm w\in\mathcal W_\eta
\right\}
=
\operatorname{span}\bigl\{&
\bm a^\top\Sigma_{\bm C}\Lambda^\top
K_\eta^{-1}\bm X_i,\\
&
(\bm b+\bm h_q)^\top\Sigma_{\bm C}\Lambda^\top
K_\eta^{-1}\bm X_i
\bigr\}.
\end{split}
\end{equation*}
\end{repproposition}

\begin{proof}
We first express the treatment cross-moment in terms of the latent treatment
direction. In the additive-linear continuous-treatment case,
\[
T_i
=
U_{T,i}+\varepsilon_{T,i}.
\]
Since $U_{T,i}=\bm a^\top\bm C_i$ and
$\mathbb E(\bm C_i\varepsilon_{T,i})=\bm 0$, we have
\[
\mathbb E(\bm C_iT_i)
=
\Sigma_{\bm C}\bm a.
\]
Moreover, $m_T(u)=u$ in this case, so
\eqref{eq:kappa-treatment} gives $\kappa_T=1$.

When $m_T$ is nonlinear, iterated expectation gives
\[
\mathbb E(\bm C_iT_i)
=
\mathbb E\{\bm C_i m_T(U_{T,i})\}.
\]
Conditioning the right-hand side on $U_{T,i}$ gives
\[
\mathbb E(\bm C_iT_i)
=
\mathbb E\!\left[
\mathbb E(\bm C_i\mid U_{T,i})m_T(U_{T,i})
\right].
\]
Substituting \eqref{eq:inverse-regression-linearity} yields
\[
\mathbb E(\bm C_iT_i)
=
\frac{\Sigma_{\bm C}\bm a}
{\bm a^\top\Sigma_{\bm C}\bm a}
\mathbb E\{U_{T,i}m_T(U_{T,i})\}.
\]
Since
\[
\mathbb E(U_{T,i}^2)
=
\bm a^\top\Sigma_{\bm C}\bm a,
\]
the definition of $\kappa_T$ gives
\begin{equation}
\label{eq:latent-treatment-cross-moment}
\mathbb E(\bm C_iT_i)
=
\kappa_T\Sigma_{\bm C}\bm a.
\end{equation}

By consistency, the observed outcome is
\[
Y_i
=
\alpha+\bm b^\top\bm C_i
+
T_i\{\gamma+\bm q^\top\bm C_i\}
+
\varepsilon_{Y,i}(T_i).
\]
The conditional mean restriction in
\eqref{eq:linear-potential-outcome-model} implies
\[
\mathbb E\{\bm C_i\varepsilon_{Y,i}(T_i)\}
=
0.
\]
Multiplying the observed-outcome equation by $\bm C_i$ and taking
expectations therefore gives
\[
\mathbb E(\bm C_iY_i)
=
\Sigma_{\bm C}\bm b
+
\gamma\mathbb E(\bm C_iT_i)
+
\mathbb E\{\bm C_iT_i(\bm q^\top\bm C_i)\}.
\]
Using \eqref{eq:latent-treatment-cross-moment} and
\eqref{eq:hq-definition}, we obtain
\[
\mathbb E(\bm C_iY_i)
=
\Sigma_{\bm C}
\left\{
\bm b+\bm h_q+\gamma\kappa_T\bm a
\right\}.
\]

Because $\bm C_i$ is centered, the definition of $R_i$ gives
\[
\begin{aligned}
\mathbb E(\bm C_iR_i)
&=
\operatorname{cov}(\bm C_i,Y_i)
-
\delta_T\operatorname{cov}(\bm C_i,T_i)\\
&=
\mathbb E(\bm C_iY_i)
-
\delta_T\mathbb E(\bm C_iT_i).
\end{aligned}
\]
Substituting the preceding identities gives
\begin{equation}
\label{eq:latent-residual-cross-moment}
\mathbb E(\bm C_iR_i)
=
\Sigma_{\bm C}
\left\{
\bm b+\bm h_q
+
(\gamma-\delta_T)\kappa_T\bm a
\right\}.
\end{equation}

By consistency, $R_i$ is a measurable function of
$\{T_i,Y_i(t):t\in\mathcal T\}$. The conditional proxy-independence
condition in \eqref{eq:linear-proxy-model} therefore implies
\[
\operatorname{cov}(\bm X_i,T_i\mid\bm C_i)=\bm 0
\qquad\text{and}\qquad
\operatorname{cov}(\bm X_i,R_i\mid\bm C_i)=\bm 0.
\]
The law of total covariance applied twice and
$\mathbb E(\bm X_i\mid\bm C_i)=\Lambda\bm C_i$ consequently give
\[
\begin{aligned}
\bm{v}_T
&=
\operatorname{cov}(\bm X_i,T_i)=
\operatorname{cov}
\left\{
\mathbb E(\bm X_i\mid\bm C_i),
\mathbb E(T_i\mid\bm C_i)
\right\}\\
&=
\Lambda\operatorname{cov}(\bm C_i,T_i) =
\Lambda\mathbb E(\bm C_iT_i).
\end{aligned}
\]
Applying \eqref{eq:latent-treatment-cross-moment} gives
\begin{equation}
\label{eq:observed-treatment-cross-moment}
\bm{v}_T
=
\kappa_T\Lambda\Sigma_{\bm C}\bm a.
\end{equation}

Similarly,
\[
\begin{aligned}
\bm{v}_R
&=
\operatorname{cov}(\bm X_i,R_i)=
\operatorname{cov}
\left\{
\mathbb E(\bm X_i\mid\bm C_i),
\mathbb E(R_i\mid\bm C_i)
\right\}\\
&=
\Lambda\operatorname{cov}(\bm C_i,R_i)=
\Lambda\mathbb E(\bm C_iR_i),
\end{aligned}
\]
where the final equality uses the centering of $\bm C_i$ and $R_i$.
Applying \eqref{eq:latent-residual-cross-moment} gives
\begin{equation}
\label{eq:observed-residual-cross-moment}
\bm{v}_R
=
\Lambda\Sigma_{\bm C}
\left\{
\bm b+\bm h_q
+
(\gamma-\delta_T)\kappa_T\bm a
\right\}.
\end{equation}

We next characterize the generalized eigenspace. Because $\bm{v}_T$ and $\bm{v}_R$ are
linearly independent,
\[
\operatorname{col}(H)
=
\operatorname{span}\{\bm{v}_T,\bm{v}_R\}.
\]
Let ${\bm{\beta}}$ be a generalized eigenvector with nonzero eigenvalue $\rho$.
Then
\(
H{\bm{\beta}}=\rho G_\eta{\bm{\beta}}.
\)
Because $G_\eta$ is nonsingular and $\rho\neq0$,
\[
{\bm{\beta}}
=
\rho^{-1}G_\eta^{-1}H{\bm{\beta}}.
\]
Because
\(
\operatorname{col}(H)=\operatorname{span}\{\bm{v}_T,\bm{v}_R\}
\), there exist scalars $c_T$ and $c_R$ such that
\(
H{\bm{\beta}}=c_T\bm{v}_T+c_R\bm{v}_R.
\)
Substitution gives
\[
{\bm{\beta}}
=
\frac{c_T}{\rho}G_\eta^{-1}\bm{v}_T
+
\frac{c_R}{\rho}G_\eta^{-1}\bm{v}_R.
\]
Therefore,
\[
{\bm{\beta}}
\in
\operatorname{span}
\left\{
G_\eta^{-1}\bm{v}_T,\,
G_\eta^{-1}\bm{v}_R
\right\}.
\]

The matrix $G_\eta$ is a rank-one modification of $K_\eta$. The
Sherman--Morrison identity gives
\[
G_\eta^{-1}
=
K_\eta^{-1}
+
\frac{
K_\eta^{-1}\bm{v}_T\bm{v}_T^\top K_\eta^{-1}
}{
2\operatorname{var}(T_i)-\bm{v}_T^\top K_\eta^{-1}\bm{v}_T
}.
\]
Therefore, $G_\eta^{-1}\bm{v}_T$ is a nonzero scalar multiple of
$K_\eta^{-1}\bm{v}_T$, while $G_\eta^{-1}\bm{v}_R$ equals
$K_\eta^{-1}\bm{v}_R$ plus a scalar multiple of $K_\eta^{-1}\bm{v}_T$. Hence
\[
\operatorname{span}
\left\{
G_\eta^{-1}\bm{v}_T,\,
G_\eta^{-1}\bm{v}_R
\right\}
=
\operatorname{span}
\left\{
K_\eta^{-1}\bm{v}_T,\,
K_\eta^{-1}\bm{v}_R
\right\}.
\]
Because \({\bm{\beta}}
\in
\operatorname{span}
\left\{
G_\eta^{-1}\bm{v}_T,\,
G_\eta^{-1}\bm{v}_R
\right\}\), $H$ has rank two and $G_\eta$ is nonsingular, the two nonzero
generalized eigenvectors span this space. Thus
\begin{equation}
\label{eq:cross-moment-eigenspace}
\mathcal W_\eta
=
\operatorname{span}
\left\{
K_\eta^{-1}\bm{v}_T,\,
K_\eta^{-1}\bm{v}_R
\right\}.
\end{equation}

Substituting \eqref{eq:observed-treatment-cross-moment} and
\eqref{eq:observed-residual-cross-moment} into
\eqref{eq:cross-moment-eigenspace} gives
\[
\mathcal W_\eta
=
\operatorname{span}
\left\{
\kappa_TK_\eta^{-1}\Lambda\Sigma_{\bm C}\bm a,\,
K_\eta^{-1}\Lambda\Sigma_{\bm C}
\left[
\bm b+\bm h_q
+
(\gamma-\delta_T)\kappa_T\bm a
\right]
\right\}.
\]
The assumption $\kappa_T\neq0$ permits rescaling the first vector without
changing its span. The second vector is the sum of
\[
K_\eta^{-1}\Lambda\Sigma_{\bm C}(\bm b+\bm h_q)
\]
and $(\gamma-\delta_T)\kappa_T$ times the rescaled first vector. Recall that $\bm{v}_T$ and $\bm{v}_R$ are linearly independent, so $\bm b+\bm h_q \not = c \bm a$ for some constant $c$ by \eqref{eq:observed-treatment-cross-moment} and \eqref{eq:observed-residual-cross-moment}. We therefore obtain
\[
\mathcal W_\eta
=
\operatorname{span}
\left\{
K_\eta^{-1}\Lambda\Sigma_{\bm C}\bm a,\,
K_\eta^{-1}\Lambda\Sigma_{\bm C}(\bm b+\bm h_q)
\right\},
\]
which proves \eqref{eq:population-compact-span}. Multiplying these coefficient
directions by $\bm X_i^\top$ gives
\eqref{eq:population-score-latent-span}.
\end{proof}

\subsection{Lemma \ref{lem:spectral-subspace-error}}

\begin{replemma}{lem:spectral-subspace-error}
Consider Assumption~\ref{assump:regularity}. Then
\[
\operatorname{dist}^2
\left(
\widehat{\mathcal S}_\eta,
\mathcal S_\eta
\right)
\in
O_p
\left(
\frac{d_{\mathrm{eff}}(\eta)}{n\Delta^2}
\right).
\]
\end{replemma}

\begin{proof}
Let \(\Pi\) and \(\widehat\Pi\) denote the orthogonal projections onto the
two-dimensional nonzero eigenspaces of \(\mathcal L\) and
\(\widehat{\mathcal L}\), respectively.
For brevity, set
\[
a_{n,\eta}
=
\left\{
\frac{d_{\mathrm{eff}}(\eta)}{n}
\right\}^{1/2}.
\]

Assumption~\ref{assump:regularity}(c) separates the two-dimensional eigenspace
from the zero eigenspace by at least $\Delta$. Assumption
\ref{assump:regularity}(a), together with $a_{n,\eta}/\Delta\to0$, implies
\[
\|\widehat{\mathcal L}-\mathcal L\|_{\mathrm{op}}<\Delta/2
\]
with probability tending to one. Lemma~\ref{lem:Davis_Kahan} therefore gives
\begin{equation}
\label{eq:rank-two-DK}
\|\widehat\Pi-\Pi\|_{\mathrm{op}}
\leq
\frac{2\|\widehat{\mathcal L}-\mathcal L\|_{\mathrm{op}}}{\Delta}.
\end{equation}
Using Assumption~\ref{assump:regularity}(a) in
\eqref{eq:rank-two-DK} yields
\begin{equation}
\label{eq:rank-two-projection-rate}
\|\widehat\Pi-\Pi\|_{\mathrm{op}}
\in
O_p\left(\frac{a_{n,\eta}}{\Delta}\right).
\end{equation}

By Assumption~\ref{assump:regularity}(b), the eigenvalues of $G_\eta$ and
$\widehat G_\eta$ are bounded above and away from zero with probability
tending to one. Hence, for fixed constants $0<c<C<\infty$, both matrices
belong with probability tending to one to the set
\[
\mathcal K
=
\left\{
A=A^\top:
cI\preceq A\preceq CI
\right\}.
\]
The set $\mathcal K$ is compact and convex. The finite-dimensional matrix map
$A\mapsto A^{-1/2}$ is continuously differentiable on the positive-definite
cone. Its derivative is therefore bounded on $\mathcal K$. The mean-value
inequality then gives a constant $K_G<\infty$ such that
\[
\|A^{-1/2}-B^{-1/2}\|_{\mathrm{op}}
\leq
K_G\|A-B\|_{\mathrm{op}}
\]
for every $A,B\in\mathcal K$. Applying this inequality with
$A=\widehat G_\eta$ and $B=G_\eta$, and then using the concentration rate in
Assumption~\ref{assump:regularity}(b), gives
\begin{equation}
\label{eq:rank-two-G-root-rate}
\|\widehat G_\eta^{-1/2}-G_\eta^{-1/2}\|_{\mathrm{op}}
\in O_p(a_{n,\eta}).
\end{equation}

Let $Q\in\mathbb R^{p\times2}$ have orthonormal columns spanning
$\operatorname{ran}(\Pi)$. Then
\(
\Pi Q=Q
\)
and
\(
Q^\top Q=I_2
\).
Define
\(
B_\eta=G_\eta^{-1/2}Q
\)
and
\(
\widehat B_{\eta,0}=\widehat G_\eta^{-1/2}\widehat\Pi Q
\).
We first verify that $\widehat B_{\eta,0}$ spans the empirical generalized-eigenvector
coefficient space. Because $\Pi Q=Q$,
\[
Q^\top\widehat\Pi Q-I_2
=
Q^\top(\widehat\Pi-\Pi)Q.
\]
Therefore,
\[
\|Q^\top\widehat\Pi Q-I_2\|_{\mathrm{op}}
\leq
\|Q\|_{\mathrm{op}}^2
\|\widehat\Pi-\Pi\|_{\mathrm{op}}
=
\|\widehat\Pi-\Pi\|_{\mathrm{op}}.
\]
Equation~\eqref{eq:rank-two-projection-rate} and $a_{n,\eta}/\Delta\to0$ in Assumption~\ref{assump:regularity}(c) imply that
the final quantity is smaller than one with probability tending to one. On
this event, $Q^\top\widehat\Pi Q$ is nonsingular. If
\(\widehat\Pi Q\bm a=\bm 0\) for some $\bm a\in\mathbb R^2$, then premultiplication by
$Q^\top$ gives
\(
Q^\top\widehat\Pi Q\bm a=\bm 0
\).
The nonsingularity of $Q^\top\widehat\Pi Q$ then gives $\bm a=\bm 0$. Thus,
$\widehat\Pi Q$ has rank two with probability tending to one. Its columns lie
in $\operatorname{ran}(\widehat\Pi)$, which also has dimension two. Hence,
\[
\operatorname{col}(\widehat\Pi Q)
=
\operatorname{ran}(\widehat\Pi)
\]
with probability tending to one. Since $\widehat G_\eta^{-1/2}$ is
nonsingular, $\widehat B_{\eta,0}$ spans the empirical generalized-eigenvector
coefficient space.

Using $\Pi Q=Q$, we have
\[
\begin{aligned}
\widehat B_{\eta,0}-B_\eta
&=
\widehat G_\eta^{-1/2}\widehat\Pi Q
-
G_\eta^{-1/2}Q \\
&=
\widehat G_\eta^{-1/2}\widehat\Pi Q
-
\widehat G_\eta^{-1/2}\Pi Q
+
\widehat G_\eta^{-1/2}Q
-
G_\eta^{-1/2}Q \\
&=
\widehat G_\eta^{-1/2}
(\widehat\Pi-\Pi)Q
+
(\widehat G_\eta^{-1/2}-G_\eta^{-1/2})Q.
\end{aligned}
\]
Consequently,
\[
\begin{aligned}
\|\widehat B_{\eta,0}-B_\eta\|_{\mathrm{op}}
&\leq
\|\widehat G_\eta^{-1/2}\|_{\mathrm{op}}
\|\widehat\Pi-\Pi\|_{\mathrm{op}}
\|Q\|_{\mathrm{op}} \\
&\quad+
\|\widehat G_\eta^{-1/2}-G_\eta^{-1/2}\|_{\mathrm{op}}
\|Q\|_{\mathrm{op}},
\end{aligned}
\]
where $\|Q\|_{\mathrm{op}}=1$. The denominator bounds of Assumption~\ref{assump:regularity}(b) give
\(\|\widehat G_\eta^{-1/2}\|_{\mathrm{op}}=O_p(1)\). Combining
\eqref{eq:rank-two-projection-rate} and
\eqref{eq:rank-two-G-root-rate}, together with $\Delta\in O(1)$, yields
\begin{equation}
\label{eq:rank-two-B-rate}
\|\widehat B_{\eta,0}-B_\eta\|_{\mathrm{op}}
\in
O_p\left(\frac{a_{n,\eta}}{\Delta}\right).
\end{equation}

We now apply Lemma~\ref{lem:column-space-perturbation}. Set
\[
M=\Sigma_{\bm X}^{1/2}B_\eta,
\qquad
\widehat M=\Sigma_{\bm X}^{1/2}\widehat B_{\eta,0},
\qquad
E=\widehat M-M
=
\Sigma_{\bm X}^{1/2}(\widehat B_{\eta,0}-B_\eta).
\]
Assumption~\ref{assump:regularity}(e) and
\eqref{eq:rank-two-B-rate} give
\[
\begin{aligned}
\|E\|_{\mathrm{op}}
&\leq
\|\Sigma_{\bm X}^{1/2}\|_{\mathrm{op}}
\|\widehat B_{\eta,0}-B_\eta\|_{\mathrm{op}}
\\
&\in
O_p\left(\frac{a_{n,\eta}}{\Delta}\right).
\end{aligned}
\]
Assumption~\ref{assump:regularity}(d) also gives
\[
\begin{aligned}
\sigma_2(M)^2
&=
\lambda_{\min}(M^\top M)
\\
&=
\lambda_{\min}
\left(
{B_\eta}^\top\Sigma_{\bm X}B_\eta
\right)
\geq
c_S.
\end{aligned}
\]
Thus, $M$ has rank two and $\sigma_2(M)\geq\sqrt{c_S}$. Since
$a_{n,\eta}/\Delta\to0$, the preceding bound gives
\(\|E\|_{\mathrm{op}}=o_p(1)\). Therefore,
\(
\|E\|_{\mathrm{op}}\leq\sigma_2(M)/2
\)
with probability tending to one. This verifies the small-perturbation
condition in Lemma~\ref{lem:column-space-perturbation}. The lemma therefore
gives
\[
\left\|
P_{\widehat M}
-
P_M
\right\|_{\mathrm{op}}
\leq
\frac{2\|E\|_{\mathrm{op}}}{\sigma_2(M)}
\in
O_p\left(\frac{a_{n,\eta}}{\Delta}\right),
\]
where the last rate uses \(\sigma_2(M)\geq\sqrt{c_S}\).

The columns of $B_\eta$ span the population generalized-eigenvector
coefficient space, and the columns of $\widehat B_{\eta,0}$ span its empirical
counterpart with probability tending to one. Lemma
\ref{lem:score-space-distance-equivalence} therefore gives
\[
\operatorname{dist}
\left(
\widehat{\mathcal S}_\eta,
\mathcal S_\eta
\right)
=
\left\|
P_{\widehat M}
-
P_M
\right\|_{\mathrm{op}}
\in
O_p\left(\frac{a_{n,\eta}}{\Delta}\right).
\]
Squaring and using
\(a_{n,\eta}^2=d_{\mathrm{eff}}(\eta)/n\) gives
\[
\operatorname{dist}^2
\left(
\widehat{\mathcal S}_\eta,
\mathcal S_\eta
\right)
\in
O_p\left(
\frac{d_{\mathrm{eff}}(\eta)}
{n\Delta^2}
\right),
\]
as claimed.
\end{proof}

\begin{lemma} [Theorem 2 in \cite{Yu15}] \label{lem:Davis_Kahan}
Let $A$ and $\widehat A$ be symmetric matrices. Let $P$ project onto
an invariant subspace of $A$, and suppose the eigenvalues in that subspace
are separated from the remaining spectrum by
\[
\Delta
=
\min_{\rho\in\operatorname{spec}(A|_{\operatorname{ran}(P)}),
\;\rho'\in\operatorname{spec}(A|_{\operatorname{ran}(I-P)})}
|\rho-\rho'|
>
0 .
\]
Let $\widehat P$ project onto the corresponding empirical invariant subspace.
Whenever $\|\widehat A-A\|_{\mathrm{op}}<\Delta/2$,
\[
\|\widehat P-P\|_{\mathrm{op}}
\leq
\frac{2\|\widehat A-A\|_{\mathrm{op}}}{\Delta}.
\]
\end{lemma}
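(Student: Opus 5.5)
The plan is the classical Davis--Kahan $\sin\Theta$ argument. We use Weyl's inequality to identify the empirical invariant subspace, and a Sylvester-equation identity to bound the canonical angles. Throughout, write $E=\widehat A-A$ and $\varepsilon=\|E\|_{\mathrm{op}}<\Delta/2$. Let $\Lambda_1=\operatorname{spec}(A|_{\operatorname{ran}(P)})$ and $\Lambda_2=\operatorname{spec}(A|_{\operatorname{ran}(I-P)})$, so that $\operatorname{dist}(\Lambda_1,\Lambda_2)\ge\Delta$.

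\emph{Step 1 (locating the empirical subspace).} By Weyl's inequality, every eigenvalue of $\widehat A$ lies within $\varepsilon$ of $\operatorname{spec}(A)$, and multiplicities are preserved under the ordered matching. Since $\varepsilon<\Delta/2$, the $\varepsilon$-neighbourhoods of $\Lambda_1$ and $\Lambda_2$ are disjoint. We define $\widehat P$ as the spectral projection of $\widehat A$ onto the eigenvalues lying in the $\varepsilon$-neighbourhood of $\Lambda_1$; this is the ``corresponding empirical invariant subspace'', and $\operatorname{rank}\widehat P=\operatorname{rank}P$. Two consequences follow:
\[
\operatorname{dist}\bigl(\operatorname{spec}(\widehat A|_{\operatorname{ran}\widehat P}),\Lambda_2\bigr)\ge\Delta-\varepsilon>\Delta/2,
\qquad
\operatorname{dist}\bigl(\operatorname{spec}(\widehat A|_{\operatorname{ran}(I-\widehat P)}),\Lambda_1\bigr)>\Delta/2 .
\]
Because the two projections have equal rank, $\|\widehat P-P\|_{\mathrm{op}}=\|\widehat P(I-P)\|_{\mathrm{op}}=\|(I-\widehat P)P\|_{\mathrm{op}}$. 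This is the standard canonical-angle identity.

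\emph{Step 2 (Sylvester identity).} Set $X=\widehat P(I-P)$. Since $\widehat A$ commutes with $\widehat P$ and $A$ commutes with $P$, we have
\[
\widehat P\widehat A\widehat P\,X-X\,(I-P)A(I-P)=\widehat P E(I-P).
\]
This is a Sylvester equation $\widehat A_1X-XA_2=F$, where the spectra of $\widehat A_1$ and $A_2$ are separated by more than $\Delta/2$ and $\|F\|_{\mathrm{op}}\le\varepsilon$. When the two spectra are separated in the Davis--Kahan sense (one contained in an interval and the other outside the enlarged interval), the Sylvester solution bound holds with constant $1$. That gives $\|X\|_{\mathrm{op}}\le\varepsilon/(\Delta/2)=2\varepsilon/\Delta$, which is exactly the claimed inequality.

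\emph{Main obstacle.} The delicate point is the constant in the Sylvester bound when the separation is not of interval type. This is precisely the case relevant to Lemma~\ref{lem:spectral-subspace-error}: the retained eigenvalues may have both signs and so straddle the zero cluster. For general separated sets, the Bhatia--Davis--McIntosh constant $\pi/2$ would appear. I would handle this by choosing which side of the Sylvester equation to solve so that one operator's spectrum is a single point or interval.

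In the application, $\Lambda_2=\{0\}$, so the equation in Step 2 reads $\widehat A_1X=F$. Every eigenvalue of $\widehat A_1$ satisfies $|\widehat\rho|\ge\Delta-\varepsilon>\Delta/2$, so $\widehat A_1$ is invertible on $\operatorname{ran}\widehat P$ and
\[
\|X\|_{\mathrm{op}}\le\|\widehat A_1^{-1}\|_{\mathrm{op}}\,\|F\|_{\mathrm{op}}\le 2\varepsilon/\Delta .
\]
In the general case I would instead invoke the interval-separated version of Davis--Kahan, as in the statement of \cite{Yu15}, where the gap is defined through consecutive eigenvalue indices so that the relevant spectra are interval-separated. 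Together these yield $\|\widehat P-P\|_{\mathrm{op}}\le 2\|\widehat A-A\|_{\mathrm{op}}/\Delta$.
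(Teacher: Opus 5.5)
The paper gives no proof of this lemma; it only cites \cite{Yu15}. Your Weyl-plus-Sylvester argument is the classical Davis--Kahan route. It is correct whenever one of the two spectral sets lies in an interval containing no point of the other: the reduced gap $\Delta-\varepsilon>\Delta/2$ together with the interval-type Sylvester constant $1$ gives $\|\widehat P-P\|_{\mathrm{op}}\le\varepsilon/(\Delta-\varepsilon)\le2\varepsilon/\Delta$. Your treatment of $\Lambda_2=\{0\}$, where the Sylvester equation collapses to $\widehat A_1X=F$, is exactly the case Lemma~\ref{lem:spectral-subspace-error} needs. You are also right that the retained eigenvalues straddle zero there; they always do, since $H$ and hence $\mathcal L$ is indefinite, so the contiguous-block statement of \cite{Yu15} would have to be applied to the complement. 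The cited proof differs from yours. It keeps population eigenvalues on both sides, $\widehat V\Lambda-A\widehat V=\widehat V(\Lambda-\widehat\Lambda)+E\widehat V$, and works in Frobenius norm. That needs no condition $\varepsilon<\Delta/2$ but, as stated, loses a factor $\sqrt d$ in operator norm. Your route gives the constant $2$ directly.

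The gap is the general case that the lemma actually states, where $\Delta$ is the distance between arbitrary separated sets. When these interlace (for example $\Lambda_1=\{0,2\Delta\}$ and $\Lambda_2=\{\Delta,3\Delta\}$), neither side of either Sylvester equation has interval-type spectrum. Your device of solving on the side with a point or interval spectrum is then unavailable. For general separated spectra the operator-norm Sylvester constant is $\pi/2$, not $1$, so your argument yields only $\|\widehat P-P\|_{\mathrm{op}}\le(\pi/2)\varepsilon/(\Delta-\varepsilon)$. This exceeds $2\varepsilon/\Delta$ once $\varepsilon>(1-\pi/4)\Delta$. Your closing appeal to the version whose gap is defined through consecutive indices replaces the hypothesis instead of proving the statement.

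The citation does not fill this either. The Frobenius argument of \cite{Yu15} does extend to arbitrary separated index sets, because the Frobenius Sylvester constant is $1$ for any separation. It then gives only $\|\widehat P-P\|_{\mathrm{op}}\le2\sqrt d\,\|\widehat A-A\|_{\mathrm{op}}/\Delta$. To make the proof complete, either add the interval-separation hypothesis explicitly, which covers the only use in the paper, or accept the constant $2\sqrt d$ (here $2\sqrt2$). The weaker constant is harmless for the $O_p$ rate in Lemma~\ref{lem:spectral-subspace-error}.
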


\begin{lemma}
\label{lem:column-space-perturbation}
Let $M,\widehat M\in\mathbb R^{d\times q}$ for a positive integer $q$, and let
\(
E=\widehat M-M.
\)
Suppose that, for some $\sigma_0>0$, we have
\(
\sigma_{\min}(M)\geq \sigma_0,\) and \(
\|E\|_{\mathrm{op}}\leq \frac{\sigma_0}{2}.
\)
Let
\(
\mathcal S=\operatorname{col}(M), \)
and
\(\widehat{\mathcal S}_\eta=\operatorname{col}(\widehat M).
\)
Then $\widehat M$ has full column rank, and
\[
\|P_{\widehat{\mathcal S}_\eta}-P_{\mathcal S}\|_{\mathrm{op}}
\leq
\frac{2\|E\|_{\mathrm{op}}}{\sigma_0}.
\]
\end{lemma}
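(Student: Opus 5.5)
This is a standard perturbation result for column spaces (a Wedin-type bound), and I would prove it by comparing projections through the polar factors of $M$ and $\widehat M$.

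\emph{Full column rank.} By Weyl's inequality for singular values,
\[
\sigma_{\min}(\widehat M)\geq\sigma_{\min}(M)-\|E\|_{\mathrm{op}}\geq\sigma_0/2>0 .
\]
Hence $\widehat M$ has full column rank $q$. Both $\mathcal S$ and $\widehat{\mathcal S}_\eta$ are then $q$-dimensional, and the projections are $P_{\mathcal S}=M(M^\top M)^{-1}M^\top$ and $P_{\widehat{\mathcal S}_\eta}=\widehat M(\widehat M^\top\widehat M)^{-1}\widehat M^\top$.

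\emph{Reduction to a single sine term.} For two equal-dimensional subspaces,
\[
\|P_{\widehat{\mathcal S}_\eta}-P_{\mathcal S}\|_{\mathrm{op}}=\|(I-P_{\mathcal S})P_{\widehat{\mathcal S}_\eta}\|_{\mathrm{op}} .
\]
This is the standard sine-theta identity: the nonzero singular values of $P_A-P_B$ are the sines of the principal angles, each appearing twice. I would therefore bound $\|(I-P_{\mathcal S})P_{\widehat{\mathcal S}_\eta}\|_{\mathrm{op}}$ directly, rather than working with the difference of the two projection formulas.

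\emph{Key step.} Every unit vector in $\widehat{\mathcal S}_\eta$ can be written as $\widehat M x$ with $\|\widehat M x\|=1$, which forces $\|x\|\leq 1/\sigma_{\min}(\widehat M)$. Since $(I-P_{\mathcal S})M=0$, we get $(I-P_{\mathcal S})\widehat M x=(I-P_{\mathcal S})Ex$. It follows that
\[
\|(I-P_{\mathcal S})P_{\widehat{\mathcal S}_\eta}\|_{\mathrm{op}}\leq\frac{\|E\|_{\mathrm{op}}}{\sigma_{\min}(\widehat M)}\leq\frac{2\|E\|_{\mathrm{op}}}{\sigma_0},
\]
using the bound $\sigma_{\min}(\widehat M)\geq\sigma_0/2$ from the first step.

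\emph{Where care is needed.} The only delicate point is justifying the sine-theta equality, since it requires the two dimensions to be equal. I would prove it by noting that $P_A-P_B$ is symmetric. For a unit eigenvector $u$ with eigenvalue $\lambda$, one shows $|\lambda|\leq\max\{\|(I-P_B)P_A\|_{\mathrm{op}},\|(I-P_A)P_B\|_{\mathrm{op}}\}$. These two quantities coincide when $\dim A=\dim B$, which follows from the CS decomposition.

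If that equality feels too heavy, there is a more elementary route. Bound both one-sided sine terms symmetrically: the term $\|(I-P_{\widehat{\mathcal S}_\eta})P_{\mathcal S}\|_{\mathrm{op}}\leq\|E\|_{\mathrm{op}}/\sigma_0$ by the same argument with the roles of $M$ and $\widehat M$ swapped. Then use
\[
\|P_A-P_B\|_{\mathrm{op}}\leq\max\{\|(I-P_B)P_A\|_{\mathrm{op}},\|(I-P_A)P_B\|_{\mathrm{op}}\},
\]
which holds because $P_A-P_B=P_A(I-P_B)-(I-P_A)P_B$, and these two pieces act on orthogonal ranges and domains. Either route gives the constant $2\|E\|_{\mathrm{op}}/\sigma_0$.
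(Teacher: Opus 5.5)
Your proof is correct, and your fallback route is exactly the paper's argument. The paper bounds $\|(I-P_{\widehat{\mathcal S}_\eta})P_{\mathcal S}\|_{\mathrm{op}}\le\|E\|_{\mathrm{op}}/\sigma_0$ by writing unit vectors of $\mathcal S$ as $Ma$. It derives $\sigma_{\min}(\widehat M)\ge\sigma_0/2$ by the triangle-inequality computation that you package as Weyl's inequality, and bounds the reverse one-sided term by $2\|E\|_{\mathrm{op}}/\sigma_0$. It then combines the two by asserting, without proof, that for equal-dimensional subspaces the projection distance equals the larger one-sided distance.

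Your decomposition $P_A-P_B=P_A(I-P_B)-(I-P_A)P_B$ supplies that missing justification. Its pieces vanish on $B$ and $B^\perp$ respectively and map into $A$ and $A^\perp$ respectively, so the inequality holds with no dimension hypothesis.

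Your primary route differs only in using the equal-dimension identity to keep a single one-sided term. That identity is lighter than the CS decomposition suggests: with orthonormal bases $Q_A,Q_B$ one has $\|(I-P_B)P_A\|_{\mathrm{op}}^2=1-\sigma_{\min}(Q_B^\top Q_A)^2$, and a square matrix has the same singular values as its transpose.

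Once you have that identity, keep the other term instead. You get $\|P_{\widehat{\mathcal S}_\eta}-P_{\mathcal S}\|_{\mathrm{op}}=\|(I-P_{\widehat{\mathcal S}_\eta})P_{\mathcal S}\|_{\mathrm{op}}\le\|E\|_{\mathrm{op}}/\sigma_0$, and $\sigma_{\min}(\widehat M)>0$ is needed only to guarantee equal dimensions. The factor $2$ in the lemma, in your primary route as in the paper, comes only from bounding the $\widehat M$ side.
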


\begin{proof}
We prove the two conclusions separately. First, we show that
\(\mathcal S\) is close to \(\widehat{\mathcal S}_\eta\) in one direction.

Let \(\bm f\in\mathcal S\) with \(\|\bm f\|_2=1\). Since
\(\mathcal S=\operatorname{col}(M)\), there exists \(\bm a\in\mathbb R^q\) such
that
\begin{equation}\label{eq:colspace-f-Ma}
\bm f=M\bm a .
\end{equation}
Because \(\sigma_{\min}(M)\geq\sigma_0\), \eqref{eq:colspace-f-Ma} gives
\begin{equation*}
1
=
\|\bm f\|_2
=
\|M\bm a\|_2
\geq
\sigma_0\|\bm a\|_2 .
\end{equation*}
Therefore
\begin{equation}\label{eq:colspace-a-rate}
\|\bm a\|_2\leq \sigma_0^{-1}.
\end{equation}
Since \(\widehat M \bm a\in\widehat{\mathcal S}_\eta\), the distance from \(\bm f\) to
\(\widehat{\mathcal S}_\eta\) is no larger than its distance to \(\widehat M \bm a\). Thus
\begin{equation}\label{eq:one-sided-colspace-bound}
\begin{aligned}
\|(I-P_{\widehat{\mathcal S}_\eta})\bm f\|_2
&\leq
\|\bm f-\widehat M \bm a\|_2
\\
&=
\|M\bm a-\widehat M \bm a\|_2
\\
&=
\|(M-\widehat M)\bm a\|_2
\\
&=
\|E\bm a\|_2
\\
&\leq
\|E\|_{\mathrm{op}}\|\bm a\|_2
\\
&\leq
\frac{\|E\|_{\mathrm{op}}}{\sigma_0},
\end{aligned}
\end{equation}
where the last line uses \eqref{eq:colspace-a-rate}. Taking the supremum over
all \(\bm f\in\mathcal S\) with \(\|\bm f\|_2=1\) gives
\begin{equation}\label{eq:one-sided-S-to-Shat}
\operatorname{dist}_{\rightarrow}(\mathcal S,\widehat{\mathcal S}_\eta)
\leq
\frac{\|E\|_{\mathrm{op}}}{\sigma_0}.
\end{equation}

Second, we show that \(\widehat M\) has full column rank and obtain the reverse
one-sided bound. For any \(\bm a\in\mathbb R^q\),
\begin{equation}\label{eq:Mhat-lower-bound}
\begin{aligned}
\|\widehat M \bm a\|_2
&=
\|M\bm a+E\bm a\|_2
\\
&\geq
\|M\bm a\|_2-\|E\bm a\|_2
\\
&\geq
\sigma_0\|\bm a\|_2-\|E\|_{\mathrm{op}}\|\bm a\|_2
\\
&=
(\sigma_0-\|E\|_{\mathrm{op}})\|\bm a\|_2
\\
&\geq
\frac{\sigma_0}{2}\|\bm a\|_2 .
\end{aligned}
\end{equation}
The last inequality uses the assumption
\(\|E\|_{\mathrm{op}}\leq\sigma_0/2\). Equation
\eqref{eq:Mhat-lower-bound} implies that \(\widehat M \bm a=\bm 0\) only if \(\bm a=\bm 0\).
Hence \(\widehat M\) has full column rank and
\begin{equation}\label{eq:Mhat-sigma-lower}
\sigma_{\min}(\widehat M)\geq \sigma_0/2 .
\end{equation}

Now let \(\widehat{\bm f}\in\widehat{\mathcal S}_\eta\) with
\(\|\widehat{\bm f}\|_2=1\). Since
\(\widehat{\mathcal S}_\eta=\operatorname{col}(\widehat M)\), there exists
\(\bm a\in\mathbb R^q\) such that
\begin{equation*}
\widehat{\bm f}=\widehat M \bm a .
\end{equation*}
Using \eqref{eq:Mhat-sigma-lower},
\begin{equation*}
1
=
\|\widehat{\bm f}\|_2
=
\|\widehat M \bm a\|_2
\geq
\frac{\sigma_0}{2}\|\bm a\|_2 ,
\end{equation*}
and hence
\begin{equation*}
\|\bm a\|_2\leq \frac{2}{\sigma_0}.
\end{equation*}
Since \(M\bm a\in\mathcal S\), the same argument as in
\eqref{eq:one-sided-colspace-bound} gives
\begin{equation*}
\begin{aligned}
\|(I-P_{\mathcal S})\widehat{\bm f}\|_2
&\leq
\|\widehat{\bm f}-M\bm a\|_2
\\
&=
\|\widehat M \bm a-M\bm a\|_2
\\
&=
\|E\bm a\|_2
\\
&\leq
\|E\|_{\mathrm{op}}\|\bm a\|_2
\\
&\leq
\frac{2\|E\|_{\mathrm{op}}}{\sigma_0}.
\end{aligned}
\end{equation*}
Taking the supremum over all unit \(\widehat{\bm f}\in\widehat{\mathcal S}_\eta\) gives
\begin{equation}\label{eq:one-sided-Shat-to-S-final}
\operatorname{dist}_{\rightarrow}(\widehat{\mathcal S}_\eta,\mathcal S)
\leq
\frac{2\|E\|_{\mathrm{op}}}{\sigma_0}.
\end{equation}

For subspaces of the same finite dimension, the operator-norm distance between
orthogonal projections equals the larger of the two one-sided distances.
Combining \eqref{eq:one-sided-S-to-Shat} and
\eqref{eq:one-sided-Shat-to-S-final}, we obtain
\[
\|P_{\widehat{\mathcal S}_\eta}-P_{\mathcal S}\|_{\mathrm{op}}
\leq
\frac{2\|E\|_{\mathrm{op}}}{\sigma_0}.
\]
\end{proof}

\begin{lemma}
\label{lem:score-space-distance-equivalence}
Let \(\bm X\) be centered with covariance matrix \(\Sigma_{\bm X}\). For
\(B_1,B_2\in\mathbb R^{p\times2}\), define
\[
\mathcal S(B_j)
=
\left\{
(B_j{\bm{\xi}})^\top\bm X:{\bm{\xi}}\in\mathbb R^2
\right\},
\qquad j\in\{1,2\}.
\]
Let \(P_{\mathcal S(B_j)}\) denote the
\(L_2(P_{\bm X})\)-orthogonal projection onto \(\mathcal S(B_j)\), and let
\(P_{\Sigma_{\bm X}^{1/2}B_j}\) denote the Euclidean orthogonal projection
onto \(\operatorname{col}(\Sigma_{\bm X}^{1/2}B_j)\). Then
\[
\left\|
P_{\mathcal S(B_1)}
-
P_{\mathcal S(B_2)}
\right\|_{\mathrm{op}}
=
\left\|
P_{\Sigma_{\bm X}^{1/2}B_1}
-
P_{\Sigma_{\bm X}^{1/2}B_2}
\right\|_{\mathrm{op}}.
\]
\end{lemma}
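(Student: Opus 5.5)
The plan is to build an explicit isometry between the $L_2(P_{\bm X})$ score space and Euclidean space, and then read the projection identity off the fact that conjugation by a unitary preserves operator norm. Consider the closed subspace
\[
\mathcal H=\{\bm w^\top\bm X:\bm w\in\mathbb R^p\}\subset L_2(P_{\bm X}).
\]
Since $\bm X$ is centered,
\[
\langle \bm w_1^\top\bm X,\bm w_2^\top\bm X\rangle_{L_2}=\bm w_1^\top\Sigma_{\bm X}\bm w_2
=(\Sigma_{\bm X}^{1/2}\bm w_1)^\top(\Sigma_{\bm X}^{1/2}\bm w_2).
\]
Hence the map $V:\mathcal H\to\operatorname{col}(\Sigma_{\bm X}^{1/2})$, defined by $V(\bm w^\top\bm X)=\Sigma_{\bm X}^{1/2}\bm w$, is well defined. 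Indeed, if $\bm w^\top\bm X=0$ in $L_2$, then $\bm w^\top\Sigma_{\bm X}\bm w=0$, so $\Sigma_{\bm X}^{1/2}\bm w=0$. The map is also linear, surjective, and inner-product preserving, so it is a unitary between Hilbert spaces.

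Next I will verify the key identity: for each $j$, $V$ maps $\mathcal S(B_j)$ onto $\operatorname{col}(\Sigma_{\bm X}^{1/2}B_j)$, since $V((B_j\bm\xi)^\top\bm X)=\Sigma_{\bm X}^{1/2}B_j\bm\xi$. Because unitaries carry orthogonal projections to orthogonal projections, this gives
\[
V P_{\mathcal S(B_j)}|_{\mathcal H}V^{-1}=P_{\Sigma_{\bm X}^{1/2}B_j}\big|_{\operatorname{col}(\Sigma_{\bm X}^{1/2})}.
\]
Conjugation by a unitary preserves operator norm, so the norm of the difference of the restricted projections on $\mathcal H$ equals the norm of the difference of the Euclidean projections restricted to $\operatorname{col}(\Sigma_{\bm X}^{1/2})$.

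The main obstacle is extending both sides to the full ambient spaces. On the $L_2$ side this is immediate: both projections annihilate $\mathcal H^\perp$, because each $\mathcal S(B_j)$ lies inside $\mathcal H$. The difference therefore vanishes on $\mathcal H^\perp$ and preserves $\mathcal H$, and its norm equals the norm of its restriction to $\mathcal H$. On the Euclidean side, $\operatorname{col}(\Sigma_{\bm X}^{1/2}B_j)\subseteq\operatorname{col}(\Sigma_{\bm X}^{1/2})$. Both Euclidean projections thus vanish on the orthogonal complement $\ker\Sigma_{\bm X}^{1/2}$ and map into $\operatorname{col}(\Sigma_{\bm X}^{1/2})$. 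The same block-diagonal argument then shows that the full Euclidean operator norm equals the restricted one. Chaining these equalities yields the claim.

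I would also remark that when $\Sigma_{\bm X}$ is positive definite, as in the applications, both complements are trivial and the extension step is vacuous. No rank assumption on the $B_j$ is needed, because the argument works with column spaces directly.
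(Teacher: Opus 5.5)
Your proposal is correct and follows essentially the same route as the paper, which likewise builds the isometry $\Phi(\bm\beta^\top\bm X)=\Sigma_{\bm X}^{1/2}\bm\beta$ from $\{\bm\beta^\top\bm X:\bm\beta\in\mathbb R^p\}$ onto $\operatorname{col}(\Sigma_{\bm X}^{1/2})$, shows it maps $\mathcal S(B)$ onto $\operatorname{col}(\Sigma_{\bm X}^{1/2}B)$, and finishes by noting that both projection differences vanish on the respective orthogonal complements. The only cosmetic difference is how the projections are matched: the paper verifies $\Phi P_{\mathcal S(B)}=P_{\Sigma_{\bm X}^{1/2}B}\Phi$ through the nearest-point characterization of orthogonal projections, whereas you invoke unitary conjugation directly.
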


\begin{proof}
Let
\(
\mathcal H_{\bm X}
=
\left\{
{\bm{\beta}}^\top\bm X:{\bm{\beta}}\in\mathbb R^p
\right\}
\)
be the space of linear scores, and let
\(
\mathcal C_{\bm X}
=
\operatorname{col}(\Sigma_{\bm X}^{1/2}).
\)
Define
\[
\Phi:\mathcal H_{\bm X}\longrightarrow\mathcal C_{\bm X},
\qquad
\Phi({\bm{\beta}}^\top\bm X)
=
\Sigma_{\bm X}^{1/2}{\bm{\beta}}.
\]
For any \({\bm{\beta}},{\bm{\gamma}}\in\mathbb R^p\),
\[
\begin{aligned}
\left\|
{\bm{\beta}}^\top\bm X-{\bm{\gamma}}^\top\bm X
\right\|_{L_2(P_{\bm X})}^2
&=
\mathbb E\!\left[
\{({\bm{\beta}}-{\bm{\gamma}})^\top\bm X\}^2
\right] \\
&=
({\bm{\beta}}-{\bm{\gamma}})^\top
\Sigma_{\bm X}
({\bm{\beta}}-{\bm{\gamma}}) \\
&=
\left\|
\Sigma_{\bm X}^{1/2}{\bm{\beta}}
-
\Sigma_{\bm X}^{1/2}{\bm{\gamma}}
\right\|_2^2.
\end{aligned}
\]
Thus, \(\Phi\) preserves distances. This calculation also shows that
\(\Phi\) is well defined: if two coefficient vectors produce the same
score in \(L_2(P_{\bm X})\), then they produce the same vector under
\(\Phi\). The map is one-to-one on \(L_2(P_{\bm X})\) equivalence classes
and maps onto \(\mathcal C_{\bm X}\). It is therefore an isometry.

For any \(B\in\mathbb R^{p\times2}\),
\[
\begin{aligned}
\Phi\{\mathcal S(B)\}
&=
\left\{
\Phi\bigl((B{\bm{\xi}})^\top\bm X\bigr):
{\bm{\xi}}\in\mathbb R^2
\right\} \\
&=
\left\{
\Sigma_{\bm X}^{1/2}B{\bm{\xi}}:
{\bm{\xi}}\in\mathbb R^2
\right\} \\
&=
\operatorname{col}(\Sigma_{\bm X}^{1/2}B).
\end{aligned}
\]

Fix \(f\in\mathcal H_{\bm X}\). The orthogonal projection
\(P_{\mathcal S(B)}f\) is the unique element of
\(\mathcal S(B)\) closest to \(f\):
\[
P_{\mathcal S(B)}f
=
\underset{s\in\mathcal S(B)}{\arg\min}\,
\|f-s\|_{L_2(P_{\bm X})}.
\]
Because \(\Phi\) maps \(\mathcal S(B)\) onto
\(\operatorname{col}(\Sigma_{\bm X}^{1/2}B)\) and preserves distances,
\[
\begin{aligned}
\Phi\left\{
P_{\mathcal S(B)}f
\right\}
&=
\Phi\left\{
\underset{s\in\mathcal S(B)}{\arg\min}\,
\|f-s\|_{L_2(P_{\bm X})}
\right\} \\
&=
\underset{s\in\mathcal S(B)}{\arg\min}\,
\|\Phi(f)-\Phi(s)\|_2 \\
&=
\underset{\bm v\in
\operatorname{col}(\Sigma_{\bm X}^{1/2}B)}
{\arg\min}\,
\|\Phi(f)-\bm v\|_2 \\
&=
P_{\Sigma_{\bm X}^{1/2}B}\Phi(f).
\end{aligned}
\]
Applying this identity to \(B_1\) and \(B_2\) gives
\[
\Phi\left\{
\left(
P_{\mathcal S(B_1)}
-
P_{\mathcal S(B_2)}
\right)f
\right\}
=
\left(
P_{\Sigma_{\bm X}^{1/2}B_1}
-
P_{\Sigma_{\bm X}^{1/2}B_2}
\right)\Phi(f).
\]
Because \(\Phi\) preserves norms,
\[
\begin{aligned}
&
\left\|
\left(
P_{\mathcal S(B_1)}
-
P_{\mathcal S(B_2)}
\right)f
\right\|_{L_2(P_{\bm X})} \\
&\qquad=
\left\|
\left(
P_{\Sigma_{\bm X}^{1/2}B_1}
-
P_{\Sigma_{\bm X}^{1/2}B_2}
\right)\Phi(f)
\right\|_2.
\end{aligned}
\]
Moreover, \(\Phi\) maps the unit-norm elements of
\(\mathcal H_{\bm X}\) onto the unit-norm elements of
\(\mathcal C_{\bm X}\). Taking the supremum over these elements therefore
gives equality of the two operator norms.

Finally, the projection difference on the left is zero on
\(\mathcal H_{\bm X}^{\perp}\), while the projection difference on the
right is zero on \(\mathcal C_{\bm X}^{\perp}\). Hence, restricting the
suprema to \(\mathcal H_{\bm X}\) and \(\mathcal C_{\bm X}\) does not
change the operator norms, which proves the result.
\end{proof}

\subsection{Lemma \ref{lem:proxy-recovery}}

\begin{replemma}{lem:proxy-recovery}
If $p_{\mathrm{eff}}>0$, then
\(
\operatorname{dist}^2
\left(
\mathcal U,
\mathcal S_\eta
\right)
\in
O\left(
\frac{1}{p_{\mathrm{eff}}}
\right).
\)
\end{replemma}

\begin{proof}
Because \(\operatorname{var}(\bm U_i^\star)=I_2\), the latent adjustment
space \(\mathcal U\) has dimension two. Because \(\mathcal S_\eta\) is
spanned by the two components of \(\bm S_{\eta,i}\), it has dimension at
most two.

We first show that \(p_{\mathrm{eff}}>0\) implies that
\(\mathcal S_\eta\) has dimension two. Suppose instead that it had
dimension less than two. The projections of the two components of
\(\bm U_i^\star\) onto \(\mathcal S_\eta\) would then be linearly
dependent. Hence, there would exist \({\bm{\xi}}\in\mathbb R^2\), with
\(\|{\bm{\xi}}\|_2=1\), such that
\[
P_{\mathcal S_\eta}
\left(
{\bm{\xi}}^\top\bm U_i^\star
\right)
=
0.
\]
Because \(\operatorname{var}(\bm U_i^\star)=I_2\), this direction has
unit variance, and therefore
\[
\left\|
(I-P_{\mathcal S_\eta})
({\bm{\xi}}^\top\bm U_i^\star)
\right\|_{L_2}^2
=
1.
\]
This would imply
\(
\lambda_{\max}(\Sigma_{\bm U^\star\mid\bm S_\eta})=1
\)
and hence \(p_{\mathrm{eff}}=0\), a contradiction. Thus,
\(\mathcal S_\eta\) has dimension two.

The spaces \(\mathcal U\) and \(\mathcal S_\eta\) therefore have the same
dimension. For two subspaces of the same finite dimension, we can write
\[
\begin{aligned}
\operatorname{dist}^2
\left(
\mathcal U,
\mathcal S_\eta
\right)
&=
\sup_{\|{\bm{\xi}}\|_2=1}
\left\|
(I-P_{\mathcal S_\eta})
({\bm{\xi}}^\top\bm U_i^\star)
\right\|_{L_2}^2\\
&=
\sup_{\|{\bm{\xi}}\|_2=1}
{\bm{\xi}}^\top
\Sigma_{\bm U^\star\mid\bm S_\eta}
{\bm{\xi}} \\
&=
\lambda_{\max}
\left(
\Sigma_{\bm U^\star\mid\bm S_\eta}
\right).
\end{aligned}
\]
Finally, the definition of \(p_{\mathrm{eff}}\) gives
\[
\lambda_{\max}
\left(
\Sigma_{\bm U^\star\mid\bm S_\eta}
\right)
=
\frac{1}{1+p_{\mathrm{eff}}}
\leq
\frac{1}{p_{\mathrm{eff}}}.
\]
Hence,
\[
\operatorname{dist}^2
\left(
\mathcal U,
\mathcal S_\eta
\right)
\in
O\left(
\frac{1}{p_{\mathrm{eff}}}
\right).
\]
\end{proof}

\subsection{Theorem \ref{thm:proxy-sample-tradeoff}}

\begin{reptheorem}{thm:proxy-sample-tradeoff}
Suppose the conditions of Lemmas~\ref{lem:spectral-subspace-error} and
\ref{lem:proxy-recovery} hold. Then
\[
\operatorname{dist}^2
\left(
\mathcal U,
\widehat{\mathcal S}_\eta
\right)
\in
O_p\left(
\frac{1}{p_{\mathrm{eff}}}
+
\frac{d_{\mathrm{eff}}(\eta)}{n\Delta^2}
\right).
\]
\end{reptheorem}
\begin{proof}
Lemmas~\ref{lem:proxy-recovery} and
\ref{lem:spectral-subspace-error} give
\[
\operatorname{dist}^2
\left(
\mathcal U,
\mathcal S_\eta
\right)
\in
O\left(
\frac{1}{p_{\mathrm{eff}}}
\right)
\]
and
\[
\operatorname{dist}^2
\left(
\mathcal S_\eta,
\widehat{\mathcal S}_\eta
\right)
\in
O_p\left(
\frac{d_{\mathrm{eff}}(\eta)}{n\Delta^2}
\right),
\]
respectively. By the triangle inequality for the operator norm,
\[
\begin{aligned}
\operatorname{dist}
\left(
\mathcal U,
\widehat{\mathcal S}_\eta
\right)
&=
\left\|
P_{\mathcal U}
-
P_{\widehat{\mathcal S}_\eta}
\right\|_{\mathrm{op}} \\
&\leq
\left\|
P_{\mathcal U}
-
P_{\mathcal S_\eta}
\right\|_{\mathrm{op}}
+
\left\|
P_{\mathcal S_\eta}
-
P_{\widehat{\mathcal S}_\eta}
\right\|_{\mathrm{op}} \\
&=
\operatorname{dist}
\left(
\mathcal U,
\mathcal S_\eta
\right)
+
\operatorname{dist}
\left(
\mathcal S_\eta,
\widehat{\mathcal S}_\eta
\right).
\end{aligned}
\]
Using \((a+b)^2\leq2a^2+2b^2\) therefore gives
\[
\operatorname{dist}^2
\left(
\mathcal U,
\widehat{\mathcal S}_\eta
\right)
\leq
2\operatorname{dist}^2
\left(
\mathcal U,
\mathcal S_\eta
\right) +
2\operatorname{dist}^2
\left(
\mathcal S_\eta,
\widehat{\mathcal S}_\eta
\right) \in
O_p\left(
\frac{1}{p_{\mathrm{eff}}}
+
\frac{d_{\mathrm{eff}}(\eta)}{n\Delta^2}
\right).
\]
\end{proof}

The unconditional reconstruction bound used in
Assumption~\ref{assump:causal}(b) follows from
Theorem~\ref{thm:proxy-sample-tradeoff} because
\[
\begin{aligned}
\mathbb E\left[
\left\|
\bm U_i^\star-\mathcal R_n(\widehat{\bm S}_{\eta,i})
\right\|^2
\right]
&=
\sum_{j=1}^2
\left\|
\left(
I-P_{\widehat{\mathcal S}_\eta}
\right)
U_{ij}^\star
\right\|_{L_2}^2
\\
&\leq
\sum_{j=1}^2
\left\|
P_{\mathcal U}
-
P_{\widehat{\mathcal S}_\eta}
\right\|_{\mathrm{op}}^2
\left\|
U_{ij}^\star
\right\|_{L_2}^2
\\
&=
2\operatorname{dist}^2
\left(
\mathcal U,\widehat{\mathcal S}_\eta
\right)
\in
O_p\left(
\frac{1}{p_{\mathrm{eff}}}
+
\frac{d_{\mathrm{eff}}(\eta)}{n\Delta^2}
\right),
\end{aligned}
\]
where
\(
\operatorname{var}(\bm U_i^\star)=I_2
\)
implies
\(
\|U_{ij}^\star\|^2_{L_2}=1
\).

\subsection{Theorem \ref{thm:causal-consistency-compact-plugin}}

\begin{reptheorem}{thm:causal-consistency-compact-plugin}
Suppose Assumption~\ref{assump:causal} holds. Then
\[
\sup_{t\in\mathcal T_0}
\left|
\widehat\theta_{\widehat{\bm S}_\eta}^{\mathrm{marg}}(t)-\theta(t)
\right|
\in
O_p\left[
\left\{
\frac{1}{p_{\mathrm{eff}}}
+
\frac{d_{\mathrm{eff}}(\eta)}{n\Delta^2}
\right\}^{1/2}
\right]
+
o_p(1),
\]
and
\[
\sup_{t\in\mathcal T_0,\;\bm s\in\mathcal S_0}
\left|
\widehat\theta_{\widehat{\bm S}_\eta}^{\mathrm{cond}}(t;\bm s)
-
\theta_{\widehat{\bm S}_\eta}^{\mathrm{cond}}(t;\bm s)
\right|
\in
O_p\left[
\left\{
\frac{1}{p_{\mathrm{eff}}}
+
\frac{d_{\mathrm{eff}}(\eta)}{n\Delta^2}
\right\}^{1/2}
\right]
+
o_p(1).
\]
\end{reptheorem}

\begin{proof}
For brevity, set
\[
\delta_{n,p}^2
=
\frac{1}{p_{\mathrm{eff}}}
+
\frac{d_{\mathrm{eff}}(\eta)}{n\Delta^2}.
\]
All population expectations in this proof are taken under the evaluation law
conditional on $\mathcal D_n$, as defined before
Assumption~\ref{assump:causal}. The empirical score map and the downstream
estimator may nevertheless use the same observations in $\mathcal D_n$.

Let
\(
\widetilde{\bm U}_i^\star
=
\mathcal R_n(\widehat{\bm S}_{\eta,i}).
\)
By Assumption~\ref{assump:causal}(b),
\[
\mathbb E\left[
\|\bm U_i^\star-\widetilde{\bm U}_i^\star\|^2
\right]
\in
O_p(\delta_{n,p}^2).
\]
Therefore, by Cauchy--Schwarz,
\[
\mathbb E\left[
\|\bm U_i^\star-\widetilde{\bm U}_i^\star\|
\right]
\leq
\left\{
\mathbb E\left[
\|\bm U_i^\star-\widetilde{\bm U}_i^\star\|^2
\right]
\right\}^{1/2}
\in
O_p(\delta_{n,p}).
\]
The conditional part of Assumption~\ref{assump:causal}(b) similarly gives
\[
\sup_{\bm s\in\mathcal S_0}
\mathbb E\left[
\|\bm U_i^\star-\widetilde{\bm U}_i^\star\|
\mid \widehat{\bm S}_{\eta,i}=\bm s
\right]
\in
O_p(\delta_{n,p}).
\]

We first bound the population bias in the score-indexed response surface. The
definition of $m_{\widehat{\bm S}_\eta}$ gives
\[
m_{\widehat{\bm S}_\eta}(t,\bm s)
=
\mathbb E(Y_i\mid T_i=t,\widehat{\bm S}_{\eta,i}=\bm s).
\]
Consistency in Assumption~\ref{assump:causal}(a) gives
\[
m_{\widehat{\bm S}_\eta}(t,\bm s)
=
\mathbb E\{Y_i(t)\mid T_i=t,\widehat{\bm S}_{\eta,i}=\bm s\}.
\]
Applying Assumption~\ref{assump:causal}(a) yields
\[
\begin{aligned}
\mathbb E\{\mu(t,\bm U_i^\star,\bm s)
\mid T_i=t,\widehat{\bm S}_{\eta,i}=\bm s\} &= \mathbb E\{ \mathbb{E}(Y_i(t) \mid \bm U_i^\star, \widehat{\bm S}_{\eta,i} = \bm s)
\mid T_i=t,\widehat{\bm S}_{\eta,i}=\bm s\} \\
& = \mathbb E\{ \mathbb{E}(Y_i(t) \mid \bm U_i^\star, T_i = t, \widehat{\bm S}_{\eta,i} = \bm s)
\mid T_i=t,\widehat{\bm S}_{\eta,i}=\bm s\} \\
& = \mathbb E\{ Y_i(t) \mid T_i=t,\widehat{\bm S}_{\eta,i}=\bm s\}\\
& = m_{\widehat{\bm S}_\eta}(t,\bm s),
\end{aligned}
\]
where the second equality follows from Lemma \ref{lem:score-augmented-latent-exchangeability}.
Likewise, the definition of $\theta_{\widehat{\bm S}_\eta}^{\mathrm{cond}}$ and the law of
iterated expectations give
\[
\theta_{\widehat{\bm S}_\eta}^{\mathrm{cond}}(t;\bm s)
=
\mathbb E\{\mu(t,\bm U_i^\star,\bm s)\mid \widehat{\bm S}_{\eta,i}=\bm s\}.
\]
Therefore,
\[
 m_{\widehat{\bm S}_\eta}(t,\bm s)-\theta_{\widehat{\bm S}_\eta}^{\mathrm{cond}}(t;\bm s)
=
C_1(t,\bm s)+C_2(t,\bm s),
\]
where
\[
C_1(t,\bm s)
=
\mathbb E\{\mu(t,\bm U_i^\star,\bm s)-\mu(t,\mathcal R_n(\bm s),\bm s)
\mid T_i=t,\widehat{\bm S}_{\eta,i}=\bm s\},
\]
and
\[
C_2(t,\bm s)
=
\mathbb E\{\mu(t,\mathcal R_n(\bm s),\bm s)-\mu(t,\bm U_i^\star,\bm s)
\mid \widehat{\bm S}_{\eta,i}=\bm s\}.
\]
By the Lipschitz condition in Assumption~\ref{assump:causal}(c),
\[
\begin{aligned}
|C_2(t,\bm s)|
&\leq \mathbb E\{|\mu(t,\mathcal R_n(\bm s),\bm s)-\mu(t,\bm U_i^\star,\bm s) |
\mid \widehat{\bm S}_{\eta,i}=\bm s\}\\ &\leq
L_\mu
\mathbb E\{\|\bm U_i^\star-\mathcal R_n(\bm s)\|\mid \widehat{\bm S}_{\eta,i}=\bm s\}.
\end{aligned}
\]
Thus
\[
\sup_{t\in\mathcal T_0,\;\bm s\in\mathcal S_0}|C_2(t,\bm s)|
\in
O_p(\delta_{n,p}).
\]
To control \(C_1(t,\bm s)\), use the bounded density-ratio condition in
Assumption~\ref{assump:causal}(d). For any nonnegative measurable function
\(\ell(\bm U_i^\star)\),
\[
\mathbb E\{\ell(\bm U_i^\star)\mid T_i=t,\widehat{\bm S}_{\eta,i}=\bm s\}
\leq
K_\omega
\mathbb E\{\ell(\bm U_i^\star)\mid \widehat{\bm S}_{\eta,i}=\bm s\}.
\]
Applying this with
\[
\ell(\bm U_i^\star)=|\mu(t,\bm U_i^\star,\bm s)-\mu(t,\mathcal R_n(\bm s),\bm s)|
\]
and using Lipschitz continuity gives
\[
|C_1(t,\bm s)|
\leq
K_\omega L_\mu
\mathbb E\{\|\bm U_i^\star-\mathcal R_n(\bm s)\|\mid \widehat{\bm S}_{\eta,i}=\bm s\}.
\]
Hence
\[
\sup_{t\in\mathcal T_0,\;\bm s\in\mathcal S_0}|C_1(t,\bm s)|
\in
O_p(\delta_{n,p}).
\]
Combining the bounds for \(C_1(t,\bm s)\) and \(C_2(t,\bm s)\) yields
\[
\sup_{t\in\mathcal T_0,\;\bm s\in\mathcal S_0}
\left|
 m_{\widehat{\bm S}_\eta}(t,\bm s)-\theta_{\widehat{\bm S}_\eta}^{\mathrm{cond}}(t;\bm s)
\right|
\in
O_p(\delta_{n,p}).
\]

The marginal bias follows by averaging the score-indexed bias over the
distribution of \(\widehat{\bm S}_{\eta,i}\). Specifically,
\[
\theta_{\widehat{\bm S}_\eta}^{\mathrm{marg}}(t)-\theta(t)
=
\mathbb E\left[
m_{\widehat{\bm S}_\eta}(t,\widehat{\bm S}_{\eta,i})
-
\theta_{\widehat{\bm S}_\eta}^{\mathrm{cond}}(t;\widehat{\bm S}_{\eta,i})
\right].
\]
Applying the same decomposition used for the score-indexed target, with
\(\bm s=\widehat{\bm S}_{\eta,i}\), gives
\[
\begin{aligned}
&
m_{\widehat{\bm S}_\eta}(t,\widehat{\bm S}_{\eta,i})
-
\theta_{\widehat{\bm S}_\eta}^{\mathrm{cond}}(t;\widehat{\bm S}_{\eta,i})
\\
&=
\mathbb E\left[
\mu(t,\bm U_i^\star,\widehat{\bm S}_{\eta,i})
-
\mu\{t,\mathcal R_n(\widehat{\bm S}_{\eta,i}),\widehat{\bm S}_{\eta,i}\}
\mid T_i=t,\widehat{\bm S}_{\eta,i}
\right]
\\
&\quad -
\mathbb E\left[
\mu(t,\bm U_i^\star,\widehat{\bm S}_{\eta,i})
-
\mu\{t,\mathcal R_n(\widehat{\bm S}_{\eta,i}),\widehat{\bm S}_{\eta,i}\}
\mid \widehat{\bm S}_{\eta,i}
\right].
\end{aligned}
\]
By the Lipschitz condition,
\[
\begin{aligned}
&
\left|
m_{\widehat{\bm S}_\eta}(t,\widehat{\bm S}_{\eta,i})
-
\theta_{\widehat{\bm S}_\eta}^{\mathrm{cond}}(t;\widehat{\bm S}_{\eta,i})
\right|
\\
&\leq
L_\mu
\left[
\mathbb E\left\{
\left\|\bm U_i^\star-\mathcal R_n(\widehat{\bm S}_{\eta,i})\right\|
\mid T_i=t,\widehat{\bm S}_{\eta,i}
\right\}
+
\mathbb E\left\{
\left\|\bm U_i^\star-\mathcal R_n(\widehat{\bm S}_{\eta,i})\right\|
\mid \widehat{\bm S}_{\eta,i}
\right\}
\right].
\end{aligned}
\]
Because
\[
\theta(t)
=
\mathbb E\left[
\theta_{\widehat{\bm S}_\eta}^{\mathrm{cond}}
(t;\widehat{\bm S}_{\eta,i})
\right],
\]
we have
\[
\begin{aligned}
\left|
\theta_{\widehat{\bm S}_\eta}^{\mathrm{marg}}(t)-\theta(t)
\right|
&=
\left|
\mathbb E\left[
m_{\widehat{\bm S}_\eta}(t,\widehat{\bm S}_{\eta,i})
-
\theta_{\widehat{\bm S}_\eta}^{\mathrm{cond}}
(t;\widehat{\bm S}_{\eta,i})
\right]
\right| \\
&\leq
\mathbb E\left[
\left|
m_{\widehat{\bm S}_\eta}(t,\widehat{\bm S}_{\eta,i})
-
\theta_{\widehat{\bm S}_\eta}^{\mathrm{cond}}
(t;\widehat{\bm S}_{\eta,i})
\right|
\right].
\end{aligned}
\]
Applying the preceding pointwise bound gives
\[
\begin{aligned}
\left|
\theta_{\widehat{\bm S}_\eta}^{\mathrm{marg}}(t)-\theta(t)
\right|
&\leq
L_\mu\,
\mathbb E\left[
\mathbb E\left\{
\left\|
\bm U_i^\star-
\mathcal R_n(\widehat{\bm S}_{\eta,i})
\right\|
\mid
T_i=t,\widehat{\bm S}_{\eta,i}
\right\}
\right] \\
&\quad+
L_\mu\,
\mathbb E\left[
\mathbb E\left\{
\left\|
\bm U_i^\star-
\mathcal R_n(\widehat{\bm S}_{\eta,i})
\right\|
\mid
\widehat{\bm S}_{\eta,i}
\right\}
\right].
\end{aligned}
\]
The outer expectations here are taken over the marginal evaluation-law
distribution of \(\widehat{\bm S}_{\eta,i}\). By the bounded density-ratio condition,
the first term is at most
\[
K_\omega L_\mu
\mathbb E\left[
\mathbb E\left\{
\left\|
\bm U_i^\star-
\mathcal R_n(\widehat{\bm S}_{\eta,i})
\right\|
\mid
\widehat{\bm S}_{\eta,i}
\right\}
\right].
\]
The law of iterated expectations then gives
\[
\mathbb E\left[
\mathbb E\left\{
\left\|
\bm U_i^\star-
\mathcal R_n(\widehat{\bm S}_{\eta,i})
\right\|
\mid
\widehat{\bm S}_{\eta,i}
\right\}
\right]
=
\mathbb E\left[
\left\|
\bm U_i^\star-
\mathcal R_n(\widehat{\bm S}_{\eta,i})
\right\|
\right].
\]
Consequently,
\[
\left|
\theta_{\widehat{\bm S}_\eta}^{\mathrm{marg}}(t)-\theta(t)
\right|
\leq
L_\mu(1+K_\omega)
\mathbb E\left[
\left\|
\bm U_i^\star-
\mathcal R_n(\widehat{\bm S}_{\eta,i})
\right\|
\right].
\]
Finally, Jensen's inequality and the average reconstruction bound in
Assumption~\ref{assump:causal}(b) give
\[
\mathbb E
\left[
\left\|\bm U_i^\star-\mathcal R_n(\widehat{\bm S}_{\eta,i})\right\|
\right]
\leq
\left[
\mathbb E
\left\{
\left\|\bm U_i^\star-\mathcal R_n(\widehat{\bm S}_{\eta,i})\right\|^2
\right\}
\right]^{1/2}
\in
O_p(\delta_{n,p}).
\]
Therefore,
\[
\sup_{t\in\mathcal T_0}
\left|
\theta_{\widehat{\bm S}_\eta}^{\mathrm{marg}}(t)-\theta(t)
\right|
\in
O_p(\delta_{n,p}).
\]

Finally, decompose the downstream estimators as
\[
\widehat\theta_{\widehat{\bm S}_\eta}^{\mathrm{marg}}(t)-\theta(t)
=
\left\{
\widehat\theta_{\widehat{\bm S}_\eta}^{\mathrm{marg}}(t)
-
\theta_{\widehat{\bm S}_\eta}^{\mathrm{marg}}(t)
\right\}
+
\left\{
\theta_{\widehat{\bm S}_\eta}^{\mathrm{marg}}(t)-\theta(t)
\right\},
\]
and
\[
\widehat\theta_{\widehat{\bm S}_\eta}^{\mathrm{cond}}(t;\bm s)
-
\theta_{\widehat{\bm S}_\eta}^{\mathrm{cond}}(t;\bm s)
=
\left\{
\widehat\theta_{\widehat{\bm S}_\eta}^{\mathrm{cond}}(t;\bm s)
-
 m_{\widehat{\bm S}_\eta}(t,\bm s)
\right\}
+
\left\{
 m_{\widehat{\bm S}_\eta}(t,\bm s)
-
\theta_{\widehat{\bm S}_\eta}^{\mathrm{cond}}(t;\bm s)
\right\}.
\]
By Assumption~\ref{assump:causal}(e), the first term in each decomposition is
\(o_p(1)\) uniformly over its index set. This assumption concerns the actual
same-sample downstream estimators and therefore accounts for reuse of
$\mathcal D_n$ at both stages. The population bias bounds above give
the second terms as \(O_p(\delta_{n,p})\). Therefore,
\[
\sup_{t\in\mathcal T_0}
\left|
\widehat\theta_{\widehat{\bm S}_\eta}^{\mathrm{marg}}(t)-\theta(t)
\right|
\in
O_p(\delta_{n,p})+o_p(1),
\]
and
\[
\sup_{t\in\mathcal T_0,\;\bm s\in\mathcal S_0}
\left|
\widehat\theta_{\widehat{\bm S}_\eta}^{\mathrm{cond}}(t;\bm s)
-
\theta_{\widehat{\bm S}_\eta}^{\mathrm{cond}}(t;\bm s)
\right|
\in
O_p(\delta_{n,p})+o_p(1).
\]
Substituting the definition of $\delta_{n,p}^2$ gives the stated
response-function bounds.
\end{proof}

\begin{lemma}
\label{lem:score-augmented-latent-exchangeability}
Under \eqref{eq:linear-treatment-model},
\eqref{eq:linear-proxy-model}, and
\eqref{eq:conditional-mean-restriction}, the evaluation-law convention
implies that, for every \(t\in\mathcal T_0\),
\[
\mathbb E\{Y_i(t)\mid
T_i=t,\bm U_i^\star,\widehat{\bm S}_{\eta,i}\}
=
\mathbb E\{Y_i(t)\mid
\bm U_i^\star,\widehat{\bm S}_{\eta,i}\}.
\]
\end{lemma}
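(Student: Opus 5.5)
The plan is to derive the claim from the conditional mean restriction \eqref{eq:conditional-mean-restriction} given the full latent state $\bm C_i$, and then coarsen the conditioning to $(\bm U_i^\star,\widehat{\bm S}_{\eta,i})$. Under the evaluation-law convention, $\widehat s_\eta$ is a fixed measurable map determined by $\mathcal D_n$, and the population draw $O=(Y,T,\bm X)$ is independent of $\mathcal D_n$. Hence $\widehat{\bm S}_{\eta,i}=\widehat s_\eta(\bm X_i)$ is simply a fixed function of the proxies, and $\bm U_i^\star=\operatorname{var}(\bm U_i)^{-1/2}D^\top\bm C_i$ is a fixed function of $\bm C_i$. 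Write $\bm W_i=(\bm U_i^\star,\widehat{\bm S}_{\eta,i})$.

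\textbf{Step 1 (joint restriction given $\bm C_i$ and the proxies).} I will establish
\[
\mathbb E\{Y_i(t)\mid T_i,\bm C_i,\bm X_i\}=\mathbb E\{Y_i(t)\mid\bm C_i\}.
\]
By the nondifferential proxy condition \eqref{eq:linear-proxy-model}, $\bm X_i\ci\{T_i,Y_i(t)\}\mid\bm C_i$. The contraction and weak-union properties then give
\[
Y_i(t)\ci\bm X_i\mid(T_i,\bm C_i),
\]
so $\mathbb E\{Y_i(t)\mid T_i,\bm C_i,\bm X_i\}=\mathbb E\{Y_i(t)\mid T_i,\bm C_i\}$. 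Applying \eqref{eq:conditional-mean-restriction} removes $T_i$ and yields the display. Write $\nu_t(\bm C_i)$ for the common value. An analogous use of \eqref{eq:linear-proxy-model} gives $T_i\ci\bm X_i\mid\bm C_i$.

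\textbf{Step 2 (coarsening to $\bm W_i$).} Since $\bm W_i$ is measurable with respect to $(\bm C_i,\bm X_i)$, the tower property gives
\[
\mathbb E\{Y_i(t)\mid T_i,\bm W_i\}=\mathbb E\{\nu_t(\bm C_i)\mid T_i,\bm W_i\}
\quad\text{and}\quad
\mathbb E\{Y_i(t)\mid\bm W_i\}=\mathbb E\{\nu_t(\bm C_i)\mid\bm W_i\}.
\]
It therefore suffices to show $T_i\ci\bm C_i\mid\bm W_i$, so that conditioning additionally on $T_i$ does not change the conditional law of $\nu_t(\bm C_i)$.

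\textbf{Step 3 (balancing given $\bm W_i$).} Three facts combine here:
\begin{itemize}
\item From \eqref{eq:linear-treatment-model}, the law of $T_i$ given $\bm C_i$ depends only on $U_{T,i}$.
\item From Step 1, $T_i\ci\bm X_i\mid\bm C_i$.
\item $U_{T,i}$ is a linear function of $\bm U_i^\star$, because $\bm U_i^\star$ is an invertible linear transform of $(U_{T,i},U_{R,i})$.
\end{itemize}
For a bounded measurable $g$, these give
\[
\mathbb E\{g(T_i)\mid\bm C_i,\bm X_i\}=\mathbb E\{g(T_i)\mid U_{T,i}\}=:\phi_g(U_{T,i}).
\]
Because $\phi_g(U_{T,i})$ is a function of $\bm W_i$, the tower property shows that $\mathbb E\{g(T_i)\mid\bm C_i,\bm W_i\}$ equals this same function. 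Consequently $T_i\ci(\bm C_i,\bm X_i)\mid\bm W_i$, and in particular $T_i\ci\bm C_i\mid\bm W_i$. This gives
\[
\mathbb E\{\nu_t(\bm C_i)\mid T_i,\bm W_i\}=\mathbb E\{\nu_t(\bm C_i)\mid\bm W_i\}.
\]
Evaluating at $T_i=t$ proves the claim.

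\textbf{Main obstacle.} The delicate point is justifying that $\widehat{\bm S}_{\eta,i}$ behaves as a deterministic function of $\bm X_i$ alone. This relies entirely on the evaluation-law convention, under which the fitted map is held fixed and the new draw is independent of $\mathcal D_n$; without it, the sample dependence of $\widehat s_\eta$ would break Step 3. Two secondary points also need care. First, the step from $\mathbb E\{Y_i(t)\mid T_i,\bm C_i\}$ to $\mathbb E\{Y_i(t)\mid T_i,\bm C_i,\bm X_i\}$ should be stated via full conditional independence from \eqref{eq:linear-proxy-model}, since only mean exchangeability holds for $T_i$. Second, conditioning on the event $T_i=t$ for continuous treatments must be read as a version of the regular conditional expectation.
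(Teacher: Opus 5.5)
Your proposal is correct and follows essentially the same route as the paper's proof. Under the evaluation-law convention, the nondifferential proxy condition gives $\widehat{\bm S}_{\eta,i}\ci\{T_i,Y_i(t)\}\mid\bm C_i$; the single-index treatment law then yields $T_i\ci\bm C_i\mid(\bm U_i^\star,\widehat{\bm S}_{\eta,i})$; and the tower property applied to $\mathbb E\{Y_i(t)\mid\bm C_i\}$ finishes the argument. Conditioning on all of $\bm X_i$ in Step 1 and invoking weak union explicitly are harmless refinements of the paper's steps, and contraction is not actually needed there.
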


\begin{proof}
Conditional on the fitted sample $\mathcal D_n$, the map
$\widehat s_\eta$ is fixed, and
$\widehat{\bm S}_{\eta,i}
=
\widehat s_\eta(\bm X_i)$
is a measurable function of $\bm X_i$. The conditional proxy-independence
condition in \eqref{eq:linear-proxy-model} therefore implies
\[
\widehat{\bm S}_{\eta,i}
\ci
\{T_i,Y_i(t):t\in\mathcal T\}
\mid
\bm C_i.
\]

Because $\bm U_i^\star$ is an invertible transformation of $\bm U_i$,
conditioning on $\bm U_i^\star$ determines $U_{T,i}$. It follows from the
preceding conditional independence and the treatment model that, for every
measurable set $A$,
\[
\begin{aligned}
P(T_i\in A\mid
\bm C_i,\bm U_i^\star,\widehat{\bm S}_{\eta,i})
&=
P(T_i\in A\mid\bm C_i)\\
&=
P(T_i\in A\mid U_{T,i})\\
&=
P(T_i\in A\mid\bm U_i^\star).
\end{aligned}
\]
Consequently,
$T_i
\ci
\bm C_i
\mid
\bm U_i^\star,\widehat{\bm S}_{\eta,i}.$

The conditional proxy independence in
\eqref{eq:linear-proxy-model}, together with
\eqref{eq:conditional-mean-restriction}, also gives
\[
\begin{aligned}
\mathbb E\{Y_i(t)\mid
T_i,\bm C_i,\bm U_i^\star,\widehat{\bm S}_{\eta,i}\}
&=
\mathbb E\{Y_i(t)\mid T_i,\bm C_i\}=
\mathbb E\{Y_i(t)\mid\bm C_i\},
\end{aligned}
\]
and
\[
\mathbb E\{Y_i(t)\mid
\bm C_i,\bm U_i^\star,\widehat{\bm S}_{\eta,i}\}
=
\mathbb E\{Y_i(t)\mid\bm C_i\}.
\]
Therefore, by iterated expectations,
\[
\begin{aligned}
\mathbb E\{Y_i(t)\mid
T_i,\bm U_i^\star,\widehat{\bm S}_{\eta,i}\}
&=
\mathbb E\left[
\mathbb E\{Y_i(t)\mid\bm C_i\}
\mid
T_i,\bm U_i^\star,\widehat{\bm S}_{\eta,i}
\right] \\
&=
\mathbb E\left[
\mathbb E\{Y_i(t)\mid\bm C_i\}
\mid
\bm U_i^\star,\widehat{\bm S}_{\eta,i}
\right] \\
&=
\mathbb E\left[
\mathbb E\{Y_i(t)\mid
\bm C_i,\bm U_i^\star,\widehat{\bm S}_{\eta,i}\}
\mid
\bm U_i^\star,\widehat{\bm S}_{\eta,i}
\right] \\
&=
\mathbb E\{Y_i(t)\mid
\bm U_i^\star,\widehat{\bm S}_{\eta,i}\},
\end{aligned}
\]
where the second equality uses
\(T_i\ci\bm C_i\mid
(\bm U_i^\star,\widehat{\bm S}_{\eta,i})\). Evaluating at \(T_i=t\) proves the
result.
\end{proof}

\subsection{Theorem \ref{thm:compact-downstream-bootstrap}}

\begin{reptheorem}{thm:compact-downstream-bootstrap}
Suppose \(O_1,\ldots,O_n\) are independent and identically distributed and
Assumption~\ref{assump:L-asymptotic-linearity} holds. Assume the downstream estimator applies ordinary least squares to the fixed-dimensional regression model in \eqref{eq:linear-potential-outcome-model}:
\begin{equation}
Y_i
=
\alpha_0
+
\alpha_T T_i
+
\bm{\alpha}_S^\top \bm S_i
+
T_i\bm{\alpha}_{TS}^\top \bm S_i
+
\varepsilon_i .
\end{equation}
Assume the corresponding population Gram matrix over \(\left(
1,\,
T_i,\,
{\bm S_{\eta,i}}^{\top},\,
T_i{\bm S_{\eta,i}}^{\top}
\right)^\top \) is nonsingular, the regressors have finite fourth moments,
and \(\mathbb E(Y_i^4)<\infty\). Suppose also that \(\bm{\psi}\) is a continuously
differentiable, basis-invariant function of the fitted downstream regression.
Then, for every bounded Lipschitz function
\(h:\mathbb R^q\to\mathbb R\),
\[
\left|
\mathbb E^*
h
\left\{
\sqrt n(\widehat{\bm{\psi}}^*-\widehat{\bm{\psi}})
\right\}
-
\mathbb E
h
\left\{
\sqrt n(\widehat{\bm{\psi}}-\bm{\psi})
\right\}
\right|
\overset{p}{\longrightarrow}
0.
\]
\end{reptheorem}
\begin{proof}
Let $\widehat{\mathcal S}_\eta^*$ be the score space recomputed from a
nonparametric bootstrap sample, and let
$P_{\widehat{\mathcal S}_\eta^*}$ denote its $L_2(\mathbb P_X)$-orthogonal
projection.

Let
\[
\bm Z_{\eta,i}
=
\left(
1,\,
T_i,\,
{\bm S_{\eta,i}}^{\top},\,
T_i{\bm S_{\eta,i}}^{\top}
\right)^\top,
\qquad
\Gamma_Z
=
\mathbb E
\left(
\bm Z_{\eta,i}{\bm Z_{\eta,i}}^{\top}
\right),
\]
and define
\[
\bm m_Z
=
\mathbb E
\left(
\bm Z_{\eta,i} Y_i
\right),
\qquad
\bm\alpha_\eta
=
\Gamma_Z^{-1}\bm m_Z,
\qquad
\varepsilon_{\eta,i}
=
Y_i-{\bm Z_{\eta,i}}^{\top}\bm\alpha_\eta.
\]
The corresponding empirical oracle quantities are
\[
\widetilde\Gamma_Z
=
\frac{1}{n}
\sum_{i=1}^n
\bm Z_{\eta,i}{\bm Z_{\eta,i}}^{\top},
\qquad
\widetilde{\bm m}_Z
=
\frac{1}{n}
\sum_{i=1}^n
\bm Z_{\eta,i} Y_i,
\]
\[
\widetilde{\bm\alpha}
=
\widetilde\Gamma_Z^{-1}
\widetilde{\bm m}_Z,
\qquad
\widetilde\varepsilon_i
=
Y_i-{\bm Z_{\eta,i}}^{\top}\widetilde{\bm\alpha}.
\]
Writing \(\bm{\psi}(\bm\alpha)\) for the reported target as a function of the
fitted downstream regression coefficients, the population target and oracle
sample estimator are
\[
\bm{\psi}
=
\bm{\psi}(\bm\alpha_\eta),
\qquad
\widetilde{\bm{\psi}}
=
\bm{\psi}(\widetilde{\bm\alpha}).
\]
Here, \(\dot\psi_{\bm\alpha}\) denotes the derivative of this target map
evaluated at \(\bm\alpha\). Define the population and empirical oracle
influence functions by
\[
{\bm{\varphi}}_\psi(O_i)
=
\dot\psi_{\bm\alpha_\eta}
\left[
\Gamma_Z^{-1}
\bm Z_{\eta,i}\varepsilon_{\eta,i}
\right]
\]
and
\[
\widehat{\bm{\varphi}}_{\psi i}
=
\dot\psi_{\widetilde{\bm\alpha}}
\left[
\widetilde\Gamma_Z^{-1}
\bm Z_{\eta,i}\widetilde\varepsilon_i
\right],
\]
respectively.

For the nonparametric bootstrap, let \(N_i^*\) be the number of times
observation \(i\) appears in the bootstrap sample, and define
\[
\widetilde\Gamma_Z^*
=
\frac{1}{n}
\sum_{i=1}^n
N_i^*
\bm Z_{\eta,i}{\bm Z_{\eta,i}}^{\top},
\qquad
\widetilde{\bm m}_Z^*
=
\frac{1}{n}
\sum_{i=1}^n
N_i^*
\bm Z_{\eta,i} Y_i,
\]
\[
\widetilde{\bm\alpha}^*
=
\left(
\widetilde\Gamma_Z^*
\right)^{-1}
\widetilde{\bm m}_Z^*,
\qquad
\widetilde{\bm{\psi}}^*
=
\bm{\psi}(\widetilde{\bm\alpha}^*).
\]

First, consider the oracle downstream estimator
\(\widetilde{\bm{\psi}}\), which fits the same downstream OLS procedure as \(\widehat{\bm{\psi}}\), using the same
observed sample, but replaces the estimated COMPACT scores with the population
COMPACT scores \(\bm S_{\eta,i}\). The sampling setup gives independent and identically
distributed observations. The theorem assumes that the population downstream
Gram matrix \(\Gamma_Z\) is nonsingular, that
\[
\mathbb E\|\bm Z_{\eta,i}\|^4<\infty,
\qquad
\mathbb E(Y_i^4)<\infty,
\]
and that \(\bm{\psi}\) is continuously differentiable in the fitted downstream
regression. These are the assumptions of
Lemma~\ref{lem:oracle-downstream-sampling-expansion}. Applying that lemma gives
\[
\sqrt n(\widetilde{\bm{\psi}}-\bm{\psi})
=
\frac{1}{\sqrt n}
\sum_{i=1}^n
{\bm{\varphi}}_\psi(O_i)
+
o_p(1),
\]
where
\[
\mathbb E\{{\bm{\varphi}}_\psi(O_i)\}=\bm 0,
\qquad
\mathbb E\|{\bm{\varphi}}_\psi(O_i)\|^2<\infty.
\]

Second, the same assumptions are exactly those required by
Lemma~\ref{lem:oracle-downstream-bootstrap-expansion}. Applying that lemma
gives
\[
\sqrt n(\widetilde{\bm{\psi}}^*-\widetilde{\bm{\psi}})
=
\frac{1}{\sqrt n}
\sum_{i=1}^n
(N_i^*-1)\widehat{\bm{\varphi}}_{\psi i}
+
o_{p^*}(1)
\]
in \(\mathbb P\)-probability. Thus, the second lemma supplies the conditional
bootstrap expansion of the oracle downstream estimator.

Third, the theorem assumes
Assumption~\ref{assump:L-asymptotic-linearity}. We have already verified the
conditions of
Lemmas~\ref{lem:oracle-downstream-sampling-expansion} and
\ref{lem:oracle-downstream-bootstrap-expansion}. Therefore, all assumptions of
Lemma~\ref{lem:joint-oracle-score-bootstrap} are satisfied. Applying that
lemma gives, for every bounded Lipschitz function \(g\),
\[
\begin{aligned}
&
\left|
\mathbb E^*
g
\left(
\sqrt n(\widetilde{\bm{\psi}}^*-\widetilde{\bm{\psi}}),
\operatorname{vec}
\left\{
\sqrt n
\left(
P_{\widehat{\mathcal S}_\eta^*}
-
P_{\widehat{\mathcal S}_\eta}
\right)
\right\}
\right)
\right.
\\
&\qquad\left.
-
\mathbb E
g
\left(
\sqrt n(\widetilde{\bm{\psi}}-\bm{\psi}),
\operatorname{vec}
\left\{
\sqrt n
\left(
P_{\widehat{\mathcal S}_\eta}
-
P_{\mathcal S_\eta}
\right)
\right\}
\right)
\right|
\overset{p}{\longrightarrow}
0.
\end{aligned}
\]
This lemma is needed because the oracle downstream estimator and the estimated
COMPACT score space are learned from the same observations and therefore need
not be asymptotically independent. It gives their joint bootstrap
approximation.

Fourth, the theorem assumes that \(\bm{\psi}\) is basis invariant. Together with
Assumption~\ref{assump:L-asymptotic-linearity} and the conditions already
verified for
Lemma~\ref{lem:oracle-downstream-sampling-expansion}, this satisfies all
assumptions of Lemma~\ref{lem:generated-score-ols-expansion}. Applying that
lemma gives a deterministic bounded linear map \(\mathcal B\) such that
\begin{equation}
\label{eq:theorem-generated-score-sampling}
\sqrt n(\widehat{\bm{\psi}}-\widetilde{\bm{\psi}})
=
\mathcal B
\left[
\sqrt n
\left(
P_{\widehat{\mathcal S}_\eta}
-
P_{\mathcal S_\eta}
\right)
\right]
+
\bm u_n,
\qquad
\bm u_n=o_p(1),
\end{equation}
and
\begin{equation}
\label{eq:theorem-generated-score-bootstrap}
\begin{aligned}
&
\sqrt n
\left\{
(\widehat{\bm{\psi}}^*-\widetilde{\bm{\psi}}^*)
-
(\widehat{\bm{\psi}}-\widetilde{\bm{\psi}})
\right\}
\\
&=
\mathcal B
\left[
\sqrt n
\left(
P_{\widehat{\mathcal S}_\eta^*}
-
P_{\widehat{\mathcal S}_\eta}
\right)
\right]
+
\bm u_n^*,
\qquad
\bm u_n^*=o_{p^*}(1)
\end{aligned}
\end{equation}
in \(\mathbb P\)-probability. These expansions quantify the additional
variation introduced by replacing the population COMPACT score space with its
estimated version.

We now combine the oracle and generated-score terms. The identity
\[
\widehat{\bm{\psi}}-\bm{\psi}
=
(\widetilde{\bm{\psi}}-\bm{\psi})
+
(\widehat{\bm{\psi}}-\widetilde{\bm{\psi}})
\]
and \eqref{eq:theorem-generated-score-sampling} give
\[
\begin{aligned}
\sqrt n(\widehat{\bm{\psi}}-\bm{\psi})
&=
\sqrt n(\widetilde{\bm{\psi}}-\bm{\psi})
+
\mathcal B
\left[
\sqrt n
\left(
P_{\widehat{\mathcal S}_\eta}
-
P_{\mathcal S_\eta}
\right)
\right]
+
\bm u_n.
\end{aligned}
\]
Define the leading sampling term
\[
\bm L_n
=
\sqrt n(\widetilde{\bm{\psi}}-\bm{\psi})
+
\mathcal B
\left[
\sqrt n
\left(
P_{\widehat{\mathcal S}_\eta}
-
P_{\mathcal S_\eta}
\right)
\right].
\]
Then
\begin{equation}
\label{eq:theorem-full-sampling-decomposition}
\sqrt n(\widehat{\bm{\psi}}-\bm{\psi})
=
\bm L_n+\bm u_n,
\qquad
\bm u_n=o_p(1).
\end{equation}

For the bootstrap estimator, the identity
\[
\begin{aligned}
\widehat{\bm{\psi}}^*-\widehat{\bm{\psi}}
&=
(\widetilde{\bm{\psi}}^*-\widetilde{\bm{\psi}})
+
\left\{
(\widehat{\bm{\psi}}^*-\widetilde{\bm{\psi}}^*)
-
(\widehat{\bm{\psi}}-\widetilde{\bm{\psi}})
\right\}
\end{aligned}
\]
and \eqref{eq:theorem-generated-score-bootstrap} give
\[
\begin{aligned}
\sqrt n(\widehat{\bm{\psi}}^*-\widehat{\bm{\psi}})
&=
\sqrt n(\widetilde{\bm{\psi}}^*-\widetilde{\bm{\psi}})
+
\mathcal B
\left[
\sqrt n
\left(
P_{\widehat{\mathcal S}_\eta^*}
-
P_{\widehat{\mathcal S}_\eta}
\right)
\right]
+
\bm u_n^*.
\end{aligned}
\]
Define the leading bootstrap term
\[
\bm L_n^*
=
\sqrt n(\widetilde{\bm{\psi}}^*-\widetilde{\bm{\psi}})
+
\mathcal B
\left[
\sqrt n
\left(
P_{\widehat{\mathcal S}_\eta^*}
-
P_{\widehat{\mathcal S}_\eta}
\right)
\right].
\]
Then
\begin{equation}
\label{eq:theorem-full-bootstrap-decomposition}
\sqrt n(\widehat{\bm{\psi}}^*-\widehat{\bm{\psi}})
=
\bm L_n^*+\bm u_n^*,
\qquad
\bm u_n^*=o_{p^*}(1)
\end{equation}
in \(\mathbb P\)-probability.

We next use
Lemma~\ref{lem:joint-oracle-score-bootstrap} to compare the laws of \(\bm L_n^*\)
and \(\bm L_n\). Let \(\operatorname{mat}(\bm v)\) reverse the vectorization used in
that lemma, and define
\[
\mathcal T(\bm u,\bm v)
=
\bm u+\mathcal B[\operatorname{mat}(\bm v)].
\]
The map \(\operatorname{mat}\) is linear, and \(\mathcal B\) is a deterministic
bounded linear map by
Lemma~\ref{lem:generated-score-ols-expansion}. Therefore, \(\mathcal T\) is
also linear. Moreover, for any \((\bm u,\bm v)\) and \((\bm u',\bm v')\),
\[
\begin{aligned}
&
\left\|
\mathcal T(\bm u,\bm v)-\mathcal T(\bm u',\bm v')
\right\|
\\
&=
\left\|
(\bm u-\bm u')
+
\mathcal B
\left[
\operatorname{mat}(\bm v-\bm v')
\right]
\right\|
\\
&\leq
\|\bm u-\bm u'\|
+
\|\mathcal B\|_{\mathrm{op}}
\left\|
\operatorname{mat}(\bm v-\bm v')
\right\|_{\mathrm F}.
\end{aligned}
\]
Because reversing vectorization preserves the Euclidean and Frobenius norms,
\[
\left\|
\operatorname{mat}(\bm v-\bm v')
\right\|_{\mathrm F}
=
\|\bm v-\bm v'\|.
\]
Hence,
\[
\left\|
\mathcal T(\bm u,\bm v)-\mathcal T(\bm u',\bm v')
\right\|
\leq
\|\bm u-\bm u'\|
+
\|\mathcal B\|_{\mathrm{op}}\|\bm v-\bm v'\|.
\]
Thus, \(\mathcal T\) is bounded and Lipschitz.

Let \(h:\mathbb R^q\to\mathbb R\) be bounded and Lipschitz. Because
\(\mathcal T\) is bounded and Lipschitz, the composition
\(g=h\circ\mathcal T\) is also bounded and Lipschitz. The two arguments of
\(g\) in Lemma~\ref{lem:joint-oracle-score-bootstrap} are mapped by
\(\mathcal T\) to \(\bm L_n^*\) and \(\bm L_n\), respectively. Applying that lemma
with \(g=h\circ\mathcal T\) therefore gives
\begin{equation}
\label{eq:theorem-leading-bootstrap-convergence}
\left|
\mathbb E^*h(\bm L_n^*)
-
\mathbb E h(\bm L_n)
\right|
\overset{p}{\longrightarrow}
0.
\end{equation}

It remains to show that the remainders in
\eqref{eq:theorem-full-sampling-decomposition} and
\eqref{eq:theorem-full-bootstrap-decomposition} do not affect this
bounded-Lipschitz convergence. Let \(M_h<\infty\) satisfy
\[
|h(\bm x)|\leq M_h
\]
for every \(\bm x\), and let \(L_h<\infty\) be a Lipschitz constant for \(h\).
The triangle inequality gives
\[
\begin{aligned}
&
\left|
\mathbb E^*h(\bm L_n^*+\bm u_n^*)
-
\mathbb E h(\bm L_n+\bm u_n)
\right|
\\
&\leq
\mathbb E^*
\left|
h(\bm L_n^*+\bm u_n^*)-h(\bm L_n^*)
\right|
+
\left|
\mathbb E^*h(\bm L_n^*)
-
\mathbb E h(\bm L_n)
\right|
+
\mathbb E
\left|
h(\bm L_n+\bm u_n)-h(\bm L_n)
\right|.
\end{aligned}
\]
The middle term converges to zero in probability by
\eqref{eq:theorem-leading-bootstrap-convergence}.

We next control the first term. Fix \(\varepsilon>0\). Splitting according to
whether \(\|\bm u_n^*\|\leq\varepsilon\) gives
\[
\begin{aligned}
&
\mathbb E^*
\left|
h(\bm L_n^*+\bm u_n^*)-h(\bm L_n^*)
\right|
\\
&=
\mathbb E^*
\left[
\left|
h(\bm L_n^*+\bm u_n^*)-h(\bm L_n^*)
\right|
\mathbf 1\{\|\bm u_n^*\|\leq\varepsilon\}
\right]
+
\mathbb E^*
\left[
\left|
h(\bm L_n^*+\bm u_n^*)-h(\bm L_n^*)
\right|
\mathbf 1\{\|\bm u_n^*\|>\varepsilon\}
\right].
\end{aligned}
\]
On the event \(\{\|\bm u_n^*\|\leq\varepsilon\}\), the Lipschitz property gives
\[
\left|
h(\bm L_n^*+\bm u_n^*)-h(\bm L_n^*)
\right|
\leq
L_h\|\bm u_n^*\|
\leq
L_h\varepsilon.
\]
On the event \(\{\|\bm u_n^*\|>\varepsilon\}\), boundedness gives
\[
\left|
h(\bm L_n^*+\bm u_n^*)-h(\bm L_n^*)
\right|
\leq
|h(\bm L_n^*+\bm u_n^*)|+|h(\bm L_n^*)|
\leq
2M_h.
\]
Therefore,
\begin{equation}
\label{eq:theorem-bootstrap-remainder-bound}
\begin{aligned}
&
\mathbb E^*
\left|
h(\bm L_n^*+\bm u_n^*)-h(\bm L_n^*)
\right|
\leq
L_h\varepsilon
+
2M_h\,
\mathbb P^*(\|\bm u_n^*\|>\varepsilon).
\end{aligned}
\end{equation}
Because \(\bm u_n^*=o_{p^*}(1)\) in \(\mathbb P\)-probability,
\[
\mathbb P^*(\|\bm u_n^*\|>\varepsilon)
\overset{p}{\longrightarrow}
0.
\]
To verify convergence of the left-hand side of
\eqref{eq:theorem-bootstrap-remainder-bound}, fix \(\delta>0\) and choose
\(\varepsilon>0\) sufficiently small that
\(L_h\varepsilon<\delta/2\). Then
\[
\begin{aligned}
&
\mathbb P
\left[
\mathbb E^*
\left|
h(\bm L_n^*+\bm u_n^*)-h(\bm L_n^*)
\right|
>
\delta
\right]
\\
&\leq
\mathbb P
\left[
2M_h
\mathbb P^*(\|\bm u_n^*\|>\varepsilon)
>
\frac{\delta}{2}
\right]
\longrightarrow
0.
\end{aligned}
\]
Hence,
\[
\mathbb E^*
\left|
h(\bm L_n^*+\bm u_n^*)-h(\bm L_n^*)
\right|
\overset{p}{\longrightarrow}
0.
\]

The same argument applies to the original-sample remainder. Specifically,
\[
\mathbb E
\left|
h(\bm L_n+\bm u_n)-h(\bm L_n)
\right|
\leq
L_h\varepsilon
+
2M_h\,
\mathbb P(\|\bm u_n\|>\varepsilon).
\]
Because \(\bm u_n=o_p(1)\),
\[
\mathbb P(\|\bm u_n\|>\varepsilon)
\longrightarrow
0.
\]
Choosing \(\varepsilon\) arbitrarily small therefore gives
\[
\mathbb E
\left|
h(\bm L_n+\bm u_n)-h(\bm L_n)
\right|
\longrightarrow
0.
\]

Combining the convergence of the three terms in the triangle-inequality bound
gives
\[
\left|
\mathbb E^*h(\bm L_n^*+\bm u_n^*)
-
\mathbb E h(\bm L_n+\bm u_n)
\right|
\overset{p}{\longrightarrow}
0.
\]
Finally, using
\eqref{eq:theorem-full-sampling-decomposition} and
\eqref{eq:theorem-full-bootstrap-decomposition}, we obtain
\[
\left|
\mathbb E^*
h
\left\{
\sqrt n(\widehat{\bm{\psi}}^*-\widehat{\bm{\psi}})
\right\}
-
\mathbb E
h
\left\{
\sqrt n(\widehat{\bm{\psi}}-\bm{\psi})
\right\}
\right|
\overset{p}{\longrightarrow}
0.
\]
This proves the theorem.
\end{proof}

\begin{lemma}
\label{lem:conditional-bootstrap-mean-consistency}
Let \(\bm V_1,\ldots,\bm V_n\) be independent and identically distributed
finite-dimensional random vectors satisfying
\(\mathbb E\|\bm V_i\|<\infty\). For multinomial bootstrap counts
\(N_1^*,\ldots,N_n^*\),
\[
\left\|
\frac{1}{n}
\sum_{i=1}^n
(N_i^*-1)\bm V_i
\right\|
=
o_{p^*}(1)
\]
in \(\mathbb P\)-probability.
\end{lemma}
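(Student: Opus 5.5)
The plan is to show that the bootstrap mean $\frac1n\sum_i N_i^*\bm V_i$ is close to the sample mean $\bar{\bm V}_n=\frac1n\sum_i\bm V_i$ conditionally on the data. Under multinomial resampling, the bootstrap draws $\bm V_1^*,\ldots,\bm V_n^*$ are i.i.d.\ from the empirical distribution $\mathbb P_n$, and
\[
\frac1n\sum_{i=1}^n(N_i^*-1)\bm V_i=\frac1n\sum_{j=1}^n\bm V_j^*-\bar{\bm V}_n .
\]
So the claim is a conditional weak law of large numbers for bootstrap means. Because only a first moment is assumed, a plain Chebyshev bound is not available, and I would use truncation.

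Fix $M>0$ and split $\bm V_i=\bm V_i\mathbf 1\{\|\bm V_i\|\le M\}+\bm V_i\mathbf 1\{\|\bm V_i\|>M\}=:\bm A_i+\bm B_i$.

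For the bounded part, conditional on the data the summands $\bm A_j^*-\bar{\bm A}_n$ are i.i.d.\ with mean zero. Therefore
\[
\mathbb E^*\Bigl\|\frac1n\sum_j\bm A_j^*-\bar{\bm A}_n\Bigr\|^2\le\frac1n\cdot\frac1n\sum_i\|\bm A_i\|^2\le\frac{M^2}{n},
\]
and by conditional Chebyshev, $\mathbb P^*(\cdot>\epsilon)\le M^2/(n\epsilon^2)\to0$ deterministically.

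For the tail part, the triangle inequality and $\mathbb E^*N_i^*=1$ give
\[
\mathbb E^*\Bigl\|\frac1n\sum_i(N_i^*-1)\bm B_i\Bigr\|\le\frac2n\sum_i\|\bm B_i\| .
\]
Taking outer expectations bounds this by $2\,\mathbb E\bigl[\|\bm V_1\|\mathbf 1\{\|\bm V_1\|>M\}\bigr]=:2\tau_M$, and $\tau_M\to0$ as $M\to\infty$ by dominated convergence since $\mathbb E\|\bm V_i\|<\infty$.

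To combine the two parts, fix $\epsilon,\delta>0$. Conditional Markov gives
\[
\mathbb P^*\Bigl(\Bigl\|\tfrac1n\textstyle\sum(N_i^*-1)\bm V_i\Bigr\|>\epsilon\Bigr)\le\frac{4M^2}{n\epsilon^2}+\frac{4}{\epsilon n}\sum_i\|\bm B_i\|.
\]
The outer $\mathbb P$-probability that the last term exceeds $\delta$ is at most $16\tau_M/(\epsilon\delta)$ by Markov's inequality. Choose $M$ large to make this small, then let $n\to\infty$ to kill the first term. This gives $\mathbb P^*(\|\cdot\|>\epsilon)\overset{p}{\to}0$, which is exactly $o_{p^*}(1)$ in $\mathbb P$-probability.

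The main obstacle is this interplay of conditional and unconditional probability under only a first moment. The truncation handles it. Working directly would require $\frac1n\sum\|\bm V_i\|^2/n\to0$, which in fact also holds under a first moment, but the truncation route is cleaner.
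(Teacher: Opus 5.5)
Your proposal is correct and follows essentially the same route as the paper: truncation at a fixed level, a conditional second-moment (Chebyshev) bound of order $M^2/n$ for the bounded part, the bound $\mathbb E^*|N_i^*-1|\le 2$ with conditional Markov for the tail part, and a union bound at $\epsilon/2$, which gives exactly the paper's combined inequality. The only minor difference is that you control the tail average $\frac1n\sum_i\|\bm B_i\|$ by Markov's inequality on its outer expectation $\tau_M$, whereas the paper uses the weak law of large numbers followed by dominated convergence; your version is slightly more elementary but otherwise equivalent.
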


\begin{proof}
We prove the result by decomposing each \(\bm V_i\) into a bounded component
and a tail remainder. For fixed \(K\), conditional Markov's inequality controls
the bounded component, while the integrability of \(\|\bm V_i\|\) controls the
tail remainder. We then combine these two bounds.

Fix \(K>0\), and write
\begin{equation}
\label{eq:conditional-bootstrap-truncation}
\bm V_i^{(K)}
=
\bm V_i\mathbf 1\{\|\bm V_i\|\leq K\},
\qquad
\bm R_i^{(K)}
=
\bm V_i-\bm V_i^{(K)}.
\end{equation}
Conditional on the original sample, the vectors
\(\bm V_i^{(K)}\) and \(\bm R_i^{(K)}\) are fixed. For multinomial bootstrap
counts,
\begin{equation}
\label{eq:conditional-bootstrap-count-covariance}
\mathbb E^*
\left\{
(N_i^*-1)(N_j^*-1)
\right\}
=
\mathbf 1\{i=j\}-\frac{1}{n}.
\end{equation}
Using \eqref{eq:conditional-bootstrap-count-covariance}, we obtain
\[
\begin{aligned}
&
\mathbb E^*
\left\|
\frac{1}{n}
\sum_{i=1}^n
(N_i^*-1)\bm V_i^{(K)}
\right\|^2
\\
&=
\frac{1}{n^2}
\sum_{i=1}^n
\sum_{j=1}^n
\mathbb E^*
\left\{
(N_i^*-1)(N_j^*-1)
\right\}
\left\langle
\bm V_i^{(K)},\bm V_j^{(K)}
\right\rangle
\\
&=
\frac{1}{n^2}
\sum_{i=1}^n
\sum_{j=1}^n
\left\{
\mathbf 1\{i=j\}-\frac{1}{n}
\right\}
\left\langle
\bm V_i^{(K)},\bm V_j^{(K)}
\right\rangle
\\
&=
\frac{1}{n^2}
\sum_{i=1}^n
\left\|
\bm V_i^{(K)}
\right\|^2
-
\frac{1}{n^3}
\left\langle
\sum_{i=1}^n\bm V_i^{(K)},
\sum_{j=1}^n\bm V_j^{(K)}
\right\rangle
\\
&=
\frac{1}{n^2}
\sum_{i=1}^n
\left\|
\bm V_i^{(K)}
\right\|^2
-
\frac{1}{n^3}
\left\|
\sum_{i=1}^n
\bm V_i^{(K)}
\right\|^2
\\
&\leq
\frac{1}{n^2}
\sum_{i=1}^n
\left\|
\bm V_i^{(K)}
\right\|^2.
\end{aligned}
\]
Because \(\|\bm V_i^{(K)}\|\leq K\),
\begin{equation}
\label{eq:conditional-bootstrap-bounded-second-moment}
\mathbb E^*
\left\|
\frac{1}{n}
\sum_{i=1}^n
(N_i^*-1)\bm V_i^{(K)}
\right\|^2
\leq
\frac{1}{n^2}
\sum_{i=1}^n K^2
=
\frac{K^2}{n}.
\end{equation}
The squared norm in
\eqref{eq:conditional-bootstrap-bounded-second-moment} is nonnegative and has
finite conditional expectation. Therefore, conditional Markov's inequality
gives, for every \(a>0\),
\begin{equation}
\label{eq:conditional-bootstrap-bounded-probability}
\begin{aligned}
&
\mathbb P^*
\left\{
\left\|
\frac{1}{n}
\sum_{i=1}^n
(N_i^*-1)\bm V_i^{(K)}
\right\|
>
a
\right\}
\\
&=
\mathbb P^*
\left\{
\left\|
\frac{1}{n}
\sum_{i=1}^n
(N_i^*-1)\bm V_i^{(K)}
\right\|^2
>
a^2
\right\}
\\
&\leq
\frac{1}{a^2}
\mathbb E^*
\left\|
\frac{1}{n}
\sum_{i=1}^n
(N_i^*-1)\bm V_i^{(K)}
\right\|^2
\\
&\leq
\frac{K^2}{na^2}.
\end{aligned}
\end{equation}
Thus, for each fixed \(K\), the conditional probability in
\eqref{eq:conditional-bootstrap-bounded-probability} converges to zero as
\(n\to\infty\).

We next control the remainder. From
\eqref{eq:conditional-bootstrap-truncation},
\[
\bm R_i^{(K)}
=
\bm V_i\mathbf 1\{\|\bm V_i\|>K\},
\qquad
\|\bm R_i^{(K)}\|
=
\|\bm V_i\|\mathbf 1\{\|\bm V_i\|>K\}.
\]
Because \(N_i^*\geq0\),
\(
|N_i^*-1|\leq N_i^*+1.
\)
Moreover, \(\mathbb E^*(N_i^*)=1\), so
\[
\mathbb E^*\!\left(|N_i^*-1|\right)
\leq
\mathbb E^*(N_i^*+1)
=
2.
\]
Conditional on the original sample, the vectors \(\bm R_i^{(K)}\) are fixed.
The triangle inequality therefore gives
\[
\begin{aligned}
&
\mathbb E^*
\left\|
\frac{1}{n}
\sum_{i=1}^n
(N_i^*-1)\bm R_i^{(K)}
\right\|
\\
&\leq
\frac{1}{n}
\sum_{i=1}^n
\mathbb E^*\!\left(
|N_i^*-1|
\right)
\|\bm R_i^{(K)}\|
\\
&\leq
\frac{2}{n}
\sum_{i=1}^n
\|\bm V_i\|
\mathbf 1\{\|\bm V_i\|>K\}.
\end{aligned}
\]
The norm on the left-hand side is nonnegative. Conditional Markov's inequality
therefore gives, for every \(a>0\),
\begin{equation}
\label{eq:conditional-bootstrap-remainder-probability}
\begin{aligned}
&
\mathbb P^*
\left\{
\left\|
\frac{1}{n}
\sum_{i=1}^n
(N_i^*-1)\bm R_i^{(K)}
\right\|
>
a
\right\}
\\
&\leq
\frac{2}{na}
\sum_{i=1}^n
\|\bm V_i\|
\mathbf 1\{\|\bm V_i\|>K\}.
\end{aligned}
\end{equation}

For each fixed \(K\), the variables
\(
\|\bm V_i\|\mathbf 1\{\|\bm V_i\|>K\}
\)
are independent and identically distributed. They also have finite
expectation because
\[
0
\leq
\|\bm V_i\|\mathbf 1\{\|\bm V_i\|>K\}
\leq
\|\bm V_i\|
\]
and \(\mathbb E\|\bm V_i\|<\infty\). Thus, the assumptions of the law of large
numbers are satisfied, and
\begin{equation}
\label{eq:conditional-bootstrap-tail-lln}
\frac{1}{n}
\sum_{i=1}^n
\|\bm V_i\|
\mathbf 1\{\|\bm V_i\|>K\}
\overset{p}{\longrightarrow}
\mathbb E
\left[
\|\bm V_i\|
\mathbf 1\{\|\bm V_i\|>K\}
\right].
\end{equation}
As \(K\to\infty\), the integrand on the right-hand side of
\eqref{eq:conditional-bootstrap-tail-lln} converges pointwise to zero and is
bounded by the integrable variable \(\|\bm V_i\|\). The assumptions of the
dominated convergence theorem are therefore satisfied, so
\begin{equation}
\label{eq:conditional-bootstrap-tail-dct}
\mathbb E
\left[
\|\bm V_i\|
\mathbf 1\{\|\bm V_i\|>K\}
\right]
\longrightarrow
0.
\end{equation}

It remains to combine the bounded and remainder components. By
\eqref{eq:conditional-bootstrap-truncation},
\begin{equation} \label{eq:V_plus_R}
\frac{1}{n}
\sum_{i=1}^n
(N_i^*-1)\bm V_i
=
\frac{1}{n}
\sum_{i=1}^n
(N_i^*-1)\bm V_i^{(K)}
+
\frac{1}{n}
\sum_{i=1}^n
(N_i^*-1)\bm R_i^{(K)}.
\end{equation}
Recall that if both
\(\|\bm A\|\leq\varepsilon/2\) and
\(\|\bm B\|\leq\varepsilon/2\), then
\[
\|\bm A+\bm B\|
\leq
\|\bm A\|+\|\bm B\|
\leq
\frac{\varepsilon}{2}
+
\frac{\varepsilon}{2}
=
\varepsilon.
\]
Therefore, if \(\|\bm A+\bm B\|>\varepsilon\), then \(\|\bm A\|> \varepsilon/2\) or
\(\|\bm B\|> \varepsilon/2\) (or both). For \eqref{eq:V_plus_R}, if the norm of the sum on the left exceeds \(\varepsilon\), then the norm of
at least one component on the right must exceed \(\varepsilon/2\). Consequently,
\[
\begin{aligned}
\left\{
\left\|
\frac{1}{n}
\sum_{i=1}^n
(N_i^*-1)\bm V_i
\right\|
>
\varepsilon
\right\}
&\subseteq
\left\{
\left\|
\frac{1}{n}
\sum_{i=1}^n
(N_i^*-1)\bm V_i^{(K)}
\right\|
>
\frac{\varepsilon}{2}
\right\}
\\
&\qquad\cup
\left\{
\left\|
\frac{1}{n}
\sum_{i=1}^n
(N_i^*-1)\bm R_i^{(K)}
\right\|
>
\frac{\varepsilon}{2}
\right\}.
\end{aligned}
\]
Taking conditional probabilities and applying the union bound gives
\[
\begin{aligned}
&
\mathbb P^*
\left\{
\left\|
\frac{1}{n}
\sum_{i=1}^n
(N_i^*-1)\bm V_i
\right\|
>
\varepsilon
\right\}
\\
&\leq
\mathbb P^*
\left\{
\left\|
\frac{1}{n}
\sum_{i=1}^n
(N_i^*-1)\bm V_i^{(K)}
\right\|
>
\frac{\varepsilon}{2}
\right\}
+
\mathbb P^*
\left\{
\left\|
\frac{1}{n}
\sum_{i=1}^n
(N_i^*-1)\bm R_i^{(K)}
\right\|
>
\frac{\varepsilon}{2}
\right\}.
\end{aligned}
\]
Applying
\eqref{eq:conditional-bootstrap-bounded-probability} with
\(a=\varepsilon/2\) gives
\[
\begin{aligned}
&
\mathbb P^*
\left\{
\left\|
\frac{1}{n}
\sum_{i=1}^n
(N_i^*-1)\bm V_i^{(K)}
\right\|
>
\frac{\varepsilon}{2}
\right\}
\\
&\leq
\frac{K^2}
{n(\varepsilon/2)^2}
=
\frac{4K^2}{n\varepsilon^2}.
\end{aligned}
\]
Similarly, applying
\eqref{eq:conditional-bootstrap-remainder-probability} with
\(a=\varepsilon/2\) gives
\[
\begin{aligned}
&
\mathbb P^*
\left\{
\left\|
\frac{1}{n}
\sum_{i=1}^n
(N_i^*-1)\bm R_i^{(K)}
\right\|
>
\frac{\varepsilon}{2}
\right\}
\\
&\leq
\frac{2}{n(\varepsilon/2)}
\sum_{i=1}^n
\|\bm V_i\|
\mathbf 1\{\|\bm V_i\|>K\}
\\
&=
\frac{4}{n\varepsilon}
\sum_{i=1}^n
\|\bm V_i\|
\mathbf 1\{\|\bm V_i\|>K\}.
\end{aligned}
\]
Substituting these two bounds into the union-bound inequality gives
\begin{equation}
\label{eq:conditional-bootstrap-combined-bound}
\begin{aligned}
&
\mathbb P^*
\left\{
\left\|
\frac{1}{n}
\sum_{i=1}^n
(N_i^*-1)\bm V_i
\right\|
>
\varepsilon
\right\}
\\
&\leq
\frac{4K^2}{n\varepsilon^2}
+
\frac{4}{n\varepsilon}
\sum_{i=1}^n
\|\bm V_i\|
\mathbf 1\{\|\bm V_i\|>K\}.
\end{aligned}
\end{equation}

To verify convergence in \(\mathbb P\)-probability explicitly, fix
\(\varepsilon>0\) and \(\delta>0\). By
\eqref{eq:conditional-bootstrap-tail-dct}, we may choose \(K\) sufficiently
large that
\[
\frac{4}{\varepsilon}
\mathbb E
\left[
\|\bm V_i\|
\mathbf 1\{\|\bm V_i\|>K\}
\right]
<
\frac{\delta}{4}.
\]
For this fixed \(K\), \eqref{eq:conditional-bootstrap-tail-lln} implies
\[
\mathbb P
\left\{
\frac{4}{n\varepsilon}
\sum_{i=1}^n
\|\bm V_i\|
\mathbf 1\{\|\bm V_i\|>K\}
<
\frac{\delta}{2}
\right\}
\longrightarrow
1.
\]
For the same fixed \(K\),
\[
\frac{4K^2}{n\varepsilon^2}
<
\frac{\delta}{2}
\]
for all sufficiently large \(n\). Therefore, with probability tending to one, the sum of the two terms is less than $\delta$. It follows from
\eqref{eq:conditional-bootstrap-combined-bound} that
\[
\mathbb P
\left[
\mathbb P^*
\left\{
\left\|
\frac{1}{n}
\sum_{i=1}^n
(N_i^*-1)\bm V_i
\right\|
>
\varepsilon
\right\}
>
\delta
\right]
\longrightarrow
0.
\]
Because this holds for every \(\varepsilon>0\) and \(\delta>0\),
\[
\left\|
\frac{1}{n}
\sum_{i=1}^n
(N_i^*-1)\bm V_i
\right\|
=
o_{p^*}(1)
\]
in \(\mathbb P\)-probability.
\end{proof}

\begin{lemma}
\label{lem:oracle-downstream-sampling-expansion}
Define the oracle regressor vector
\[
\bm Z_{\eta,i}
=
\left(
1,\,
T_i,\,
{\bm S_{\eta,i}}^{\top},\,
T_i{\bm S_{\eta,i}}^{\top}
\right)^\top
\]
and the population moment
\(
\Gamma_Z
=
\mathbb E
\left(
\bm Z_{\eta,i}{\bm Z_{\eta,i}}^\top
\right).\) Suppose the observed units are independent and identically distributed,
\(\Gamma_Z\) is nonsingular,
\(\mathbb E\|\bm Z_{\eta,i}\|^4<\infty\), and
\(\mathbb E(Y_i^4)<\infty\). If \(\bm{\psi}\) is continuously differentiable in
the fitted regression, then the oracle downstream estimator satisfies
\[
\sqrt n
\left(
\widetilde{\bm\alpha}-\bm\alpha_\eta
\right)
=
\frac{1}{\sqrt n}
\sum_{i=1}^n
\Gamma_Z^{-1}
\bm Z_{\eta,i}\varepsilon_{\eta,i}
+
o_p(1)
\]
and
\[
\sqrt n(\widetilde{\bm{\psi}}-\bm{\psi})
=
\frac{1}{\sqrt n}
\sum_{i=1}^n
{\bm{\varphi}}_\psi(O_i)
+
o_p(1),
\]
where
\(
\mathbb E\{{\bm{\varphi}}_\psi(O_i)\}=\bm 0
\)
and
\(
\mathbb E\|{\bm{\varphi}}_\psi(O_i)\|^2<\infty
\).
\end{lemma}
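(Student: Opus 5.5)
The argument is the standard M-estimation expansion for OLS with fixed population regressors, followed by the delta method. The key point is that $\bm Z_{\eta,i}$ uses the population COMPACT scores $\bm S_{\eta,i}=B_\eta^\top\bm X_i$. These are fixed linear functions of $\bm X_i$, so the vectors $(\bm Z_{\eta,i},Y_i)$ are i.i.d. across units, and classical limit theorems apply directly.

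\textbf{Step 1: coefficient expansion.} By definition $\bm m_Z=\Gamma_Z\bm\alpha_\eta$, so $\mathbb E(\bm Z_{\eta,i}\varepsilon_{\eta,i})=\bm 0$. Writing $Y_i=\bm Z_{\eta,i}^\top\bm\alpha_\eta+\varepsilon_{\eta,i}$ gives the exact identity
\[
\widetilde{\bm\alpha}-\bm\alpha_\eta=\widetilde\Gamma_Z^{-1}\frac1n\sum_{i=1}^n\bm Z_{\eta,i}\varepsilon_{\eta,i},
\]
which holds on the event that $\widetilde\Gamma_Z$ is invertible.
\begin{itemize}
\item \emph{Law of large numbers.} Since $\mathbb E\|\bm Z_{\eta,i}\|^2<\infty$, the weak law gives $\widetilde\Gamma_Z\to\Gamma_Z$ in probability. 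Because $\Gamma_Z$ is nonsingular and inversion is continuous, $\widetilde\Gamma_Z$ is invertible with probability tending to one and $\widetilde\Gamma_Z^{-1}\to\Gamma_Z^{-1}$ by the continuous mapping theorem.
\item \emph{Central limit theorem.} By Cauchy--Schwarz,
\[
\mathbb E\|\bm Z_{\eta,i}\varepsilon_{\eta,i}\|^2\le\{\mathbb E\|\bm Z_{\eta,i}\|^4\,\mathbb E\varepsilon_{\eta,i}^4\}^{1/2}.
\]
Here $\mathbb E\varepsilon_{\eta,i}^4<\infty$ because $\varepsilon_{\eta,i}=Y_i-\bm Z_{\eta,i}^\top\bm\alpha_\eta$ and both $Y_i$ and $\bm Z_{\eta,i}$ have finite fourth moments. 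Thus $n^{-1/2}\sum_i\bm Z_{\eta,i}\varepsilon_{\eta,i}$ is a mean-zero sum with finite variance, and the multivariate CLT makes it $O_p(1)$.
\item \emph{Linearization.} Write
\[
\sqrt n(\widetilde{\bm\alpha}-\bm\alpha_\eta)=\Gamma_Z^{-1}n^{-1/2}\sum_i\bm Z_{\eta,i}\varepsilon_{\eta,i}+(\widetilde\Gamma_Z^{-1}-\Gamma_Z^{-1})\,n^{-1/2}\sum_i\bm Z_{\eta,i}\varepsilon_{\eta,i}.
\]
The second term is $o_p(1)\cdot O_p(1)=o_p(1)$, which gives the first display of the lemma.
\end{itemize}

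\textbf{Step 2: delta method for $\bm\psi$.} Since $\bm\psi$ is continuously differentiable near $\bm\alpha_\eta$, the mean-value theorem applied componentwise gives
\[
\sqrt n\{\bm\psi(\widetilde{\bm\alpha})-\bm\psi(\bm\alpha_\eta)\}=\dot\psi_{\bar{\bm\alpha}}\sqrt n(\widetilde{\bm\alpha}-\bm\alpha_\eta),
\]
where the intermediate points $\bar{\bm\alpha}$ lie between $\widetilde{\bm\alpha}$ and $\bm\alpha_\eta$.
\begin{itemize}
\item Step 1 gives $\widetilde{\bm\alpha}\to\bm\alpha_\eta$ in probability, so $\bar{\bm\alpha}\to\bm\alpha_\eta$ and, by continuity of the derivative, $\dot\psi_{\bar{\bm\alpha}}\to\dot\psi_{\bm\alpha_\eta}$ in probability.
\item Multiplying by the $O_p(1)$ term $\sqrt n(\widetilde{\bm\alpha}-\bm\alpha_\eta)$ and substituting the Step 1 expansion gives the second display, with
\[
\bm\varphi_\psi(O_i)=\dot\psi_{\bm\alpha_\eta}\Gamma_Z^{-1}\bm Z_{\eta,i}\varepsilon_{\eta,i}.
\]
\end{itemize}
This influence function has mean zero because $\mathbb E(\bm Z_{\eta,i}\varepsilon_{\eta,i})=\bm 0$. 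It has a finite second moment because it is a fixed linear map applied to $\bm Z_{\eta,i}\varepsilon_{\eta,i}$, whose second moment was bounded in Step 1.

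\textbf{Expected obstacle.} The argument is largely routine. The one point that needs care is the $O_p(1)$ bound on $n^{-1/2}\sum_i\bm Z_{\eta,i}\varepsilon_{\eta,i}$: it requires the fourth moments of the regressors and of $Y_i$, rather than second moments alone. That requirement is exactly what the lemma's hypotheses supply.
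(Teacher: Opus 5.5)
Your proposal is correct and follows the paper's proof essentially step for step. Both arguments use the exact OLS identity for $\widetilde{\bm\alpha}-\bm\alpha_\eta$, the law of large numbers with continuous inversion for $\widetilde\Gamma_Z^{-1}$, a fourth-moment Cauchy--Schwarz bound so the CLT makes $n^{-1/2}\sum_i\bm Z_{\eta,i}\varepsilon_{\eta,i}$ tight, and a delta-method step giving $\bm\varphi_\psi(O_i)=\dot\psi_{\bm\alpha_\eta}\Gamma_Z^{-1}\bm Z_{\eta,i}\varepsilon_{\eta,i}$. The differences are cosmetic: you bound $\mathbb E\varepsilon_{\eta,i}^4$ directly rather than splitting off $\|\bm Z_{\eta,i}\|^2Y_i^2$, you use a componentwise mean-value form instead of an $o(\|\widetilde{\bm\alpha}-\bm\alpha_\eta\|)$ Taylor remainder, and you restrict the exact identity to the event that $\widetilde\Gamma_Z$ is invertible, which is slightly more careful than the paper.
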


\begin{proof}
The observed units are independent and identically distributed by the lemma's
sampling condition. Thus, each ordinary law of large numbers and central limit
theorem below applies to independent random vectors with a common
distribution.

We first derive the ordinary least-squares expansion for the oracle estimator.
Let \(\bm m_Z
=
\mathbb E
\left(
\bm Z_{\eta,i} Y_i
\right).
\)
By assumption, \(\Gamma_Z\) is nonsingular. Therefore, the population
least-squares coefficient is
\(
\bm\alpha_\eta=\Gamma_Z^{-1}\bm m_Z
\).
Define the population regression residual by
\(
\varepsilon_{\eta,i}
=
Y_i-{\bm Z_{\eta,i}}^\top\bm\alpha_\eta
\).
The definition of \(\bm\alpha_\eta\) gives
\begin{equation} \label{eq:moment_zero}
\begin{aligned}
\mathbb E
\left(
\bm Z_{\eta,i}\varepsilon_{\eta,i}
\right)
&=
\mathbb E
\left(
\bm Z_{\eta,i} Y_i
\right)
-
\mathbb E
\left(
\bm Z_{\eta,i}{\bm Z_{\eta,i}}^\top
\right)
\bm\alpha_\eta
\\
&=
\bm m_Z
-
\Gamma_Z\Gamma_Z^{-1}\bm m_Z
=
0.
\end{aligned}
\end{equation}

The empirical oracle moments and coefficient are
\[
\widetilde\Gamma_Z
=
\frac{1}{n}
\sum_{i=1}^n
\bm Z_{\eta,i}{\bm Z_{\eta,i}}^\top,
\qquad
\widetilde{\bm m}_Z
=
\frac{1}{n}
\sum_{i=1}^n
\bm Z_{\eta,i} Y_i,
\qquad
\widetilde{\bm\alpha}
=
\widetilde\Gamma_Z^{-1}\widetilde{\bm m}_Z.
\]
The sample normal equations give
\(
\widetilde\Gamma_Z\widetilde{\bm\alpha}
=
\widetilde{\bm m}_Z
\).
Subtracting
\(
\widetilde\Gamma_Z\bm\alpha_\eta
\)
from both sides gives
\[
\begin{aligned}
\widetilde\Gamma_Z
\left(
\widetilde{\bm\alpha}-\bm\alpha_\eta
\right)
&=
\widetilde{\bm m}_Z
-
\widetilde\Gamma_Z\bm\alpha_\eta
\\
&=
\frac{1}{n}
\sum_{i=1}^n
\bm Z_{\eta,i}
\left(
Y_i-{\bm Z_{\eta,i}}^\top\bm\alpha_\eta
\right)
\\
&=
\frac{1}{n}
\sum_{i=1}^n
\bm Z_{\eta,i}\varepsilon_{\eta,i}.
\end{aligned}
\]
Multiplying by \(\sqrt n\widetilde\Gamma_Z^{-1}\) yields the exact identity
\begin{equation}
\label{eq:oracle-ols-exact}
\sqrt n
\left(
\widetilde{\bm\alpha}-\bm\alpha_\eta
\right)
=
\widetilde\Gamma_Z^{-1}
\frac{1}{\sqrt n}
\sum_{i=1}^n
\bm Z_{\eta,i}\varepsilon_{\eta,i}.
\end{equation}

The fourth-moment assumptions in the theorem give
\begin{equation}
\label{eq:oracle-fourth-moments}
\mathbb E\|\bm Z_{\eta,i}\|^4<\infty,
\qquad
\mathbb E(Y_i^4)<\infty.
\end{equation}
Moreover,
\[
\begin{aligned}
\|\bm Z_{\eta,i}\varepsilon_{\eta,i}\|^2
&=
\|\bm Z_{\eta,i}\|^2
\left|
Y_i-{\bm Z_{\eta,i}}^\top\bm\alpha_\eta
\right|^2
\end{aligned}
\]
Applying \(
|a-b|^2
\leq
2a^2+2b^2
\) gives
\[
\left|
Y_i-{\bm Z_{\eta,i}}^\top\bm\alpha_\eta
\right|^2
\leq
2Y_i^2
+
2\left|
{\bm Z_{\eta,i}}^\top\bm\alpha_\eta
\right|^2.
\]
Second, the Cauchy--Schwarz inequality implies
\[
\left|
{\bm Z_{\eta,i}}^\top\bm\alpha_\eta
\right|^2
\leq
\|\bm Z_{\eta,i}\|^2
\|\bm\alpha_\eta\|^2.
\]
Multiplying by \(\|\bm Z_{\eta,i}\|^2\) therefore yields
\[
\begin{aligned}
&
\|\bm Z_{\eta,i}\|^2
\left|
Y_i-{\bm Z_{\eta,i}}^\top\bm\alpha_\eta
\right|^2
\\
&\leq
2\|\bm Z_{\eta,i}\|^2Y_i^2
+
2\|\bm Z_{\eta,i}\|^2
\left|
{\bm Z_{\eta,i}}^\top\bm\alpha_\eta
\right|^2
\\
&\leq
2\|\bm Z_{\eta,i}\|^2Y_i^2
+
2\|\bm Z_{\eta,i}\|^4
\|\bm\alpha_\eta\|^2.
\end{aligned}
\]

The Cauchy--Schwarz inequality gives
\[
\mathbb E
\left(
\|\bm Z_{\eta,i}\|^2Y_i^2
\right)
\leq
\left\{
\mathbb E\|\bm Z_{\eta,i}\|^4
\right\}^{1/2}
\left\{
\mathbb E(Y_i^4)
\right\}^{1/2}
<
\infty.
\]
Consequently,
\begin{equation}
\label{eq:oracle-score-residual-second-moment}
\mathbb E\|\bm Z_{\eta,i}\varepsilon_{\eta,i}\|^2<\infty.
\end{equation}
The vectors \(\bm Z_{\eta,i}\varepsilon_{\eta,i}\) are independent and
identically distributed under the sampling setup, have mean zero by
\eqref{eq:moment_zero}, and have finite covariance by
\eqref{eq:oracle-score-residual-second-moment}. All assumptions of the
multivariate central limit theorem are therefore satisfied, and it gives
\(
n^{-1/2}\sum_{i=1}^n
\bm Z_{\eta,i}\varepsilon_{\eta,i}=O_p(1)
\). In addition, \eqref{eq:oracle-fourth-moments} implies
\[
\mathbb E
\left\|
\bm Z_{\eta,i}{\bm Z_{\eta,i}}^\top
\right\|_{\mathrm F}
=
\mathbb E\|\bm Z_{\eta,i}\|^2
\leq
\left\{
\mathbb E\|\bm Z_{\eta,i}\|^4
\right\}^{1/2}
<
\infty.
\]
Thus, the summands defining \(\widetilde\Gamma_Z\) are independent and
identically distributed and have finite first moments. The law of large
numbers therefore gives
\begin{equation}
\label{eq:oracle-gram-consistency}
\widetilde\Gamma_Z\overset{p}{\longrightarrow}\Gamma_Z.
\end{equation}
Because \(\Gamma_Z\) is nonsingular by the theorem and
\eqref{eq:oracle-gram-consistency} gives convergence to \(\Gamma_Z\), the
assumptions of continuity of matrix inversion are satisfied. Hence,
\(
\widetilde\Gamma_Z^{-1}\overset{p}{\longrightarrow}\Gamma_Z^{-1}
\). Using \eqref{eq:oracle-ols-exact}, add and subtract
\(
\Gamma_Z^{-1}
n^{-1/2}
\sum_{i=1}^n
\bm Z_{\eta,i}\varepsilon_{\eta,i}
\):
\[
\begin{aligned}
&
\widetilde\Gamma_Z^{-1}
\frac{1}{\sqrt n}
\sum_{i=1}^n
\bm Z_{\eta,i}\varepsilon_{\eta,i}
=
\Gamma_Z^{-1}
\frac{1}{\sqrt n}
\sum_{i=1}^n
\bm Z_{\eta,i}\varepsilon_{\eta,i}
+
\left(
\widetilde\Gamma_Z^{-1}
-
\Gamma_Z^{-1}
\right)
\frac{1}{\sqrt n}
\sum_{i=1}^n
\bm Z_{\eta,i}\varepsilon_{\eta,i}.
\end{aligned}
\]
The remainder satisfies
\[
\begin{aligned}
&
\left\|
\left(
\widetilde\Gamma_Z^{-1}
-
\Gamma_Z^{-1}
\right)
\frac{1}{\sqrt n}
\sum_{i=1}^n
\bm Z_{\eta,i}\varepsilon_{\eta,i}
\right\|
\\
&\leq
\left\|
\widetilde\Gamma_Z^{-1}
-
\Gamma_Z^{-1}
\right\|_{\mathrm{op}}
\left\|
\frac{1}{\sqrt n}
\sum_{i=1}^n
\bm Z_{\eta,i}\varepsilon_{\eta,i}
\right\|
\\
&=
o_p(1)O_p(1)
=
o_p(1).
\end{aligned}
\]
Thus,
\begin{equation}
\label{eq:oracle-ols-AL}
\sqrt n
\left(
\widetilde{\bm\alpha}-\bm\alpha_\eta
\right)
=
\frac{1}{\sqrt n}
\sum_{i=1}^n
\Gamma_Z^{-1}
\bm Z_{\eta,i}\varepsilon_{\eta,i}
+
\bm r_{\alpha,n},
\qquad
\|\bm r_{\alpha,n}\|=o_p(1).
\end{equation}

The leading sum in \eqref{eq:oracle-ols-AL} is \(O_p(1)\) by the
multivariate central limit theorem verified above. Therefore,
\begin{equation}
\label{eq:oracle-alpha-rate}
\left\|
\widetilde{\bm\alpha}-\bm\alpha_\eta
\right\|
=
O_p(n^{-1/2}),
\qquad
\left\|
\widetilde{\bm\alpha}-\bm\alpha_\eta
\right\|^2
=
O_p(n^{-1}).
\end{equation}

The lemma assumes that \(\bm{\psi}\) is continuously differentiable in the fitted
regression, and \eqref{eq:oracle-alpha-rate} gives
\(\widetilde{\bm\alpha}\overset{p}{\longrightarrow}\bm\alpha_\eta\).
The assumptions of the first-order Taylor expansion at
\(\bm\alpha_\eta\) are therefore satisfied, so
\[
\widetilde{\bm{\psi}}-\bm{\psi}
=
\dot\psi_{\bm\alpha_\eta}
\left[
\widetilde{\bm\alpha}-\bm\alpha_\eta
\right]
+
\bm r_{\psi,n},
\qquad
\frac{\|\bm r_{\psi,n}\|}
{\|\widetilde{\bm\alpha}-\bm\alpha_\eta\|}
\overset{p}{\longrightarrow}
0.
\]
Equation~\eqref{eq:oracle-alpha-rate} then gives
\(\sqrt n\|\bm r_{\psi,n}\|=o_p(1)\).

Multiplying the Taylor expansion by \(\sqrt n\) therefore gives
\[
\begin{aligned}
\sqrt n(\widetilde{\bm{\psi}}-\bm{\psi})
&=
\sqrt n\,
\dot\psi_{\bm\alpha_\eta}
\left[
\widetilde{\bm\alpha}-\bm\alpha_\eta
\right]
+
\sqrt n\,\bm r_{\psi,n}
\\
&=
\dot\psi_{\bm\alpha_\eta}
\left[
\sqrt n
\left(
\widetilde{\bm\alpha}-\bm\alpha_\eta
\right)
\right]
+
o_p(1),
\end{aligned}
\]
where the second equality uses the linearity of
\(\dot\psi_{\bm\alpha_\eta}\). Substituting \eqref{eq:oracle-ols-AL} gives
\[
\begin{aligned}
\sqrt n(\widetilde{\bm{\psi}}-\bm{\psi})
&=
\dot\psi_{\bm\alpha_\eta}
\left[
\frac{1}{\sqrt n}
\sum_{i=1}^n
\Gamma_Z^{-1}
\bm Z_{\eta,i}\varepsilon_{\eta,i}
+
\bm r_{\alpha,n}
\right]
+
o_p(1)
\\
&=
\dot\psi_{\bm\alpha_\eta}
\left[
\frac{1}{\sqrt n}
\sum_{i=1}^n
\Gamma_Z^{-1}
\bm Z_{\eta,i}\varepsilon_{\eta,i}
\right]
+
\dot\psi_{\bm\alpha_\eta}
\left[
\bm r_{\alpha,n}
\right]
+
o_p(1),
\end{aligned}
\]
where the second equality again uses linearity. Because the derivative is
bounded,
\[
\left\|
\dot\psi_{\bm\alpha_\eta}
\left[
\bm r_{\alpha,n}
\right]
\right\|
\leq
\left\|
\dot\psi_{\bm\alpha_\eta}
\right\|_{\mathrm{op}}
\|\bm r_{\alpha,n}\|
=
o_p(1).
\]
Finally, linearity allows the derivative to pass through the scalar multiple
and finite sum:
\[
\begin{aligned}
\dot\psi_{\bm\alpha_\eta}
\left[
\frac{1}{\sqrt n}
\sum_{i=1}^n
\Gamma_Z^{-1}
\bm Z_{\eta,i}\varepsilon_{\eta,i}
\right]
&=
\frac{1}{\sqrt n}
\sum_{i=1}^n
\dot\psi_{\bm\alpha_\eta}
\left[
\Gamma_Z^{-1}
\bm Z_{\eta,i}\varepsilon_{\eta,i}
\right]
\\
&=
\frac{1}{\sqrt n}
\sum_{i=1}^n
{\bm{\varphi}}_\psi(O_i).
\end{aligned}
\]
Thus,
\begin{equation}
\label{eq:oracle-psi-AL}
\sqrt n(\widetilde{\bm{\psi}}-\bm{\psi})
=
\frac{1}{\sqrt n}
\sum_{i=1}^n
{\bm{\varphi}}_\psi(O_i)
+
o_p(1),
\end{equation}
where
\[
{\bm{\varphi}}_\psi(O_i)
=
\dot\psi_{\bm\alpha_\eta}
\left[
\Gamma_Z^{-1}
\bm Z_{\eta,i}\varepsilon_{\eta,i}
\right].
\]
Because \(\dot\psi_{\bm\alpha_\eta}\) is a deterministic linear map, it can
pass through the expectation:
\[
\begin{aligned}
\mathbb E\{{\bm{\varphi}}_\psi(O_i)\}
&=
\dot\psi_{\bm\alpha_\eta}
\left[
\mathbb E
\left\{
\Gamma_Z^{-1}
\bm Z_{\eta,i}\varepsilon_{\eta,i}
\right\}
\right]
\\
&=
\dot\psi_{\bm\alpha_\eta}
\left[
\Gamma_Z^{-1}
\mathbb E
\left(
\bm Z_{\eta,i}\varepsilon_{\eta,i}
\right)
\right]\\
&= 0,
\end{aligned}
\]
where the last equality follows from \eqref{eq:moment_zero}.

Moreover,
\[
\begin{aligned}
\|{\bm{\varphi}}_\psi(O_i)\|
&=
\left\|
\dot\psi_{\bm\alpha_\eta}
\left[
\Gamma_Z^{-1}
\bm Z_{\eta,i}\varepsilon_{\eta,i}
\right]
\right\| \\
&\leq
\|\dot\psi_{\bm\alpha_\eta}\|_{\mathrm{op}}
\left\|
\Gamma_Z^{-1}
\bm Z_{\eta,i}\varepsilon_{\eta,i}
\right\| \\
&\leq
\|\dot\psi_{\bm\alpha_\eta}\|_{\mathrm{op}}
\|\Gamma_Z^{-1}\|_{\mathrm{op}}
\|\bm Z_{\eta,i}\varepsilon_{\eta,i}\|
\end{aligned}
\]
Squaring and taking expectations gives
\begin{equation}
\label{eq:oracle-influence-second-moment}
\mathbb E\|{\bm{\varphi}}_\psi(O_i)\|^2
\leq
\|\dot\psi_{\bm\alpha_\eta}\|_{\mathrm{op}}^2
\|\Gamma_Z^{-1}\|_{\mathrm{op}}^2
\mathbb E\|\bm Z_{\eta,i}\varepsilon_{\eta,i}\|^2
<
\infty.
\end{equation}
The first two factors are finite because
\(\dot\psi_{\bm\alpha_\eta}\) is bounded and \(\Gamma_Z\) is a
finite-dimensional nonsingular matrix; the final factor is finite by
\eqref{eq:oracle-score-residual-second-moment}.

This proves the lemma.
\end{proof}

\begin{lemma}
\label{lem:oracle-downstream-bootstrap-expansion}
Under the conditions of
Lemma~\ref{lem:oracle-downstream-sampling-expansion}, the nonparametric
bootstrap oracle estimator satisfies
\[
\sqrt n(\widetilde{\bm{\psi}}^*-\widetilde{\bm{\psi}})
=
\frac{1}{\sqrt n}
\sum_{i=1}^n
(N_i^*-1)\widehat{\bm{\varphi}}_{\psi i}
+
o_{p^*}(1)
\]
in \(\mathbb P\)-probability, where \(\widehat{\bm{\varphi}}_{\psi i}
=
\dot\psi_{\widetilde{\bm\alpha}}
\left[
\widetilde\Gamma_Z^{-1}
\bm Z_{\eta,i}\widetilde\varepsilon_i
\right]\).
\end{lemma}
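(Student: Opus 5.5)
The plan is to mirror the proof of Lemma~\ref{lem:oracle-downstream-sampling-expansion}, replacing population quantities by empirical oracle quantities and sample averages by bootstrap-weighted averages. First I will derive an exact bootstrap OLS identity. The bootstrap normal equations give $\widetilde\Gamma_Z^*\widetilde{\bm\alpha}^*=\widetilde{\bm m}_Z^*$. Subtracting $\widetilde\Gamma_Z^*\widetilde{\bm\alpha}$ from both sides yields
\[
\sqrt n(\widetilde{\bm\alpha}^*-\widetilde{\bm\alpha})
=(\widetilde\Gamma_Z^*)^{-1}\frac{1}{\sqrt n}\sum_{i=1}^n N_i^*\bm Z_{\eta,i}\widetilde\varepsilon_i .
\]
The empirical normal equations give $\sum_i\bm Z_{\eta,i}\widetilde\varepsilon_i=\bm 0$. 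Hence the sum equals $n^{-1/2}\sum_i(N_i^*-1)\bm Z_{\eta,i}\widetilde\varepsilon_i$, which is conditionally centered.

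Next I will show that this centered sum is $O_{p^*}(1)$ in $\mathbb P$-probability. By the multinomial covariance identity used in Lemma~\ref{lem:conditional-bootstrap-mean-consistency}, its conditional second moment is at most $n^{-1}\sum_i\|\bm Z_{\eta,i}\widetilde\varepsilon_i\|^2$. I will bound this average by $2n^{-1}\sum_i\|\bm Z_{\eta,i}\|^2Y_i^2+2\|\widetilde{\bm\alpha}\|^2n^{-1}\sum_i\|\bm Z_{\eta,i}\|^4$. Both averages are $O_p(1)$ by the fourth moments and the law of large numbers, and $\widetilde{\bm\alpha}=O_p(1)$. Conditional Markov's inequality then gives the claim.

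I will then show that $\widetilde\Gamma_Z^*\to\Gamma_Z$. Write $\widetilde\Gamma_Z^*-\widetilde\Gamma_Z=n^{-1}\sum_i(N_i^*-1)\bm Z_{\eta,i}\bm Z_{\eta,i}^\top$. Lemma~\ref{lem:conditional-bootstrap-mean-consistency} applies with $\bm V_i=\operatorname{vec}(\bm Z_{\eta,i}\bm Z_{\eta,i}^\top)$, which is integrable, so this difference is $o_{p^*}(1)$. Combined with $\widetilde\Gamma_Z\to\Gamma_Z$ from the previous lemma, we get $\widetilde\Gamma_Z^*\to\Gamma_Z$ in conditional probability. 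In particular $\widetilde\Gamma_Z^*$ is invertible with conditional probability tending to one, and $(\widetilde\Gamma_Z^*)^{-1}-\widetilde\Gamma_Z^{-1}=o_{p^*}(1)$. Multiplying this $o_{p^*}(1)$ term by the $O_{p^*}(1)$ score sum gives
\[
\sqrt n(\widetilde{\bm\alpha}^*-\widetilde{\bm\alpha})=n^{-1/2}\sum_i(N_i^*-1)\widetilde\Gamma_Z^{-1}\bm Z_{\eta,i}\widetilde\varepsilon_i+o_{p^*}(1).
\]
It follows that $\widetilde{\bm\alpha}^*-\widetilde{\bm\alpha}=O_{p^*}(n^{-1/2})$.

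The final step is the delta method for $\bm\psi$, and I expect it to be the main obstacle. The expansion is centered at the random point $\widetilde{\bm\alpha}$, not at a fixed point, so a uniform Taylor argument is needed. I will use the mean-value form
\[
\bm\psi(\widetilde{\bm\alpha}^*)-\bm\psi(\widetilde{\bm\alpha})=\int_0^1\dot\psi_{\widetilde{\bm\alpha}+s(\widetilde{\bm\alpha}^*-\widetilde{\bm\alpha})}\,ds\,[\widetilde{\bm\alpha}^*-\widetilde{\bm\alpha}].
\]
Continuity of $\dot\psi$ at $\bm\alpha_\eta$ implies that the averaged derivative and $\dot\psi_{\widetilde{\bm\alpha}}$ both converge to $\dot\psi_{\bm\alpha_\eta}$. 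Since $\widetilde{\bm\alpha}\to\bm\alpha_\eta$ in $\mathbb P$-probability and $\widetilde{\bm\alpha}^*-\widetilde{\bm\alpha}=o_{p^*}(1)$, their difference is $o_{p^*}(1)$. Multiplying by the $O_{p^*}(1)$ quantity $\sqrt n(\widetilde{\bm\alpha}^*-\widetilde{\bm\alpha})$ leaves an $o_{p^*}(1)$ remainder. Passing the linear map $\dot\psi_{\widetilde{\bm\alpha}}$ through the sum then gives $n^{-1/2}\sum_i(N_i^*-1)\widehat{\bm\varphi}_{\psi i}$, which is the stated expansion.
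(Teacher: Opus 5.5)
Your proposal is correct and follows the paper's proof essentially step for step. Both arguments use the exact bootstrap OLS identity centered by the original-sample normal equations, conditional tightness of the score sum via the multinomial covariance identity and conditional Markov, consistency of the bootstrap Gram matrix and its inverse, and a delta-method step at $\widetilde{\bm\alpha}$. The differences are minor:
\begin{itemize}
\item You get $\widetilde\Gamma_Z^*-\widetilde\Gamma_Z=o_{p^*}(1)$ from Lemma~\ref{lem:conditional-bootstrap-mean-consistency}, where the paper uses a direct second-moment bound of order $n^{-1}$.
\item You bound the residual average through $Y_i$ and $\widetilde{\bm\alpha}$, where the paper uses $\varepsilon_{\eta,i}$ and $\widetilde{\bm\alpha}-\bm\alpha_\eta$.
\item You make the Taylor step at the random base point explicit with the integral mean-value form; the paper states this step more briefly.
\end{itemize}
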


\begin{proof}
The conditions of Lemma~\ref{lem:oracle-downstream-sampling-expansion} hold by
the present lemma's hypothesis. We may therefore use
\eqref{eq:oracle-gram-consistency}, \eqref{eq:oracle-alpha-rate}, and the moment
bounds established in its proof.

For the bootstrap sample, let \(N_i^*\) be the number of times observation
\(i\) appears. Define
\[
\widetilde\Gamma_Z^*
=
\frac{1}{n}
\sum_{i=1}^n
N_i^*\bm Z_{\eta,i}{\bm Z_{\eta,i}}^\top,
\qquad
\widetilde{\bm m}_Z^*
=
\frac{1}{n}
\sum_{i=1}^n
N_i^*\bm Z_{\eta,i} Y_i,
\]
and
\(
\widetilde{\bm\alpha}^*
=
\left(\widetilde\Gamma_Z^*\right)^{-1}\widetilde{\bm m}_Z^*
\).
The bootstrap estimator \(\widetilde{\bm\alpha}^*\) minimizes the bootstrap
least-squares objective
\[
\frac{1}{n}
\sum_{i=1}^n
N_i^*
\left(
Y_i-{\bm Z_{\eta,i}}^{\top}\bm\alpha
\right)^2.
\]
Differentiating with respect to \(\bm\alpha\) and setting the derivative to
zero gives the bootstrap normal equations:
\begin{equation} \label{eq:bootstrap_normal}
\frac{1}{n}
\sum_{i=1}^n
N_i^*\bm Z_{\eta,i}
\left(
Y_i-{\bm Z_{\eta,i}}^{\top}\widetilde{\bm\alpha}^*
\right)
=
0.
\end{equation}
Add and subtract
\({\bm Z_{\eta,i}}^{\top}\widetilde{\bm\alpha}\)
inside the residual:
\[
\begin{aligned}
0
&=
\frac{1}{n}
\sum_{i=1}^n
N_i^*\bm Z_{\eta,i}
\left[
Y_i
-
{\bm Z_{\eta,i}}^{\top}\widetilde{\bm\alpha}
-
{\bm Z_{\eta,i}}^{\top}
\left(
\widetilde{\bm\alpha}^*
-
\widetilde{\bm\alpha}
\right)
\right]
\\
&=
\frac{1}{n}
\sum_{i=1}^n
N_i^*\bm Z_{\eta,i}
\left(
Y_i-{\bm Z_{\eta,i}}^{\top}\widetilde{\bm\alpha}
\right)
\\
&\quad-
\left\{
\frac{1}{n}
\sum_{i=1}^n
N_i^*\bm Z_{\eta,i}{\bm Z_{\eta,i}}^{\top}
\right\}
\left(
\widetilde{\bm\alpha}^*
-
\widetilde{\bm\alpha}
\right).
\end{aligned}
\]
Moving the second term to the other side gives
\[
\widetilde\Gamma_Z^*
\left(
\widetilde{\bm\alpha}^*
-
\widetilde{\bm\alpha}
\right)
=
\frac{1}{n}
\sum_{i=1}^n
N_i^*\bm Z_{\eta,i}
\left(
Y_i-{\bm Z_{\eta,i}}^{\top}\widetilde{\bm\alpha}
\right).
\]
The original-sample normal equation is
\begin{equation}
\label{eq:oracle-sample-normal-equation}
\frac{1}{n}
\sum_{i=1}^n
\bm Z_{\eta,i}
\left(
Y_i-{\bm Z_{\eta,i}}^{\top}\widetilde{\bm\alpha}
\right)
=
0.
\end{equation}
Subtracting the original-sample equation from the bootstrap equation \eqref{eq:bootstrap_normal} gives
\[
\begin{aligned}
0
&=
\frac{1}{n}
\sum_{i=1}^n
\left[
N_i^*\bm Z_{\eta,i}
\left(
Y_i-{\bm Z_{\eta,i}}^{\top}\widetilde{\bm\alpha}^*
\right)
-
\bm Z_{\eta,i}
\left(
Y_i-{\bm Z_{\eta,i}}^{\top}\widetilde{\bm\alpha}
\right)
\right].
\end{aligned}
\]
Let \( \widetilde\varepsilon_i
=
Y_i-{\bm Z_{\eta,i}}^{\top}\widetilde{\bm\alpha} \). Because
\[
Y_i-{\bm Z_{\eta,i}}^{\top}\widetilde{\bm\alpha}^*
=
\widetilde\varepsilon_i
-
{\bm Z_{\eta,i}}^{\top}
\left(
\widetilde{\bm\alpha}^*
-
\widetilde{\bm\alpha}
\right),
\]
substitution gives
\[
\begin{aligned}
0
&=
\frac{1}{n}
\sum_{i=1}^n
\left[
N_i^*\bm Z_{\eta,i}
\left\{
\widetilde\varepsilon_i
-
{\bm Z_{\eta,i}}^{\top}
\left(
\widetilde{\bm\alpha}^*
-
\widetilde{\bm\alpha}
\right)
\right\}
-
\bm Z_{\eta,i}\widetilde\varepsilon_i
\right]
\\
&=
\frac{1}{n}
\sum_{i=1}^n
(N_i^*-1)\bm Z_{\eta,i}\widetilde\varepsilon_i
-
\left\{
\frac{1}{n}
\sum_{i=1}^n
N_i^*\bm Z_{\eta,i}{\bm Z_{\eta,i}}^{\top}
\right\}
\left(
\widetilde{\bm\alpha}^*
-
\widetilde{\bm\alpha}
\right)
\\
&=
\frac{1}{n}
\sum_{i=1}^n
(N_i^*-1)\bm Z_{\eta,i}\widetilde\varepsilon_i
-
\widetilde\Gamma_Z^*
\left(
\widetilde{\bm\alpha}^*
-
\widetilde{\bm\alpha}
\right).
\end{aligned}
\]
Rearranging, multiplying by
\(\left(\widetilde\Gamma_Z^*\right)^{-1}\), and multiplying by \(\sqrt n\)
gives
\begin{equation} \label{eq:bootstrap-oracle-ols-exact}
\sqrt n
\left(
\widetilde{\bm\alpha}^*
-
\widetilde{\bm\alpha}
\right)
=
\left(\widetilde\Gamma_Z^*\right)^{-1}
\frac{1}{\sqrt n}
\sum_{i=1}^n
(N_i^*-1)\bm Z_{\eta,i}\widetilde\varepsilon_i.
\end{equation}

Conditional on the data, the bootstrap Gram matrix satisfies
\[
\begin{aligned}
\widetilde\Gamma_Z^*
-
\widetilde\Gamma_Z
&=
\frac{1}{n}
\sum_{i=1}^n
N_i^*
\bm Z_{\eta,i}{\bm Z_{\eta,i}}^{\top}
-
\frac{1}{n}
\sum_{i=1}^n
\bm Z_{\eta,i}{\bm Z_{\eta,i}}^{\top}
\\
&=
\frac{1}{n}
\sum_{i=1}^n
(N_i^*-1)
\bm Z_{\eta,i}{\bm Z_{\eta,i}}^{\top}.
\end{aligned}
\]
Let
\(
A_i=\bm Z_{\eta,i}{\bm Z_{\eta,i}}^\top
\).
The variables \(\|\bm Z_{\eta,i}\|^4\) are independent and identically
distributed under the conditions inherited from
Lemma~\ref{lem:oracle-downstream-sampling-expansion}, and their expectation is
finite by \eqref{eq:oracle-fourth-moments}. The law of large numbers therefore
gives
\(
n^{-1}\sum_{i=1}^n\|\bm Z_{\eta,i}\|^4=O_p(1)
\).
For multinomial bootstrap counts,
\(
\mathbb E^*\{(N_i^*-1)(N_j^*-1)\}
=
\mathbf 1\{i=j\}-1/n
\).
Using the Frobenius inner product therefore gives
\begin{equation} \label{eq:Z_Op1}
\begin{aligned}
\mathbb E^*
\left\|
\widetilde\Gamma_Z^*-\widetilde\Gamma_Z
\right\|_{\mathrm F}^2
&=
\frac{1}{n^2}
\sum_{i=1}^n
\sum_{j=1}^n
\left\{
\mathbf 1\{i=j\}-\frac{1}{n}
\right\}
\langle A_i,A_j\rangle_{\mathrm F}
\\
&=
\frac{1}{n^2}
\sum_{i=1}^n
\|A_i\|_{\mathrm F}^2
-
\frac{1}{n^3}
\left\langle
\sum_{i=1}^n A_i,\,
\sum_{j=1}^n A_j
\right\rangle_{\mathrm F}\\
&=
\frac{1}{n^2}
\sum_{i=1}^n
\|A_i\|_{\mathrm F}^2
-
\frac{1}{n^3}
\left\|
\sum_{i=1}^n A_i
\right\|_{\mathrm F}^2
\\
&\leq
\frac{1}{n^2}
\sum_{i=1}^n
\|\bm Z_{\eta,i}\|^4
=
O_p(n^{-1}),
\end{aligned}
\end{equation}
where the final equality uses the law-of-large-numbers rate established
immediately before this calculation.

We now apply conditional Markov's inequality with $\mathbb E^*
\left[
n
\left\|
\widetilde\Gamma_Z^*
-
\widetilde\Gamma_Z
\right\|_{\mathrm F}^2 \right]= O_p(1)$ from \eqref{eq:Z_Op1} to conclude that
\[
\sqrt n
\left\|
\widetilde\Gamma_Z^*
-
\widetilde\Gamma_Z
\right\|_{\mathrm F}
=
O_{p^*}(1)
\]
in \(\mathbb P\)-probability, or equivalently,
\[
\left\|
\widetilde\Gamma_Z^*
-
\widetilde\Gamma_Z
\right\|_{\mathrm F}
=
O_{p^*}(n^{-1/2}) =o_{p^*}(1)
\]
in \(\mathbb P\)-probability. Together with
\eqref{eq:oracle-gram-consistency}, the triangle inequality gives
\[
\left\|
\widetilde\Gamma_Z^*-
\Gamma_Z
\right\|_{\mathrm F}
\leq
\left\|
\widetilde\Gamma_Z^*-
\widetilde\Gamma_Z
\right\|_{\mathrm F}
+
\left\|
\widetilde\Gamma_Z-
\Gamma_Z
\right\|_{\mathrm F}
=
o_{p^*}(1)
\]
in \(\mathbb P\)-probability. Thus, both bootstrap and original-sample Gram
matrices converge to the nonsingular matrix \(\Gamma_Z\). The assumptions of
continuity of matrix inversion are therefore satisfied, and
\begin{equation}
\label{eq:bootstrap-oracle-inverse-consistency}
\left(\widetilde\Gamma_Z^*\right)^{-1}
-
\widetilde\Gamma_Z^{-1}
=
o_{p^*}(1)
\end{equation}
in \(\mathbb P\)-probability.

We also need to control the bootstrap sum in
\eqref{eq:bootstrap-oracle-ols-exact}. The same multinomial covariance
calculation gives
\[
\begin{aligned}
&
\mathbb E^*
\left\|
\frac{1}{\sqrt n}
\sum_{i=1}^n
(N_i^*-1)
\bm Z_{\eta,i}\widetilde\varepsilon_i
\right\|^2
\\
&=
\frac{1}{n}
\sum_{i=1}^n
\sum_{j=1}^n
\mathbb E^*
\left\{
(N_i^*-1)(N_j^*-1)
\right\}
\left\langle
\bm Z_{\eta,i}\widetilde\varepsilon_i,\,
\bm Z_{\eta,j}\widetilde\varepsilon_j
\right\rangle
\\
&=
\frac{1}{n}
\sum_{i=1}^n
\sum_{j=1}^n
\left\{
\mathbf 1\{i=j\}-\frac{1}{n}
\right\}
\left\langle
\bm Z_{\eta,i}\widetilde\varepsilon_i,\,
\bm Z_{\eta,j}\widetilde\varepsilon_j
\right\rangle
\\
&=
\frac{1}{n}
\sum_{i=1}^n
\left\|
\bm Z_{\eta,i}\widetilde\varepsilon_i
\right\|^2
-
\frac{1}{n^2}
\left\|
\sum_{i=1}^n
\bm Z_{\eta,i}\widetilde\varepsilon_i
\right\|^2.
\end{aligned}
\]
The original-sample normal equations give
\(
\sum_{i=1}^n
\bm Z_{\eta,i}\widetilde\varepsilon_i
=
0.
\)
Consequently, the second term vanishes, and
\begin{equation}
\label{eq:bootstrap-score-sum-second-moment}
\mathbb E^*
\left\|
\frac{1}{\sqrt n}
\sum_{i=1}^n
(N_i^*-1)
\bm Z_{\eta,i}\widetilde\varepsilon_i
\right\|^2
=
\frac{1}{n}
\sum_{i=1}^n
\left\|
\bm Z_{\eta,i}\widetilde\varepsilon_i
\right\|^2.
\end{equation}
Because
\(
\widetilde\varepsilon_i
=
\varepsilon_{\eta,i}
-
{\bm Z_{\eta,i}}^\top
(\widetilde{\bm\alpha}-\bm\alpha_\eta)
\),
\[
\begin{aligned}
\|\bm Z_{\eta,i}\widetilde\varepsilon_i\|^2
&=
\left\|
\bm Z_{\eta,i}\varepsilon_{\eta,i}
-
\bm Z_{\eta,i}{\bm Z_{\eta,i}}^{\top}
\left(
\widetilde{\bm\alpha}-\bm\alpha_\eta
\right)
\right\|^2
\\
&\leq
2\|\bm Z_{\eta,i}\varepsilon_{\eta,i}\|^2
+
2
\left\|
\bm Z_{\eta,i}{\bm Z_{\eta,i}}^{\top}
\left(
\widetilde{\bm\alpha}-\bm\alpha_\eta
\right)
\right\|^2
\\
&\leq
2\|\bm Z_{\eta,i}\varepsilon_{\eta,i}\|^2
+
2\|\bm Z_{\eta,i}\|^4
\|\widetilde{\bm\alpha}-\bm\alpha_\eta\|^2.
\end{aligned}
\]
Averaging this inequality over \(i\) gives
\begin{equation} \label{eq:two_upper_bounds}
\begin{aligned}
\frac{1}{n}
\sum_{i=1}^n
\|\bm Z_{\eta,i}\widetilde\varepsilon_i\|^2
&\leq
\frac{2}{n}
\sum_{i=1}^n
\|\bm Z_{\eta,i}\varepsilon_{\eta,i}\|^2
\\
&\quad+
2\|\widetilde{\bm\alpha}-\bm\alpha_\eta\|^2
\frac{1}{n}
\sum_{i=1}^n
\|\bm Z_{\eta,i}\|^4.
\end{aligned}
\end{equation}
The summands in the first empirical average are independent and identically
distributed and have finite expectation by
\eqref{eq:oracle-score-residual-second-moment}. The law of large numbers
therefore gives
\[
\frac{1}{n}
\sum_{i=1}^n
\|\bm Z_{\eta,i}\varepsilon_{\eta,i}\|^2
=
O_p(1).
\]
Similarly, \eqref{eq:oracle-fourth-moments} supplies the finite expectation
required for the law of large numbers to give
\[
\frac{1}{n}
\sum_{i=1}^n
\|\bm Z_{\eta,i}\|^4
=
O_p(1).
\]
Finally, \eqref{eq:oracle-alpha-rate} gives
\(
\|\widetilde{\bm\alpha}-\bm\alpha_\eta\|^2=O_p(n^{-1})
\). Hence, the second term in \eqref{eq:two_upper_bounds} is
\(O_p(n^{-1})O_p(1)=O_p(n^{-1})\), whereas the first is \(O_p(1)\).
Consequently,
\begin{equation}
\label{eq:bootstrap-residual-average-rate}
\frac{1}{n}
\sum_{i=1}^n
\|\bm Z_{\eta,i}\widetilde\varepsilon_i\|^2
=
O_p(1).
\end{equation}
The squared norm in \eqref{eq:bootstrap-score-sum-second-moment} is
nonnegative, and its conditional expectation is \(O_p(1)\) by
\eqref{eq:bootstrap-residual-average-rate}. The assumptions of conditional
Markov's inequality are therefore satisfied, and it gives
\[
\begin{aligned}
&
\mathbb P^*
\left\{
\left\|
\frac{1}{\sqrt n}
\sum_{i=1}^n
(N_i^*-1)
\bm Z_{\eta,i}\widetilde\varepsilon_i
\right\|
>
M
\right\}
\\
&=
\mathbb P^*
\left\{
\left\|
\frac{1}{\sqrt n}
\sum_{i=1}^n
(N_i^*-1)
\bm Z_{\eta,i}\widetilde\varepsilon_i
\right\|^2
>
M^2
\right\}
\\
&\leq
\frac{1}{M^2}
\mathbb E^*
\left\|
\frac{1}{\sqrt n}
\sum_{i=1}^n
(N_i^*-1)
\bm Z_{\eta,i}\widetilde\varepsilon_i
\right\|^2
\\
&=
\frac{1}{nM^2}
\sum_{i=1}^n
\left\|
\bm Z_{\eta,i}\widetilde\varepsilon_i
\right\|^2,
\end{aligned}
\]
where the final equality follows from
\eqref{eq:bootstrap-score-sum-second-moment}. By
\eqref{eq:bootstrap-residual-average-rate},
\[
\frac{1}{n}
\sum_{i=1}^n
\left\|
\bm Z_{\eta,i}\widetilde\varepsilon_i
\right\|^2
=
O_p(1).
\]
Therefore,
\begin{equation}
\label{eq:bootstrap-score-sum-rate}
\left\|
\frac{1}{\sqrt n}
\sum_{i=1}^n
(N_i^*-1)
\bm Z_{\eta,i}\widetilde\varepsilon_i
\right\|
=
O_{p^*}(1)
\end{equation}
in \(\mathbb P\)-probability. By
\eqref{eq:bootstrap-oracle-inverse-consistency} and
\eqref{eq:bootstrap-score-sum-rate},
\[
\begin{aligned}
&
\left\|
\left\{
(\widetilde\Gamma_Z^*)^{-1}
-
\widetilde\Gamma_Z^{-1}
\right\}
\frac{1}{\sqrt n}
\sum_{i=1}^n
(N_i^*-1)
\bm Z_{\eta,i}\widetilde\varepsilon_i
\right\|
\\
&\leq
\left\|
(\widetilde\Gamma_Z^*)^{-1}
-
\widetilde\Gamma_Z^{-1}
\right\|_{\mathrm{op}}
\left\|
\frac{1}{\sqrt n}
\sum_{i=1}^n
(N_i^*-1)
\bm Z_{\eta,i}\widetilde\varepsilon_i
\right\|
\\
&=
o_{p^*}(1)O_{p^*}(1)
=
o_{p^*}(1).
\end{aligned}
\]
Thus, replacing
\(
(\widetilde\Gamma_Z^*)^{-1}
\)
by \(\widetilde\Gamma_Z^{-1}\) in
\eqref{eq:bootstrap-oracle-ols-exact} changes its right-hand side by
\(o_{p^*}(1)\). Equation~\eqref{eq:oracle-gram-consistency} and the
nonsingularity of \(\Gamma_Z\) give
\(\|\widetilde\Gamma_Z^{-1}\|_{\mathrm{op}}=O_p(1)\), and
\eqref{eq:bootstrap-oracle-inverse-consistency} consequently gives
\(\|(\widetilde\Gamma_Z^*)^{-1}\|_{\mathrm{op}}=O_{p^*}(1)\) in
\(\mathbb P\)-probability. Combining this rate with
\eqref{eq:bootstrap-score-sum-rate} in
\eqref{eq:bootstrap-oracle-ols-exact} gives
\(
\widetilde{\bm\alpha}^*-\widetilde{\bm\alpha}
=
O_{p^*}(n^{-1/2})
\).

Lemma~\ref{lem:oracle-downstream-sampling-expansion} assumes that \(\bm{\psi}\) is
continuously differentiable. Moreover,
\eqref{eq:oracle-alpha-rate} gives
\(\widetilde{\bm\alpha}\overset{p}{\longrightarrow}\bm\alpha_\eta\),
and the preceding conditional rate gives
\(\widetilde{\bm\alpha}^*-\widetilde{\bm\alpha}=O_{p^*}(n^{-1/2})\).
Thus, with probability tending to one, both coefficient vectors lie in a
neighborhood on which the derivative of \(\bm{\psi}\) is continuous. The
assumptions of the conditional first-order Taylor expansion at
\(\widetilde{\bm\alpha}\) are therefore satisfied, and
\[
\widetilde{\bm{\psi}}^*-\widetilde{\bm{\psi}}
=
\dot\psi_{\widetilde{\bm\alpha}}
\left[
\widetilde{\bm\alpha}^*-
\widetilde{\bm\alpha}
\right]
+
\bm r_{\psi,n}^*,
\qquad
\sqrt n\|\bm r_{\psi,n}^*\|
=
o_{p^*}(1)
\]
in \(\mathbb P\)-probability. Multiplying by \(\sqrt n\) gives
\[
\begin{aligned}
\sqrt n
\left(
\widetilde{\bm{\psi}}^*-\widetilde{\bm{\psi}}
\right)
&=
\sqrt n\,
\dot\psi_{\widetilde{\bm\alpha}}
\left[
\widetilde{\bm\alpha}^*
-
\widetilde{\bm\alpha}
\right]
+
\sqrt n\,\bm r_{\psi,n}^*
\\
&=
\dot\psi_{\widetilde{\bm\alpha}}
\left[
\sqrt n
\left(
\widetilde{\bm\alpha}^*
-
\widetilde{\bm\alpha}
\right)
\right]
+
o_{p^*}(1),
\end{aligned}
\]
where the second equality uses the linearity of
\(\dot\psi_{\widetilde{\bm\alpha}}\).

Substituting \eqref{eq:bootstrap-oracle-ols-exact} into the preceding Taylor expansion gives
\[
\begin{aligned}
\sqrt n
\left(
\widetilde{\bm{\psi}}^*-\widetilde{\bm{\psi}}
\right)
&=
\dot\psi_{\widetilde{\bm\alpha}}
\left[
\left(
\widetilde\Gamma_Z^*
\right)^{-1}
\frac{1}{\sqrt n}
\sum_{i=1}^n
(N_i^*-1)
\bm Z_{\eta,i}\widetilde\varepsilon_i
\right]
+
o_{p^*}(1).
\end{aligned}
\]
The inverse-replacement result proved above states that
\[
\begin{aligned}
\left(
\widetilde\Gamma_Z^*
\right)^{-1}
\frac{1}{\sqrt n}
\sum_{i=1}^n
(N_i^*-1)
\bm Z_{\eta,i}\widetilde\varepsilon_i
=
\widetilde\Gamma_Z^{-1}
\frac{1}{\sqrt n}
\sum_{i=1}^n
(N_i^*-1)
\bm Z_{\eta,i}\widetilde\varepsilon_i
+
\bm r_{\alpha,n}^*,
\end{aligned}
\]
where
\(
\|\bm r_{\alpha,n}^*\|=o_{p^*}(1)
\)
in \(\mathbb P\)-probability. Because
\(\widetilde{\bm\alpha}\overset{p}{\longrightarrow}\bm\alpha_\eta\) and the
derivative of \(\bm{\psi}\) is continuous,
\[
\left\|
\dot\psi_{\widetilde{\bm\alpha}}
\right\|_{\mathrm{op}}
=
O_p(1).
\]
Therefore,
\[
\begin{aligned}
\left\|
\dot\psi_{\widetilde{\bm\alpha}}
\left[
\bm r_{\alpha,n}^*
\right]
\right\|
&\leq
\left\|
\dot\psi_{\widetilde{\bm\alpha}}
\right\|_{\mathrm{op}}
\|\bm r_{\alpha,n}^*\|
\\
&=
O_p(1)o_{p^*}(1)
\\
&=
o_{p^*}(1).
\end{aligned}
\]
Using the linearity of the derivative to separate the remainder consequently
gives
\[
\begin{aligned}
\sqrt n
\left(
\widetilde{\bm{\psi}}^*-\widetilde{\bm{\psi}}
\right)
&=
\dot\psi_{\widetilde{\bm\alpha}}
\left[
\widetilde\Gamma_Z^{-1}
\frac{1}{\sqrt n}
\sum_{i=1}^n
(N_i^*-1)
\bm Z_{\eta,i}\widetilde\varepsilon_i
\right]
+
o_{p^*}(1).
\end{aligned}
\]

The linearity of
\(\dot\psi_{\widetilde{\bm\alpha}}\) then gives
\[
\begin{aligned}
\dot\psi_{\widetilde{\bm\alpha}}
\left[
\widetilde\Gamma_Z^{-1}
\frac{1}{\sqrt n}
\sum_{i=1}^n
(N_i^*-1)
\bm Z_{\eta,i}\widetilde\varepsilon_i
\right]
=
\frac{1}{\sqrt n}
\sum_{i=1}^n
(N_i^*-1)
\dot\psi_{\widetilde{\bm\alpha}}
\left[
\widetilde\Gamma_Z^{-1}
\bm Z_{\eta,i}\widetilde\varepsilon_i
\right].
\end{aligned}
\]
Defining
\begin{equation}
\label{eq:bootstrap-oracle-influence-definition}
\widehat{\bm{\varphi}}_{\psi i}
=
\dot\psi_{\widetilde{\bm\alpha}}
\left[
\widetilde\Gamma_Z^{-1}
\bm Z_{\eta,i}\widetilde\varepsilon_i
\right]
\end{equation}
therefore gives
\begin{equation}
\label{eq:bootstrap-oracle-psi-AL}
\sqrt n
\left(
\widetilde{\bm{\psi}}^*-\widetilde{\bm{\psi}}
\right)
=
\frac{1}{\sqrt n}
\sum_{i=1}^n
(N_i^*-1)\widehat{\bm{\varphi}}_{\psi i}
+
o_{p^*}(1)
\end{equation}
in \(\mathbb P\)-probability.

This proves the lemma.
\end{proof}

Define the moment vector
\[
\bm{\chi}(O_i)
=
\left[
\bm X_i^\top,\,
T_i,\,
Y_i,\,
\operatorname{vec}(\bm X_i\bm X_i^\top)^\top,\,
(T_i\bm X_i)^\top,\,
(Y_i\bm X_i)^\top,\,
T_i^2,\,
T_iY_i
\right]^\top,
\]
and let
\[
\widehat{\bm\vartheta}
=
\frac{1}{n}\sum_{i=1}^n\bm{\chi}(O_i),
\qquad
\bm\vartheta_0
=
\mathbb E\{\bm{\chi}(O_i)\}.
\]

For a moment vector \(\bm\vartheta\) near \(\bm\vartheta_0\), let
\(\mathcal S(\bm\vartheta)\) denote the score space obtained from the
corresponding two-dimensional generalized eigenspace, and define
\[
\mathcal F(\bm\vartheta)
=
P_{\mathcal S(\bm\vartheta)}.
\]
By construction,
\[
\mathcal F(\bm\vartheta_0)
=
P_{\mathcal S_\eta},
\qquad
\mathcal F(\widehat{\bm\vartheta})
=
P_{\widehat{\mathcal S}_\eta}.
\]

\begin{lemma}
\label{lem:score-map-asymptotic-linearity}
Suppose Assumption~\ref{assump:L-asymptotic-linearity} holds. Then the map
\[
\mathcal F:\bm\vartheta\mapsto P_{\mathcal S(\bm\vartheta)}
\]
is continuously differentiable in a neighborhood of $\bm\vartheta_0$. Define
\[
\mathcal I(O)
=
\dot{\mathcal F}_{\bm\vartheta_0}
\left[\bm{\chi}(O)-\bm\vartheta_0\right].
\]
Then
\begin{equation}
\label{eq:one-solve-score-AL}
\sqrt n
\left(
P_{\widehat{\mathcal S}_\eta}-P_{\mathcal S_\eta}
\right)
=
\frac{1}{\sqrt n}
\sum_{i=1}^n
\mathcal I(O_i)
+o_p(1).
\end{equation}
Moreover, if $N_i^*$ denotes the number of times unit $i$ appears in a
nonparametric bootstrap sample, then
\begin{equation}
\label{eq:one-solve-score-bootstrap-AL}
\sqrt n
\left(
P_{\widehat{\mathcal S}_\eta^*}-P_{\widehat{\mathcal S}_\eta}
\right)
=
\frac{1}{\sqrt n}
\sum_{i=1}^n
(N_i^*-1)\widehat{\mathcal I}_i
+o_{p^*}(1)
\end{equation}
in $\mathbb P$-probability, where
\[
\widehat{\mathcal I}_i
=
\dot{\mathcal F}_{\bm\vartheta_0}
\left[\bm{\chi}(O_i)-\widehat{\bm\vartheta}\right].
\]
The influence function satisfies
\[
\mathbb E\{\mathcal I(O)\}=0,
\qquad
\mathbb E\|\mathcal I(O)\|_{\mathrm F}^2<\infty.
\]
In addition,
\[
\sqrt n
\left\|
P_{\widehat{\mathcal S}_\eta}
-
P_{\mathcal S_\eta}
\right\|_{\mathrm{op}}
=
O_p(1),
\]
\[
\sqrt n
\left\|
P_{\widehat{\mathcal S}_\eta^*}
-
P_{\widehat{\mathcal S}_\eta}
\right\|_{\mathrm{op}}
=
O_{p^*}(1),
\]
and
\[
\sqrt n
\left\|
P_{\widehat{\mathcal S}_\eta^*}
-
P_{\mathcal S_\eta}
\right\|_{\mathrm{op}}
=
O_{p^*}(1)
\]
in \(\mathbb P\)-probability for the two bootstrap conclusions.
\end{lemma}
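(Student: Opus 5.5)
The plan is to write $\mathcal F$ as a composition of finitely many smooth finite-dimensional maps, apply the delta method for the sampling expansion, and apply the bootstrap delta method for the bootstrap expansion.

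\emph{Step 1: smoothness.} I would build $\mathcal F$ as the chain
\[
\bm\vartheta\mapsto(H,G_\eta,\Sigma_{\bm X})\mapsto\mathcal L=G_\eta^{-1/2}HG_\eta^{-1/2}\mapsto\Pi\mapsto B=G_\eta^{-1/2}\Pi\mapsto P_{\Sigma_{\bm X}^{1/2}B}.
\]
Each link is handled as follows.
\begin{itemize}
\item[(i)] Centered covariances, $\bm v_T$, $\bm v_R$ (which involve $\operatorname{cov}(Y,T)/\operatorname{var}(T)$), and hence $H$ and $G_\eta$, are rational functions of $\bm\vartheta$. Their denominators are bounded away from zero by Assumption~\ref{assump:L-asymptotic-linearity}(a), so these maps are $C^1$ near $\bm\vartheta_0$.
\item[(ii)] $A\mapsto A^{-1/2}$ is $C^1$ on the positive-definite cone, as already used in the proof of Lemma~\ref{lem:spectral-subspace-error}.
\item[(iii)] The spectral projection onto the two-dimensional nonzero eigenspace is the Riesz projection $\Pi=(2\pi i)^{-1}\oint_\Gamma(zI-\mathcal L)^{-1}dz$. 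Here $\Gamma$ encloses the two nonzero eigenvalues and excludes $0$, which is possible by the positive gap. This map is analytic in $\mathcal L$ on a neighborhood.
\item[(iv)] $\Sigma_{\bm X}^{1/2}B$ has full column rank by nonsingularity of $\Gamma_{S,\eta}$. Hence $M\mapsto M(M^\top M)^{-1}M^\top$ is $C^1$ near it.
\end{itemize}
By Lemma~\ref{lem:score-space-distance-equivalence}, $P_{\mathcal S(\bm\vartheta)}$ is identified isometrically with this Euclidean projection, so $\mathcal F$ is $C^1$ near $\bm\vartheta_0$. The first-order rank argument also needs care here: $\Pi Q$ spans the perturbed eigenspace, as in Lemma~\ref{lem:spectral-subspace-error}. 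One subtlety is that the $L_2(\mathbb P_X)$ geometry uses the population $\Sigma_{\bm X}$ (evaluation law). I would either treat $\Sigma_{\bm X}$ as fixed or include it through $\bm\vartheta$; both are smooth.

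\emph{Step 2: sampling expansion.} By Assumption~\ref{assump:L-asymptotic-linearity}(b) and the multivariate CLT, $\sqrt n(\widehat{\bm\vartheta}-\bm\vartheta_0)=n^{-1/2}\sum_i\{\bm\chi(O_i)-\bm\vartheta_0\}=O_p(1)$. A first-order Taylor expansion with continuous derivative then gives \eqref{eq:one-solve-score-AL}, by the same remainder argument as in Lemma~\ref{lem:oracle-downstream-sampling-expansion}. The influence function has mean zero because $\mathbb E\bm\chi=\bm\vartheta_0$. Its finite second moment follows from $\|\mathcal I\|_F\le\|\dot{\mathcal F}_{\bm\vartheta_0}\|\,\|\bm\chi-\bm\vartheta_0\|$.

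\emph{Step 3: bootstrap expansion.} Write $\widehat{\bm\vartheta}^*-\widehat{\bm\vartheta}=n^{-1}\sum_i(N_i^*-1)\{\bm\chi(O_i)-\widehat{\bm\vartheta}\}$. The multinomial covariance computation gives conditional second moment $n^{-2}\sum_i\|\bm\chi(O_i)-\widehat{\bm\vartheta}\|^2=O_p(n^{-1})$, hence $\sqrt n(\widehat{\bm\vartheta}^*-\widehat{\bm\vartheta})=O_{p^*}(1)$. Both $\widehat{\bm\vartheta}^*$ and $\widehat{\bm\vartheta}$ then lie in the $C^1$ neighborhood with probability tending to one, so Taylor expansion at $\widehat{\bm\vartheta}$ gives derivative $\dot{\mathcal F}_{\widehat{\bm\vartheta}}$. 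Replacing $\dot{\mathcal F}_{\widehat{\bm\vartheta}}$ by $\dot{\mathcal F}_{\bm\vartheta_0}$ costs $o_p(1)\cdot O_{p^*}(1)$ by continuity of the derivative. This yields \eqref{eq:one-solve-score-bootstrap-AL}.

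\emph{Step 4: rate statements.} The three rate statements follow at once. The first two come from the expansions together with $\|\cdot\|_{\mathrm{op}}\le\|\cdot\|_F$ and the $O_p(1)$ and $O_{p^*}(1)$ bounds on the linear terms. The third follows from the triangle inequality.

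The main obstacle is Step 1(iii)–(iv): establishing that the eigenspace projection is differentiable. This rests on the Riesz-contour representation, which needs only the separation of the retained cluster from zero, not between the two retained eigenvalues. It also requires confirming that the rank-two structure persists in a neighborhood, which follows from continuity of eigenvalues and the gap.
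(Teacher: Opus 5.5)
Your proposal is correct and follows the paper's proof essentially step for step. The shared steps are: $C^1$-smoothness of $\mathcal L(\bm\vartheta)$ via the inverse square root; the Riesz-contour projection $\Pi(\bm\vartheta)$, which needs only the gap from zero; an explicit $C(C^\top C)^{-1}C^\top$ formula for the score projection; the delta method; the multinomial second-moment bound for $\widehat{\bm\vartheta}^*-\widehat{\bm\vartheta}$ together with the swap $\dot{\mathcal F}_{\widehat{\bm\vartheta}}\to\dot{\mathcal F}_{\bm\vartheta_0}$; and the triangle inequality for the rates.

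Link (iv) needs one repair. As written, $\Sigma_{\bm X}^{1/2}G_\eta^{-1/2}\Pi$ is $p\times p$ of rank two, so it is not of full column rank and $M^\top M$ is singular. You must instead use the $p\times 2$ matrix $\Sigma_{\bm X}^{1/2}G_\eta(\bm\vartheta)^{-1/2}\Pi(\bm\vartheta)Q$ with $Q$ fixed, as your $\Pi Q$ remark suggests; the paper achieves the same thing by extracting two fixed columns $j_1,j_2$. You must also hold $\Sigma_{\bm X}$ at its population value rather than routing it through $\bm\vartheta$. Otherwise $\mathcal F(\widehat{\bm\vartheta})$ would no longer equal the $L_2(\mathbb P_X)$ projection $P_{\widehat{\mathcal S}_\eta}$.
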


\begin{proof}
By Assumption~\ref{assump:L-asymptotic-linearity}(a), the maps
$\bm\vartheta\mapsto H(\bm\vartheta)$ and
$\bm\vartheta\mapsto G_\eta(\bm\vartheta)$ are continuously differentiable near
$\bm\vartheta_0$. The same assumption makes $G_\eta(\bm\vartheta)$ positive definite throughout a
sufficiently small neighborhood. The inverse-square-root map is continuously
differentiable on the positive-definite cone. Therefore
\begin{equation}
\label{eq:one-solve-L-map}
\mathcal L(\bm\vartheta)
=
G_\eta(\bm\vartheta)^{-1/2}H(\bm\vartheta)G_\eta(\bm\vartheta)^{-1/2}
\end{equation}
is continuously differentiable near $\bm\vartheta_0$.

Let $\Pi$ be the rank-two spectral projection of $\mathcal L(\bm\vartheta_0)$ associated
with its nonzero eigenvalues. The spectral-gap condition allows us to choose a
finite union $\mathcal C$ of disjoint, positively oriented closed contours
that encloses the nonzero spectrum and excludes zero. Continuity of
\eqref{eq:one-solve-L-map} implies that the same contours separate the target
cluster for every $\bm\vartheta$ in a sufficiently small neighborhood of
$\bm\vartheta_0$. The
corresponding spectral projection is
\begin{equation}
\label{eq:rank-two-Riesz-projection}
\Pi(\bm\vartheta)
=
\frac{1}{2\pi\mathrm i}
\oint_{\mathcal C}
\{zI-\mathcal L(\bm\vartheta)\}^{-1}\,dz.
\end{equation}
The matrix inverse is continuously differentiable wherever it exists.
Equation~\eqref{eq:rank-two-Riesz-projection} therefore shows that
$\bm\vartheta\mapsto\Pi(\bm\vartheta)$ is continuously differentiable. This conclusion concerns the
joint rank-two projection and does not require the two enclosed eigenvalues to
be distinct.

The generalized-eigenvector coefficient space is
\[
\mathcal W_\eta(\bm\vartheta)
=
\operatorname{col}\{G_\eta(\bm\vartheta)^{-1/2}\Pi(\bm\vartheta)\}.
\]
Consequently, the corresponding $L_2(\mathbb P_X)$ score space is represented
by the Euclidean column space
\begin{equation}
\label{eq:one-solve-score-column-space}
\operatorname{col}
\left\{
\Sigma_{\bm X}^{1/2}G_\eta(\bm\vartheta)^{-1/2}\Pi(\bm\vartheta)
\right\}.
\end{equation}

Let
\(
A(\bm\vartheta)
=
\Sigma_{\bm X}^{1/2}
G_\eta(\bm\vartheta)^{-1/2}
\Pi(\bm\vartheta).
\) We previously showed that the spectral-gap condition ensures that
\(\Pi(\bm\vartheta)\) remains a rank-two projection when
\(\bm\vartheta\) is near \(\bm\vartheta_0\). Therefore,
\begin{equation} \label{eq:rank_upper_bound}
\operatorname{rank}\{A(\bm\vartheta)\}
\leq
\operatorname{rank}\{\Pi(\bm\vartheta)\}
=
2.
\end{equation}
We now show that the rank of \(A(\bm\vartheta_0)\) cannot fall below two. Let
\(Q\in\mathbb R^{p\times2}\) be an orthonormal basis for
\(\operatorname{ran}\{\Pi(\bm\vartheta_0)\}\), so that
\(
\Pi(\bm\vartheta_0)=QQ^\top,
\)
and define
\(
B_\eta
=
G_\eta(\bm\vartheta_0)^{-1/2}Q
\) so that \( A(\bm\vartheta_0)
=
\Sigma_{\bm X}^{1/2}
B_\eta Q^\top.\) Because \(Q^\top Q=I_2\),
\[
\operatorname{rank}\{A(\bm\vartheta_0)\}
=
\operatorname{rank}
\left(
\Sigma_{\bm X}^{1/2}B_\eta
\right).
\]
We will show that \(\operatorname{rank}
\left(
\Sigma_{\bm X}^{1/2}B_\eta
\right) = 2.\) We have
\[
\begin{aligned}
\left\{
\Sigma_{\bm X}^{1/2}B_\eta
\right\}^{\top}
\left\{
\Sigma_{\bm X}^{1/2}B_\eta
\right\}
=
{B_\eta}^{\top}
\Sigma_{\bm X}
B_\eta
=
\Gamma_{S,\eta}.
\end{aligned}
\]
Because \(\Gamma_{S,\eta}\) is nonsingular by Assumption~\ref{assump:L-asymptotic-linearity}(a) and positive semidefinite, its minimum eigenvalue is
positive. Hence,
\[
\sigma_2
\left(
\Sigma_{\bm X}^{1/2}B_\eta
\right)^2
=
\lambda_{\min}
\left(
\Gamma_{S,\eta}
\right)
>
0.
\]
Thus, \(A(\bm\vartheta_0)\) has rank two.

Finally, \(A(\bm\vartheta)\) is continuous in \(\bm\vartheta\). Because
singular values are continuous,
\[
\sigma_2\{A(\bm\vartheta)\}
\longrightarrow
\sigma_2\{A(\bm\vartheta_0)\}
>
0
\]
as \(\bm\vartheta\to\bm\vartheta_0\). Therefore,
\(\sigma_2\{A(\bm\vartheta)\}>0\) throughout a sufficiently small
neighborhood of \(\bm\vartheta_0\), which implies
\begin{equation} \label{eq:rank_lower_bound}
\operatorname{rank}\{A(\bm\vartheta)\}
\geq
2.
\end{equation}
Combining the \eqref{eq:rank_lower_bound} and \eqref{eq:rank_upper_bound} gives
\(
\operatorname{rank}\{A(\bm\vartheta)\}
=
2
\)
locally.

We have two maps:
\begin{equation}\label{eq:two_maps}
\bm\vartheta
\longmapsto
A(\bm\vartheta)
\longmapsto
P_{\operatorname{col}\{A(\bm\vartheta)\}}.
\end{equation}
We will show that each map is continuously differentiable in a neighborhood of
$\bm\vartheta_0$. We begin with the first map. Because
\(G_\eta(\bm\vartheta)^{-1/2}\) and
\(\Pi(\bm\vartheta)\) are continuously differentiable,
\[
A(\bm\vartheta)
=
\Sigma_{\bm X}^{1/2}
G_\eta(\bm\vartheta)^{-1/2}
\Pi(\bm\vartheta)
\]
is continuously differentiable by the matrix product rule. Hence,
\(
\bm\vartheta\mapsto A(\bm\vartheta)
\)
is continuously differentiable.

We now consider the second map. Let
\[
j_1
=
\min\left\{
j:\|A(\bm\vartheta_0)e_j\|_2>0
\right\},
\]
and let
\[
j_2
=
\min\left\{
j\neq j_1:
A(\bm\vartheta_0)e_j
\notin
\operatorname{span}\{A(\bm\vartheta_0)e_{j_1}\}
\right\}.
\]
Because \(A(\bm\vartheta_0)\) has rank two, both indices are well defined.
Define
\[
C(\bm\vartheta)
=
\begin{pmatrix}
A(\bm\vartheta)e_{j_1}
&
A(\bm\vartheta)e_{j_2}
\end{pmatrix}.
\]
The indices \(j_1\) and \(j_2\) are fixed, although the corresponding columns
vary with \(\bm\vartheta\). By construction, the columns of
\(C(\bm\vartheta_0)\) are linearly independent. Hence, the Gram matrix
\(
C(\bm\vartheta_0)^\top C(\bm\vartheta_0)
\)
is nonsingular. Because \(C(\bm\vartheta)\) consists of fixed indexed columns
of the continuously differentiable matrix \(A(\bm\vartheta)\), it is
continuously differentiable and therefore continuous. Consequently,
\(C(\bm\vartheta)^\top C(\bm\vartheta)\) is continuous in
\(\bm\vartheta\). Its nonsingularity at \(\bm\vartheta_0\) therefore implies
that it remains nonsingular in a sufficiently small neighborhood of
\(\bm\vartheta_0\). Thus, the columns of \(C(\bm\vartheta)\) remain linearly
independent throughout that neighborhood. Because
\(A(\bm\vartheta)\) has rank two,
\(
\operatorname{col}\{A(\bm\vartheta)\}
=
\operatorname{col}\{C(\bm\vartheta)\}.
\)
Therefore,
\[
P_{\mathcal S(\bm\vartheta)}
=
C(\bm\vartheta)
\left\{
C(\bm\vartheta)^\top C(\bm\vartheta)
\right\}^{-1}
C(\bm\vartheta)^\top.
\]
The matrix product rule and the continuous differentiability of matrix
inversion on nonsingular matrices now show directly from this formula that
\(\bm\vartheta\mapsto P_{\mathcal S(\bm\vartheta)}\) is continuously
differentiable.

We have shown that the two maps in \eqref{eq:two_maps} are continuously differentiable in a neighborhood of $\bm\vartheta_0$. We conclude that
$\bm\vartheta\mapsto P_{\mathcal S(\bm\vartheta)}$ is continuously
differentiable in a neighborhood of $\bm\vartheta_0$.

We next establish the sampling and bootstrap expansions stated in the lemma.
By Assumption~\ref{assump:L-asymptotic-linearity}(b), the moment vector has a
finite second moment. Therefore
\begin{equation}
\label{eq:one-solve-moment-rate}
\widehat{\bm\vartheta}-\bm\vartheta_0
=
\frac{1}{n}\sum_{i=1}^n\{\bm{\chi}(O_i)-\bm\vartheta_0\}
=
O_p(n^{-1/2}).
\end{equation}
The continuous differentiability established above means that, for a
deterministic increment $h$ tending to zero,
\[
\mathcal F(\bm\vartheta_0+h)
-
\mathcal F(\bm\vartheta_0)
=
\dot{\mathcal F}_{\bm\vartheta_0}[h]
+
R(h),
\]
where
\[
\frac{\|R(h)\|_{\mathrm F}}{\|h\|}
\longrightarrow
0.
\]
Set
\(
h_n=\widehat{\bm\vartheta}-\bm\vartheta_0
\).
Equation~\eqref{eq:one-solve-moment-rate} gives
\(
\|h_n\|=O_p(n^{-1/2})
\), and the preceding remainder property gives
\(
\|R(h_n)\|_{\mathrm F}/\|h_n\|=o_p(1)
\). Therefore,
\[
\|R(h_n)\|_{\mathrm F}
=
\|h_n\|
\frac{\|R(h_n)\|_{\mathrm F}}{\|h_n\|}
=
O_p(n^{-1/2})o_p(1)
=
o_p(n^{-1/2}).
\]
This proves the first-order expansion
\begin{equation}
\label{eq:one-solve-F-Taylor}
\mathcal F(\widehat{\bm\vartheta})-\mathcal F(\bm\vartheta_0)
=
\dot{\mathcal F}_{\bm\vartheta_0}
[\widehat{\bm\vartheta}-\bm\vartheta_0]
+o_p(n^{-1/2}).
\end{equation}
Substitute \eqref{eq:one-solve-moment-rate} into
\eqref{eq:one-solve-F-Taylor}. Linearity of
$\dot{\mathcal F}_{\bm\vartheta_0}$ then gives \eqref{eq:one-solve-score-AL}.

For the bootstrap sample, the bootstrap moment vector is
\[
\widehat{\bm\vartheta}^*
=
\frac{1}{n}
\sum_{i=1}^n
N_i^*\bm{\chi}(O_i).
\]
Because a bootstrap sample contains $n$ observations,
\(
\sum_{i=1}^nN_i^*=n
\), and hence
\(
\sum_{i=1}^n(N_i^*-1)=0
\).
Consequently,
\begin{equation}
\label{eq:one-solve-bootstrap-moment}
\begin{aligned}
\widehat{\bm\vartheta}^*-\widehat{\bm\vartheta}
&=
\frac{1}{n}
\sum_{i=1}^n
(N_i^*-1)\bm{\chi}(O_i)
\\
&=
\frac{1}{n}
\sum_{i=1}^n
(N_i^*-1)
\{\bm{\chi}(O_i)-\widehat{\bm\vartheta}\}.
\end{aligned}
\end{equation}
Let
\(
\bm V_i
=
\bm{\chi}(O_i)-\widehat{\bm\vartheta}.
\)
By the definition
\(
\widehat{\bm\vartheta}
=
n^{-1}\sum_{i=1}^n\bm{\chi}(O_i)
\),
the vectors \(\bm V_i\) satisfy
\[
\begin{aligned}
\sum_{i=1}^n\bm V_i
&=
\sum_{i=1}^n
\left\{
\bm{\chi}(O_i)-\widehat{\bm\vartheta}
\right\}
\\
&=
\sum_{i=1}^n\bm{\chi}(O_i)
-
n\widehat{\bm\vartheta}
\\
&=
n\widehat{\bm\vartheta}
-
n\widehat{\bm\vartheta}
\\
&=
0.
\end{aligned}
\]
Equation~\eqref{eq:one-solve-bootstrap-moment} may therefore be written as
\[
\widehat{\bm\vartheta}^*
-
\widehat{\bm\vartheta}
=
\frac{1}{n}
\sum_{i=1}^n
(N_i^*-1)\bm V_i.
\]
Conditional on the original data, the vectors \(\bm V_i\) are fixed. Expanding
the squared norm gives
\[
\begin{aligned}
&
\mathbb E^*
\left\|
\widehat{\bm\vartheta}^*
-
\widehat{\bm\vartheta}
\right\|^2
\\
&=
\mathbb E^*
\left\langle
\frac{1}{n}
\sum_{i=1}^n
(N_i^*-1)\bm V_i,\,
\frac{1}{n}
\sum_{j=1}^n
(N_j^*-1)\bm V_j
\right\rangle
\\
&=
\frac{1}{n^2}
\sum_{i=1}^n
\sum_{j=1}^n
\mathbb E^*
\left\{
(N_i^*-1)(N_j^*-1)
\right\}
\langle\bm V_i,\bm V_j\rangle.
\end{aligned}
\]
For the multinomial bootstrap counts,
\[
\mathbb E^*
\left\{
(N_i^*-1)(N_j^*-1)
\right\}
=
\mathbf 1\{i=j\}-\frac{1}{n}.
\]
Substituting this identity gives
\[
\begin{aligned}
&
\mathbb E^*
\left\|
\widehat{\bm\vartheta}^*
-
\widehat{\bm\vartheta}
\right\|^2
\\
&=
\frac{1}{n^2}
\sum_{i=1}^n
\sum_{j=1}^n
\left\{
\mathbf 1\{i=j\}-\frac{1}{n}
\right\}
\langle\bm V_i,\bm V_j\rangle
\\
&=
\frac{1}{n^2}
\sum_{i=1}^n
\|\bm V_i\|^2
-
\frac{1}{n^3}
\sum_{i=1}^n
\sum_{j=1}^n
\langle\bm V_i,\bm V_j\rangle
\\
&=
\frac{1}{n^2}
\sum_{i=1}^n
\|\bm V_i\|^2
-
\frac{1}{n^3}
\left\|
\sum_{i=1}^n
\bm V_i
\right\|^2
\\
&=
\frac{1}{n^2}
\sum_{i=1}^n
\|\bm V_i\|^2.
\end{aligned}
\]
The final equality uses
\(\sum_{i=1}^n\bm V_i=\bm 0\). Substituting
\(\bm V_i=\bm{\chi}(O_i)-\widehat{\bm\vartheta}\) yields
\begin{equation}\label{eq:eq_star_chi}
\mathbb E^*
\left\|
\widehat{\bm\vartheta}^*
-
\widehat{\bm\vartheta}
\right\|^2
=
\frac{1}{n^2}
\sum_{i=1}^n
\left\|
\bm{\chi}(O_i)-\widehat{\bm\vartheta}
\right\|^2.
\end{equation}

We first show that
\[
\frac{1}{n}
\sum_{i=1}^n
\left\|
\bm{\chi}(O_i)-\widehat{\bm\vartheta}
\right\|^2
=
O_p(1).
\]
Write
\[
\bm{\chi}(O_i)-\widehat{\bm\vartheta}
=
\left\{
\bm{\chi}(O_i)-\bm\vartheta_0
\right\}
-
\left\{
\widehat{\bm\vartheta}-\bm\vartheta_0
\right\}.
\]
Expanding the squared norm and summing over \(i\) gives
\[
\begin{aligned}
\sum_{i=1}^n
\left\|
\bm{\chi}(O_i)-\widehat{\bm\vartheta}
\right\|^2
=
&\sum_{i=1}^n
\left\|
\bm{\chi}(O_i)-\bm\vartheta_0
\right\|^2
-
2
\left(
\widehat{\bm\vartheta}-\bm\vartheta_0
\right)^\top
\sum_{i=1}^n
\left\{
\bm{\chi}(O_i)-\bm\vartheta_0
\right\}
\\
&\quad+
n
\left\|
\widehat{\bm\vartheta}-\bm\vartheta_0
\right\|^2.
\end{aligned}
\]
By the definition of \(\widehat{\bm\vartheta}\),
\[
\begin{aligned}
\sum_{i=1}^n
\left\{
\bm{\chi}(O_i)-\bm\vartheta_0
\right\}
&=
n
\left(
\widehat{\bm\vartheta}-\bm\vartheta_0
\right).
\end{aligned}
\]
Substituting this identity into the preceding expansion gives
\[
\begin{aligned}
\sum_{i=1}^n
\left\|
\bm{\chi}(O_i)-\widehat{\bm\vartheta}
\right\|^2
&=
\sum_{i=1}^n
\left\|
\bm{\chi}(O_i)-\bm\vartheta_0
\right\|^2
-
n
\left\|
\widehat{\bm\vartheta}-\bm\vartheta_0
\right\|^2
\leq
\sum_{i=1}^n
\left\|
\bm{\chi}(O_i)-\bm\vartheta_0
\right\|^2.
\end{aligned}
\]
Dividing by \(n\) gives
\[
\frac{1}{n}
\sum_{i=1}^n
\left\|
\bm{\chi}(O_i)-\widehat{\bm\vartheta}
\right\|^2
\leq
\frac{1}{n}
\sum_{i=1}^n
\left\|
\bm{\chi}(O_i)-\bm\vartheta_0
\right\|^2.
\]
Assumption~\ref{assump:L-asymptotic-linearity}(b) gives
\[
\mathbb E
\left\|
\bm{\chi}(O_i)-\bm\vartheta_0
\right\|^2
<
\infty.
\]
The variables
\(
\|\bm{\chi}(O_i)-\bm\vartheta_0\|^2
\)
are independent and identically distributed. Therefore, the law of large
numbers gives
\[
\frac{1}{n}
\sum_{i=1}^n
\left\|
\bm{\chi}(O_i)-\bm\vartheta_0
\right\|^2
\overset{p}{\longrightarrow}
\mathbb E
\left\|
\bm{\chi}(O_i)-\bm\vartheta_0
\right\|^2
<
\infty.
\]
Consequently,
\begin{equation}
\label{eq:bootstrap-moment-empirical-second-moment}
\frac{1}{n}
\sum_{i=1}^n
\left\|
\bm{\chi}(O_i)-\widehat{\bm\vartheta}
\right\|^2
=
O_p(1).
\end{equation}
This second-moment calculation gives
\[
\begin{aligned}
\mathbb E^*
\left\|
\widehat{\bm\vartheta}^*
-
\widehat{\bm\vartheta}
\right\|^2
&=
\frac{1}{n^2}
\sum_{i=1}^n
\left\|
\bm{\chi}(O_i)-\widehat{\bm\vartheta}
\right\|^2
\\
&=
\frac{1}{n}
\left\{
\frac{1}{n}
\sum_{i=1}^n
\left\|
\bm{\chi}(O_i)-\widehat{\bm\vartheta}
\right\|^2
\right\}
\\
&=
O_p(n^{-1}),
\end{aligned}
\]
where the first equality follows from \eqref{eq:eq_star_chi} and the last from
\eqref{eq:bootstrap-moment-empirical-second-moment}. Equivalently,
\[
n\,
\mathbb E^*
\left\|
\widehat{\bm\vartheta}^*
-
\widehat{\bm\vartheta}
\right\|^2
=
O_p(1).
\]
Conditional Markov's inequality gives
\[
\sqrt n
\left\|
\widehat{\bm\vartheta}^*
-
\widehat{\bm\vartheta}
\right\|
=
O_{p^*}(1)
\]
in \(\mathbb P\)-probability. Equivalently,
\(
\widehat{\bm\vartheta}^*
-
\widehat{\bm\vartheta}
=
O_{p^*}(n^{-1/2})
\)
in \(\mathbb P\)-probability.

Because
\(
\widehat{\bm\vartheta}\overset{p}{\longrightarrow}\bm\vartheta_0
\)
and
\(
\widehat{\bm\vartheta}^*-\widehat{\bm\vartheta}
=
O_{p^*}(n^{-1/2})
=
o_{p^*}(1)
\),
the two moment vectors lie in the
neighborhood of \(\bm\vartheta_0\) on which \(\mathcal F\) is continuously
differentiable with probability tending to one. Continuous differentiability
on this event gives
\[
\mathcal F(\widehat{\bm\vartheta}^*)
-
\mathcal F(\widehat{\bm\vartheta})
=
\dot{\mathcal F}_{\widehat{\bm\vartheta}}
\left[
\widehat{\bm\vartheta}^*
-
\widehat{\bm\vartheta}
\right]
+
o_{p^*}(n^{-1/2})
\]
in \(\mathbb P\)-probability.

We replace the derivative at \(\widehat{\bm\vartheta}\) by adding and
subtracting the derivative at \(\bm\vartheta_0\):
\[
\begin{aligned}
\dot{\mathcal F}_{\widehat{\bm\vartheta}}
\left[
\widehat{\bm\vartheta}^*
-
\widehat{\bm\vartheta}
\right]
=
\dot{\mathcal F}_{\bm\vartheta_0}
\left[
\widehat{\bm\vartheta}^*
-
\widehat{\bm\vartheta}
\right]
\quad+
\left(
\dot{\mathcal F}_{\widehat{\bm\vartheta}}
-
\dot{\mathcal F}_{\bm\vartheta_0}
\right)
\left[
\widehat{\bm\vartheta}^*
-
\widehat{\bm\vartheta}
\right].
\end{aligned}
\]
Because the derivative is continuous and
\(\widehat{\bm\vartheta}\overset{p}{\longrightarrow}\bm\vartheta_0\),
\[
\left\|
\dot{\mathcal F}_{\widehat{\bm\vartheta}}
-
\dot{\mathcal F}_{\bm\vartheta_0}
\right\|_{\mathrm{op}}
=
o_p(1).
\]
Consequently, the norm of the additional term is at most
\[
o_p(1)O_{p^*}(n^{-1/2})
=
o_{p^*}(n^{-1/2}).
\]
Thus,
\[
\mathcal F(\widehat{\bm\vartheta}^*)
-
\mathcal F(\widehat{\bm\vartheta})
=
\dot{\mathcal F}_{\bm\vartheta_0}
\left[
\widehat{\bm\vartheta}^*
-
\widehat{\bm\vartheta}
\right]
+
o_{p^*}(n^{-1/2})
\]
in \(\mathbb P\)-probability. We may therefore substitute
\eqref{eq:one-solve-bootstrap-moment} into the first-order expansion with
derivative \(\dot{\mathcal F}_{\bm\vartheta_0}\):
\[
\begin{aligned}
\mathcal F(\widehat{\bm\vartheta}^*)
-
\mathcal F(\widehat{\bm\vartheta})
&=
\dot{\mathcal F}_{\bm\vartheta_0}
\left[
\frac{1}{n}
\sum_{i=1}^n
(N_i^*-1)
\{\bm{\chi}(O_i)-\widehat{\bm\vartheta}\}
\right]
+
o_{p^*}(n^{-1/2})
\\
&=
\frac{1}{n}
\sum_{i=1}^n
(N_i^*-1)
\dot{\mathcal F}_{\bm\vartheta_0}
\left[
\bm{\chi}(O_i)-\widehat{\bm\vartheta}
\right]
+
o_{p^*}(n^{-1/2})
\\
&=
\frac{1}{n}
\sum_{i=1}^n
(N_i^*-1)\widehat{\mathcal I}_i
+
o_{p^*}(n^{-1/2}).
\end{aligned}
\]
Multiplying by \(\sqrt n\) and using
\(
\mathcal F(\widehat{\bm\vartheta}^*)
=
P_{\widehat{\mathcal S}_\eta^*}
\)
and
\(
\mathcal F(\widehat{\bm\vartheta})
=
P_{\widehat{\mathcal S}_\eta}
\)
yields \eqref{eq:one-solve-score-bootstrap-AL}.

It remains to verify the influence-function moments. Linearity gives
\[
\mathbb E\{\mathcal I(O)\}
=
\dot{\mathcal F}_{\bm\vartheta_0}
\left[
\mathbb E\{\bm{\chi}(O)-\bm\vartheta_0\}
\right]
=0.
\]
Because $\dot{\mathcal F}_{\bm\vartheta_0}$ is bounded, there is a constant $K<\infty$
such that
\[
\|\mathcal I(O)\|_{\mathrm F}
\leq
K\|\bm{\chi}(O)-\bm\vartheta_0\|.
\]
The moment condition on $\bm{\chi}(O)$ in Assumption~\ref{assump:L-asymptotic-linearity}(b) therefore implies
$\mathbb E\|\mathcal I(O)\|_{\mathrm F}^2<\infty$.

We finally prove the three projection-tightness conclusions.
By \eqref{eq:one-solve-score-AL},
\[
\sqrt n
\left(
P_{\widehat{\mathcal S}_\eta}
-
P_{\mathcal S_\eta}
\right)
=
\frac{1}{\sqrt n}
\sum_{i=1}^n
\mathcal I(O_i)
+
o_p(1).
\]
The influence-function moment bound established above gives
\(
\mathbb E
\|\mathcal I(O_i)\|_{\mathrm F}^2
<
\infty.
\)
The influence terms are independent and have mean zero. Therefore,
\[
\begin{aligned}
\mathbb E
\left\|
\frac{1}{\sqrt n}
\sum_{i=1}^n
\mathcal I(O_i)
\right\|_{\mathrm F}^2
&=
\frac{1}{n}
\sum_{i=1}^n
\mathbb E
\|\mathcal I(O_i)\|_{\mathrm F}^2
\\
&=
\mathbb E
\|\mathcal I(O_1)\|_{\mathrm F}^2
<
\infty,
\end{aligned}
\]
where the cross terms vanish because the influence terms are independent and
mean zero. Markov's inequality therefore gives
\[
\left\|
\frac{1}{\sqrt n}
\sum_{i=1}^n
\mathcal I(O_i)
\right\|_{\mathrm F}
=
O_p(1).
\]
Since the operator norm is bounded by the Frobenius norm,
\[
\sqrt n
\left\|
P_{\widehat{\mathcal S}_\eta}
-
P_{\mathcal S_\eta}
\right\|_{\mathrm{op}}
=
O_p(1).
\]

For the bootstrap expansion, linearity of
\(\dot{\mathcal F}_{\bm\vartheta_0}\) gives
\begin{equation} \label{eq:linearity_zero}
\sum_{i=1}^n
\widehat{\mathcal I}_i
=
\dot{\mathcal F}_{\bm\vartheta_0}
\left[
\sum_{i=1}^n
\{\bm{\chi}(O_i)-\widehat{\bm\vartheta}\}
\right]
=
0.
\end{equation}
Using the Frobenius inner product,
\[
\begin{aligned}
&
\mathbb E^*
\left\|
\frac{1}{\sqrt n}
\sum_{i=1}^n
(N_i^*-1)\widehat{\mathcal I}_i
\right\|_{\mathrm F}^2
\\
&=
\frac{1}{n}
\sum_{i=1}^n
\sum_{j=1}^n
\mathbb E^*
\left\{
(N_i^*-1)(N_j^*-1)
\right\}
\left\langle
\widehat{\mathcal I}_i,
\widehat{\mathcal I}_j
\right\rangle_{\mathrm F}
\\
&=
\frac{1}{n}
\sum_{i=1}^n
\sum_{j=1}^n
\left\{
\mathbf 1\{i=j\}-\frac{1}{n}
\right\}
\left\langle
\widehat{\mathcal I}_i,
\widehat{\mathcal I}_j
\right\rangle_{\mathrm F}
\\
&=
\frac{1}{n}
\sum_{i=1}^n
\|\widehat{\mathcal I}_i\|_{\mathrm F}^2
-
\frac{1}{n^2}
\sum_{i=1}^n
\sum_{j=1}^n
\left\langle
\widehat{\mathcal I}_i,
\widehat{\mathcal I}_j
\right\rangle_{\mathrm F}.
\end{aligned}
\]

The double sum in the second term satisfies
\[
\begin{aligned}
\sum_{i=1}^n
\sum_{j=1}^n
\left\langle
\widehat{\mathcal I}_i,
\widehat{\mathcal I}_j
\right\rangle_{\mathrm F}
&=
\left\langle
\sum_{i=1}^n
\widehat{\mathcal I}_i,
\sum_{j=1}^n
\widehat{\mathcal I}_j
\right\rangle_{\mathrm F}
=
\left\|
\sum_{i=1}^n
\widehat{\mathcal I}_i
\right\|_{\mathrm F}^2
=
0.
\end{aligned}
\]
The last equality holds because
\(
\sum_{i=1}^n\widehat{\mathcal I}_i=0
\) by \eqref{eq:linearity_zero}.
Therefore,
\[
\mathbb E^*
\left\|
\frac{1}{\sqrt n}
\sum_{i=1}^n
(N_i^*-1)\widehat{\mathcal I}_i
\right\|_{\mathrm F}^2
=
\frac{1}{n}
\sum_{i=1}^n
\|\widehat{\mathcal I}_i\|_{\mathrm F}^2.
\]

Write
\[
\bm{\chi}(O_i)-\widehat{\bm\vartheta}
=
\{\bm{\chi}(O_i)-\bm\vartheta_0\}
-
\{\widehat{\bm\vartheta}-\bm\vartheta_0\}.
\]
Applying the linear map
\(\dot{\mathcal F}_{\bm\vartheta_0}\) gives
\[
\begin{aligned}
\widehat{\mathcal I}_i
&=
\dot{\mathcal F}_{\bm\vartheta_0}
\left[
\bm{\chi}(O_i)-\bm\vartheta_0
\right]
-
\dot{\mathcal F}_{\bm\vartheta_0}
\left[
\widehat{\bm\vartheta}-\bm\vartheta_0
\right]
\\
&=
\mathcal I(O_i)
-
\dot{\mathcal F}_{\bm\vartheta_0}
\left[
\frac{1}{n}
\sum_{j=1}^n
\{\bm{\chi}(O_j)-\bm\vartheta_0\}
\right].
\end{aligned}
\]
Linearity also allows the derivative to pass through the average:
\[
\begin{aligned}
\dot{\mathcal F}_{\bm\vartheta_0}
\left[
\frac{1}{n}
\sum_{j=1}^n
\{\bm{\chi}(O_j)-\bm\vartheta_0\}
\right]
&=
\frac{1}{n}
\sum_{j=1}^n
\dot{\mathcal F}_{\bm\vartheta_0}
\left[
\bm{\chi}(O_j)-\bm\vartheta_0
\right]
\\
&=
\frac{1}{n}
\sum_{j=1}^n
\mathcal I(O_j).
\end{aligned}
\]
Therefore,
\[
\widehat{\mathcal I}_i
=
\mathcal I(O_i)
-
\frac{1}{n}
\sum_{j=1}^n
\mathcal I(O_j).
\]

Centering cannot increase the sum of squared Frobenius norms, so
\[
\frac{1}{n}
\sum_{i=1}^n
\|\widehat{\mathcal I}_i\|_{\mathrm F}^2
\leq
\frac{1}{n}
\sum_{i=1}^n
\|\mathcal I(O_i)\|_{\mathrm F}^2
=
O_p(1)
\]
by the influence-function moment bound $\mathbb E\|\mathcal I(O)\|_{\mathrm F}^2<\infty$ established above and the law of large numbers. We have established that $\mathbb E^*
\left\|
\frac{1}{\sqrt n}
\sum_{i=1}^n
(N_i^*-1)\widehat{\mathcal I}_i
\right\|_{\mathrm F}^2 = O_p(1)$. Conditional Markov's inequality therefore gives
\[
\left\|
\frac{1}{\sqrt n}
\sum_{i=1}^n
(N_i^*-1)\widehat{\mathcal I}_i
\right\|_{\mathrm F}
=
O_{p^*}(1)
\]
in \(\mathbb P\)-probability. Combining this result with
\eqref{eq:one-solve-score-bootstrap-AL} and using
\(\|\cdot\|_{\mathrm{op}}\leq\|\cdot\|_{\mathrm F}\) gives
\[
\sqrt n
\left\|
P_{\widehat{\mathcal S}_\eta^*}
-
P_{\widehat{\mathcal S}_\eta}
\right\|_{\mathrm{op}}
=
O_{p^*}(1)
\]
in \(\mathbb P\)-probability.

Finally, the triangle inequality gives
\[
\begin{aligned}
\sqrt n
\left\|
P_{\widehat{\mathcal S}_\eta^*}
-
P_{\mathcal S_\eta}
\right\|_{\mathrm{op}}
&\leq
\sqrt n
\left\|
P_{\widehat{\mathcal S}_\eta^*}
-
P_{\widehat{\mathcal S}_\eta}
\right\|_{\mathrm{op}} +
\sqrt n
\left\|
P_{\widehat{\mathcal S}_\eta}
-
P_{\mathcal S_\eta}
\right\|_{\mathrm{op}}.
\end{aligned}
\]
The first term is \(O_{p^*}(1)\) in \(\mathbb P\)-probability, and the second is
\(O_p(1)\). Therefore, their sum is \(O_{p^*}(1)\) in \(\mathbb P\)-probability,
which proves the final claim.
\end{proof}

\begin{lemma}
\label{lem:joint-oracle-score-bootstrap}
Suppose Assumption~\ref{assump:L-asymptotic-linearity} and the conditions of
Lemmas~\ref{lem:oracle-downstream-sampling-expansion} and
\ref{lem:oracle-downstream-bootstrap-expansion} hold. Then, for every bounded
Lipschitz function \(g\),
\[
\begin{aligned}
&
\left|
\mathbb E^*
g
\left(
\sqrt n(\widetilde{\bm{\psi}}^*-\widetilde{\bm{\psi}}),
\operatorname{vec}
\left\{
\sqrt n
\left(
P_{\widehat{\mathcal S}_\eta^*}
-
P_{\widehat{\mathcal S}_\eta}
\right)
\right\}
\right)
\right.
\\
&\qquad\left.
-
\mathbb E
g
\left(
\sqrt n(\widetilde{\bm{\psi}}-\bm{\psi}),
\operatorname{vec}
\left\{
\sqrt n
\left(
P_{\widehat{\mathcal S}_\eta}
-
P_{\mathcal S_\eta}
\right)
\right\}
\right)
\right|
\overset{p}{\longrightarrow}
0.
\end{aligned}
\]
\end{lemma}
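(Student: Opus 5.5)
The plan is to reduce the joint statement to a single asymptotically linear vector and then use a bootstrap central limit theorem for sample means. Define the stacked population influence function
\[
\bm W(O_i)=\left({\bm{\varphi}}_\psi(O_i)^\top,\ \operatorname{vec}\{\mathcal I(O_i)\}^\top\right)^\top ,
\]
and its empirical analogue $\widehat{\bm W}_i=(\widehat{\bm{\varphi}}_{\psi i}^\top,\operatorname{vec}(\widehat{\mathcal I}_i)^\top)^\top$. Lemma~\ref{lem:oracle-downstream-sampling-expansion} and \eqref{eq:one-solve-score-AL} together give the sampling expansion
\[
\bigl(\sqrt n(\widetilde{\bm{\psi}}-\bm{\psi}),\ \operatorname{vec}\{\sqrt n(P_{\widehat{\mathcal S}_\eta}-P_{\mathcal S_\eta})\}\bigr)=n^{-1/2}\sum_{i=1}^n\bm W(O_i)+o_p(1).
\]
Lemma~\ref{lem:oracle-downstream-bootstrap-expansion} and \eqref{eq:one-solve-score-bootstrap-AL} give the bootstrap expansion $n^{-1/2}\sum_{i=1}^n(N_i^*-1)\widehat{\bm W}_i+o_{p^*}(1)$ in $\mathbb P$-probability. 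Joint expansions follow immediately, because a vector of remainders is $o_p(1)$ whenever each component is.

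On the sampling side, $\mathbb E\bm W=0$ and $\mathbb E\|\bm W\|^2<\infty$ (both lemmas supply these moment conclusions). The multivariate CLT therefore gives $n^{-1/2}\sum_i\bm W(O_i)\rightsquigarrow N(0,V)$ with $V=\operatorname{var}\{\bm W(O)\}$. Because $g$ is bounded Lipschitz, the $o_p(1)$ remainder is absorbed by the same $\varepsilon$-splitting argument used at the end of the proof of Theorem~\ref{thm:compact-downstream-bootstrap}. Hence $\mathbb E g(\cdot)\to\mathbb E g(N(0,V))$ for the original-sample quantity.

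On the bootstrap side, I would show that $n^{-1/2}\sum_i(N_i^*-1)\widehat{\bm W}_i$ converges conditionally to the same $N(0,V)$ in probability. Both components are centred: $\sum_i\widehat{\bm W}_i=0$ by the OLS normal equations and by \eqref{eq:linearity_zero}. The key step is to compare $\widehat{\bm W}_i$ with $\bm W(O_i)-\overline{\bm W}$ and establish
\[
\frac1n\sum_{i=1}^n\bigl\|\widehat{\bm W}_i-\{\bm W(O_i)-\overline{\bm W}\}\bigr\|^2=o_p(1).
\]
\begin{itemize}
\item For the projection block the difference is exactly zero, since $\widehat{\mathcal I}_i=\mathcal I(O_i)-n^{-1}\sum_j\mathcal I(O_j)$ was shown in the proof of Lemma~\ref{lem:score-map-asymptotic-linearity}.
\item For the $\psi$ block, write $\widehat{\bm{\varphi}}_{\psi i}-{\bm{\varphi}}_\psi(O_i)$ as a sum of three pieces: $(\dot\psi_{\widetilde{\bm\alpha}}-\dot\psi_{\bm\alpha_\eta})$ applied to the influence term, $(\widetilde\Gamma_Z^{-1}-\Gamma_Z^{-1})$ applied to it, and the residual correction $\bm Z_{\eta,i}\bm Z_{\eta,i}^\top(\widetilde{\bm\alpha}-\bm\alpha_\eta)$. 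Each piece is bounded by an $o_p(1)$ operator norm times one of the averages $n^{-1}\sum\|\bm Z_{\eta,i}\varepsilon_{\eta,i}\|^2$ or $n^{-1}\sum\|\bm Z_{\eta,i}\|^4$, which are $O_p(1)$ by \eqref{eq:oracle-score-residual-second-moment} and \eqref{eq:oracle-fourth-moments}.
\item The mean $\overline{\bm{\varphi}}_\psi$ is $O_p(n^{-1/2})$.
\end{itemize}
Given this comparison, the multinomial covariance identity used repeatedly above shows that replacing $\widehat{\bm W}_i$ by $\bm W(O_i)-\overline{\bm W}$ changes the bootstrap sum by $o_{p^*}(1)$. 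The empirical covariance of the $\bm W(O_i)$ then converges to $V$ by the law of large numbers.

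It remains to invoke the conditional CLT for the multinomial bootstrap of a mean of i.i.d.\ vectors with finite second moment (Bickel--Freedman / Giné--Zinn). Equivalently, verify a conditional Lindeberg condition via the truncation device of Lemma~\ref{lem:conditional-bootstrap-mean-consistency}, using Poissonization or Cramér--Wold to reduce to scalars. This gives $\sup_{g\in\mathrm{BL}_1}|\mathbb E^*g(\cdot)-\mathbb E g(N(0,V))|\overset{p}{\to}0$. The bootstrap remainder $o_{p^*}(1)$ is absorbed exactly as in the proof of Theorem~\ref{thm:compact-downstream-bootstrap}. Combining the two limits by the triangle inequality yields the lemma.

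The main obstacle I expect is this conditional CLT step, together with the $L_2$ comparison of $\widehat{\bm{\varphi}}_{\psi i}$ to the centred ${\bm{\varphi}}_\psi(O_i)$. Only second moments of $\bm W$ are available, so the Lindeberg verification must rely on truncation rather than higher moments. I would handle it by truncating at level $K$, applying Lyapunov to the bounded part, and controlling the tail by $\mathbb E\|\bm W\|^2\mathbf 1\{\|\bm W\|>K\}\to0$ together with the law of large numbers.
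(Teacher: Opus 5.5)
Your proposal is correct. Its architecture matches the paper's. You stack the oracle-OLS and projection influence functions and combine the sampling and bootstrap expansions componentwise. You then show that the empirical influence values are mean-square close to the population ones; your block computations are the paper's, including the exact identity $\widehat{\mathcal I}_i=\mathcal I(O_i)-n^{-1}\sum_j\mathcal I(O_j)$ and the same split of $\widehat{\bm{\varphi}}_{\psi i}-\bm{\varphi}_\psi(O_i)$. You conclude with the ordinary CLT, a bootstrap CLT to the same Gaussian, and the triangle inequality.

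You differ from the paper only at the bootstrap CLT step. The paper applies a conditional multinomial CLT directly to the triangular array $\widehat{\bm W}_1,\ldots,\widehat{\bm W}_n$. That forces it to verify three conditions explicitly:
\begin{itemize}
\item centering, $\sum_i\widehat{\bm W}_i=\bm 0$;
\item convergence of $n^{-1}\sum_i\widehat{\bm W}_i\widehat{\bm W}_i^\top$ to $\Omega$;
\item the Lindeberg condition, via $\max_i\|\widehat{\bm W}_i\|/\sqrt n\to 0$.
\end{itemize}
You instead transfer to the i.i.d.\ vectors $\bm W(O_i)$. Set $\bm D_i=\widehat{\bm W}_i-\{\bm W(O_i)-\overline{\bm W}\}$. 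The multinomial identity bounds $\mathbb E^*\|n^{-1/2}\sum_i(N_i^*-1)\bm D_i\|^2$ by $n^{-1}\sum_i\|\bm D_i\|^2=o_p(1)$. Constants drop out of $\sum_i(N_i^*-1)(\cdot)$ because $\sum_i N_i^*=n$. The estimated-influence bootstrap sum therefore equals $n^{-1/2}\sum_i(N_i^*-1)\bm W(O_i)+o_{p^*}(1)$, and the Bickel--Freedman theorem applies to this sum under finite second moments alone.

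Your route buys economy. It needs only one $L_2$ comparison and a standard i.i.d.\ result, so the truncation/Lindeberg argument you flag as the main obstacle can simply be cited rather than redone. It also makes the centering of $\widehat{\bm W}_i$ unnecessary. The paper's route checks every array condition by hand, but it leans on the more general array-form bootstrap CLT and needs the extra covariance and maximal-negligibility computations.
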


\begin{proof}
Lemmas~\ref{lem:oracle-downstream-sampling-expansion} and
\ref{lem:oracle-downstream-bootstrap-expansion} supply the oracle sampling and
bootstrap expansions and their required moment bounds. We next verify the
assumptions needed to combine them with the score-space expansion.

The present lemma assumes Assumption~\ref{assump:L-asymptotic-linearity}. We can therefore apply
Lemma~\ref{lem:score-map-asymptotic-linearity} to obtain the sampling and bootstrap
asymptotic-linear expansions of the COMPACT score-space projection.
Let \(\mathcal I(O_i)\) and \(\widehat{\mathcal I}_i\) be the influence
terms defined in that lemma. Define
\[
\bm W_i
=
\begin{pmatrix}
{\bm{\varphi}}_\psi(O_i)\\
\operatorname{vec}\{\mathcal I(O_i)\}
\end{pmatrix},
\qquad
\widehat{\bm W}_i
=
\begin{pmatrix}
\widehat{\bm{\varphi}}_{\psi i}\\
\operatorname{vec}\{\widehat{\mathcal I}_i\}
\end{pmatrix}.
\]
The empirical influence vectors are centered. By
\eqref{eq:oracle-sample-normal-equation}, \eqref{eq:bootstrap-oracle-influence-definition} and the linearity of
\(\dot\psi_{\widetilde{\bm\alpha}}\):
\[
\begin{aligned}
\sum_{i=1}^n
\widehat{\bm{\varphi}}_{\psi i}
&=
\dot\psi_{\widetilde{\bm\alpha}}
\left[
\widetilde\Gamma_Z^{-1}
\sum_{i=1}^n
\bm Z_{\eta,i}\widetilde\varepsilon_i
\right]
=
0.
\end{aligned}
\]
By linearity of the derivative \(\dot{\mathcal F}_{\bm\vartheta_0}\) and the definition of
\(\widehat{\bm\vartheta}\) similarly give
\[
\begin{aligned}
\sum_{i=1}^n
\widehat{\mathcal I}_i
&=
\dot{\mathcal F}_{\bm\vartheta_0}
\left[
\sum_{i=1}^n
\{\bm{\chi}(O_i)-\widehat{\bm\vartheta}\}
\right]
=
\dot{\mathcal F}_{\bm\vartheta_0}[0]
=
0.
\end{aligned}
\]
Therefore,
\begin{equation}
\label{eq:joint-empirical-influence-centering}
\sum_{i=1}^n\widehat{\bm W}_i=\bm 0.
\end{equation}
Also define
\[
\bm A_n
=
\begin{pmatrix}
\sqrt n(\widetilde{\bm{\psi}}-\bm{\psi})\\
\operatorname{vec}\!\left\{
\sqrt n
\left(
P_{\widehat{\mathcal S}_\eta}
-
P_{\mathcal S_\eta}
\right)
\right\}
\end{pmatrix}
\]
and
\[
\bm A_n^*
=
\begin{pmatrix}
\sqrt n(\widetilde{\bm{\psi}}^*-\widetilde{\bm{\psi}})\\
\operatorname{vec}\!\left\{
\sqrt n
\left(
P_{\widehat{\mathcal S}_\eta^*}
-
P_{\widehat{\mathcal S}_\eta}
\right)
\right\}
\end{pmatrix}.
\]
Combining \eqref{eq:oracle-psi-AL} with
\eqref{eq:one-solve-score-AL} gives
\begin{equation}
\label{eq:joint-original-AL}
\bm A_n
=
\frac{1}{\sqrt n}
\sum_{i=1}^n
\bm W_i
+
o_p(1).
\end{equation}
Combining \eqref{eq:bootstrap-oracle-psi-AL} with
\eqref{eq:one-solve-score-bootstrap-AL} gives
\begin{equation}
\label{eq:joint-bootstrap-AL}
\bm A_n^*
=
\frac{1}{\sqrt n}
\sum_{i=1}^n
(N_i^*-1)\widehat{\bm W}_i
+
o_{p^*}(1)
\end{equation}
in \(\mathbb P\)-probability.

We next verify the centering, covariance, and Lindeberg conditions for the
conditional multinomial bootstrap central limit theorem. We first establish
mean-square consistency of the empirical influence terms. Starting from the definitions,
\[
\widehat{\bm{\varphi}}_{\psi i}
=
\dot\psi_{\widetilde{\bm\alpha}}
\left[
\widetilde\Gamma_Z^{-1}
\bm Z_{\eta,i}\widetilde\varepsilon_i
\right]
\]
and
\[
{\bm{\varphi}}_\psi(O_i)
=
\dot\psi_{\bm\alpha_\eta}
\left[
\Gamma_Z^{-1}
\bm Z_{\eta,i}\varepsilon_{\eta,i}
\right].
\]
The two residuals satisfy
\[
\begin{aligned}
\widetilde\varepsilon_i
&=
Y_i
-
{\bm Z_{\eta,i}}^{\top}
\widetilde{\bm\alpha}
\\
&=
Y_i
-
{\bm Z_{\eta,i}}^{\top}
\bm\alpha_\eta
-
{\bm Z_{\eta,i}}^{\top}
\left(
\widetilde{\bm\alpha}
-
\bm\alpha_\eta
\right)
\\
&=
\varepsilon_{\eta,i}
-
{\bm Z_{\eta,i}}^{\top}
\left(
\widetilde{\bm\alpha}
-
\bm\alpha_\eta
\right).
\end{aligned}
\]
Substituting this identity into \(\widehat{\bm{\varphi}}_{\psi i}\) gives
\[
\begin{aligned}
\widehat{\bm{\varphi}}_{\psi i}
&=
\dot\psi_{\widetilde{\bm\alpha}}
\left[
\widetilde\Gamma_Z^{-1}
\bm Z_{\eta,i}
\left\{
\varepsilon_{\eta,i}
-
{\bm Z_{\eta,i}}^{\top}
\left(
\widetilde{\bm\alpha}
-
\bm\alpha_\eta
\right)
\right\}
\right]
\\
&=
\dot\psi_{\widetilde{\bm\alpha}}
\left[
\widetilde\Gamma_Z^{-1}
\bm Z_{\eta,i}\varepsilon_{\eta,i}
\right]
-
\dot\psi_{\widetilde{\bm\alpha}}
\left[
\widetilde\Gamma_Z^{-1}
\bm Z_{\eta,i}{\bm Z_{\eta,i}}^{\top}
\left(
\widetilde{\bm\alpha}
-
\bm\alpha_\eta
\right)
\right],
\end{aligned}
\]
where the second equality uses the linearity of
\(\dot\psi_{\widetilde{\bm\alpha}}\).

Subtracting \({\bm{\varphi}}_\psi(O_i)\) now gives
\[
\begin{aligned}
\widehat{\bm{\varphi}}_{\psi i}
-
{\bm{\varphi}}_\psi(O_i)
&=
\dot\psi_{\widetilde{\bm\alpha}}
\left[
\widetilde\Gamma_Z^{-1}
\bm Z_{\eta,i}\varepsilon_{\eta,i}
\right]
-
\dot\psi_{\bm\alpha_\eta}
\left[
\Gamma_Z^{-1}
\bm Z_{\eta,i}\varepsilon_{\eta,i}
\right]
\\
&\quad-
\dot\psi_{\widetilde{\bm\alpha}}
\left[
\widetilde\Gamma_Z^{-1}
\bm Z_{\eta,i}{\bm Z_{\eta,i}}^{\top}
\left(
\widetilde{\bm\alpha}
-
\bm\alpha_\eta
\right)
\right].
\end{aligned}
\]
Viewing each derivative as a linear operator, the first two terms equal
\[
\left(
\dot\psi_{\widetilde{\bm\alpha}}
\widetilde\Gamma_Z^{-1}
-
\dot\psi_{\bm\alpha_\eta}
\Gamma_Z^{-1}
\right)
\bm Z_{\eta,i}\varepsilon_{\eta,i}.
\]
Therefore,
\begin{equation}\label{eq:psi_decomposition}
\begin{aligned}
\widehat{\bm{\varphi}}_{\psi i}
-
{\bm{\varphi}}_\psi(O_i)
&=
\left(
\dot\psi_{\widetilde{\bm\alpha}}
\widetilde\Gamma_Z^{-1}
-
\dot\psi_{\bm\alpha_\eta}
\Gamma_Z^{-1}
\right)
\bm Z_{\eta,i}\varepsilon_{\eta,i}
\\
&\quad-
\dot\psi_{\widetilde{\bm\alpha}}
\widetilde\Gamma_Z^{-1}
\bm Z_{\eta,i}{\bm Z_{\eta,i}}^{\top}
\left(
\widetilde{\bm\alpha}
-
\bm\alpha_\eta
\right).
\end{aligned}
\end{equation}
Equation~\eqref{eq:oracle-alpha-rate} and continuity of the derivative of
\(\bm{\psi}\) give
\[
\left\|
\dot\psi_{\widetilde{\bm\alpha}}
-
\dot\psi_{\bm\alpha_\eta}
\right\|_{\mathrm{op}}
=
o_p(1).
\]
Equation \eqref{eq:oracle-gram-consistency} and the nonsingularity of $\Gamma_Z$ also give
\[
\begin{aligned}
&
\left\|
\dot\psi_{\widetilde{\bm\alpha}}
\widetilde\Gamma_Z^{-1}
-
\dot\psi_{\bm\alpha_\eta}
\Gamma_Z^{-1}
\right\|_{\mathrm{op}}
\\
&\leq
\left\|
\dot\psi_{\widetilde{\bm\alpha}}
-
\dot\psi_{\bm\alpha_\eta}
\right\|_{\mathrm{op}}
\left\|
\widetilde\Gamma_Z^{-1}
\right\|_{\mathrm{op}}
+
\left\|
\dot\psi_{\bm\alpha_\eta}
\right\|_{\mathrm{op}}
\left\|
\widetilde\Gamma_Z^{-1}
-
\Gamma_Z^{-1}
\right\|_{\mathrm{op}}
\\
&=
o_p(1)O_p(1)
+
O(1)o_p(1)
\\
&=
o_p(1).
\end{aligned}
\]
The variables \(\|\bm Z_{\eta,i}\varepsilon_{\eta,i}\|^2\) are independent
and identically distributed and have finite expectation by
\eqref{eq:oracle-score-residual-second-moment}. The law of large numbers
therefore gives
\[
\frac{1}{n}
\sum_{i=1}^n
\|\bm Z_{\eta,i}\varepsilon_{\eta,i}\|^2
=
O_p(1)
\]
The variables \(\|\bm Z_{\eta,i}\|^4\) are also independent and identically
distributed and have finite expectation by
\eqref{eq:oracle-fourth-moments}. The law of large numbers therefore gives
\[
\frac{1}{n}
\sum_{i=1}^n
\|\bm Z_{\eta,i}{\bm Z_{\eta,i}}^\top\|_{\mathrm{op}}^2
=
\frac{1}{n}
\sum_{i=1}^n
\|\bm Z_{\eta,i}\|^4
=
O_p(1).
\]
Using \(\|a+b\|^2\leq2\|a\|^2+2\|b\|^2\) and \eqref{eq:psi_decomposition} gives
\[
\begin{aligned}
&
\frac{1}{n}
\sum_{i=1}^n
\|\widehat{\bm{\varphi}}_{\psi i}-{\bm{\varphi}}_\psi(O_i)\|^2
\\
&\leq
2
\left\|
\dot\psi_{\widetilde{\bm\alpha}}
\widetilde\Gamma_Z^{-1}
-
\dot\psi_{\bm\alpha_\eta}
\Gamma_Z^{-1}
\right\|_{\mathrm{op}}^2
\frac{1}{n}
\sum_{i=1}^n
\|\bm Z_{\eta,i}\varepsilon_{\eta,i}\|^2
\\
&\quad+
2
\|\dot\psi_{\widetilde{\bm\alpha}}\|_{\mathrm{op}}^2
\|\widetilde\Gamma_Z^{-1}\|_{\mathrm{op}}^2
\|\widetilde{\bm\alpha}-\bm\alpha_\eta\|^2
\frac{1}{n}
\sum_{i=1}^n
\|\bm Z_{\eta,i}{\bm Z_{\eta,i}}^\top\|_{\mathrm{op}}^2.
\end{aligned}
\]
The first product on the right is \(o_p(1)O_p(1)=o_p(1)\). Continuity of
the derivative of \(\bm{\psi}\), \eqref{eq:oracle-gram-consistency}, and the
nonsingularity of \(\Gamma_Z\) make the first two factors in the second
product \(O_p(1)\). Equation~\eqref{eq:oracle-alpha-rate} makes its third
factor \(O_p(n^{-1})\), and the preceding law of large numbers makes its
final factor \(O_p(1)\). The second product is therefore \(O_p(n^{-1})\).
Consequently,
\begin{equation}
\label{eq:oracle-influence-consistency}
\frac{1}{n}
\sum_{i=1}^n
\|\widehat{\bm{\varphi}}_{\psi i}-{\bm{\varphi}}_\psi(O_i)\|^2
\overset{p}{\longrightarrow}
0.
\end{equation}

For the score-space influence terms, linearity gives
\[
\widehat{\mathcal I}_i-\mathcal I(O_i)
=
\dot{\mathcal F}_{\bm\vartheta_0}
\left[
\bm\vartheta_0-\widehat{\bm\vartheta}
\right].
\]
This expression is identical for every \(i\). Because vectorization preserves
the Frobenius norm,
\[
\begin{aligned}
\frac{1}{n}
\sum_{i=1}^n
\left\|
\operatorname{vec}\{\widehat{\mathcal I}_i-\mathcal I(O_i)\}
\right\|^2 &=
\frac{1}{n}
\sum_{i=1}^n
\left\|
\dot{\mathcal F}_{\bm\vartheta_0}
\left[
\bm\vartheta_0-\widehat{\bm\vartheta}
\right]
\right\|_{\mathrm F}^2\\
&=
\left\|
\dot{\mathcal F}_{\bm\vartheta_0}
\left[
\bm\vartheta_0-\widehat{\bm\vartheta}
\right]
\right\|_{\mathrm F}^2
\\
&\leq
\left\|
\dot{\mathcal F}_{\bm\vartheta_0}
\right\|_{\mathrm{op}}^2
\left\|
\widehat{\bm\vartheta}-\bm\vartheta_0
\right\|^2
\overset{p}{\longrightarrow}
0.
\end{aligned}
\]
Here, the derivative is bounded because
Lemma~\ref{lem:score-map-asymptotic-linearity} establishes continuous
differentiability at \(\bm\vartheta_0\), and
\eqref{eq:one-solve-moment-rate} gives
\(\widehat{\bm\vartheta}-\bm\vartheta_0=O_p(n^{-1/2})=o_p(1)\).
Combining \eqref{eq:oracle-influence-consistency} with the preceding
score-space influence bound gives
\begin{equation}
\label{eq:joint-influence-consistency}
\begin{aligned}
\frac{1}{n}
\sum_{i=1}^n
\left\|
\widehat{\bm W}_i-\bm W_i
\right\|^2
&=
\frac{1}{n}
\sum_{i=1}^n
\left\|
\widehat{\bm{\varphi}}_{\psi i}-{\bm{\varphi}}_\psi(O_i)
\right\|^2
\\
&\quad+
\frac{1}{n}
\sum_{i=1}^n
\left\|
\operatorname{vec}
\left\{
\widehat{\mathcal I}_i-\mathcal I(O_i)
\right\}
\right\|^2 \overset{p}{\longrightarrow}0.
\end{aligned}
\end{equation}

Let
\(
\Omega=\mathbb E(\bm W_i\bm W_i^\top)
\).
We established
\(\mathbb E\|{\bm{\varphi}}_\psi(O_i)\|^2<\infty\) in
\eqref{eq:oracle-influence-second-moment}. In addition,
Lemma~\ref{lem:score-map-asymptotic-linearity} gives
\(\mathbb E\|\mathcal I(O_i)\|_{\mathrm F}^2<\infty\) and
\(\mathbb E\{\mathcal I(O_i)\}=0\). Together with the previously established
identity \(\mathbb E\{{\bm{\varphi}}_\psi(O_i)\}=\bm 0\), these results give
\begin{equation}
\label{eq:joint-influence-moments}
\mathbb E(\bm W_i)=\bm 0,
\qquad
\mathbb E\|\bm W_i\|^2<\infty.
\end{equation}
The matrices \(\bm W_i\bm W_i^\top\) are independent and identically
distributed, and
\[
\mathbb E
\|\bm W_i\bm W_i^\top\|_{\mathrm F}
=
\mathbb E\|\bm W_i\|^2
<
\infty
\]
by \eqref{eq:joint-influence-moments}. Thus, all assumptions of the matrix law
of large numbers are satisfied, and it gives
\begin{equation} \label{eq:WW_Omega}
\frac{1}{n}
\sum_{i=1}^n
\bm W_i\bm W_i^\top
\overset{p}{\longrightarrow}
\Omega.
\end{equation}

For each \(i\), add and subtract
\(\bm W_i\widehat{\bm W}_i^\top\):
\[
\begin{aligned}
\widehat{\bm W}_i\widehat{\bm W}_i^\top
-
\bm W_i\bm W_i^\top
&=
\widehat{\bm W}_i\widehat{\bm W}_i^\top
-
\bm W_i\widehat{\bm W}_i^\top
+
\bm W_i\widehat{\bm W}_i^\top
-
\bm W_i\bm W_i^\top
\\
&=
\left(
\widehat{\bm W}_i-\bm W_i
\right)
\widehat{\bm W}_i^\top
+
\bm W_i
\left(
\widehat{\bm W}_i-\bm W_i
\right)^\top.
\end{aligned}
\]
Therefore,
\[
\begin{aligned}
&
\frac{1}{n}
\sum_{i=1}^n
\widehat{\bm W}_i\widehat{\bm W}_i^\top
-
\frac{1}{n}
\sum_{i=1}^n
\bm W_i\bm W_i^\top
\\
&=
\frac{1}{n}
\sum_{i=1}^n
\left(
\widehat{\bm W}_i-\bm W_i
\right)
\widehat{\bm W}_i^\top
+
\frac{1}{n}
\sum_{i=1}^n
\bm W_i
\left(
\widehat{\bm W}_i-\bm W_i
\right)^\top.
\end{aligned}
\]
The triangle inequality gives
\[
\begin{aligned}
&
\left\|
\frac{1}{n}
\sum_{i=1}^n
\widehat{\bm W}_i\widehat{\bm W}_i^\top
-
\frac{1}{n}
\sum_{i=1}^n
\bm W_i\bm W_i^\top
\right\|_{\mathrm{op}}
\\
&\leq
\frac{1}{n}
\sum_{i=1}^n
\left\|
\left(
\widehat{\bm W}_i-\bm W_i
\right)
\widehat{\bm W}_i^\top
\right\|_{\mathrm{op}}
+
\frac{1}{n}
\sum_{i=1}^n
\left\|
\bm W_i
\left(
\widehat{\bm W}_i-\bm W_i
\right)^\top
\right\|_{\mathrm{op}}.
\end{aligned}
\]
For vectors \(\bm a\) and \(\bm b\),
\(
\|\bm a\bm b^\top\|_{\mathrm{op}}
=
\|\bm a\|\|\bm b\|.
\)
Hence,
\[
\begin{aligned}
&
\left\|
\frac{1}{n}
\sum_{i=1}^n
\widehat{\bm W}_i\widehat{\bm W}_i^\top
-
\frac{1}{n}
\sum_{i=1}^n
\bm W_i\bm W_i^\top
\right\|_{\mathrm{op}}
\\
&\leq
\frac{1}{n}
\sum_{i=1}^n
\|\widehat{\bm W}_i-\bm W_i\|
\|\widehat{\bm W}_i\|
+
\frac{1}{n}
\sum_{i=1}^n
\|\widehat{\bm W}_i-\bm W_i\|
\|\bm W_i\|.
\end{aligned}
\]
Applying Cauchy--Schwarz to the first sum gives
\[
\begin{aligned}
&
\frac{1}{n}
\sum_{i=1}^n
\|\widehat{\bm W}_i-\bm W_i\|
\|\widehat{\bm W}_i\|
\leq
\left\{
\frac{1}{n}
\sum_{i=1}^n
\|\widehat{\bm W}_i-\bm W_i\|^2
\right\}^{1/2}
\left\{
\frac{1}{n}
\sum_{i=1}^n
\|\widehat{\bm W}_i\|^2
\right\}^{1/2}.
\end{aligned}
\]
Applying Cauchy--Schwarz to the second sum gives
\[
\begin{aligned}
&
\frac{1}{n}
\sum_{i=1}^n
\|\widehat{\bm W}_i-\bm W_i\|
\|\bm W_i\|
\leq
\left\{
\frac{1}{n}
\sum_{i=1}^n
\|\widehat{\bm W}_i-\bm W_i\|^2
\right\}^{1/2}
\left\{
\frac{1}{n}
\sum_{i=1}^n
\|\bm W_i\|^2
\right\}^{1/2}.
\end{aligned}
\]
Factoring out the common first term yields
\[
\begin{aligned}
&
\left\|
\frac{1}{n}
\sum_{i=1}^n
\widehat{\bm W}_i\widehat{\bm W}_i^\top
-
\frac{1}{n}
\sum_{i=1}^n
\bm W_i\bm W_i^\top
\right\|_{\mathrm{op}}
\\
&\leq
\left\{
\frac{1}{n}
\sum_{i=1}^n
\|\widehat{\bm W}_i-\bm W_i\|^2
\right\}^{1/2}
\left[
\left\{
\frac{1}{n}
\sum_{i=1}^n
\|\widehat{\bm W}_i\|^2
\right\}^{1/2}
+
\left\{
\frac{1}{n}
\sum_{i=1}^n
\|\bm W_i\|^2
\right\}^{1/2}
\right].
\end{aligned}
\]
Equation~\eqref{eq:joint-influence-consistency} makes the first factor
\(o_p(1)\). The law of large numbers makes the final term in brackets
\(O_p(1)\). To control the term involving \(\widehat{\bm W}_i\), write
\[
\widehat{\bm W}_i
=
\left(
\widehat{\bm W}_i-\bm W_i
\right)
+
\bm W_i.
\]
Applying the triangle inequality to the stacked vectors over all \(i\) gives
\[
\begin{aligned}
\left\{
\frac{1}{n}
\sum_{i=1}^n
\|\widehat{\bm W}_i\|^2
\right\}^{1/2}
&\leq
\left\{
\frac{1}{n}
\sum_{i=1}^n
\|\widehat{\bm W}_i-\bm W_i\|^2
\right\}^{1/2}
+
\left\{
\frac{1}{n}
\sum_{i=1}^n
\|\bm W_i\|^2
\right\}^{1/2}.
\end{aligned}
\]
Equation~\eqref{eq:joint-influence-consistency} gives
\[
\left\{
\frac{1}{n}
\sum_{i=1}^n
\|\widehat{\bm W}_i-\bm W_i\|^2
\right\}^{1/2}
=
o_p(1).
\]
Consequently,
\[
\left\{
\frac{1}{n}
\sum_{i=1}^n
\|\widehat{\bm W}_i\|^2
\right\}^{1/2}
\leq
o_p(1)+O_p(1)
=
O_p(1).
\]
Hence, 
\[
\left\|
\frac{1}{n}
\sum_{i=1}^n
\widehat{\bm W}_i\widehat{\bm W}_i^\top
-
\frac{1}{n}
\sum_{i=1}^n
\bm W_i\bm W_i^\top
\right\|_{\mathrm{op}}
\overset{p}{\longrightarrow}
0.
\]
By the above relation and \eqref{eq:WW_Omega}:
\[
\begin{aligned}
&
\left\|
\frac{1}{n}
\sum_{i=1}^n
\widehat{\bm W}_i\widehat{\bm W}_i^\top
-
\Omega
\right\|_{\mathrm{op}}
\\
&\leq
\left\|
\frac{1}{n}
\sum_{i=1}^n
\widehat{\bm W}_i\widehat{\bm W}_i^\top
-
\frac{1}{n}
\sum_{i=1}^n
\bm W_i\bm W_i^\top
\right\|_{\mathrm{op}}
+
\left\|
\frac{1}{n}
\sum_{i=1}^n
\bm W_i\bm W_i^\top
-
\Omega
\right\|_{\mathrm{op}}
\\
&=
o_p(1)+o_p(1)
\\
&=
o_p(1).
\end{aligned}
\]
Thus,
\begin{equation}
\label{eq:joint-bootstrap-covariance}
\frac{1}{n}
\sum_{i=1}^n
\widehat{\bm W}_i\widehat{\bm W}_i^\top
\overset{p}{\longrightarrow}
\Omega.
\end{equation}

We also verify maximal negligibility. The triangle inequality gives
\[
\max_{1\leq i\leq n}
\frac{\|\widehat{\bm W}_i\|}{\sqrt n}
\leq
\max_{1\leq i\leq n}
\frac{\|\bm W_i\|}{\sqrt n}
+
\max_{1\leq i\leq n}
\frac{\|\widehat{\bm W}_i-\bm W_i\|}{\sqrt n}.
\]
For the second term, Equation \eqref{eq:joint-influence-consistency} gives
\[
\max_{1\leq i\leq n}
\frac{\|\widehat{\bm W}_i-\bm W_i\|}{\sqrt n}
\leq
\left\{
\frac{1}{n}
\sum_{i=1}^n
\|\widehat{\bm W}_i-\bm W_i\|^2
\right\}^{1/2}
=
o_p(1).
\]
For the first term, fix
\(\varepsilon>0\). The union bound gives
\begin{equation}\label{eq:max_prob}
\begin{aligned}
\mathbb P
\left(
\max_{1\leq i\leq n}
\frac{\|\bm W_i\|}{\sqrt n}
>
\varepsilon
\right)
&\leq
n\mathbb P
\left(
\|\bm W_1\|>\varepsilon\sqrt n
\right)\\
&=
n\mathbb E
\left[
\mathbf 1
\left\{
\|\bm W_1\|>\varepsilon\sqrt n
\right\}
\right]
\\
&\leq
n\mathbb E
\left[
\frac{\|\bm W_1\|^2}{\varepsilon^2n}
\mathbf 1
\left\{
\|\bm W_1\|>\varepsilon\sqrt n
\right\}
\right]
\\
&=
\varepsilon^{-2}
\mathbb E
\left[
\|\bm W_1\|^2
\mathbf 1
\left\{
\|\bm W_1\|>\varepsilon\sqrt n
\right\}
\right].
\end{aligned}
\end{equation}
The second inequality follows because, on the event
\(\{\|\bm W_1\|>\varepsilon\sqrt n\}\), squaring both sides gives
\(\|\bm W_1\|^2>\varepsilon^2n\), and therefore
\(\|\bm W_1\|^2/(\varepsilon^2n)>1\). Hence, pointwise,
\[
\mathbf 1
\left\{
\|\bm W_1\|>\varepsilon\sqrt n
\right\}
\leq
\frac{\|\bm W_1\|^2}{\varepsilon^2n}
\mathbf 1
\left\{
\|\bm W_1\|>\varepsilon\sqrt n
\right\}.
\]
Now the integrand in the final expectation in \eqref{eq:max_prob} converges pointwise to zero and is
bounded by the integrable random variable \(\|\bm W_1\|^2\). Dominated
convergence therefore makes the final expectation converge to zero. We have
shown that
\begin{equation}
\label{eq:joint-maximal-negligibility}
\max_{1\leq i\leq n}
\frac{\|\widehat{\bm W}_i\|}{\sqrt n}
\overset{p}{\longrightarrow}
0.
\end{equation}

We now verify every condition of the conditional multinomial bootstrap central
limit theorem. The bootstrap weights are multinomial counts by construction,
and \eqref{eq:joint-empirical-influence-centering} gives the required
centering. Equation~\eqref{eq:joint-bootstrap-covariance} gives convergence of
the conditional covariance. Moreover,
\eqref{eq:joint-maximal-negligibility} implies that, for every
\(\varepsilon>0\),
\[
\mathbb E^*
\left[
\|\bm W^*\|^2
\mathbf 1
\left\{
\|\bm W^*\|>\varepsilon\sqrt n
\right\}
\right] =
\frac{1}{n}
\sum_{i=1}^n
\|\widehat{\bm W}_i\|^2
\mathbf 1
\left\{
\|\widehat{\bm W}_i\|>\varepsilon\sqrt n
\right\}
\overset{p}{\longrightarrow}
0,
\]
because the sum equals zero whenever
\(
\max_{1\leq i\leq n}
\|\widehat{\bm W}_i\|/\sqrt n\leq\varepsilon
\).
This is the required Lindeberg condition. The vectors have fixed finite
dimension because both the downstream target and the score-space projection
perturbation space are finite-dimensional. Thus, all assumptions of the
conditional multinomial bootstrap central limit theorem are satisfied, and it
gives
\[
\frac{1}{\sqrt n}
\sum_{i=1}^n
(N_i^*-1)\widehat{\bm W}_i
\rightsquigarrow_{p^*}
N(\bm 0,\Omega)
\]
in \(\mathbb P\)-probability. The vectors \(\bm W_i\) are independent and
identically distributed under the sampling setup, and
\eqref{eq:joint-influence-moments} gives their zero mean and finite covariance.
Thus, all assumptions of the ordinary multivariate central limit theorem are
satisfied. Applying that theorem to the leading sum in
\eqref{eq:joint-original-AL}, and then applying Slutsky's theorem to its
\(o_p(1)\) remainder, gives
\[
\bm A_n
\rightsquigarrow
\bm G,
\qquad
\bm G\sim N(\bm 0,\Omega).
\]
Similarly, \eqref{eq:joint-bootstrap-AL} and the conditional bootstrap central
limit theorem give
\[
\bm A_n^*
\rightsquigarrow_{p^*}
\bm G
\]
in \(\mathbb P\)-probability. Thus, for every bounded Lipschitz function
\(g\), the two convergence statements imply
\[
\left|
\mathbb E g(\bm A_n)
-
\mathbb E g(\bm G)
\right|
\longrightarrow
0
\]
and
\[
\left|
\mathbb E^*g(\bm A_n^*)
-
\mathbb E g(\bm G)
\right|
\overset{p}{\longrightarrow}
0.
\]
The triangle inequality gives
\[
\begin{aligned}
&
\left|
\mathbb E^*g(\bm A_n^*)
-
\mathbb E g(\bm A_n)
\right|
\leq
\left|
\mathbb E^*g(\bm A_n^*)
-
\mathbb E g(\bm G)
\right|
+
\left|
\mathbb E g(\bm G)
-
\mathbb E g(\bm A_n)
\right|.
\end{aligned}
\]
Both terms on the right converge to zero, so
\begin{equation}
\label{eq:joint-bootstrap-BL}
\left|
\mathbb E^*g(\bm A_n^*)
-
\mathbb E g(\bm A_n)
\right|
\overset{p}{\longrightarrow}
0.
\end{equation}

This proves the lemma.
\end{proof}

\begin{lemma}
\label{lem:generated-score-ols-expansion}
Suppose Assumption~\ref{assump:L-asymptotic-linearity} and the conditions of
Lemma~\ref{lem:oracle-downstream-sampling-expansion} hold. Suppose additionally
that \(\bm{\psi}\) is basis invariant. Then there exists a deterministic bounded
linear map \(\mathcal B\)
such that
\[
\sqrt n(\widehat{\bm{\psi}}-\widetilde{\bm{\psi}})
=
\mathcal B
\left[
\sqrt n
\left(
P_{\widehat{\mathcal S}_\eta}
-
P_{\mathcal S_\eta}
\right)
\right]
+
o_p(1)
\]
and
\[
\begin{aligned}
&
\sqrt n
\left\{
(\widehat{\bm{\psi}}^*-\widetilde{\bm{\psi}}^*)
-
(\widehat{\bm{\psi}}-\widetilde{\bm{\psi}})
\right\}
=
\mathcal B
\left[
\sqrt n
\left(
P_{\widehat{\mathcal S}_\eta^*}
-
P_{\widehat{\mathcal S}_\eta}
\right)
\right]
+
o_{p^*}(1)
\end{aligned}
\]
in \(\mathbb P\)-probability.
\end{lemma}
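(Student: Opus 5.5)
The plan is to reduce the generated-score OLS functional to a smooth function of two arguments: the score-space projection, and a fixed finite collection of empirical moments that do not depend on the projection. Then I will apply a first-order expansion in the projection argument only.

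\emph{Parametrizing bases by projections.} Work in the Euclidean coordinates of Lemma~\ref{lem:score-space-distance-equivalence}, where a score space corresponds to a rank-two projection $P$ onto a subspace of $\operatorname{col}(\Sigma_{\bm X}^{1/2})$. Fix the population basis $B_\eta$ and define the coefficient map $B(P)=\Sigma_{\bm X}^{+1/2}P\Sigma_{\bm X}^{1/2}B_\eta$. This map is linear in $P$ and equals $B_\eta$ at $P=P_{\mathcal S_\eta}$. By Assumption~\ref{assump:L-asymptotic-linearity}(a), $\Gamma_{S,\eta}$ is nonsingular, so for $P$ near $P_{\mathcal S_\eta}$ the matrix $\Sigma_{\bm X}^{1/2}B(P)=P\,\Sigma_{\bm X}^{1/2}B_\eta$ has rank two. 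Its columns therefore span $\operatorname{ran}(P)$, by the same continuity-of-$\sigma_2$ argument used in Lemma~\ref{lem:score-map-asymptotic-linearity}.

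\emph{Reducing $\bm\psi$ to a function of $P$ and fixed moments.} Basis invariance says $\bm\psi$ evaluated on the OLS fit with scores $\bm X_i^\top B$ depends on $B$ only through its column space. Hence $\widehat{\bm\psi}=\Phi\{P_{\widehat{\mathcal S}_\eta},\widehat{\bm\mu}\}$ and $\widetilde{\bm\psi}=\Phi\{P_{\mathcal S_\eta},\widehat{\bm\mu}\}$. Here $\widehat{\bm\mu}$ collects the empirical moments
\[
\overline{T^k},\quad \overline{T^k\bm X},\quad \overline{T^k\bm X\bm X^\top}\ (k=0,1,2),\quad \overline{Y},\quad \overline{TY},\quad \overline{Y\bm X},\quad \overline{TY\bm X}.
\]
The empirical Gram matrix and cross-moment of the regressors built from $B(P)$ are bilinear in $B(P)$ and linear in $\widehat{\bm\mu}$. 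Consequently $\Phi(P,\bm\mu)=\bm\psi\bigl(\Gamma(P,\bm\mu)^{-1}\bm m(P,\bm\mu)\bigr)$ is continuously differentiable near $(P_{\mathcal S_\eta},\bm\mu_0)$, using nonsingularity of $\Gamma_Z$ and the $C^1$ property of $\bm\psi$. The fourth-moment conditions give $\widehat{\bm\mu}\to\bm\mu_0$ by the law of large numbers. Bootstrap consistency $\widehat{\bm\mu}^*\to\bm\mu_0$ follows from Lemma~\ref{lem:conditional-bootstrap-mean-consistency}.

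\emph{Sampling expansion.} A mean-value expansion in the $P$-argument with $\bm\mu=\widehat{\bm\mu}$ held fixed gives
\[
\widehat{\bm\psi}-\widetilde{\bm\psi}=\partial_P\Phi(\bar P,\widehat{\bm\mu})\bigl[P_{\widehat{\mathcal S}_\eta}-P_{\mathcal S_\eta}\bigr]
\]
for an intermediate $\bar P$. Set $\mathcal B=\partial_P\Phi(P_{\mathcal S_\eta},\bm\mu_0)$. Continuity of the derivative, $\widehat{\bm\mu}\to\bm\mu_0$, and $\sqrt n\|P_{\widehat{\mathcal S}_\eta}-P_{\mathcal S_\eta}\|_{\mathrm{op}}=O_p(1)$ from Lemma~\ref{lem:score-map-asymptotic-linearity} together give the first display with an $o_p(1)$ remainder.

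\emph{Bootstrap expansion.} Apply the same expansion twice to the bootstrap quantities $\widehat{\bm\psi}^*=\Phi(P_{\widehat{\mathcal S}_\eta^*},\widehat{\bm\mu}^*)$ and $\widetilde{\bm\psi}^*=\Phi(P_{\mathcal S_\eta},\widehat{\bm\mu}^*)$. This yields $\mathcal B[P_{\widehat{\mathcal S}_\eta^*}-P_{\mathcal S_\eta}]$ up to $o_{p^*}(n^{-1/2})$, using the third tightness conclusion of Lemma~\ref{lem:score-map-asymptotic-linearity}. Subtracting the sampling expansion and using linearity of $\mathcal B$ gives
\[
\mathcal B\bigl[P_{\widehat{\mathcal S}_\eta^*}-P_{\widehat{\mathcal S}_\eta}\bigr]+o_{p^*}(n^{-1/2}).
\]

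\emph{Main obstacle.} The hardest step is making $\Phi$ well defined and $C^1$ as a function of $P$ alone. This requires that the reported scores (possibly whitened) enter only through their span, which is what basis invariance must supply. It also requires the smooth local basis $B(P)$, which is where the rank argument and the possibly singular $\Sigma_{\bm X}$ (handled via the pseudo-root on $\operatorname{col}(\Sigma_{\bm X}^{1/2})$) need care. I would first try to verify $\Phi(P,\bm\mu)$ is invariant under $B\mapsto BR$ for invertible $R$, and then check differentiability through the explicit bilinear formulas for the regressor moments.
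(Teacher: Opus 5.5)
Your proposal is correct and follows essentially the paper's route. Both proofs write $\widehat{\bm\psi}$, $\widetilde{\bm\psi}$ and their bootstrap analogues as one deterministic $C^1$ map of the score-space projection and the raw moment vector (you list the same moments as the paper's $\bm\omega$). Both then expand in the projection argument with the moments held fixed, take $\mathcal B$ as the partial derivative at $(P_{\mathcal S_\eta},\bm\mu_0)$, and close with the projector tightness of Lemma~\ref{lem:score-map-asymptotic-linearity} together with Lemma~\ref{lem:conditional-bootstrap-mean-consistency}.

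Your explicit linear basis map $B(P)=\Sigma_{\bm X}^{+1/2}P\,\Sigma_{\bm X}^{1/2}B_\eta$ is a cleaner realization of the paper's step (a). The paper there appeals to the local basis $C(\bm\vartheta)$ of Lemma~\ref{lem:score-map-asymptotic-linearity}, which is a function of the moments rather than of $P$. Two small corrections: the first moments of $T^2\bm X\bm X^\top$ and $TY\bm X$ come from Assumption~\ref{assump:L-asymptotic-linearity}(b), not from the regressor fourth-moment condition; and the vector-valued mean-value step should be written in integral form.
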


\begin{proof}
The complete downstream construction is the composition
\[
P
\xrightarrow{\textnormal{(a)}}
B(P)
\xrightarrow{\textnormal{(b)}}
\{\bm S_i(P)\}_{i=1}^n
\xrightarrow{\textnormal{(c)}}
\left\{
\widehat\Gamma_D(P),
\widehat{\bm g}_D(P)
\right\}
\xrightarrow{\textnormal{(d)}}
\widehat{\bm\alpha}(P)
\xrightarrow{\textnormal{(e)}}
\bm{\Psi}_n(P).
\]
The precise
definition of each quantity is given as each map is considered below. We prove that each map in this composition is continuously differentiable near
\(P_{\mathcal S_\eta}\).

\medskip
\noindent\textit{Map \(\textnormal{(a)}\).}
The proof of Lemma~\ref{lem:score-map-asymptotic-linearity} constructs a
continuously differentiable local basis \(B(P)\) for score spaces near
\(P_{\mathcal S_\eta}\) and shows that this basis remains linearly independent
throughout a sufficiently small neighborhood. Therefore,
\(P\mapsto B(P)\) is continuously differentiable.

\medskip
\noindent\textit{Map \(\textnormal{(b)}\).}
For each unit,
\(
\bm S_i(P)
=
B(P)^\top\bm X_i.
\)
For fixed \(\bm X_i\), this map is linear in \(B(P)\). Therefore,
\(B(P)\mapsto\{\bm S_i(P)\}_{i=1}^n\) is continuously differentiable.

\medskip
\noindent\textit{Map \(\textnormal{(c)}\).}
Define
\[
\bm D_i(P)
=
\left(
1,\,
T_i,\,
\bm S_i(P)^\top,\,
T_i\bm S_i(P)^\top
\right)^\top.
\]
The empirical Gram matrix and regressor--outcome moment are
\[
\widehat\Gamma_D(P)
=
\frac{1}{n}
\sum_{i=1}^n
\bm D_i(P)\bm D_i(P)^\top
\]
and
\[
\widehat{\bm g}_D(P)
=
\frac{1}{n}
\sum_{i=1}^n
\bm D_i(P)Y_i.
\]
These quantities are finite sums and products of the score coordinates.
Therefore, the map from
\(\{\bm S_i(P)\}_{i=1}^n\) to
\(\{\widehat\Gamma_D(P),\widehat{\bm g}_D(P)\}\) is continuously
differentiable.

\medskip
\noindent\textit{Map \(\textnormal{(d)}\).}
Whenever \(\widehat\Gamma_D(P)\) is nonsingular, ordinary least squares gives
\[
\widehat{\bm\alpha}(P)
=
\widehat\Gamma_D(P)^{-1}
\widehat{\bm g}_D(P).
\]
Matrix inversion is continuously differentiable on the set of nonsingular
matrices, and matrix multiplication is continuously differentiable.
After verifying that \(\widehat\Gamma_D(P)\) remains nonsingular near
\(P_{\mathcal S_\eta}\) with probability tending to one, it follows that map
\(\textnormal{(d)}\) is continuously differentiable on that neighborhood.

\medskip
\noindent\textit{Map \(\textnormal{(e)}\).}
The conditions of
Lemma~\ref{lem:oracle-downstream-sampling-expansion}, which are assumed here,
state that the reported target is a continuously differentiable function of
the fitted downstream regression. Therefore,
\(\widehat{\bm\alpha}(P)\mapsto\bm{\Psi}_n(P)\) is continuously differentiable.

We have shown sequentially that maps
\(\textnormal{(a)}\)--\(\textnormal{(e)}\) are continuously differentiable
near \(P_{\mathcal S_\eta}\) with probability tending to one. The chain rule
therefore implies that the complete map
\[
P\longmapsto\bm{\Psi}_n(P)
\]
is continuously differentiable on that neighborhood with probability tending
to one. 

Let \(\bm{\omega}(O_i)\) collect the distinct scalar entries of the random
quantities required to form the downstream Gram matrix, regressor--outcome
moment, and reported contrast:
\[
1,\quad
T_i,\quad
T_i^2,\quad
\bm X_i,\quad
T_i\bm X_i,\quad
T_i^2\bm X_i,\quad
\bm X_i\bm X_i^\top,\quad
T_i\bm X_i\bm X_i^\top,\quad
T_i^2\bm X_i\bm X_i^\top,
\]
together with
\[
Y_i,\quad
T_iY_i,\quad
Y_i\bm X_i,\quad
T_iY_i\bm X_i.
\]
Because the scores are linear functions of \(\bm X_i\), these raw moments
determine the downstream Gram matrix and regressor--outcome moment for every
local score basis. Their first derivatives with respect to the score-space
coordinates depend on the same raw moments.

By the definition of \(\bm{\chi}(O_i)\),
Assumption~\ref{assump:L-asymptotic-linearity}(b) implies
\begin{equation}
\label{eq:chi-implied-downstream-moments}
\begin{aligned}
&
\mathbb E\|\bm X_i\|^2<\infty,
\qquad
\mathbb E(T_i^2)<\infty,
\qquad
\mathbb E(Y_i^2)<\infty,
\\
&
\mathbb E\|\bm X_i\|^4<\infty,
\qquad
\mathbb E\left\{T_i^2\|\bm X_i\|^2\right\}<\infty,
\qquad
\mathbb E\left\{Y_i^2\|\bm X_i\|^2\right\}<\infty,
\\
&
\mathbb E(T_i^4)<\infty,
\qquad
\mathbb E(T_i^2Y_i^2)<\infty.
\end{aligned}
\end{equation}
Applying the Cauchy--Schwarz inequality to the quantities in
\eqref{eq:chi-implied-downstream-moments} gives
\begin{equation}
\label{eq:downstream-required-first-moments}
\begin{aligned}
\mathbb E\left\{
|T_i|^a\|\bm X_i\|^b
\right\}
&<\infty,
&&a\in\{0,1,2\},\quad b\in\{0,1,2\},
\\
\mathbb E\left\{
|T_i|^a|Y_i|\|\bm X_i\|^b
\right\}
&<\infty,
&&a\in\{0,1\},\quad b\in\{0,1\}.
\end{aligned}
\end{equation}
Every component of \(\bm{\omega}(O_i)\) is covered by
\eqref{eq:downstream-required-first-moments}. Because \(\bm{\omega}(O_i)\) is
finite-dimensional,
\[
\mathbb E\|\bm{\omega}(O_i)\|_2
\leq
\mathbb E\|\bm{\omega}(O_i)\|_1
=
\sum_{\ell}
\mathbb E|\omega_\ell(O_i)|
<
\infty.
\]

Now define
\[
\widehat{\bm\mu}
=
\frac{1}{n}
\sum_{i=1}^n
\bm{\omega}(O_i),
\qquad
\bm\mu_0
=
\mathbb E\{\bm{\omega}(O_i)\}.
\]
The observations are independent and identically distributed by the sampling
condition in Lemma~\ref{lem:oracle-downstream-sampling-expansion}, and the
preceding display establishes the required finite first moment. Thus, all
assumptions of the multivariate law of large numbers are satisfied, and
\begin{equation}\label{eq:multivar_LLN}
\widehat{\bm\mu}
\overset{p}{\longrightarrow}
\bm\mu_0.
\end{equation}

To summarize, the preceding arguments for maps
\(\textnormal{(a)}\)--\(\textnormal{(e)}\) show that there is a deterministic
map \(\mathcal G\), continuously differentiable near
\((P_{\mathcal S_\eta},\bm\mu_0)\), such that
\[
\bm{\Psi}_n(P)
=
\mathcal G(P,\widehat{\bm\mu}).
\]
In particular,
\[
\widehat{\bm{\psi}}
=
\mathcal G
\left(
P_{\widehat{\mathcal S}_\eta},
\widehat{\bm\mu}
\right),
\qquad
\widetilde{\bm{\psi}}
=
\mathcal G
\left(
P_{\mathcal S_\eta},
\widehat{\bm\mu}
\right).
\]
Moreover, \eqref{eq:one-solve-score-AL} gives
\[
P_{\widehat{\mathcal S}_\eta}
\overset{p}{\longrightarrow}
P_{\mathcal S_\eta}.
\]
Combining this convergence with
\(
\widehat{\bm\mu}\overset{p}{\longrightarrow}\bm\mu_0
\)
gives
\[
\left(
P_{\widehat{\mathcal S}_\eta},
\widehat{\bm\mu}
\right)
\overset{p}{\longrightarrow}
\left(
P_{\mathcal S_\eta},
\bm\mu_0
\right).
\]

Thus, the evaluation point converges to
\((P_{\mathcal S_\eta},\bm\mu_0)\), and \(\mathcal G\) is continuously
differentiable on a neighborhood of that point. The assumptions of the
finite-dimensional first-order Taylor expansion are therefore satisfied.
Applying the expansion to the first argument of \(\mathcal G\), while holding
\(\widehat{\bm\mu}\) fixed, gives
\begin{equation}
\label{eq:direct-first-order-Pstar}
\widehat{\bm{\psi}}-\widetilde{\bm{\psi}}
=
\mathcal B_n
\left[
P_{\widehat{\mathcal S}_\eta}
-
P_{\mathcal S_\eta}
\right]
+
r_n,
\end{equation}
where
\[
\mathcal B_n
=
\dot{\mathcal G}_{P}
\left(
P_{\mathcal S_\eta},
\widehat{\bm\mu}
\right)
\]
is the derivative with respect to the first argument and
\[
\frac{\|r_n\|}
{
\left\|
P_{\widehat{\mathcal S}_\eta}
-
P_{\mathcal S_\eta}
\right\|_{\mathrm{op}}
}
\overset{p}{\longrightarrow}
0.
\]
Finally, continuity of the derivative and
\(
\widehat{\bm\mu}\overset{p}{\longrightarrow}\bm\mu_0
\)
give
\[
\mathcal B_n
\overset{p}{\longrightarrow}
\dot{\mathcal G}_{P}
\left(
P_{\mathcal S_\eta},
\bm\mu_0
\right)
=:
\mathcal B.
\]

Because \(\mathcal B\) is the derivative of a finite-dimensional map at the
fixed population point
\((P_{\mathcal S_\eta},\bm\mu_0)\), it is a deterministic bounded linear map.

It remains to convert \eqref{eq:direct-first-order-Pstar} into the stated
root-\(n\) expansion. Equation \eqref{eq:one-solve-score-AL} gives
\begin{equation}
\label{eq:direct-projector-rate}
\sqrt n
\left\|
P_{\widehat{\mathcal S}_\eta}
-
P_{\mathcal S_\eta}
\right\|_{\mathrm{op}}
=
O_p(1).
\end{equation}
Combining this rate with the remainder bound in
\eqref{eq:direct-first-order-Pstar} gives
\[
\begin{aligned}
\sqrt n\|r_n\|
&=
\frac{\|r_n\|}
{
\left\|
P_{\widehat{\mathcal S}_\eta}
-
P_{\mathcal S_\eta}
\right\|_{\mathrm{op}}
}
\sqrt n
\left\|
P_{\widehat{\mathcal S}_\eta}
-
P_{\mathcal S_\eta}
\right\|_{\mathrm{op}}
\\
&=
o_p(1)O_p(1)
=
o_p(1).
\end{aligned}
\]
Multiplying \eqref{eq:direct-first-order-Pstar} by \(\sqrt n\) gives
\[
\sqrt n(\widehat{\bm{\psi}}-\widetilde{\bm{\psi}})
=
\mathcal B_n
\left[
\sqrt n
\left(
P_{\widehat{\mathcal S}_\eta}
-
P_{\mathcal S_\eta}
\right)
\right]
+
o_p(1).
\]
Add and subtract
\(
\mathcal B[
\sqrt n(P_{\widehat{\mathcal S}_\eta}-P_{\mathcal S_\eta})
]
\)
on the right-hand side. This gives
\[
\begin{aligned}
\sqrt n(\widehat{\bm{\psi}}-\widetilde{\bm{\psi}})
&=
\mathcal B
\left[
\sqrt n
\left(
P_{\widehat{\mathcal S}_\eta}
-
P_{\mathcal S_\eta}
\right)
\right]
\\
&\quad+
(\mathcal B_n-\mathcal B)
\left[
\sqrt n
\left(
P_{\widehat{\mathcal S}_\eta}
-
P_{\mathcal S_\eta}
\right)
\right]
+
o_p(1).
\end{aligned}
\]
The second term satisfies
\[
\begin{aligned}
&
\left\|
(\mathcal B_n-\mathcal B)
\left[
\sqrt n
\left(
P_{\widehat{\mathcal S}_\eta}
-
P_{\mathcal S_\eta}
\right)
\right]
\right\|
\\
&\leq
\|\mathcal B_n-\mathcal B\|_{\mathrm{op}}
\sqrt n
\left\|
P_{\widehat{\mathcal S}_\eta}
-
P_{\mathcal S_\eta}
\right\|_{\mathrm{op}}
\\
&=
o_p(1)O_p(1)
=
o_p(1).
\end{aligned}
\]
Therefore,
\begin{equation}
\label{eq:direct-generated-score-original}
\sqrt n(\widehat{\bm{\psi}}-\widetilde{\bm{\psi}})
=
\mathcal B
\left[
\sqrt n
\left(
P_{\widehat{\mathcal S}_\eta}
-
P_{\mathcal S_\eta}
\right)
\right]
+
o_p(1).
\end{equation}

We next establish the corresponding bootstrap expansion. Define the bootstrap
empirical moment vector by
\[
\widehat{\bm\mu}^*
=
\frac{1}{n}
\sum_{i=1}^n
N_i^*\bm{\omega}(O_i).
\]
Using the definition of \(\widehat{\bm\mu}\), we obtain
\[
\begin{aligned}
\widehat{\bm\mu}^*-\widehat{\bm\mu}
&=
\frac{1}{n}
\sum_{i=1}^n
N_i^*\bm{\omega}(O_i)
-
\frac{1}{n}
\sum_{i=1}^n
\bm{\omega}(O_i)
\\
&=
\frac{1}{n}
\sum_{i=1}^n
(N_i^*-1)\bm{\omega}(O_i).
\end{aligned}
\]
The vectors \(\bm{\omega}(O_i)\) are independent and identically distributed by the
sampling condition in
Lemma~\ref{lem:oracle-downstream-sampling-expansion}. They are finite
dimensional, and the preceding moment calculation established that
\(\mathbb E\|\bm{\omega}(O_i)\|<\infty\). Thus, all assumptions of
Lemma~\ref{lem:conditional-bootstrap-mean-consistency} are satisfied.
Applying that lemma gives
\begin{equation}
\label{eq:bootstrap-downstream-moment-consistency}
\left\|
\widehat{\bm\mu}^*
-
\widehat{\bm\mu}
\right\|
=
o_{p^*}(1)
\end{equation}
in \(\mathbb P\)-probability.

The bootstrap estimators can be written as
\[
\widehat{\bm{\psi}}^*
=
\mathcal G
\left(
P_{\widehat{\mathcal S}_\eta^*},
\widehat{\bm\mu}^*
\right),
\qquad
\widetilde{\bm{\psi}}^*
=
\mathcal G
\left(
P_{\mathcal S_\eta},
\widehat{\bm\mu}^*
\right).
\]
Equation \eqref{eq:one-solve-score-bootstrap-AL} gives
\begin{equation}
\label{eq:direct-bootstrap-projector-rate}
\sqrt n
\left\|
P_{\widehat{\mathcal S}_\eta^*}
-
P_{\widehat{\mathcal S}_\eta}
\right\|_{\mathrm{op}}
=
O_{p^*}(1)
\end{equation}
in \(\mathbb P\)-probability. The triangle inequality gives
\[
\begin{aligned}
\sqrt n
\left\|
P_{\widehat{\mathcal S}_\eta^*}
-
P_{\mathcal S_\eta}
\right\|_{\mathrm{op}}
&\leq
\sqrt n
\left\|
P_{\widehat{\mathcal S}_\eta^*}
-
P_{\widehat{\mathcal S}_\eta}
\right\|_{\mathrm{op}}
\\
&\quad+
\sqrt n
\left\|
P_{\widehat{\mathcal S}_\eta}
-
P_{\mathcal S_\eta}
\right\|_{\mathrm{op}}
\\
&=
O_{p^*}(1)+O_p(1)
=
O_{p^*}(1)
\end{aligned}
\]
in \(\mathbb P\)-probability. In particular,
\[
P_{\widehat{\mathcal S}_\eta^*}
-
P_{\mathcal S_\eta}
=
o_{p^*}(1)
\]
in \(\mathbb P\)-probability.

Moreover,
\[
\widehat{\bm\mu}^*-\bm\mu_0
=
\left(
\widehat{\bm\mu}^*
-
\widehat{\bm\mu}
\right)
+
\left(
\widehat{\bm\mu}
-
\bm\mu_0
\right).
\]
The first term is \(o_{p^*}(1)\) by
\eqref{eq:bootstrap-downstream-moment-consistency}, and the second is
\(o_p(1)\) by \eqref{eq:multivar_LLN}.
Therefore,
\[
\widehat{\bm\mu}^*
\overset{p^*}{\longrightarrow}
\bm\mu_0
\]
in \(\mathbb P\)-probability. Hence, the bootstrap evaluation points lie in
the neighborhood on which \(\mathcal G\) is continuously differentiable with
conditional probability tending to one.

Applying the first-order Taylor expansion to the first argument of
\(\mathcal G\), while holding \(\widehat{\bm\mu}^*\) fixed, gives
\begin{equation}
\label{eq:direct-bootstrap-first-order-Pstar}
\widehat{\bm{\psi}}^*-\widetilde{\bm{\psi}}^*
=
\mathcal B_n^*
\left[
P_{\widehat{\mathcal S}_\eta^*}
-
P_{\mathcal S_\eta}
\right]
+
r_n^*,
\end{equation}
where
\[
\mathcal B_n^*
=
\dot{\mathcal G}_{P}
\left(
P_{\mathcal S_\eta},
\widehat{\bm\mu}^*
\right)
\]
and
\[
\frac{\|r_n^*\|}
{
\left\|
P_{\widehat{\mathcal S}_\eta^*}
-
P_{\mathcal S_\eta}
\right\|_{\mathrm{op}}
}
=
o_{p^*}(1)
\]
in \(\mathbb P\)-probability. Combining this remainder bound with the preceding
bootstrap projector rate gives
\[
\begin{aligned}
\sqrt n\|r_n^*\|
&=
\frac{\|r_n^*\|}
{
\left\|
P_{\widehat{\mathcal S}_\eta^*}
-
P_{\mathcal S_\eta}
\right\|_{\mathrm{op}}
}
\sqrt n
\left\|
P_{\widehat{\mathcal S}_\eta^*}
-
P_{\mathcal S_\eta}
\right\|_{\mathrm{op}}
\\
&=
o_{p^*}(1)O_{p^*}(1)
=
o_{p^*}(1)
\end{aligned}
\]
in \(\mathbb P\)-probability.

Continuity of \(\dot{\mathcal G}_{P}\) and
\eqref{eq:bootstrap-downstream-moment-consistency} give
\begin{equation}
\label{eq:direct-bootstrap-derivative-consistency}
\begin{aligned}
\|\mathcal B_n^*-\mathcal B_n\|_{\mathrm{op}}
&=
\left\|
\dot{\mathcal G}_{P}
\left(
P_{\mathcal S_\eta},
\widehat{\bm\mu}^*
\right)
-
\dot{\mathcal G}_{P}
\left(
P_{\mathcal S_\eta},
\widehat{\bm\mu}
\right)
\right\|_{\mathrm{op}}
\\
&=
o_{p^*}(1)
\end{aligned}
\end{equation}
in \(\mathbb P\)-probability.

Subtracting \eqref{eq:direct-first-order-Pstar} from
\eqref{eq:direct-bootstrap-first-order-Pstar} gives
\[
\begin{aligned}
&
(\widehat{\bm{\psi}}^*-\widetilde{\bm{\psi}}^*)
-
(\widehat{\bm{\psi}}-\widetilde{\bm{\psi}})
\\
&=
\mathcal B_n^*
\left[
P_{\widehat{\mathcal S}_\eta^*}
-
P_{\mathcal S_\eta}
\right]
-
\mathcal B_n
\left[
P_{\widehat{\mathcal S}_\eta}
-
P_{\mathcal S_\eta}
\right]
+
(r_n^*-r_n).
\end{aligned}
\]
Add and subtract
\(
\mathcal B_n[
P_{\widehat{\mathcal S}_\eta^*}-P_{\mathcal S_\eta}
]
\)
on the right-hand side. This gives
\[
\begin{aligned}
&
(\widehat{\bm{\psi}}^*-\widetilde{\bm{\psi}}^*)
-
(\widehat{\bm{\psi}}-\widetilde{\bm{\psi}})
\\
&=
(\mathcal B_n^*-\mathcal B_n)
\left[
P_{\widehat{\mathcal S}_\eta^*}
-
P_{\mathcal S_\eta}
\right]
+
\mathcal B_n
\left[
P_{\widehat{\mathcal S}_\eta^*}
-
P_{\mathcal S_\eta}
\right]
-
\mathcal B_n
\left[
P_{\widehat{\mathcal S}_\eta}
-
P_{\mathcal S_\eta}
\right]
+
(r_n^*-r_n).
\end{aligned}
\]
Because \(\mathcal B_n\) is linear, the middle two terms satisfy
\[
\begin{aligned}
&
\mathcal B_n
\left[
P_{\widehat{\mathcal S}_\eta^*}
-
P_{\mathcal S_\eta}
\right]
-
\mathcal B_n
\left[
P_{\widehat{\mathcal S}_\eta}
-
P_{\mathcal S_\eta}
\right]
\\
&=
\mathcal B_n
\left[
\left(
P_{\widehat{\mathcal S}_\eta^*}
-
P_{\mathcal S_\eta}
\right)
-
\left(
P_{\widehat{\mathcal S}_\eta}
-
P_{\mathcal S_\eta}
\right)
\right]
\\
&=
\mathcal B_n
\left[
P_{\widehat{\mathcal S}_\eta^*}
-
P_{\widehat{\mathcal S}_\eta}
\right].
\end{aligned}
\]
Therefore,
\begin{equation}
\label{eq:direct-bootstrap-generated-decomposition}
\begin{aligned}
&
(\widehat{\bm{\psi}}^*-\widetilde{\bm{\psi}}^*)
-
(\widehat{\bm{\psi}}-\widetilde{\bm{\psi}})
\\
&=
\mathcal B_n
\left[
P_{\widehat{\mathcal S}_\eta^*}
-
P_{\widehat{\mathcal S}_\eta}
\right]
+
(\mathcal B_n^*-\mathcal B_n)
\left[
P_{\widehat{\mathcal S}_\eta^*}
-
P_{\mathcal S_\eta}
\right]
+
(r_n^*-r_n).
\end{aligned}
\end{equation}

We now control the final two terms. First,
\[
\begin{aligned}
&
\sqrt n
\left\|
(\mathcal B_n^*-\mathcal B_n)
\left[
P_{\widehat{\mathcal S}_\eta^*}
-
P_{\mathcal S_\eta}
\right]
\right\|
\\
&\leq
\|\mathcal B_n^*-\mathcal B_n\|_{\mathrm{op}}
\sqrt n
\left\|
P_{\widehat{\mathcal S}_\eta^*}
-
P_{\mathcal S_\eta}
\right\|_{\mathrm{op}}
\\
&=
o_{p^*}(1)O_{p^*}(1)
=
o_{p^*}(1)
\end{aligned}
\]
in \(\mathbb P\)-probability. Second,
\[
\begin{aligned}
\sqrt n\|r_n^*-r_n\|
&\leq
\sqrt n\|r_n^*\|
+
\sqrt n\|r_n\|
\\
&=
o_{p^*}(1)+o_p(1)
=
o_{p^*}(1)
\end{aligned}
\]
in \(\mathbb P\)-probability.

It remains to replace \(\mathcal B_n\) in the first term of
\eqref{eq:direct-bootstrap-generated-decomposition} by \(\mathcal B\). Add and
subtract
\(
\mathcal B[
P_{\widehat{\mathcal S}_\eta^*}-P_{\widehat{\mathcal S}_\eta}]
\)
to obtain
\[
\begin{aligned}
&
\mathcal B_n
\left[
\sqrt n
\left(
P_{\widehat{\mathcal S}_\eta^*}
-
P_{\widehat{\mathcal S}_\eta}
\right)
\right]
\\
&=
\mathcal B
\left[
\sqrt n
\left(
P_{\widehat{\mathcal S}_\eta^*}
-
P_{\widehat{\mathcal S}_\eta}
\right)
\right]
+
(\mathcal B_n-\mathcal B)
\left[
\sqrt n
\left(
P_{\widehat{\mathcal S}_\eta^*}
-
P_{\widehat{\mathcal S}_\eta}
\right)
\right].
\end{aligned}
\]
The final term satisfies
\[
\begin{aligned}
&
\left\|
(\mathcal B_n-\mathcal B)
\left[
\sqrt n
\left(
P_{\widehat{\mathcal S}_\eta^*}
-
P_{\widehat{\mathcal S}_\eta}
\right)
\right]
\right\|
\\
&\leq
\|\mathcal B_n-\mathcal B\|_{\mathrm{op}}
\sqrt n
\left\|
P_{\widehat{\mathcal S}_\eta^*}
-
P_{\widehat{\mathcal S}_\eta}
\right\|_{\mathrm{op}}
\\
&=
o_p(1)O_{p^*}(1)
=
o_{p^*}(1)
\end{aligned}
\]
in \(\mathbb P\)-probability.

Multiplying
\eqref{eq:direct-bootstrap-generated-decomposition} by \(\sqrt n\) and
combining the preceding bounds gives
\[
\begin{aligned}
&
\sqrt n
\left\{
(\widehat{\bm{\psi}}^*-\widetilde{\bm{\psi}}^*)
-
(\widehat{\bm{\psi}}-\widetilde{\bm{\psi}})
\right\}
\\
&=
\mathcal B
\left[
\sqrt n
\left(
P_{\widehat{\mathcal S}_\eta^*}
-
P_{\widehat{\mathcal S}_\eta}
\right)
\right]
+
o_{p^*}(1)
\end{aligned}
\]
in \(\mathbb P\)-probability. This proves the lemma.
\end{proof}

\section{Simulation Setup} \label{supp:setup}
We first conducted fully synthetic experiments with \(n=1000\) observations and
\(d_C=20\) latent variables. Within each Monte Carlo replication \(r\), we
independently generated a pool of \(500\) candidate proxy variables and their
corresponding loading vectors. Specifically, for \(j=1,\ldots,500\), we drew
\(
\widetilde{\bm\lambda}_{rj}
\sim
N_{20}(\bm 0,I_{20})
\)
and normalized each vector as
\[
\bm\lambda_{rj}
=
\frac{\widetilde{\bm\lambda}_{rj}}
{\|\widetilde{\bm\lambda}_{rj}\|_2}.
\]
Collecting these loading vectors as rows produced the replication-specific
matrix
\[
\Lambda_r^{(500)}
=
\begin{pmatrix}
\bm\lambda_{r1}^{\top}\\
\vdots\\
\bm\lambda_{r500}^{\top}
\end{pmatrix}
\in\mathbb R^{500\times20}.
\]
We generated a new \(\Lambda_r^{(500)}\) independently in every Monte Carlo
replication but used the same matrix across all conditions within that
replication.

The conditions with \(p=100\), \(300\), and \(500\) used proxy variables
\(1{:}100\), \(1{:}300\), and \(1{:}500\), respectively. For notational
convenience, let
\[
\Lambda_r^{(p)}
=
\Lambda_{r,1:p}^{(500)},
\qquad
p\in\{100,300,500\},
\]
denote the loading vectors associated with the first \(p\) proxy variables.
The observed proxy systems were therefore exactly nested within each
replication: the \(p=300\) condition retained all \(100\) variables from the
\(p=100\) condition and added \(200\) variables, while the \(p=500\) condition
retained all \(300\) variables from the \(p=300\) condition and added another
\(200\).

For each Monte Carlo replication, we generated a dataset containing
\(n=1000\) observations. For each observation \(i=1,\ldots,1000\), we
independently drew a \(20\)-dimensional latent vector
$\bm C_i\sim N_{20}(\bm 0,I_{20}).$
We independently drew
\(\bm z_a,\bm z_b,\bm z_q\sim N_{20}(\bm 0,I_{20})\) and defined
\[
\bm a
=
1.25\frac{\bm z_a}{\|\bm z_a\|_2},
\qquad
\bm b
=
\frac{\bm z_b}{\|\bm z_b\|_2},
\qquad
\bm q
=
0.75\frac{\bm z_q}{\|\bm z_q\|_2}.
\]
Treatment assignment followed
\[
T_i\mid\bm C_i
\sim
\operatorname{Bernoulli}
\left\{
\operatorname{logit}^{-1}
(\bm a^\top\bm C_i)
\right\},
\]
and the outcome followed
\[
Y_i
=
\bm b^\top\bm C_i
+
T_i\{1+\bm q^\top\bm C_i\}
+
\varepsilon_{Y,i},
\qquad
\varepsilon_{Y,i}\sim N(0,1).
\]
The latent CATE was therefore
\(
\tau_{\bm C}(1,0;\bm C_i)
=
1+\bm q^\top\bm C_i,
\)
and the ATE was its sample mean.

The treatment-guided latent coordinate was
\(
U_{T,i}
=
\bm a^\top\bm C_i.
\)
For a Gaussian latent state and a zero-intercept logistic treatment model,
\(\bm h_q=\bm q/2\). We therefore defined
$D
=
\left(
\bm a,\,
\bm b+\frac{\bm q}{2}
\right)$
and
\[
\bm U_i
=
D^\top\bm C_i
=
(U_{T,i},U_{R,i})^\top,
\qquad
U_{R,i}
=
\left(\bm b+\frac{\bm q}{2}\right)^\top\bm C_i.
\]
The exact score-indexed CATE was
\[
\tau_{\bm U}(1,0;\bm U_i)
=
1+
\bm U_i^\top
(D^\top D)^{-1}
D^\top\bm q.
\]

For each condition, we generated the observed proxies according to
\[
\bm X_i^{(p)}
=
\sqrt{\upsilon}\,
\Lambda_r^{(p)}\bm C_i
+
\sqrt{1-\upsilon}\,
\bm\varepsilon_{X,i}^{(p)},
\qquad
\bm\varepsilon_{X,i}^{(p)}
\sim
N_p(\bm 0,I_p),
\]
where the proxy errors were independent of the latent state, treatment, and
outcome errors. Because every loading vector had unit Euclidean norm,
\(\upsilon\) represented the fraction of each proxy variable's variance
attributable to the latent state, and the corresponding signal-to-noise ratio
was \(\upsilon/(1-\upsilon)\). We varied
\[
\upsilon\in\{0.2,0.4,0.6,0.8\}
\quad\text{and}\quad
p\in\{100,300,500\},
\]
producing \(12\) conditions.

Finally, we evaluated full-procedure bootstrap inference in a separate
experiment with \(n=1000\), \(p=300\), \(d_C=20\), and proxy reliability
\(\upsilon=0.5\). We drew all randomly generated data-generating parameter
vectors, including \(\widetilde{\bm q}\), once at the beginning of the
experiment and held them fixed across the \(200\) Monte Carlo replications.
Within each replication, we regenerated the latent states, proxies, treatments,
and outcome errors as above, but parameterized the treatment effect as
\[
\tau_i
=
\tau_0+\tau_U\widetilde{\bm q}^{\top}\bm C_i,
\qquad
\|\widetilde{\bm q}\|_2=1,
\]
where \(\tau_0\in\{0,0.5\}\) determined the population ATE and
\(\tau_U\in\{0,0.75\}\) determined treatment-effect heterogeneity. For each
original dataset, we applied COMPACT to estimate the two-dimensional score
representation \(\widehat{\bm S}_i\), fit the downstream OLS model in
Equation~\eqref{eq:regression_model}, and calculated
\[
\widehat\tau_i
=
\widehat\alpha_T
+
\widehat{\bm{\alpha}}_{TS}^{\top}\widehat{\bm S}_i,
\qquad
\widehat\tau
=
\frac{1}{n}\sum_{i=1}^n\widehat\tau_i,
\]
where \(\widehat\tau_i\) and \(\widehat\tau\) denote the fitted CATE for
observation \(i\) and the estimated ATE, respectively.

For each dataset, we drew \(500\) nonparametric bootstrap samples. In every
bootstrap sample, we relearned COMPACT, refit the downstream OLS model, and
calculated the bootstrap ATE \(\widehat\tau_b^*\). Because the
bootstrap distribution is centered at \(\widehat\tau\), we defined
$\widetilde\tau_b^*
=
\widehat\tau_b^*-\widehat\tau.$
This recentring produced a bootstrap reference distribution whose null value
was \(\tau=0\). We used this distribution to construct confidence
intervals and to compare the observed null deviation \(\widehat\tau-0\) with
the bootstrap deviations \(\widetilde\tau_b^*\).

We used the analogous construction for the global CATE test. Define the
population score-indexed heterogeneity function as
\[
h_0(\bm s)
=
\tau_{\bm S}(\bm s)-\tau,
\qquad
\tau_{\bm S}(\bm s)
=
\mathbb E\!\left\{
Y_i(1)-Y_i(0)
\mid
\bm S_i=\bm s
\right\}.
\]
The global null was
\[
H_0:
h_0(\bm S_i)=0
\quad\text{for every }i=1,\ldots,n,
\]
and its estimated original-sample analog was
$\widehat h_i
=
\widehat\tau_i-\widehat\tau.$ We measured the observed departure from the null function \(h_0\equiv0\) using
\[
Q_n
=
\sum_{i=1}^n\widehat h_i^2.
\]

For bootstrap sample \(b\), let \(I_{bj}\) denote the original-sample
observation selected at bootstrap position \(j\), and define the bootstrap
estimated heterogeneity as
$\widehat h_{bj}^*
=
\widehat\tau_{bj}^*-\widehat\tau_b^*.$
Because the bootstrap distribution of \(\widehat{\bm h}_b^*\) is
centered at the original estimate \(\widehat{\bm h}\), we recentred it at the null
function \(h_0\equiv0\) by defining
$\widetilde h_{bj}^*
=
\widehat h_{bj}^*
-
\widehat h_{I_{bj}}.$
The corresponding bootstrap statistic was
\[
Q_{n,b}^*
=
\sum_{j=1}^n
\left(\widetilde h_{bj}^*\right)^2
=
\sum_{j=1}^n
\left(
\widehat h_{bj}^*
-
\widehat h_{I_{bj}}
\right)^2.
\]
Thus, \(Q_n\) measured the observed departure from \(h_0\equiv0\), whereas
\(Q_{n,b}^*\) measured the bootstrap departure after recentring the bootstrap
reference distribution at \(h_0\equiv0\). We rejected the global null when
\(Q_n\) exceeded the \(95\)th percentile of the bootstrap statistics. We
assessed ATE coverage under constant and heterogeneous treatment effects, ATE
type-I error under \(\tau_0=0\), ATE power under \(\tau_0=0.5\), global CATE
type-I error under \(\tau_U=0\), and global CATE power under
\(\tau_U=0.75\).

\newpage 
\section{Extra Results for Fully Synthetic Data}

\begin{figure}[H]
    \centering
    \includegraphics[width=1\linewidth]{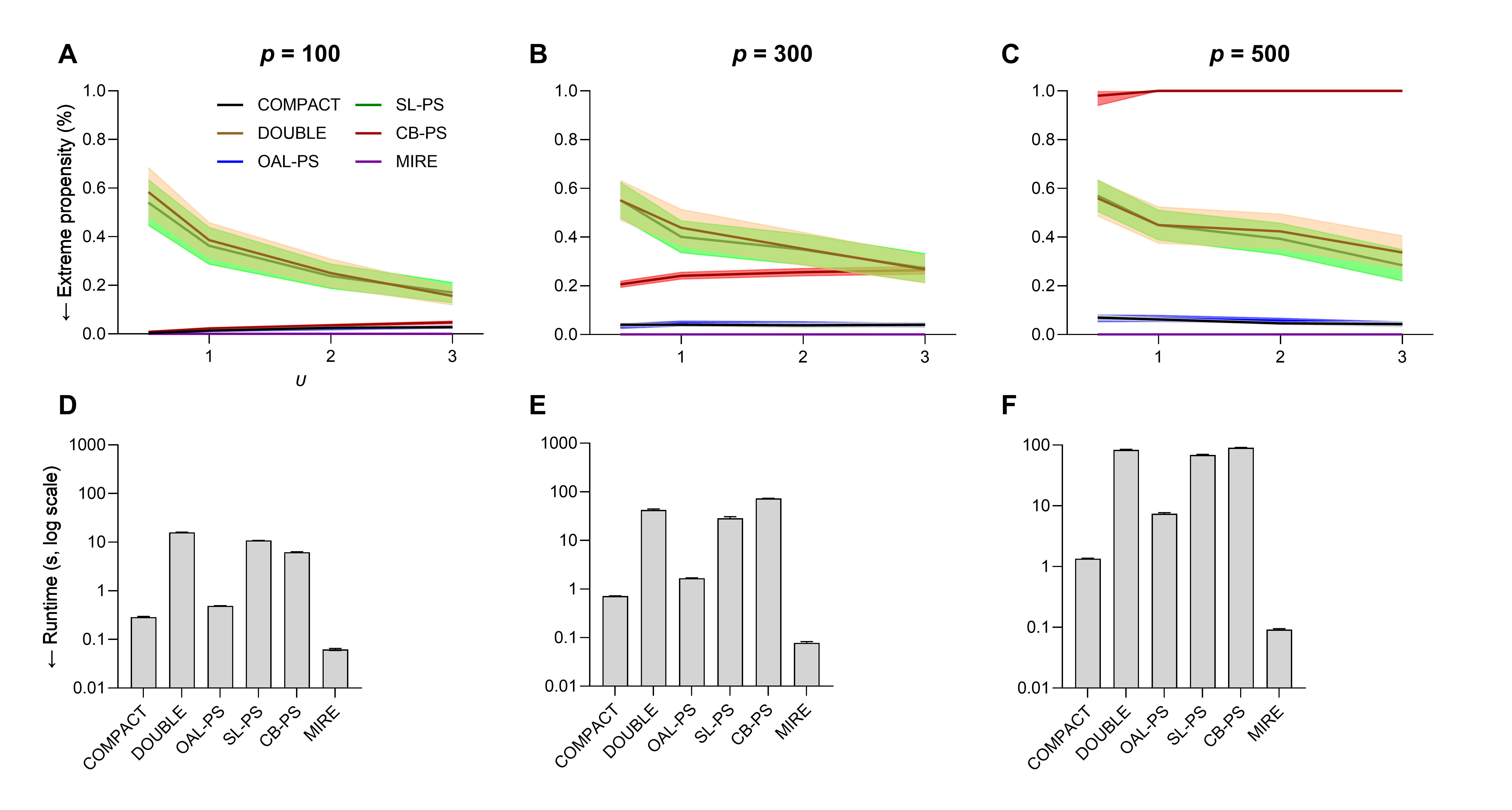}
    \caption{\textbf{Score-space overlap and computation.} Proportion of
    observations with fitted propensities below \(0.05\) or above \(0.95\)
    (A--C) and mean total runtime on a logarithmic scale (D--F) for
    \(p=100\), \(300\), and \(500\). Runtime was first averaged across
    reliability levels within each replication. Shaded bands and error bars
    denote \(95\%\) confidence intervals.}
    \label{fig:synth_overlap_runtime}
\end{figure}

\begin{figure}[H]
\centering
\includegraphics[width=0.8\linewidth]{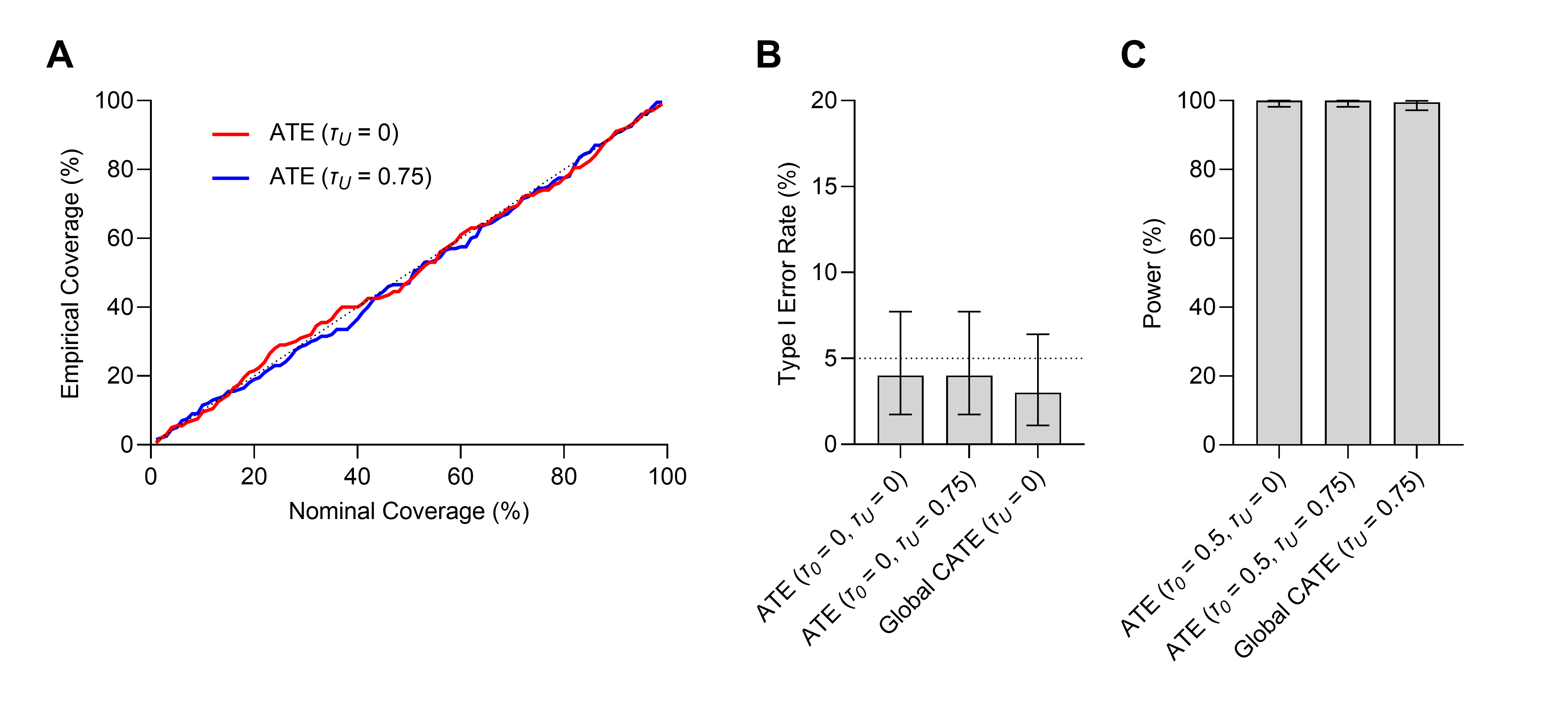}
\caption{\textbf{Full-procedure bootstrap inference.}
(A) Empirical coverage of the ATE confidence interval across nominal
coverage levels under constant treatment effects (\(\tau_U=0\)) and
heterogeneous treatment effects (\(\tau_U=0.75\)). The dotted diagonal
denotes perfect calibration. Coverage is invariant to the constant ATE
shift $\tau_0$, so one curve is shown for each value of $\tau_U$.
(B) Type-I error for the ATE test of ($H_0:\tau_0=0$) under
$\tau_U=0$ and \(\tau_U=0.75\), and for the global CATE test of
\(H_0:\tau_{\bm S}(\bm S_i)=\operatorname{ATE}\) under \(\tau_U=0\).
The dotted horizontal line denotes the nominal 5\% level.
(C) Power of the ATE test at \(\tau_0=0.5\) under both values of
\(\tau_U\), and power of the global CATE test under \(\tau_U=0.75\).
Error bars denote
95\% confidence intervals.}
\label{fig:bootstrap_inference}
\end{figure}

\begin{figure}[H]
    \centering
    \includegraphics[width=1\linewidth]{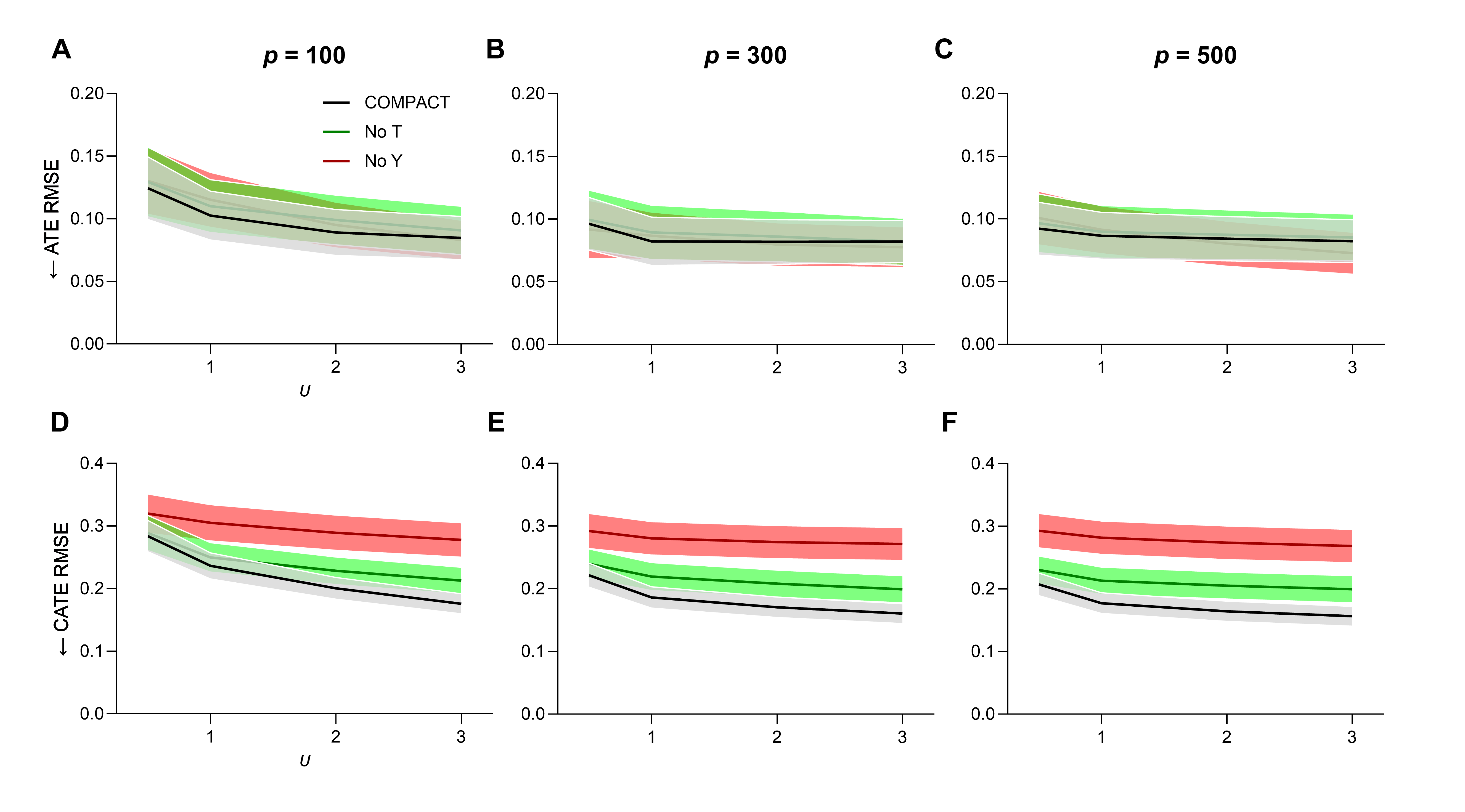}
    \caption{\textbf{Treatment-effect estimation in the fully synthetic ablation experiments.}
    ATE RMSE (A--C) and CATE RMSE (D--F) for full COMPACT and its two ablated
    variants.}
    \label{fig:supp_sim_ATE_CATE_RMSE}
\end{figure}

\begin{figure}[H]
    \centering
    \includegraphics[width=1\linewidth]{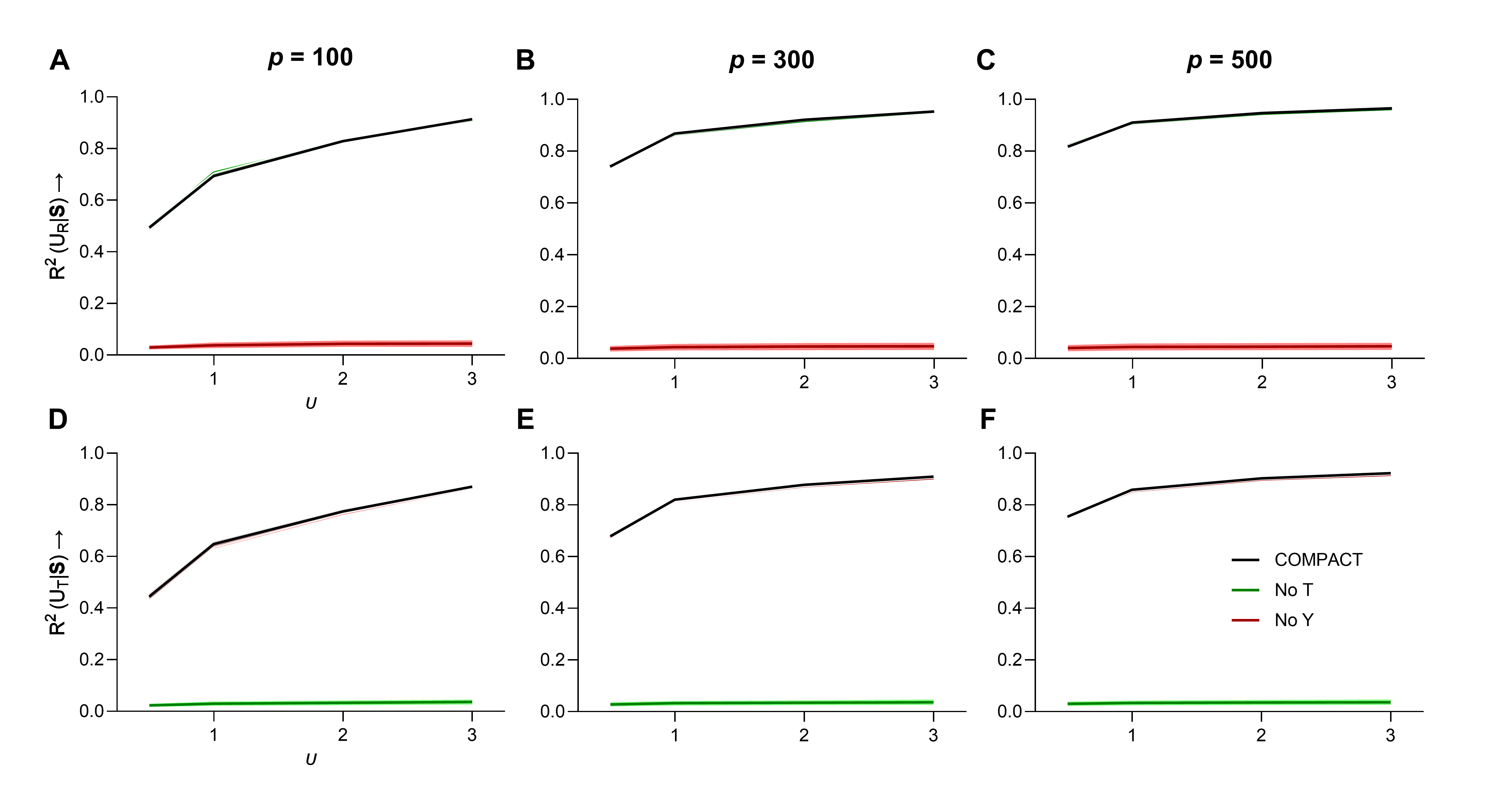}
    \caption{\textbf{Latent-coordinate recovery in the fully synthetic ablation experiments.}
    Linear recovery of \(U_R\) (A--C) and \(U_T\) (D--F) for full COMPACT and
    its two ablated variants.}
    \label{fig:supp_sim_R2}
\end{figure}

\begin{figure}[H]
    \centering
    \includegraphics[width=1\linewidth]{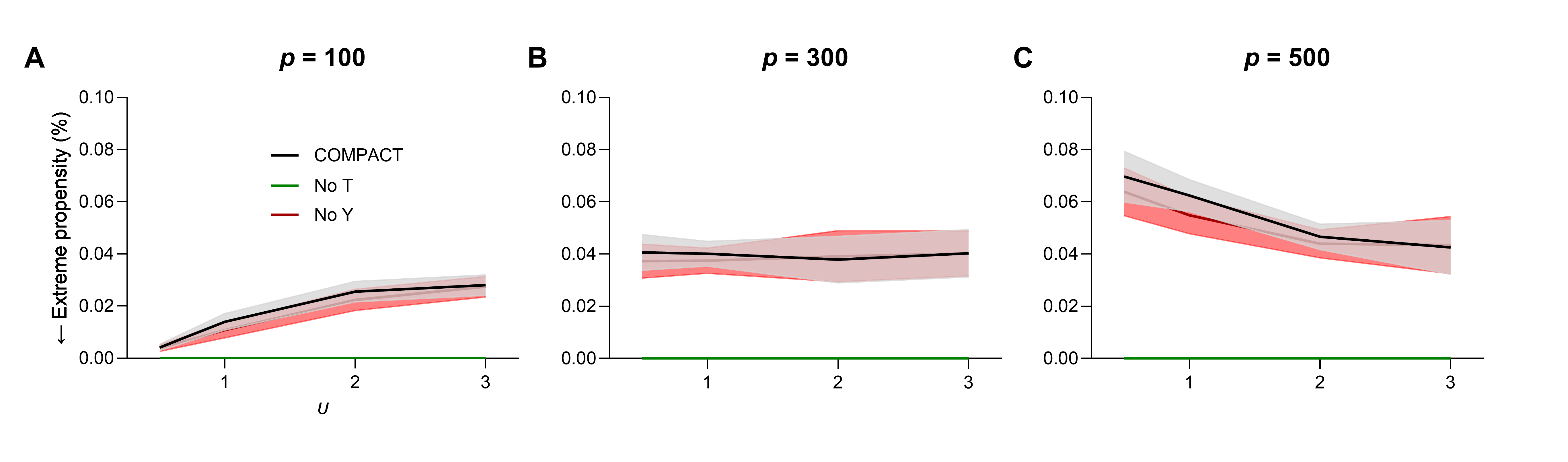}
    \caption{\textbf{Score-space overlap in the fully synthetic ablation experiments.}
    Proportions of extreme fitted propensities for full COMPACT and its two
    ablated variants.}
    \label{fig:supp_sim_propensity_time}
\end{figure}

\section{Real Data} \label{supp:real_data}
The Clinical Antipsychotic Trials of Intervention Effectiveness (CATIE)
schizophrenia study was a large, multicentre, randomized effectiveness trial
comparing antipsychotic medications among patients with chronic schizophrenia
\citep{Lieberman05}. Its extensive baseline assessments provide a clinically
realistic, high-dimensional covariance structure. We therefore conducted
CATIE-calibrated semi-synthetic experiments in which the CATIE data determined
the proxy-loading geometry, while the latent state, treatment, and outcome were
simulated.

After complete-case filtering, standardization, and removal of constant variables, the CATIE baseline matrix contained \(n=1{,}407\) observations and \(p_{\mathrm{CATIE}}=364\) variables. We extracted the first \(d_C=20\) principal-component loading vectors. We then randomly permuted the variable ordering and selected the first
\(p\in\{121,243,364\}\) variables, producing nested proxy systems containing
approximately one-third, two-thirds, or all CATIE variables. Denote the resulting loading matrix by
\(\Lambda^{(p)}\in\mathbb R^{p\times d_C}\), with each row normalized to unit
Euclidean norm. We otherwise followed the fully synthetic design, crossing four proxy-strength
levels with the three proxy counts to produce \(12\) conditions, each evaluated
over \(50\) Monte Carlo replications. By deriving the proxy-loading directions
from CATIE, this design retained CATIE-calibrated multivariate dependence
geometry while preserving known latent variables and causal ground truth.

\begin{figure}
    \centering
    \includegraphics[width=1\linewidth]{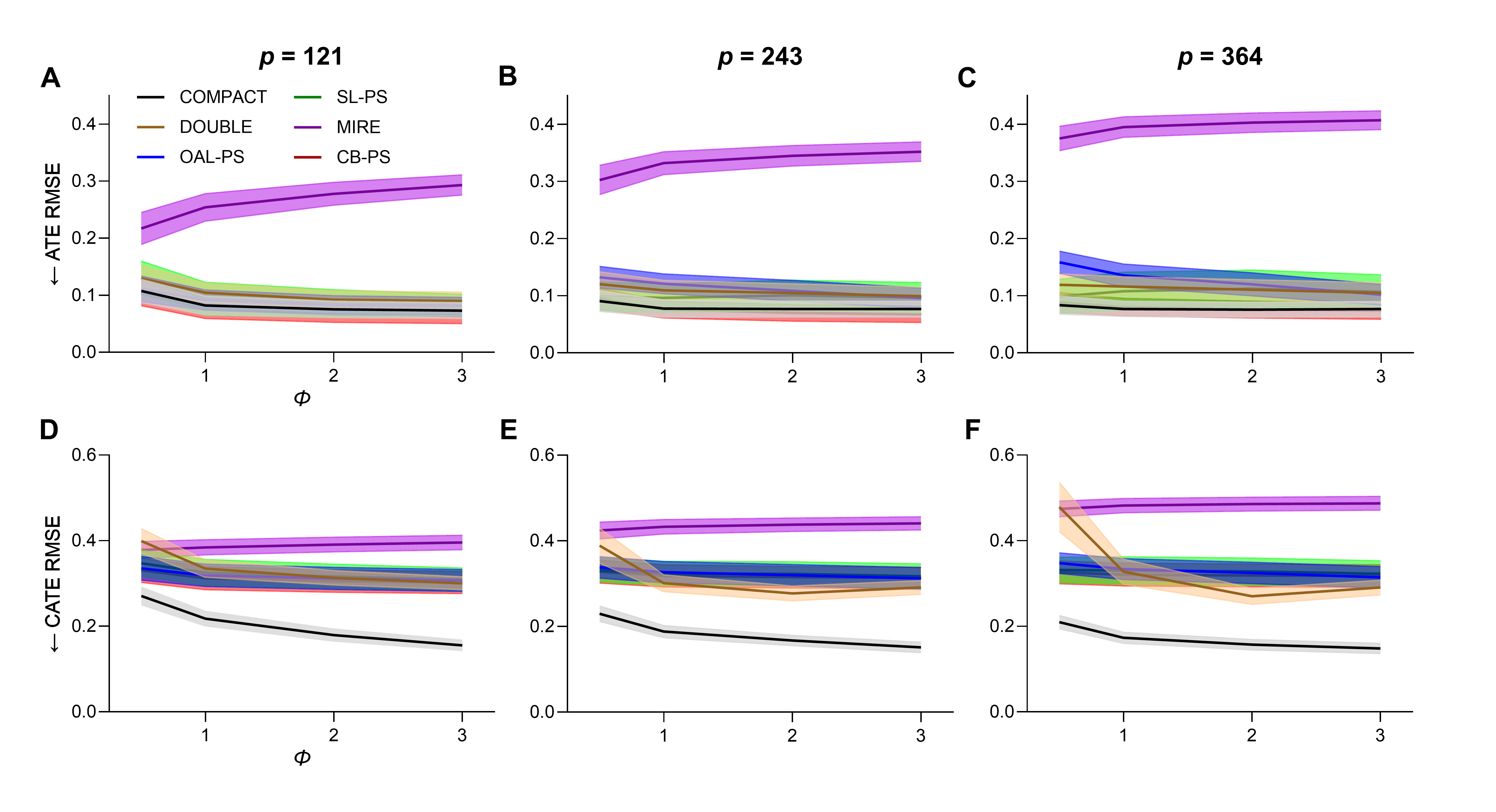}
    \caption{\textbf{Treatment-effect estimation in the CATIE-calibrated
    semi-synthetic experiments.} ATE RMSE (A--C) and CATE RMSE (D--F) across
    proxy-strength levels for \(p=121\), \(243\), and \(364\).}
    \label{fig:real_ATE_CATE_RMSE}
\end{figure}

Across the semi-synthetic experiments, COMPACT remained competitive for ATE
estimation across all proxy dimensions and proxy-strength levels
(Supplementary Figure~\ref{fig:real_ATE_CATE_RMSE}A--C; Supplementary Figure \ref{fig:real_Supp_ATE_CATE}A--C). The CATE results differentiated
the methods more clearly: COMPACT achieved the lowest CATE RMSE by a substantial
margin in every condition
(Supplementary Figure~\ref{fig:real_ATE_CATE_RMSE}D--F; Supplementary Figure \ref{fig:real_Supp_ATE_CATE}D--F). COMPACT also consistently
attained the highest \(R^2(U_T\mid\bm S)\) and \(R^2(U_R\mid\bm S)\), indicating
that its learned score space effectively recovered both the treatment-balancing
and outcome-guided coordinates (Supplementary Figure~\ref{fig:real_R2}). Achieving optimal performance required both the treatment and residual-outcome components (Supplementary Figure~\ref{fig:real_Supp_R2}). COMPACT further maintained
stable score-space overlap, with the mean percentage of fitted propensities
below \(0.05\) or above \(0.95\) ranging from approximately \(0.5\%\) to
\(3.9\%\) across conditions (Supplementary Figure \ref{fig:real_propensity_time}A--C; Supplementary Figure \ref{fig:real_Supp_Extreme}). DOUBLE and SL-PS again performed poorly on this
diagnostic, while CB-PS deteriorated as the number of proxies increased.
Finally, COMPACT required approximately \(1.1\) seconds per replication on
average and remained below \(1.2\) seconds even when all \(p=364\) proxies were
included (Supplementary Figure~\ref{fig:real_propensity_time}D--F). Overall, the semi-synthetic
experiments closely reproduced the principal findings of the fully synthetic
experiments.

\begin{figure}
    \centering
    \includegraphics[width=1\linewidth]{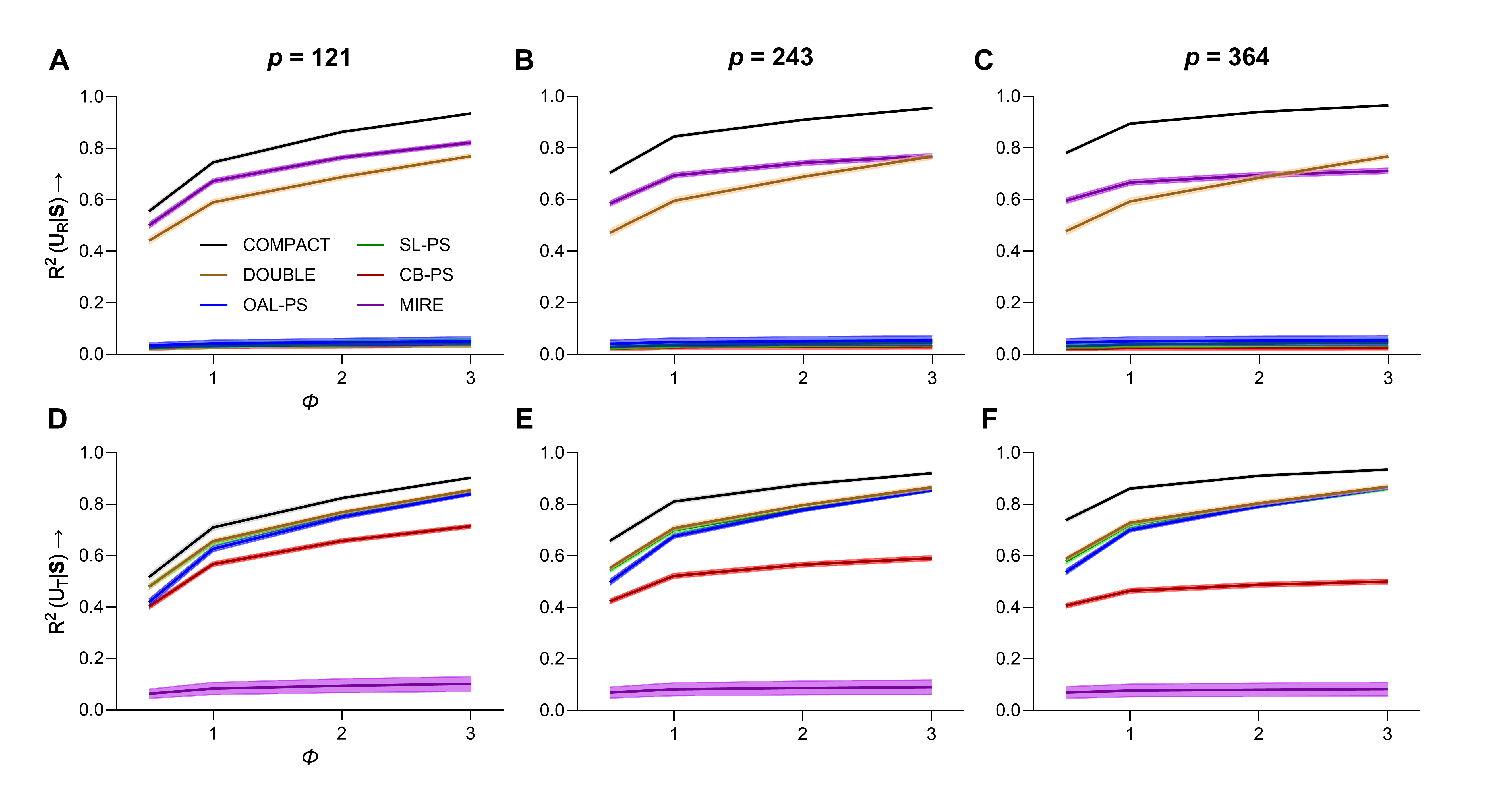}
    \caption{\textbf{Recovery of the latent adjustment coordinates in the
    CATIE-calibrated experiments.} Linear recovery of \(U_R\) (A--C) and
    \(U_T\) (D--F) across proxy-strength levels and proxy counts.}
    \label{fig:real_R2}
\end{figure}

\begin{figure}
    \centering
    \includegraphics[width=1\linewidth]{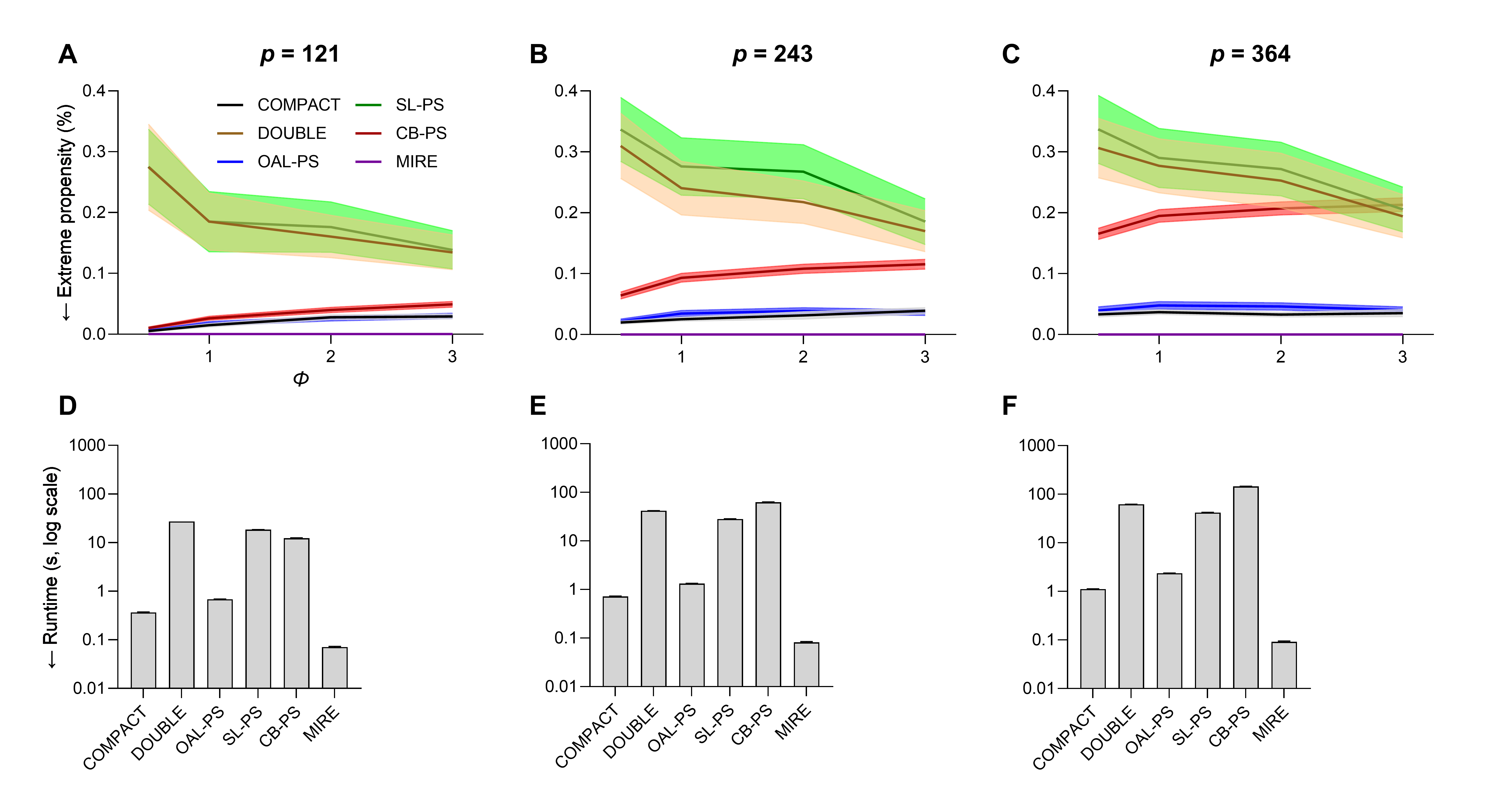}
    \caption{\textbf{Score-space overlap and computation in the
    CATIE-calibrated experiments.} Proportions of extreme fitted propensities
    (A--C) and mean total runtime on a logarithmic scale (D--F) for
    \(p=121\), \(243\), and \(364\).}
    \label{fig:real_propensity_time}
\end{figure}

We next evaluated the bootstrap using the CATIE semi-synthetic data. Empirical ATE coverage again closely tracked nominal coverage across confidence levels (Supplementary Figure~\ref{fig:real_bootstrap_inference}A). Moreover, the ATE type-I error rates under the null $\tau_0=0$ remained close to the nominal $5\%$ level, with all corresponding $95\%$ confidence intervals containing $5\%$ (Supplementary Figure~\ref{fig:real_bootstrap_inference}B). Both the ATE and global CATE tests achieved $100\%$ power under their respective alternatives (Supplementary Figure~\ref{fig:real_bootstrap_inference}C). These results indicate that the full-procedure bootstrap provided well-calibrated and powerful inference in the CATIE semi-synthetic setting. Thus, COMPACT maintained accurate inference, downstream ATE and CATE estimation, and reliable subspace recovery in both the synthetic and semi-synthetic settings.

\begin{figure}[H]
\centering
\includegraphics[width=0.8\linewidth]{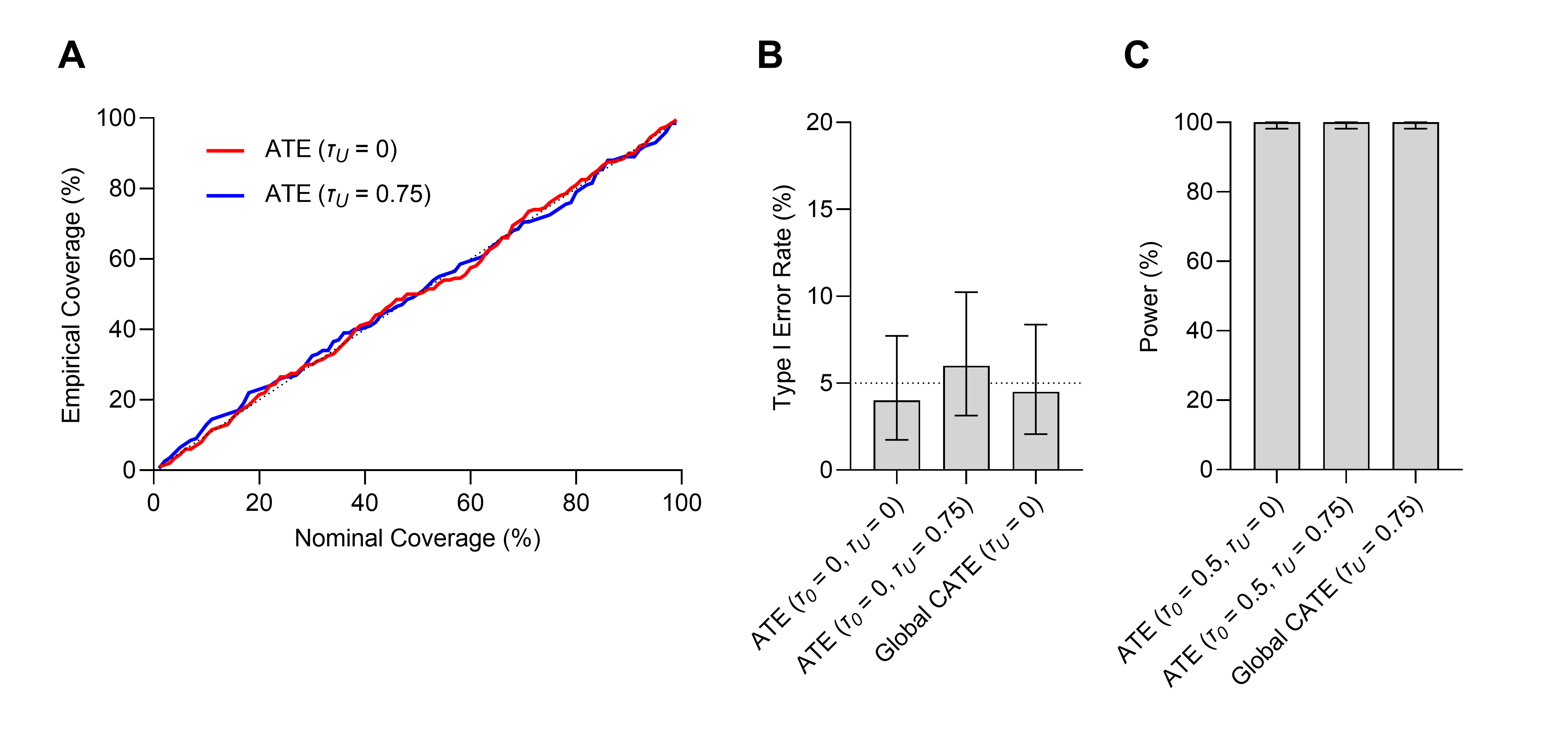}
\caption{\textbf{Bootstrap inference in the
    CATIE-calibrated experiments.}
(A) Empirical coverage of the ATE confidence interval.
(B) Type-I error for the ATE test of ($H_0:\tau_0=0$).
(C) Power of the ATE test at \(\tau_0=0.5\) under both values of
\(\tau_U\), and power of the global CATE test under \(\tau_U=0.75\).}
\label{fig:real_bootstrap_inference}
\end{figure}

\newpage
\section{Ablation Results for Semi-Synthetic Data}

\begin{figure}[H]
    \centering
    \includegraphics[width=1\linewidth]{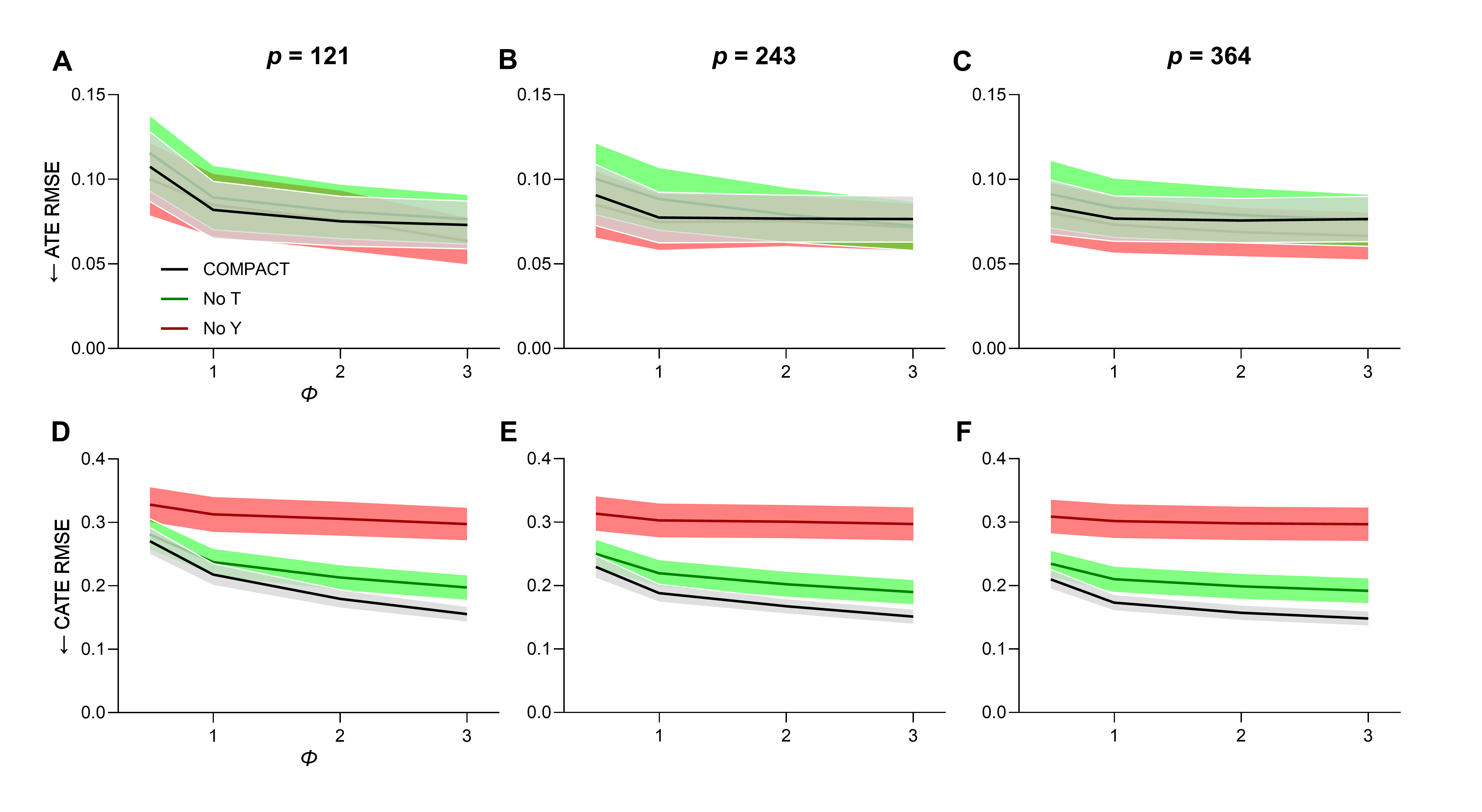}
    \caption{\textbf{Treatment-effect estimation in the semi-synthetic ablation experiments.}
    ATE RMSE (A--C) and CATE RMSE (D--F) for full COMPACT and its two ablated
    variants.}
    \label{fig:real_Supp_ATE_CATE}
\end{figure}

\begin{figure}[H]
    \centering
    \includegraphics[width=1\linewidth]{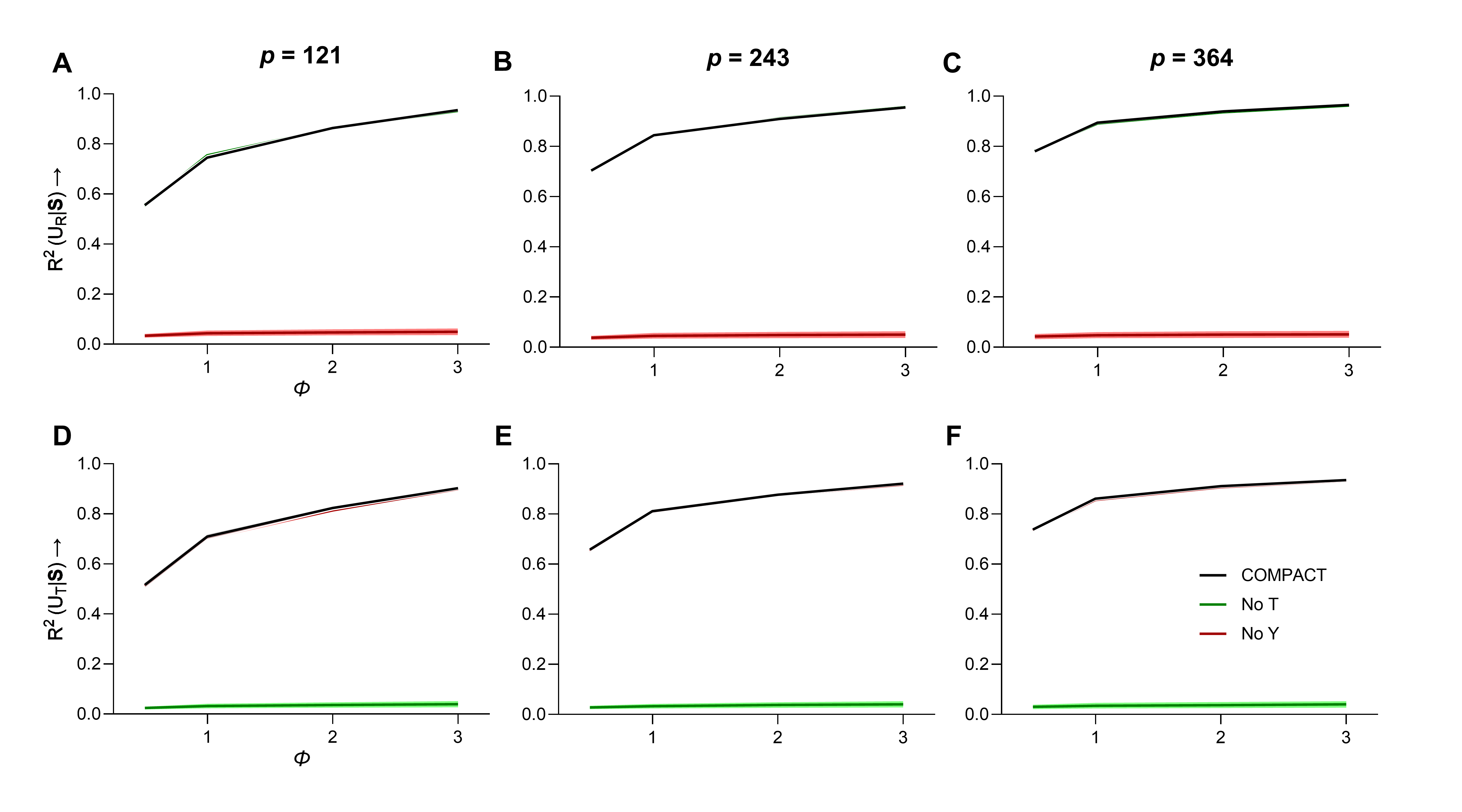}
    \caption{\textbf{Latent-coordinate recovery in the semi-synthetic ablation experiments.}
    Linear recovery of \(U_R\) (A--C) and \(U_T\) (D--F) for full COMPACT and
    its two ablated variants.}
    \label{fig:real_Supp_R2}
\end{figure}

\begin{figure}[H]
    \centering
    \includegraphics[width=1\linewidth]{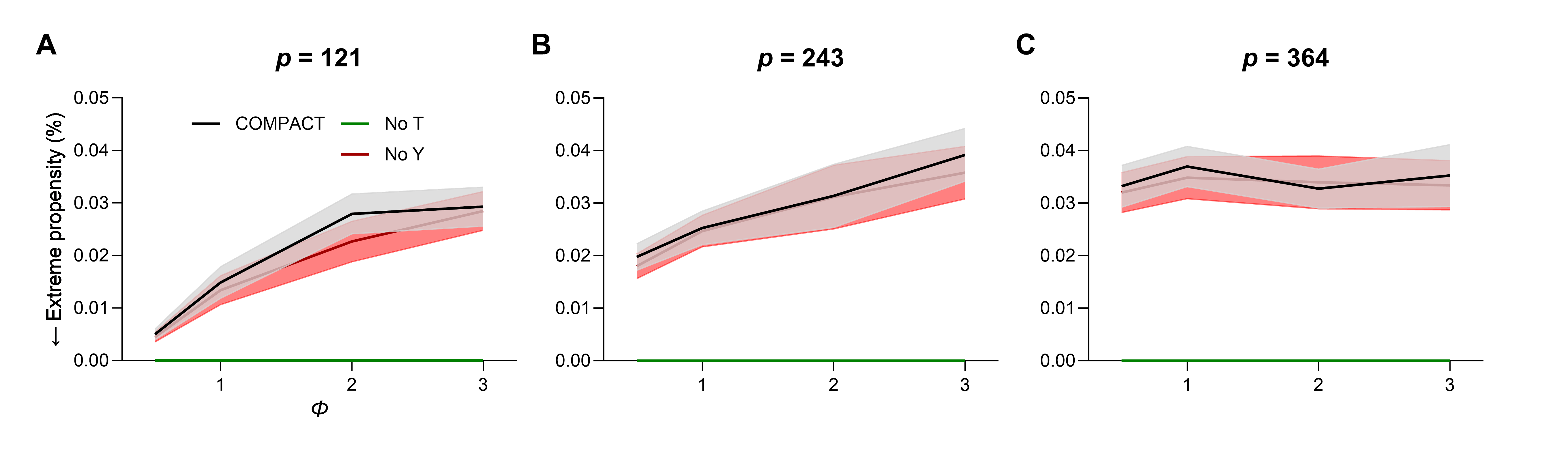}
    \caption{\textbf{Score-space overlap in the semi-synthetic ablation experiments.}
    Proportions of extreme fitted propensities for full COMPACT and its two
    ablated variants.}
    \label{fig:real_Supp_Extreme}
\end{figure}

\end{document}